\def\br#1\er{\textcolor{red}{#1}} %
\newcommand{\cvd}{\ \rule{0.5em}{0.5em} \smallskip}
\newcommand{\be}{\begin{equation}}
\newcommand{\ee}{\end{equation}}
\newcommand{\N}{{\mathbb N}}
\newcommand{\Z}{{\mathbb Z}}
\newcommand{\R}{{\mathbb R}}
\newcommand{\LL}{{\mathbb L}}
\newcommand{\M}{\overline{M}} 
\newcommand{\pM}{{\partial M}}
\newcommand{\hp}{{\hat p}}
\newcommand{\bSigma}{\bar \Sigma}
\newcommand{\ben}{\begin{enumerate}}
\newcommand{\een}{\end{enumerate}}
\newcommand{\bit}{\begin{itemize}}
\newcommand{\eit}{\end{itemize}}
\newcommand{\edoc}{\end{document}}
\newcommand{\cambios}[1]{{\color{black} #1}} 
\newcommand{\ncambios}[1]{{\color{black} #1}}
\newcommand{\bcambios}[1]{\textcolor{black}{#1}}
\newcommand{\gcambios}[1]{\textcolor{black}{#1}}
\def\bb#1\eb{\textcolor{blue}
{#1}} %
\def\br#1\er{\textcolor{red}
{#1}} %
\def\bv#1\ev{\textcolor{olive}
{#1}} %
 \def\bm#1\em{\textcolor{magenta}
 {#1}} %
\renewcommand{\j}{\mathfrak{j}}
\newtheorem{thm}{Theorem}[section]
\newtheorem{prop}[thm]{Proposition}
\newtheorem{lemma}[thm]{Lemma}
\newtheorem{cor}[thm]{Corollary}
\newtheorem{conv}[thm]{Convention}
\newtheorem{defi}[thm]{Definition}
\newtheorem{rem}[thm]{Remark}
\title{Structure of globally hyperbolic \\ spacetimes-with-timelike-boundary
}
\author[1]{L. Ak\'e}
\author[1]{J.L. Flores}
\author[2]{M. S\'anchez}
\affil[1]{\small{{\it Departamento de \'Algebra, Geometr\'{i}a y Topolog\'{i}a\\ Facultad de Ciencias, Universidad de M\'alaga\\ Campus Teatinos, 29071 M\'alaga, Spain.}}}
\affil[2]{\small{{\it Departamento de Geometr\'{i}a y Topolog\'{i}a\\ Facultad de Ciencias, Universidad de Granada\\ Campus Fuentenueva s/n, 18071 Granada, Spain.}}}
\date{}
\begin{document}

\maketitle

	\begin{abstract}
\noindent
Globally hyperbolic spacetimes-with-timelike-boundary $(\M = M \cup  \pM, g)$ are the natural class of spacetimes  where regular boundary conditions (eventually asymptotic, if $\pM$ is obtained by means of a conformal embedding) can be posed. $\pM$
represents  the naked singularities and can be identified with a part of the intrinsic causal boundary.
Apart from general properties of $\pM$,  the splitting of any globally hyperbolic $(\M,g)$ as an orthogonal product $\R\times \bSigma$
with Cauchy \ncambios{slices-with-boundary}  $\{t\}\times \bSigma$ is proved. This is obtained by constructing  a Cauchy temporal function $\tau$ with gradient $\nabla \tau$ tangent to $\pM$
on the boundary.
To construct such a $\tau$, results on stability of both global hyperbolicity and Cauchy temporal functions
are obtained. Apart from having   their own interest, these results allow us to circumvent  technical difficulties introduced by $\pM$. \gcambios{The techniques also show that $\M$ is isometric to the closure of some open subset in a globally hyperbolic spacetime (without boundary).
As a trivial} consequence, the interior $M$
  both
  splits orthogonally and can be embedded isometrically in \gcambios{some} $\LL^N$,   extending so properties of globally spacetimes without boundary  to a  class of causally continuous ones.
\end{abstract}


\tableofcontents

\section{Introduction}
The celebrated results by Choquet-Bruhat \cite{CB} and Choquet-Bruhat and Geroch \cite{CBG} ensure that the metric of a globally hyperbolic \ncambios{(say, Ricci-flat)} spacetime $(M,g)$ is determined by Cauchy data on a Cauchy hypersurface $\Sigma$. Such data (a Riemannian  metric, the {\em a posteriori} second fundamental form) are subject to some constraints  (Gauss and Codazzi); \gcambios{when the Lorentzian metric $g$ is prescribed, further results can be obtained for other fields satisfying hyperbolic equations.}
The existence of a smooth and spacelike Cauchy hypersurface $\Sigma$ for any such $(M,g)$ ensures the well-posedness of the Cauchy problem; moreover, the existence of a splitting for the full spacetime $M$ as a orthogonal product $(\R\times \Sigma, g=-\Lambda d\tau^2 +g_\tau)$, where $\tau$ is a Cauchy temporal function, shows the consistency of the notion of predictability from each ``instant'' of time to another, \gcambios{ apart from other advantages. } These two existence results where obtained at a topological level by Geroch \cite{geroch} and at the differentiable and metric levels in \cite{BS03, BernalSanchez}; a full and self-contained development of the Cauchy problem has been carried out recently in the painstaking book by Ringstr\"om \cite{ringstrom}.
In the present article,  such a splitting as well as other causal properties, will be extended to the case of \ncambios{{\em globally hyperbolic spacetimes-with {\em (smooth, conformal)} -timelike-boundary}.}  The Cauchy problem in this class of spacetimes was studied by Friedrich and Nagy \cite{Friedrich} and the class of spacetimes was systematically   studied in Sol\'{\i}s' thesis \cite{didier}. Globally hyperbolic spacetimes
(without boundary) can be characterized as those strongly causal spacetimes whose intrinsic {\em causal boundary} points  are not naked singularities. In a natural way, the timelike boundary $\pM$ of
our class of spacetimes is composed by all the  naked singularities (see the \gcambios{Appendix A});   so,  the  these singularities  become the natural place to  impose boundary conditions.  Chrusciel, Galloway and Solís \cite{kaluzaklein} proved a version of topological censorship for this class of spacetimes; notably, such a class includes asymptotically anti-de Sitter ones.

More precisely, we will show that any globally hyperbolic spacetime-with-timelike-boundary $(\M, g)$ admits a Cauchy splitting $\R\times \bar{\Sigma}$ where $\bar{\Sigma}=\Sigma \cup \partial \Sigma$ is a Cauchy hypersurface-with-boundary $\partial \Sigma$.
From the PDE viewpoint, this suggests that the Cauchy problem will be consistently well-posed in this class of spacetimes as a {\em mixed} Cauchy problem for a  symmetric positive linear  differential equation \cite{Friedrich, friedrichs}.   Now, not only the Cauchy data on  $\bar{\Sigma}$ $(\equiv \{0\} \times \bar{\Sigma})$  but also boundary data on  $\partial M \equiv \R \times \partial \Sigma$  (under suitable compatibility constraints at $\{0\} \times \partial \Sigma$) must be provided, 
resembling  the behavior of the elementary wave equation.  
Background on this mixed problems has been developed recently by Valiente-Kroon
and Carranza \cite{valiente, valiente2, valiente3}; see also  the article by Enciso and Kamran \cite{encisokamran} (including its expanded version \cite{encisokamran2}).
Applications to wave equations and quantum field theory on curved spacetimes were obtained by Lupo  \cite{lupo, lupophd} in the general case and by Dappiagi, Drago and Ferreira \cite{DDF} in the static case, so extending works by B\"ar, Ginoux and Pf\"affle \cite{Bar}, among others.

From the technical viewpoint, the splitting will be obtained by constructing a Cauchy temporal function $\tau$ such that $\nabla \tau$ is tangent
to the boundary (everywhere on $\partial M$). This technical condition will turn out  essential to obtain a $\pM$-orthogonal splitting by flowing through the integral curves of   a suitable normalization of the
gradient  $\nabla \tau$. Such a  $\tau$ will be constructed by the following procedure: (a)  check the $C^0$-stability of Cauchy temporal functions in the set of all the Lorentzian metrics on $\M$  (this will make irrelevant  any $C^2$ details  of $g$ on $\pM$),  (b) reduce the problem to the simplified case of a metric $g^*$ with \gcambios{a product} structure close to $\pM$, (c)  extend  symmetrically $g^*$  to the  {\em double manifold} $\M^d$ (which becomes then a globally hyperbolic spacetime without boundary)  and (d)~obtain a Cauchy temporal  function for  $\M^d$ which is invariant by a reflection  on $\pM$. This last problem  can be regarded as a simple
 case of a result by  M\"uller \cite{muller2016} about Cauchy temporal functions invariant by a compact conformal group; the question (a) will be studied in Section \ref{s_Stab} and the others in Section \ref{s4}.

For comparisons with the case without boundary, we will consider the  approach by using locally defined smooth temporal functions in the original papers \cite{BS03, BernalSanchez}, which has shown to be very flexible for a variety of questions \cite{BS06, MS, muller2016}.
However, some different smoothability procedures with many other applications have been developed  since then, namely: Fathi and Siconolfi \cite{fathi}, using methods inspired by weak-KAM theory, applicable to cone structures; Chrusciel, Grant and Minguzzi \cite{CGM},
inspired by Seifert's approach to smoothability in spacetimes \cite{seifert}; and Bernard and Suhr \cite{bernardshur}, inspired by Conley theory,  applicable to possibly non-continuous closed cone structures. Similar conclusions seem to hold if any of  these  alternative approaches were considered.

Summing up, quite a few  properties about causality and the causal ladder will be revisited for 
spacetimes-with-timelike-boundary, with the following main aim :

\begin{thm}\label{t0} Any globally hyperbolic $n$ spacetime-with-timelike-boundary $(\M,g)$ \ncambios{(see Defn. \ref{defn})} admits a  Cauchy temporal function $\tau$ whose gradient $\nabla \tau$ is tangent to $\partial M$.

As consequence, $\M$ splits smoothly as a product $\R\times \bSigma$, where $\bSigma$ is a  ($n-1)$ \ncambios{manifold-with-boundary}, the metric can be written (with natural abuses of notation) as a parametrized orthogonal product

\begin{equation}
\label{e_split}
g= -\Lambda d\tau^2 + g_\tau \, ,
\end{equation}
where $\Lambda: \R \times \bSigma \rightarrow \R$ is a positive function,  $g_\tau$ is a Riemannian metric on each slice $
\{\tau\} \times \bSigma$ varying smoothly with $\tau$, and these slices are spacelike Cauchy \ncambios{hypersurfaces-with-boundary.}
Moreover, $\M$ can be isometrically embedded in  Lorentz-Minkowski  $\LL^N$ for some $N\in \N$, the interior $M$ of  $\M$ is always causally continuous, the boundary $\pM$ is a (possibly non-connected) globally hyperbolic spacetime without boundary, and:

(a) the restriction of $\tau$ to $M$  extends the known orthogonal splitting of globally hyperbolic spacetimes to this class of causally continuous spacetimes without boundary,

(b) the restriction of $\tau$ to $\pM$ provides a  Cauchy temporal function 
for the boundary whose levels are acausal in\footnote{\label{f0} The global hyperbolicity of $\M$ implies directly the global hyperbolicity of $\pM$ and, so, the existence of a Cauchy temporal function in each connected component of $\pM$.
However,
the global splitting ensures that the different connected components   can be syncronized in an acausal one when looked at $\M$. Such a property becomes natural from the PDE
viewpoint commented above (notice that
different connected components of $\pM$ 
are permitted, even if $\M$ will be assumed connected with no loss of generality).} $\M$.
\end{thm}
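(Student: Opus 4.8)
The plan is to follow the four-step strategy (a)--(d) outlined just before the statement, treating the boundary $\pM$ as the central source of difficulty and the double manifold $\M^d$ as the device that removes it. First I would set up the double: given a globally hyperbolic $(\M,g)$ with smooth timelike boundary, one cannot double directly because $g$ need not be reflection-symmetric to high order at $\pM$, so the first real task is to replace $g$ by a metric $g^*$ which near $\pM$ has a product form $g^* = g_{\pM} - ds^2$ (collar coordinate $s$, with $\pM = \{s=0\}$) while remaining globally hyperbolic with the same Cauchy hypersurfaces-with-boundary. Here I would invoke the $C^0$-stability of global hyperbolicity and of Cauchy temporal functions in the space of Lorentzian metrics on $\M$ (the content promised for Section \ref{s_Stab}): since such a $g^*$ can be chosen $C^0$-close to $g$ on a collar and equal to $g$ outside, the stability result guarantees $g^*$ is still globally hyperbolic and that a Cauchy temporal function for $(\M,g^*)$, once produced, will yield (by a further small perturbation / interpolation argument, or directly) a Cauchy temporal function for $(\M,g)$ — at least one whose gradient is tangent to $\pM$, which is all we need since that tangency condition is precisely what the product structure near $\pM$ delivers.

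Next, with $(\M,g^*)$ in hand, I would form the double manifold $\M^d = \M \cup_{\pM} \M$, glued along $\pM$, with the doubled metric $g^d$ which is smooth precisely because $g^*$ was a product in the collar; $g^d$ carries a natural isometric involution $\iota$ fixing $\pM$. The key claim to verify at this stage is that $(\M^d, g^d)$ is a globally hyperbolic spacetime \emph{without} boundary: one shows a Cauchy hypersurface-with-boundary $\bar\Sigma$ of $\M$ doubles to a Cauchy hypersurface $\bar\Sigma^d = \bar\Sigma \cup_{\partial\Sigma}\bar\Sigma$ of $\M^d$, using that every inextendible causal curve in $\M^d$ either stays in one copy (where it meets $\bar\Sigma$ by global hyperbolicity of $\M$) or crosses $\pM$ transversally and can be analyzed piecewise. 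Then I would apply M\"uller's theorem \cite{muller2016} on Cauchy temporal functions invariant under a compact group of conformal transformations — here the group is just $\Z_2 = \{\mathrm{id},\iota\}$ — to obtain a Cauchy temporal function $\tau^d$ on $\M^d$ with $\tau^d \circ \iota = \tau^d$. Restricting $\tau^d$ to one copy of $\M$ gives the desired $\tau$; $\iota$-invariance forces $\nabla\tau^d$ to be tangent to the fixed-point set $\pM$, hence $\nabla\tau$ is tangent to $\pM$, and Cauchyness and temporality are inherited by restriction.

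Once $\tau$ is constructed, the splitting \eqref{e_split} follows by the standard argument: normalize $\nabla\tau$ to a vector field $T$ with $T(\tau)=1$, which is complete and tangent to $\pM$ by construction, so its flow is defined for all time and preserves $\pM$; the map $(\tau,\bar\Sigma)\mapsto \M$ built from this flow is a diffeomorphism $\R\times\bar\Sigma\cong\M$ taking $g$ to the form $-\Lambda\, d\tau^2 + g_\tau$ with the slices $\{\tau\}\times\bar\Sigma$ spacelike Cauchy hypersurfaces-with-boundary. The remaining assertions are then corollaries: $\M$ is isometric to the closure of an open set in the globally hyperbolic spacetime $\M^d$ (in fact in $\R\times\bar\Sigma^d$), which gives the isometric embedding in $\LL^N$ by applying the known Minkowski-embedding theorem for globally hyperbolic spacetimes to $\M^d$ and restricting; item (a) is the observation that $\tau|_M$ is a Cauchy temporal function for the interior $M$ (whose causal continuity is checked separately, or follows from the product structure), extending the no-boundary splitting theorem; and item (b), that $\tau|_{\pM}$ is a Cauchy temporal function for $\pM$ with levels acausal in $\M$, follows because the levels $\{\tau=c\}\cap\pM$ are slices of the global acausal splitting of $\M$, which automatically synchronizes the different components of $\pM$ — see footnote \ref{f0}.

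I expect the main obstacle to be step (a), the $C^0$-stability statements: proving that global hyperbolicity and, more delicately, the \emph{Cauchy} character of a temporal function persist under $C^0$-small perturbations of the Lorentzian metric on a manifold \emph{with boundary} — where the usual interior techniques (e.g. constructing auxiliary complete Riemannian metrics, controlling causal diamonds) must be adapted so that nothing leaks out through $\pM$ — is the step that genuinely requires new work and on which the whole reduction to the product-near-boundary case rests. The doubling and the appeal to \cite{muller2016} are, by comparison, clean once the product normal form is available, and the final flow-box construction of the splitting is essentially the boundary-aware version of the classical argument, with the tangency $\nabla\tau\parallel\pM$ doing exactly the job of keeping the flow inside $\M$.
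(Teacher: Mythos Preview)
Your overall strategy matches the paper's: stability $\to$ product normal form $g^*$ near $\pM$ $\to$ double to $(\M^d,g^d)$ $\to$ M\"uller's $\iota$-invariant Cauchy temporal function $\to$ restrict and flow. Two points, however, are not right as stated.

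First, the role of stability. You want $g^*$ to be $C^0$-close to $g$ and then argue that a Cauchy temporal function $\tau$ for $g^*$ ``yields'' one for $g$. But $C^0$-closeness gives no ordering of cones, so neither ``Cauchy for $g^*$ $\Rightarrow$ Cauchy for $g$'' nor ``temporal for $g^*$ $\Rightarrow$ temporal for $g$'' follows; and invoking stability of $\tau$ around $g^*$ to get a neighborhood $U\ni g^*$ in which $\tau$ stays Cauchy temporal runs into the circularity that the size of $U$ depends on $\tau$, which was built from $g^*$, which was chosen after $g$ --- there is no a priori reason $g\in U$. The paper uses stability differently: first obtain a globally hyperbolic $g'>g$ (Thm.~\ref{T0}), then build the product-near-$\pM$ metric so that it is \emph{sandwiched}, $g\leq g^*<g'$ (Prop.~\ref{R1}). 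The inequality $g^*<g'$ makes $g^*$ globally hyperbolic (Cauchy hypersurfaces of $g'$ are Cauchy for $g^*$), and the inequality $g\leq g^*$ makes the transfer of $\tau$ from $g^*$ to $g$ automatic: the $g^*$-spacelike levels of $\tau$ are $g$-spacelike and remain $g$-Cauchy (Lemma~\ref{L_reduction}(a)). Tangency of $\nabla_g\tau$ to $\pM$ then follows from $g(\nabla_g\tau,\partial_s)=d\tau(\partial_s)=g^*(\nabla_{g^*}\tau,\partial_s)=0$, since $\partial_s$ is orthogonal to $\pM$ for both metrics on $\pM$ (Lemma~\ref{L_reduction}(b)). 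So the key device is the cone ordering, not $C^0$-proximity.

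Second, your argument for the isometric embedding in $\LL^N$ --- embed $(\M,g)$ into $\M^d$ and apply the known result there --- fails because $\M^d$ carries $g^d$, which was built from $g^*$, not from $g$; the inclusion $\M\hookrightarrow\M^d$ is an isometry for $(\M,g^*)$, not for $(\M,g)$. The paper instead observes that the embedding theorem of \cite{MS} needs only a \emph{steep} Cauchy temporal function (no tangency to $\pM$ required), and that the local construction in \cite{MS} goes through verbatim on $(\M,g)$ itself, boundary included. (That $(\M,g)$ embeds isometrically as a closed subset of some globally hyperbolic spacetime without boundary is true --- Cor.~\ref{c_extension_globhip} --- but it requires a separate construction and is not simply $\M^d$.)
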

It is worth pointing out that all the procedures to be used will be conformally invariant, so,
the \gcambios{results } will be also applicable to the case of a conformal boundary.  Moreover,  the splitting of those spacetimes without boundary which can be seen as the interior $M$ of a globally hyperbolic one with boundary $\M$, has its own interest. In particular, it permits to extend
results about linking and causality by Chernov and coworkers to a bigger class of spacetimes, recall \cite{Ch}.

The paper is organized as follows.

In Section \ref{s2}, some basic preliminaries are introduced. Many of them are known or expected  from standard techniques; anyway, they will become relevant later. In Subsect.~\ref{s2.1},  the  double manifold is used for extensions of  \ncambios{manifolds-with-boundary}, and relations between the
time-orientations of $\M, \pM$ and $M$ are pointed out. Gaussian coordinates are introduced and shown to yield a local version of the splitting~\eqref{e_split} (Cor.~\ref{c_localsplitting}). In Subsect.~\ref{s2.2}, the causal ladder  is introduced. Our choices of the definitions allow us a reasonably self-contained development. In particular, the basic properties of the lower levels of the ladder (until stably causal) are quickly checked there. \bcambios{In
Subsect.~\ref{s2.3},  causal continuous curves  are discussed,
and an intrinsic approach for its regularity (locally Lipschitz or, more properly for some cases where the reparametrization is taken into account, $H^1$) is introduced following \cite[Appendix A]{CFS}.
 This permits to obtain intrinsically limit curves within the same class of curves (Prop. \ref{p2.11}) and will circumvent subtleties which appear for other notions of causal continuous curves (recall Remark \ref{r_Didier}). By completeness, in Subsect. \ref{s2.4}  this regularity is compared  with the usual piecewise smoothness considered in the literature. Even though this issue does not affect to the chronological relation (Prop. \ref{p_futuresmooth}, Remark \ref{r_futuresmooth}), an example in the Appendix B shows that the causal relation is different depending on the class of curves used. However, a spacetime which is globally hyperbolic by using piecewise smooth curves, it will be also globally hyperbolic by using $H^1$-ones (Prop. \ref{otro}). So, our choice of $H^1$ curves is not only important for technical consistency in proofs, but also because the class of globally hyperbolic spacetimes where our results apply becomes bigger.}

%

In Section \ref{s3}, the framework of Geroch's topological splitting is revisited in order to include   boundaries.
Even though most of the properties here are transplanted from the case without boundary, we make a fast review to emphasize some differences and provide a reasonably self-contained study.
In Subsect. \ref{s3.2}, after checking that the role of admissible measures can be extended to the case with boundary, we reconstruct the higher levels of the causal ladder (Thm. \ref{t_ladder}), determine the causal properties  inherited by the boundary $\pM$  and the interior $M$ at each level, and provide the necessary (counter-) examples (Remark \ref{r_counterexamples}).
 Technically,  Prop. \ref{p_dist} and \ref{p_creer} summarize the  main properties which can be transplanted directly from the case without boundary; for the remainder, short proofs are provided. In Subsect. \ref{s3.3}, we go over Geroch's technique  to find the required Cauchy time function (Thm. \ref{t_geroch}). With this aim, some properties of achronal sets and hypersurfaces with boundary are revisited (recall Defn. \ref{d_achronal} and, then, Prop. \ref{pf}).

 In Section \ref{s_Stab}, the stability of both global hyperbolicity and Cauchy temporal functions is proved in the case with boundary. This question has interest in its own right, and will be used to simplify the proof of the existence of orthogonal splitting in the case with boundary. Recall that, in the case without boundary, the  stability of a globally hyperbolic metric $g$ on $M$ was also studied by Geroch \cite{geroch}.
This question becomes equivalent to showing that there exists a metric $g'$ with strictly  bigger cones (i.e. $g<g'$) which is globally hyperbolic  (as all the metrics $g''$ with $g''\leq g'$ will be globally hyperbolic too). As  Geroch's time function $t$ may be a non-time function for any $g'>g$ (recall that even in the smooth case the levels of $t$ may be degenerate hypersurfaces), this question  was non-trivial at that moment; however, the problem was widely simplified when a Cauchy temporal function $\tau$ was proved to exist (see Section 3 of \cite{BM} in the arxiv version,  and   \cite{S}). Indeed, a stronger result holds because $\tau$ becomes stable as a Cauchy temporal function (i.e. $\tau$ is also Cauchy temporal for some $g'>g$ and, thus, any $g''<g'$), and, then, the stability of global hyperbolicity appears as a direct consequence, as will be checked here (see Remark \ref{r_A}). Anyway, different proofs of the stability of $g$ with interest in its own right were found by Benavides and Minguzzi \cite{BM} and by Fathi and Siconolfi \cite{fathi}. As explained in Subsect. \ref{s_Stab1}, we will prove   stability of  both global hyperbolicity (by means of a direct proof for the sake of completeness) and
 Cauchy temporal functions (assuming that they have been constructed with no restriction on $\pM$ as in \cite{BernalSanchez})\footnote{This was carried out in the unpublished arxiv paper \cite{S}, which is essentially incorporated here.}, i.e.:
 \begin{quote}
{\em  For a spacetime-with-timelike-boundary $(\M, g)$, global hyperbolicity is a stable property. What is more, for any Cauchy temporal function $\tau$ of $(\M, g)$ there exists a globally hyperbolic metric with wider cones $g'>g$ such that $\tau$ is Cauchy temporal for $g'$ and, thus, for any other Lorentz metric (necessarily globally hyperbolic) $g''\leq g'$. }
\end{quote}
  Apart from the interest in its  own,  stability will simplify widely the procedure to obtain the   tangency of $\nabla \tau$ to $\pM$  in Thm. \ref{t0}.
 Indeed,  our procedure will stress that  possible problems associated to, say, the bending of $\pM$ or its non-convexity, will have a ``higher order''  than the requirements for  $\tau$ (and, thus, will be negligible). In Subsect. \ref{s_Stab2} we will see how to construct globally hyperbolic perturbations of $g$ (eventually, maintaining the Cauchy temporal character of a prescribed one $\tau$ for $g$) and in Subsect. \ref{s_Stab3} such perturbations  are shown to yield the stability results.

In Section \ref{s4}, after an overall  explanation of the simplified procedure,
Thm. \ref{t0} is proved. A discussion on the type of problems which can be proven transplanting directly the techniques for the case without boundary is also carried out (Remark \ref{r_final}, \ref{r_A}).
Moreover, a simple application of the techniques shows that every globally hyperbolic spacetime-with-timelike-boundary can be regarded as the closure of an open subset included in a globally hyperbolic spacetime without boundary
(Corollary \ref{c_extension_globhip}).

 Finally, the \gcambios{Appendix A} justifies rigourously that, in a globally hyperbolic spacetime-with-timelike-boundary, the  boundary $\partial M$ is composed by all the naked singularities of the spacetime (Thm. \ref{naked}). As these singularities are regarded naturally as a subset of the intrinsic causal boundary $\partial^cM$ (the pairs $(P,F)$ with $P\neq\emptyset\neq F$, see Remark \ref{nakedsingu}), the essential ingredients of $\partial^cM$ are reviewed first for the sake of completeness. \bcambios{The Appendix B provides the counterexample about  piecewise smooth causal curves commented above. }

\section{Preliminaries on spacetimes-with-timelike-boundary}\label{s2}

\subsection{Generalities: boundaries, time-orientation and coordinates} \label{s2.1}

{\bf \ncambios{Manifolds-with-boundary}.} In what follows $\M$ will denote a {\em connected} $C^r$
{\em  $n$  \ncambios{manifold-with-boundary}},
being
$n\geq 2$.
Any function or tensor field  will be {\em smooth} when it is as
 differentiable as possible (compatible
 with the $C^r$ character of $\M$); along the paper,
 $C^1$ will be enough  for
 the metric $g$, and
 the elements to be obtained (as the Cauchy temporal function) will maintain the maximum differentiability. 
 $\M$ is then  locally diffeomorphic to (open subsets of) a closed half space of $\R^n$;
 $M$ will denote its {\em interior} and $\pM$ its 
 {\em boundary}.
  For any $p\in \M$,   $T_p\M$ will denote
 its $n$-dimensional tangent space
 while for $p\in\pM$,  $T_p\pM$ is the
 $(n-1)$-dimensional tangent space to the boundary.
 Such a $\M$ can be regarded as a closed subset of the so-called {\em double manifold} $\M^d$, a $C^r$ $n$-manifold (without boundary) obtained by taking two copies of $\M$ and identifying  homologous
 boundary points (in particular, partitions of  unity for $\M$ can be constructed from  $\M^d$ using $\M^d\setminus \M$ as an extra open subset).   Lorentzian and Riemannian metrics on $\M$ are particular cases of {\em semi-Riemannian metrics}, i.e. non-degenerate metric tensors (of constant index). 
 Background on \ncambios{manifolds-with-boundary} can be seen in \cite[Section 9]{Lee}, for example. Next, we emphasize a basic property (see \cite[Prop. 3.1]{LuisTesis} for a proof).

\cambios{\begin{prop}\label{pp} Regarding $\M$ as a closed subset of $\M^d$, any semi-Riemannian  metric $g$ on $\M$ can be extended to some open subset $ \widetilde{M} \subset \M^d$ with $\overline{M}\subset \widetilde{M}$.
\end{prop}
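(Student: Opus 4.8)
The plan is to work locally and then patch. Since $\M$ sits as a closed subset of the double manifold $\M^d$, it suffices to extend $g$ across the boundary $\pM$ in a neighborhood of each boundary point and then glue with a partition of unity, being careful that the patched symmetric $2$-tensor remains \emph{nondegenerate of constant index} on a (possibly smaller) open set containing $\M$.

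First I would set up the local model. Around any $p\in\pM$ pick a boundary chart $\varphi\colon U\to \R^{n-1}\times[0,\infty)$ with coordinates $(x^1,\dots,x^{n-1},x^n)$, $x^n\ge 0$ on $\M$, and recall that in $\M^d$ this chart extends to $\tilde U\cong\R^{n-1}\times(-\delta,\infty)$ (gluing the homologous chart from the second copy). On $U$ the metric has components $g_{ij}(x)$ defined for $x^n\ge 0$, each $g_{ij}\in C^r$ (or $C^1$, the case we actually need) up to the boundary. Now extend each coefficient function separately across $\{x^n=0\}$: the standard Seeley/Whitney-type extension (or, for the $C^1$ case, an explicit reflection formula) produces $\tilde g_{ij}\in C^r(\tilde U)$ with $\tilde g_{ij}=g_{ij}$ on $x^n\ge 0$; one can also simply take $\tilde g_{ij}(x^1,\dots,x^{n-1},x^n):=g_{ij}(x^1,\dots,x^{n-1},|x^n|)$ when only $C^0$ or Lipschitz regularity off the boundary is tolerated, but to keep $C^1$ one uses a genuine one-sided extension operator. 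This gives a symmetric $(0,2)$-tensor $\tilde g$ on $\tilde U$ agreeing with $g$ on $U$.

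Next I would control nondegeneracy and index. At $p$ the tensor $\tilde g_p=g_p$ is nondegenerate with the prescribed index; nondegeneracy is the non-vanishing of $\det(\tilde g_{ij})$, an open condition, and the index (number of negative eigenvalues) is locally constant on the set where the tensor is nondegenerate. Hence there is an open neighborhood $V_p\subset\tilde U$ of $p$ on which $\tilde g$ is a semi-Riemannian metric of the same index. For interior points $q\in M$ there is nothing to extend: take $V_q\subset M$ a chart neighborhood. The collection $\{V_p\}_{p\in\M}$ is an open cover of $\M$ in $\M^d$; set $\widetilde M_0:=\bigcup_{p\in\M}V_p$, an open set with $\M\subset\widetilde M_0$.

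Finally, the gluing. The local extensions need not agree on overlaps, so I would take a partition of unity $\{\rho_\alpha\}$ on $\widetilde M_0$ subordinate to a locally finite refinement of $\{V_p\}$ — here is where one invokes the remark in the excerpt that partitions of unity for $\M$ (and for open subsets of $\M^d$) are available — and form $\tilde g:=\sum_\alpha \rho_\alpha\, \tilde g^{(\alpha)}$. On $\M$ every $\tilde g^{(\alpha)}$ restricts to $g$, so $\tilde g|_{\M}=g$; and $\tilde g$ is symmetric and $C^1$. The only remaining point, and the one I expect to be the main (minor) obstacle, is that a convex combination of semi-Riemannian metrics of index $k$ need not be nondegenerate when $k\notin\{0,n\}$ — Lorentzian metrics do not form a convex set. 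So one cannot just average blindly. The fix: on $\M$ itself $\tilde g=g$ is nondegenerate of the right index, and nondegeneracy together with constant index is an open condition; therefore the set $\widetilde M:=\{x\in\widetilde M_0 : \tilde g_x \text{ is nondegenerate of index } k\}$ is open and contains $\M$. Restricting $\tilde g$ to $\widetilde M$ yields the desired extension, and since $\M$ is closed in $\M^d$ and contained in the open set $\widetilde M$, we have $\overline{M}=\M\subset\widetilde M\subset\M^d$, completing the proof. (If one wants $\widetilde M$ connected one simply passes to the connected component containing $\M$, which is legitimate as $\M$ is connected.)
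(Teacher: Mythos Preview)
Your argument is correct and is exactly the standard approach one would expect: extend the metric coefficients locally across $\pM$ via a Seeley/Whitney-type operator, glue by a partition of unity, and then use that nondegeneracy of the prescribed index is an open condition together with the fact that the glued tensor agrees with $g$ on $\M$ to pass to a smaller open neighborhood $\widetilde M$. The paper itself does not prove this proposition in the text; it simply refers to \cite[Prop.~3.1]{LuisTesis}, so there is no detailed proof here to compare against, but your write-up is a perfectly acceptable self-contained justification.
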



\begin{rem} {\em Recall:
(a) in general, the metric  defined on two copies of $(\M,g)$ cannot be extended as a smooth metric on  $\overline{M}^d$,
(b) in the Riemannian case, $g$ can be extended to the whole $\M^d$, but in
 the Lorentzian case this may be non-possible (for example, if $\M$ is an even-dimensional closed half-sphere, $\M^d$ admits no Lorentzian metric).
}\end{rem}}

\bigskip

\noindent {\bf Time-orientation and spacetimes.} Let us recall some basic notions for  spacetimes-with-timelike-boundary. Usual notions for Lorentzian manifolds without boundary such as causal or timelike vectors
(here, following conventions in \cite{O, MSCH}) are extended to the case with boundary with no further mention (see \cite{Galloway, didier} for further background).

\begin{defi}\label{defn}
A {\em Lorentzian manifold with timelike boundary} $(\overline{M},g)$, $\overline{M}=M\cup \partial M$, is a Lorentzian  \ncambios{manifold-with-boundary} such that the pullback $i^{*}g$, with $i:\partial M \hookrightarrow M$ the natural inclusion, defines a Lorentzian
metric on the boundary. A {\em spacetime-with-timelike-boundary} is a time-oriented Lorentzian manifold with timelike boundary.
\end{defi}
By {\em time-oriented} we mean that a time cone has been chosen continuously (i.e., locally selected by a continuous timelike vector field $X$) on all $\M$. \cambios{The pull-back $i^*$ will be dropped} when there is no possibility of confusion \cambios{and the time-orientation is assumed implicitly. If $g,g'$ are two Lorentzian metrics on $\M$, the notation $g<g'$ (resp. $g\leq g'$) means that any future-directed causal vector for $g$ is future-directed timelike (resp. causal) for $g'$.}
The following result ensures that no additional subtleties on time-orientations appear because of the presence of the boundary. Its proof uses standard background for the case  without boundary, see  \cite[Lemma 5.32, Prop. 5.37]{O}.

\begin{prop}
\label{extendfield} The following properties are equivalent for any Lorentzian manifold with timelike boundary $(\overline{M},g)$:

(i) $(\overline{M},g)$  is time-orientable.

(ii) $(M,g)$ and $(\pM,g)$ are time-orientable.

(iii) There exists a timelike vector field $T$ on all $\M$ tangent to $T_{\hp}\partial M$ at each $\hp\in\pM$.

(iv) There exists a timelike vector field $T$ on all $\M$.

\smallskip

\noindent Therefore, for any spacetime-with-timelike-boundary, $\pM$ is naturally a spacetime \ncambios{(without boundary and possibly non-connected, consistently with footnote \ref{f0}).}

\end{prop}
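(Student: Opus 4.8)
The plan is to reduce everything to the classical boundaryless characterization of time-orientability and then to handle the boundary by a collar/interpolation argument. First I would note (i)$\Leftrightarrow$(iv): a continuous pointwise choice of time cone is the same datum as a globally defined continuous, hence (after smoothing) smooth, timelike vector field, exactly as in \cite[Lemma 5.32]{O}; the only point to check is that this argument needs nothing more than partitions of unity, which are available on $\M$ because $\M$ sits as a closed subset of its double $\M^d$ (using $\M^d\setminus\M$ as an auxiliary open set), as recalled above. Granting this, and since (iii)$\Rightarrow$(iv) is trivial, it suffices to close the cycle (iv)$\Rightarrow$(iii)$\Rightarrow$(ii)$\Rightarrow$(iv).

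For (iv)$\Rightarrow$(iii): given a timelike $T$ on $\M$, I would project it onto the boundary. Since $\pM$ is timelike, along $\pM$ a $g$-unit spacelike normal field $\nu$ exists, and writing $T=T^{\top}+g(T,\nu)\,\nu$ with $T^{\top}$ tangent to $\pM$ one gets $g(T,T)=g(T^{\top},T^{\top})+g(T,\nu)^2$, so $g(T^{\top},T^{\top})<0$; moreover $g(T^{\top},T)=g(T^{\top},T^{\top})<0$, so $T^{\top}$ is timelike and lies in the time cone of $T$. Then I would extend $T^{\top}$ to a vector field $\widetilde T$ on a collar $U\cong\pM\times[0,\eps)$ that is timelike and cone-compatible with $T$ on $U$ (shrinking $\eps$, by continuity), pick $\rho\in C^\infty(\M)$ with $\rho\equiv1$ near $\pM$ and $\operatorname{supp}\rho\subset U$, and set $T':=\rho\,\widetilde T+(1-\rho)\,T$. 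On $U$ this is a convex combination of two future-directed timelike vectors, hence future-directed timelike (convexity of the time cone), and off $U$ it equals $T$; thus $T'$ is timelike on $\M$ with $T'|_{\pM}=T^{\top}$ tangent to $\pM$. The step (iii)$\Rightarrow$(ii) is then immediate: such a $T$ restricts to a timelike field on $M$ and to a field on $\pM$ with $i^*g(T,T)=g(T,T)<0$.

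For (ii)$\Rightarrow$(iv): take a timelike $T_\partial$ on $(\pM,i^*g)$ (also $g$-timelike in $\M$) and a timelike $T_M$ on $(M,g)$. Extend $T_\partial$ to a vector field $\widehat T$ timelike on a collar $U\cong\pM\times[0,\eps)$. On $U\cap M$ both $\widehat T$ and $T_M$ are timelike, so $g(\widehat T,T_M)$ is nowhere zero there (two timelike vectors have nonvanishing scalar product); since $U\cap M$ is the disjoint union of the connected pieces $C\times(0,\eps)$, one per connected component $C$ of $\pM$, the sign of $g(\widehat T,T_M)$ is constant on each, and replacing $\widehat T$ by $-\widehat T$ on the pieces where that sign is positive I may assume $\widehat T$ and $T_M$ share a time cone throughout $U\cap M$. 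Choosing $\rho$ as above and setting $T:=\rho\,\widehat T+(1-\rho)\,T_M$ on $M$ then yields a vector field that is timelike on $M$ and equals $\widehat T$ near $\pM$, hence extends to a timelike vector field on all of $\M$. Finally, by (ii) the boundary $(\pM,i^*g)$, which is Lorentzian by Defn.~\ref{defn} and boundaryless, is time-orientable, so it is a spacetime without boundary, possibly disconnected, as claimed.

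I expect the main obstacle to be the two interpolation arguments, specifically guaranteeing that the vector fields being combined lie in a common time cone so that convex combinations stay timelike: in (iv)$\Rightarrow$(iii) this is handled by the orthogonal-projection computation that shows $T^{\top}$ is in the cone of $T$, and in (ii)$\Rightarrow$(iv) by the sign-constancy of $g(\widehat T,T_M)$ on the connected pieces of the collar (which also explains why one may need to flip the time-orientation of $\widehat T$, i.e.\ of $T_\partial$, on some connected components of $\pM$). Everything else transcribes the boundaryless theory directly.
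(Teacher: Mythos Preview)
Your proposal is correct and uses essentially the same toolkit as the paper: orthogonal projection of a timelike vector onto the timelike boundary, collar extension of a boundary field, the sign-constancy of $g(\widehat T,T_M)$ on connected pieces (with possible sign flips), and convex interpolation via a bump function. The only difference is organizational: the paper runs the cycle (i)$\Rightarrow$(ii)$\Rightarrow$(iii)$\Rightarrow$(iv)$\Rightarrow$(i), using projection for (i)$\Rightarrow$(ii) and extension/interpolation for (ii)$\Rightarrow$(iii), whereas you prove (i)$\Leftrightarrow$(iv) separately and then close (iv)$\Rightarrow$(iii)$\Rightarrow$(ii)$\Rightarrow$(iv), inserting an explicit (iv)$\Rightarrow$(iii) step that the paper does not isolate; the underlying arguments coincide.
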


\noindent {\em Proof.} (i) $\Rightarrow$ (ii) Notice that if $X$ selects the time-orientation on some neighborhood $U\subset \M$, then its orthogonal projection on $T_{\hat{p}}\partial M$ for each $\hat{p}\in U\cap \partial M$ 
selects continuously a time orientation on $U\cap \partial M$.



(ii) $\Rightarrow$ (iii) As $(M,g)$ has no boundary, it admits a smooth timelike vector field $T^M$, and each
connected part $C$ of $\pM$ will admit also a smooth timelike vector field
$T^C$. For each $\hp \in C$, consider a coordinate chart $(U_{\hp},(x^{0},x^{1},...,s))$ adapted to the boundary, i.e. $s^{-1}(0)=U_{\hp} \cap C$, and extend the (restricted) vector field $T^{C} \mid_{U_{\hp} \cap C}$ to the coordinate chart $U_{\hp}$ by making the components of the vector field independent of the $s$ coordinate. Let $T^C[\hat p]$ be such an extension.
As the set of points $\hat{p}\in C$ for which the time orientation determined by $T^C[\hp]$ and $T^M$ agree on $U_{\hat{p}}\cap M$ is both open
and closed in $C$, we can choose $T^C$ so that both agree for all $\hp\in C$.
Repeating this for all the connected components of $\pM$,  considering
the covering of $\M$ provided by $M$ and all $U_{\hp}, \hp \in \pM$, and taking a
partition of  unity subordinate to this covering, one gets  a timelike
vector field $T^0$ defined on some neighbourhood $U$ of $\pM$ which is also tangent
to $\pM$.
So, if $\{\mu, 1-\mu\}$ is a partition of
the unity of $\overline{M}$ subordinate to the
covering $\{U,M\}$, the required vector
field is just $T=\mu T^0+(1-\mu) T^M$.

The  implications (iii) $\Rightarrow$ (iv),  (iv) $\Rightarrow$ (i) and the last assertion are trivial.
 \cvd



\bigskip

\noindent {\bf Gaussian coordinates.} The following coordinates specially well adapted  to the boundary will be useful.
Let $\hat{p} \in \partial M$ and  take a chart in the boundary
$(\hat{U},x^{0},x^{1},\ldots,x^{n-2})$, with $\hat{U} \subset \partial M$ connected and relatively compact, satisfying
 $g(\partial_{0},\partial_{0})=-1$,   $\partial_{0}$  future-directed  on  $\hat U$,
and  $\{\partial_{0},\ldots,\partial_{n-2}\}$ an orthonormal basis at $T_{\hat{p}}\partial M$.  Since $\partial M$ is timelike, there exists a unitary
spacelike vector field $N$
on $\pM$ which is orthogonal to the boundary \cambios{and
points out into $M$.} Extend the previous coordinate system to a chart of $\hat p$ in $\M$ by using the geodesics with
initial data $(q,N_{q})$, $q \in \hat{U}$, that is, consider the geodesic $\gamma_{q}(s)=\exp_{q}(s\cdot N_{q})$, $s\geq 0$, and regard  its affine parameter $s$ as a transverse coordinate. This
provides the required coordinate system $(\hat{U} \times  [0,s_+ ), (x^{0},x^{1},\ldots,x^{n-2}, x^{n-1}=s))$ of $\overline{M}$ for some $s_+>0$ small enough.
Since $\partial_{s}|_{\pM}=N$ is orthogonal to the boundary, $g(\partial_{s},\partial_{j})=0$ on $U:=\hat{U} \times [0,s_+)$ for all $j=0,\ldots,n-2$ (see for example \cite[pp. 42-43
]{Wald}), and the metric $g$ can be written as
\begin{equation}\label{e_Gaussian}
g=\cambios{\sum_{i,j=0}^{n-2}} g_{ij}(\hat x ,s)dx^{i}dx^{j}+ds^{2},
 \qquad \cambios{\hbox{where} \, \hat x=(x^{0},x^{1},\ldots,x^{n-2}).}
\end{equation}
  Moreover, $\hat U$ and $s_+$ are taken small enough so that   the gradient $\nabla x^0$ is timelike (i.e., the slices of $x^0$ are spacelike) on $U$.  Any coordinate system constructed as above will be called a  {\em Gaussian chart adapted to the boundary}, or just {\em Gaussian coordinates}. When necessary, the image of the coordinates will be a {\em cube}, that is, $(-\epsilon,\epsilon)^{n-1}\times [0,\epsilon)$ for some $\epsilon>0$. When $p\in M$, the name {\em Gaussian coordinates} will refer just a normal neighborhood of $p$, and the term {\em cube} to the subset $(-\epsilon,\epsilon)^{n}$.
 As a simple consequence, the local version  of the desired splitting for globally hyperbolic spacetimes follows:
\begin{cor}
\label{cor}\label{c_localsplitting} For each
 $\hat{p}\in\partial M$
there exists a product neighborhood
$V=(-\epsilon, \epsilon ) \times \bar{V}_0$, where $\bar{V}_0$ is a spacelike embedded hypersurface with boundary,
such that both factors are $g$-orthogonal and $g$ can be written as a parametrized product
\[
g=-  \Lambda  d\tau^2+g_{\tau},\qquad \tau: (-\epsilon, \epsilon ) \times \bar{V}_{0} \rightarrow(-\epsilon, \epsilon ) \; \; \hbox{(natural projection)},
\]
where $\Lambda=-1/g(\nabla \tau,\nabla \tau)$ is a function on $V$ and $g_{\tau}$ is a Riemannian metric on  $\{\tau\} \times \bar{V}_{0}$ depending smoothly on $\tau$.
\end{cor}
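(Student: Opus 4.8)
The plan is to read off the desired parametrized splitting directly from the Gaussian chart adapted to the boundary constructed just above, after a small modification that produces the temporal function $\tau=x^0$. First I would fix $\hat p\in\partial M$ and take a Gaussian chart adapted to the boundary on a set $U=\hat U\times[0,s_+)$, so that by~\eqref{e_Gaussian} the metric has the block form $g=\sum_{i,j=0}^{n-2}g_{ij}(\hat x,s)\,dx^i dx^j+ds^2$ with $\partial_0$ future-directed, $\{\partial_0,\dots,\partial_{n-2}\}$ orthonormal at $\hat p$, and (shrinking $\hat U$ and $s_+$) with $\nabla x^0$ timelike throughout $U$, i.e. the slices $\{x^0=\mathrm{const}\}$ spacelike. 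Set $\tau:=x^0$; this is a submersion onto an interval which, up to an affine reparametrization, we may take to be $(-\epsilon,\epsilon)$, and its level sets $\bar V_c:=\{x^0=c\}\cap V$ (for a suitably small product subneighborhood $V$) are spacelike embedded hypersurfaces-with-boundary, the boundary being $\bar V_c\cap\partial M=\{x^0=c,\ s=0\}$.

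The core computation is to verify that the flow of $\nabla\tau$ realizes the claimed orthogonal product structure. I would normalize: let $X:=\nabla\tau/g(\nabla\tau,\nabla\tau)$, which satisfies $d\tau(X)=1$, so its flow $\Phi_t$ pushes the level $\bar V_0$ diffeomorphically to $\bar V_t$ and, together with $\bar V_0$, yields coordinates $V\cong(-\epsilon,\epsilon)\times\bar V_0$, $(\tau,q)\mapsto\Phi_\tau(q)$. In these coordinates $\partial_\tau=X$, and one checks $g(X,X)=1/g(\nabla\tau,\nabla\tau)=:-\Lambda$ with $\Lambda>0$ since $\nabla\tau$ is timelike; the key point is $g(\partial_\tau,Y)=g(X,Y)=0$ for every $Y$ tangent to a slice $\{\tau\}\times\bar V_0$, because such $Y$ annihilates $d\tau$ and hence is $g$-orthogonal to $\nabla\tau$ and so to $X$. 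Therefore $g=-\Lambda\,d\tau^2+g_\tau$ where $g_\tau$ is the restriction of $g$ to the slice; $g_\tau$ is Riemannian precisely because the slices are spacelike, and it depends smoothly on $\tau$ because $\Phi$ is a smooth flow. The only subtlety is that $\nabla\tau$ and hence $X$ must be tangent to $\partial M$ so that the flow preserves the boundary and $V$ can be chosen as a genuine product neighborhood-with-boundary; but this is immediate here, since in the Gaussian chart $g^{n-1\,j}=\delta^{(n-1)j}$, so $\nabla\tau=\nabla x^0$ has no $\partial_s$-component and is tangent to every slice $\{s=\mathrm{const}\}$, in particular to $\partial M=\{s=0\}$.

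I would then simply shrink $\epsilon>0$ so that the flow box $(-\epsilon,\epsilon)\times\bar V_0$ sits inside $U$ and record that $\Lambda=-1/g(\nabla\tau,\nabla\tau)$ is the stated positive function on $V$. The main (very mild) obstacle is purely bookkeeping: ensuring all the smallness choices ($\hat U$, $s_+$, then $\epsilon$, then $\bar V_0$) are made consistently so that the flow of $X$ stays within the chart and the level sets remain spacelike manifolds-with-boundary; there is no analytic difficulty, since everything reduces to the elementary observation that the gradient of a temporal function is orthogonal to its level sets and, in Gaussian coordinates adapted to the boundary, is automatically boundary-tangent. This is exactly the local model that the global Theorem~\ref{t0} will later have to reproduce in the large, the new feature there being that the global $\nabla\tau$ need not be boundary-tangent for free.
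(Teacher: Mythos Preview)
Your proof is correct and follows essentially the same approach as the paper's: take Gaussian coordinates at $\hat p$, set $\tau=x^0$, observe from the block form~\eqref{e_Gaussian} of the metric that $\nabla\tau$ has no $\partial_s$-component and is hence tangent to $\partial M$, and then flow the zero slice $\bar V_0$ by the normalized gradient $-\nabla\tau/|\nabla\tau|^2$ to obtain the orthogonal product. The paper's proof is terser, deferring the verification that the flow yields $g=-\Lambda\,d\tau^2+g_\tau$ to \cite[Prop.~2.4]{BernalSanchez}, whereas you spell out that computation explicitly; but the content is the same.
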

\begin{proof}
 Take any Gaussian coordinates of $\hat p$, put $\tau:=x^0$,  $\bar V_0=\tau^{-1}(0)$ and recall that, by  \eqref{e_Gaussian}, $\nabla \tau$ must be tangent to $\pM$ on the boundary. So (taken a smaller hypersurface $\bar V_0$ and  neighborhood)  the flow of the vector field
 $-\nabla \tau/|\nabla \tau|^{2}$ is well defined and moves $\bar V_0$ yielding the product neighboorhod $V=(-\epsilon,\epsilon)\times \bar{V}_0$ (the expression of the metric becomes  then standard, see the end of the proof of Prop.~ 2.4 in \cite{BernalSanchez}).
 \end{proof}

\subsection{Conditions on causality and lower levels of the causal ladder}\label{s2.2}
For any spacetime-with-timelike-boundary $(\M,g)$, the usual notation $\ll, \leq, I^\pm(p), J^\pm(p), $ will be used for the chronological and  causal relations and the chronological and causal future/past of any $p\in \M$; so, say, $I^+(p,U)$  will denote the chronological future obtained by using curves entirely contained in the subset $U\subset \M$.

There is, however, a subtlety regarding the degree  differentiability of the future-directed and past-directed causal curves necessary to compute $J^\pm(p)$. They will not be necessarily piecewise smooth but just locally Lipschitz up to a reparametrization (or $H^1$ taking into account the reparametrization); such curves must be  differentiable almost everywhere and the derivative must be  either future-directed causal or past-directed causal whenever it exists. This question  will be discussed in more detail in Section \ref{s2.3} and stated explicitly as Convention \ref{convention}.

In what follows $cl$ will denote closure.

\begin{prop}\label{p_opentrans} (a) The binary
relation $\ll$ is open (in particular, $I^\pm(p)$ are open in $\M$). (b) For any $p,q,r\in \M$, $p\ll q\leq  r \Rightarrow p\ll r$,
$p\leq q\ll  r \Rightarrow p\ll r$. (c)~$J^\pm(p)\subset cl(I^\pm(p))$. (d)  $I^{\pm}(p,M)=I^{\pm}(p) \cap M$ for all $p \in M$.
\end{prop}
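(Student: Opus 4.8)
\textbf{Proof proposal for Proposition \ref{p_opentrans}.}

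The plan is to reduce everything to the boundaryless case treated in standard references (e.g. \cite{O, MSCH}), using the double manifold $\M^d$ and the metric extension provided by Proposition \ref{pp} as the main devices to import local statements. First I would fix, via Proposition \ref{pp}, a Lorentzian metric $\tilde g$ on some open $\widetilde M \subset \M^d$ with $\overline{M} \subset \widetilde M$ extending $g$, time-oriented compatibly by extending the timelike vector field $T$ of Proposition \ref{extendfield}(iv). The subtlety is that causal curves for $(\M,g)$ must stay inside $\M$, while causal curves for $(\widetilde M, \tilde g)$ may wander into $\M^d \setminus \M$; so the extension is useful only for \emph{local} arguments where a small enough neighborhood of a point of $\M$ can be taken inside $\widetilde M$, and one must be careful that the relevant curves do not leave $\M$. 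For (a), openness of $\ll$: given $p \ll q$ in $\M$ joined by a future-directed timelike curve $\gamma$ contained in $\M$, I would cover $\gamma$ by finitely many Gaussian charts adapted to the boundary (Corollary \ref{c_localsplitting} and the construction preceding it), in which the metric has the product-type form \eqref{e_Gaussian}; in such a chart a small timelike perturbation of the endpoints can be joined by a timelike curve still inside the half-space, because the coordinate $s\ge 0$ can be kept $\ge 0$ (one can even deform $\gamma$ so it runs along the slice $s=0$ near boundary points, since $\partial M$ itself is timelike). Chaining these local perturbations shows every point near $p$ chronologically precedes every point near $q$, hence $\ll$ is open and each $I^\pm(p)$ is open in $\M$. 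Here one must note that $H^1$ causal curves are at least as flexible as piecewise smooth ones, so the convention on curve regularity (Convention \ref{convention}, Section \ref{s2.3}) only helps.

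For (b), the push-up property $p \ll q \le r \Rightarrow p \ll r$: concatenate a future-directed timelike curve from $p$ to $q$ with a future-directed causal curve from $q$ to $r$; the concatenation is a future-directed causal curve from $p$ to $r$ which is timelike on an initial segment, and I would invoke the standard local ``corner-rounding'' lemma that such a curve can be deformed, near an interior point of its timelike portion and with endpoints fixed, into an everywhere-timelike one. This is again a purely local statement, so it transfers through a Gaussian chart (or through $\tilde g$ on a neighborhood contained in $\widetilde M$), and the half-space constraint $s \ge 0$ is respected exactly as above. The symmetric statement $p \le q \ll r \Rightarrow p \ll r$ is proved the same way with time-reversal. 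Part (c) then follows formally: if $q \in J^+(p)$, take a future-directed causal curve from $p$ to $q$ and points $q_k \to q$ strictly to its future along the curve; each $q_k$ satisfies $p \le (\text{earlier point}) \ll q_k$ is not quite it — rather one perturbs slightly to get $p \ll q_k$ using (b) applied to an arbitrarily short timelike detour at the start, so $q_k \in I^+(p)$ and $q \in cl(I^+(p))$. Concretely: pick any $p' \in I^+(p)$ with $p' \ll p$ replaced by $p \ll p'$ close to $p$ along a short timelike arc, then $p \ll p' \le q_k$ gives $p \ll q_k$ by (b).

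For (d), $I^\pm(p,M) = I^\pm(p) \cap M$ for $p \in M$: the inclusion $\subset$ is immediate. For $\supset$, suppose $p \in M$ and $q \in I^+(p) \cap M$, joined by a future-directed timelike curve $\gamma$ contained in $\M$ (possibly touching $\pM$); I want a timelike curve from $p$ to $q$ staying in the open interior $M$. Since $p,q \in M$ and $\ll$ is open (part (a)), one can first move slightly inside $M$ near both endpoints. The remaining issue is to push $\gamma$ off the boundary wherever it meets $\pM$: in a Gaussian chart adapted to the boundary, writing $g$ as in \eqref{e_Gaussian} with transverse coordinate $s \ge 0$, the curve with coordinates $(\hat x(t), s(t))$ can be replaced by $(\hat x(t), s(t) + \delta \varphi(t))$ for a small bump $\delta\varphi > 0$ supported where $\gamma$ is near $\pM$; because the $s^2$ term in \eqref{e_Gaussian} appears with a positive sign and decoupled, adding a small positive perturbation to $s$ keeps the velocity timelike (its $g$-norm changes by $O(\delta)$ and stays negative) and pushes the curve into $s > 0$, i.e. into $M$. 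Patching such perturbations over a finite subcover of the compact image of $\gamma$ yields the desired timelike curve in $M$.

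\textbf{Main obstacle.} The genuinely delicate point is (d) together with the boundary-respecting perturbations used throughout: one must verify that deforming a causal/timelike curve so as to keep it inside the half-space $\{s \ge 0\}$ (for (a), (b), (c)) or to push it strictly into $\{s > 0\}$ (for (d)) can always be done with a timelike/causal result, and that the finitely-many local deformations can be glued consistently along the curve. The product-type normal form \eqref{e_Gaussian} for $g$ in Gaussian coordinates — in particular the decoupling $g(\partial_s,\partial_j)=0$ and the positive, constant $ds^2$ term — is precisely what makes these perturbations work, so I expect the proof to lean on Corollary \ref{c_localsplitting} and its surrounding construction rather than on anything substantially new.
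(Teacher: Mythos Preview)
Your approach is correct. For (a)--(c) both you and the paper treat these as standard and defer to the boundaryless literature, so there is nothing to compare. For (d), however, the paper's argument is noticeably more economical and directly dissolves what you flag as the ``main obstacle.'' Instead of working chart by chart in Gaussian coordinates and then worrying about patching the local bumps $\delta\varphi$ consistently along $\gamma$, the paper chooses a single \emph{global} smooth vector field $N \in \mathfrak{X}(\overline M)$ extending the inward-pointing unit normal on $\partial M$ (possible because $\partial M$ is closed), picks any smooth $f:[0,1]\to\R^+$ vanishing only at the endpoints, and takes the fixed-endpoint variation of $\gamma$ with variational field $V(t)=f(t)N_{\gamma(t)}$. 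The nearby longitudinal curves are then automatically timelike (compactness of $[0,1]$ and smallness of the variation) and automatically contained in $M$ (the deformation is inward along $\partial M$, and interior points stay interior under small perturbation), with no gluing needed. Your chart-local argument would work too, but the patching step you rightly worry about --- keeping timelikeness across overlaps while ensuring the cumulative perturbation is coherent --- is exactly what the global choice of $N$ sidesteps. Incidentally, your argument for (c) is stated a bit tangled; the clean version is: for $q\in J^+(p)$ choose $q_k\to q$ with $q\ll q_k$, then $p\le q\ll q_k$ and (b) give $q_k\in I^+(p)$.
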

\noindent {\em Proof.} Properties (a), (b), (c) are easy to check (see \cite[Prop. 3.5, 3.6, 3.7]{didier}). To prove $I^{+}(p,M)=I^{+}(p)\cap M$, the inclusion $\subset$ is trivial. So, let $q \in I^{+}(p) \cap M$, and take some (piecewise smooth) future-directed timelike $\gamma:[0,1] \rightarrow \overline{M}$ with $\gamma(0)=p, \gamma(1)=q \in M$.
Consider any vector field $N \in \mathfrak{X}(\overline{M})$ which extends the pointing-inward unit normal on $\partial M$  (this can be always done,
as $\pM$ is closed), any
smooth function $f:[0,1] \rightarrow \mathbb{R}^{+}$ vanishing only at $0,1$, and the vector field $V$ on $\gamma$ defined by $V(t)=f(t)N_{\gamma(t)}$ for all $t\in [0,1]$. For the fixed-endpoints variation of $\gamma$ corresponding to the variational vector $V$, longitudinal curves close to $\gamma$ are still timelike and cannot touch $\pM$. In conclusion, $q \in I^{+}(p,M)$.
\cvd

\begin{rem}\label{nuevo} {\em Property (d)  can be naturally extended to points at the boundary  as follows: for any $p\in\overline{M}$, $I^{\pm}(p)\cap M$ is the set of
$q\in M$ such that there exists a  future/past -directed timelike $\gamma$ with ${\rm Im}(\gamma)\setminus\{p\}\subset M$ joining $p$ with $q$. }
\end{rem}

In the case of spacetimes without boundary, there is a well-known causal ladder of spacetimes, each step admitting several characterizations (see \cite{Beem, MSCH}). Most of the ladder and characterizations can be transplanted directly to the case of spacetime-with-timelike-boundary. 
Here, we will focus just on  globally hyperbolic spacetimes, postponing  the systematic study of other causal subtleties for future work. So, we will make a fast summary of the standard steps of the ladder just making  simple choices on the definitions and
 properties to be used in a self-contained way.

\begin{defi}
 A spacetime-with-timelike-boundary $(\overline{M},g)$
is:
\bit\item
{\em chronological} (resp.
 {\em causal}) if it does not contain  closed timelike (resp.
causal) curves,
\item {\em future} (resp.  {\em past})  {\em distinguishing}.  if the equality  $I^{+}(p)=I^{+}(q)$ (resp. $I^{-}(p)=I^{-}(q)$) implies $p=q$, that is, if the set-valued map $I^+:\overline{M} \rightarrow P(\overline{M})$ (resp. $I^-:\overline{M} \rightarrow P(\overline{M})$), where $P(\overline{M})$ is the power set of $\overline{M}$,  is one-to-one. It is {\em distinguishing}, when it is both future and past distinguishing.
\item {\em strongly causal} if for all $p\in \M$ and  any neighborhood $U\ni p$ there exists another neighborhood $V \subset U$, $p\in V$, such that any causal curve with endpoints in
$V$ is entirely contained in $U$.
\eit
\end{defi}


\begin{defi} A subset $W$ of a spacetime-with-timelike-boundary $(\overline{M},g)$ is
{\em  causally convex 
 } if $J^{+}(x)\cap J^{-}(y)
\subset W$ for any $x,y \in W$ (equivalently, if any causal curve with endpoints in $W$ must remain in $W$).


Given an open neighborhood $U\subset \M$, a subset $W\subset U$ is {\em causally convex}
{\em in $U$} when $W$ is causally convex  
 as a subset of  $U$, regarding $U$ as a spacetime-with-timelike-boundary.
\end{defi}

\begin{lemma}\label{cc} Let $(\overline{M},g)$ be a spacetime-with-timelike-boundary. For any $p\in \M$ and any neighborhood $V\ni p$ there exists a sequence of nested neighborhoods $\{W_m\} \subset V$, $W_{m+1}\subset W_m$, $\{p\}=\cap_m W_m$, such that all $W_m$ are causally convex in $V$.
\end{lemma}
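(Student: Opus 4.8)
The plan is to reduce the assertion to a local model around $p$ and then write down the $W_m$ explicitly as small causal diamonds. Since Gaussian charts adapted to the boundary (and ordinary normal charts when $p\in M$) provide a neighbourhood basis of $p$, and the statement only concerns causal curves inside $V$, I would first replace $V$ by a smaller neighbourhood and assume $V$ itself is a Gaussian cube: $V=(-\epsilon,\epsilon)^{n-1}\times[0,\epsilon)$ (or $(-\epsilon,\epsilon)^{n}$ when $p\in M$), with $p$ at the origin, $g$ of the form \eqref{e_Gaussian}, and $\tau:=x^{0}$ with timelike gradient on $\overline{V}$; thus $\tau$ is a temporal function on $V$ and every future-directed causal curve in $V$ is strictly $\tau$-increasing. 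Moreover, since $dx^{0}=d\tau$ is a timelike covector it annihilates no causal vector, so a future-directed causal $v=\sum_{\mu}v^{\mu}\partial_{\mu}$ has $v^{0}>0$; and, the $x^{0}$-slices being spacelike, the restriction of $g$ to $\mathrm{span}(\partial_{1},\dots,\partial_{n-1})$ is positive definite, so at each point of the compact $\overline{V}$ the set $\{v\ \text{future causal}:v^{0}=1\}$ is compact. By compactness there is $K>0$ with $\sum_{i\geq 1}|v^{i}|\leq K\,v^{0}$ for every future-directed causal $v$ on $V$; equivalently, a causal curve in $V$, parametrised by $x^{0}$, has each spatial coordinate varying with $x^{0}$-slope at most $K$. (This is essentially the content already underlying Cor.~\ref{c_localsplitting}.)

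Next I would pick $a_{m}:=(-1/m,0,\dots,0)$ and $b_{m}:=(1/m,0,\dots,0)$ on the $x^{0}$-axis. Because $\partial_{0}$ is future-directed timelike there (by the normalisation of the Gaussian chart, and noting the $x^{0}$-axis lies inside $\pM$ when $p\in\pM$), the axis segments show $a_{m}\ll p\ll b_{m}$, $a_{m}\ll a_{m+1}$ and $b_{m+1}\ll b_{m}$. Define $W_{m}$ to be the connected component of $p$ in the Alexandrov set $I^{+}(a_{m})\cap I^{-}(b_{m})$. Then $W_{m}$ is open, being a component of an open set (Prop.~\ref{p_opentrans}(a)), and contains $p$. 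It is causally convex in $V$: if $x,y\in W_{m}$ and $\sigma$ is a causal curve in $V$ from $x$ to $y$, then every $z$ on $\sigma$ satisfies $a_{m}\ll x\leq z$ and $z\leq y\ll b_{m}$, hence $a_{m}\ll z\ll b_{m}$ by Prop.~\ref{p_opentrans}(b); so $\sigma\cup W_{m}$ is a connected subset of $I^{+}(a_{m})\cap I^{-}(b_{m})$ and therefore lies in the component $W_{m}$. Finally the $W_{m}$ are nested, since Prop.~\ref{p_opentrans}(b) gives $I^{+}(a_{m+1})\cap I^{-}(b_{m+1})\subseteq I^{+}(a_{m})\cap I^{-}(b_{m})$, so the $p$-component of the former sits inside that of the latter.

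The only substantial point — and the step I expect to be the main obstacle — is $\bigcap_{m}W_{m}=\{p\}$, which is exactly where the local Gaussian structure is needed. Let $q\in\bigcap_{m}W_{m}$. For each $m$ there are (timelike) causal curves in $V$ from $a_{m}$ to $q$ and from $q$ to $b_{m}$. Applying the cone estimate to the first, parametrised by $x^{0}$, gives $x^{0}(q)\geq x^{0}(a_{m})=-1/m$ and $|x^{i}(q)|=|x^{i}(q)-x^{i}(a_{m})|\leq K\big(x^{0}(q)-x^{0}(a_{m})\big)=K\big(x^{0}(q)+1/m\big)$ for $i\geq 1$; the second gives $x^{0}(q)\leq x^{0}(b_{m})=1/m$. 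Letting $m\to\infty$ forces $x^{0}(q)=0$ and then $x^{i}(q)=0$ for all $i\geq 1$, i.e.\ $q=p$. Thus $\{W_{m}\}$ has all the required properties. Note that the reduction to a Gaussian cube is precisely what makes ``causally convex in $V$'' tractable: without restricting $V$ one only obtains $\bigcap_{m}W_{m}\subseteq J^{+}(p,V)\cap J^{-}(p,V)$, and the cone control supplied by the chart is exactly what prevents the family $\{W_{m}\}$ from collapsing onto a degenerate causal segment through $p$ rather than onto $p$ itself; the remaining properties (openness, causal convexity, nesting) are purely formal consequences of Prop.~\ref{p_opentrans}.
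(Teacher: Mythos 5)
Your proof is correct and follows essentially the same route as the paper: shrink $V$ to a Gaussian cube on which $x^0$ is temporal and take small causal diamonds $I^{+}(a_m)\cap I^{-}(b_m)$ centred on the $x^0$-axis. The only real difference is one of packaging — the paper encodes your uniform cone bound $\sum_i |v^i|\le K\,v^0$ as a flat comparison metric $g^+>g$ and uses the explicit $g^+$-diamonds, while you use the $p$-component of the intrinsic $g$-diamonds together with the Lipschitz estimate — and both arguments (yours and the paper's) in fact establish causal convexity in the shrunken Gaussian neighbourhood rather than in the originally given $V$, which is what is actually needed in the subsequent applications.
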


\begin{proof}  This can be proved as in the case without boundary \cite[Thm. 2.14, Lemma 2.13]{MSCH} (now requiring the nested neighborhoods just to be causally convex instead of globally hyperbolic). Namely, given $V$,  one takes (Gaussian) coordinates centered at $p$, $(V',x^i)$, $V'\subset V$,  a standard flat metric $g^+$ in these coordinates with $g<g^+$ (say, $g^+= -\epsilon (dx^0)^2+\sum_i (dx^i)^2 +ds^2$, with $\epsilon >0$ small and, eventually, choosing a smaller $V'$) and
$W_m:= I^+(x^0=-1/m, 0,\dots,0)\cap
I^-(x^0=1/m, 0,\dots,0)$ for large $m$. \end{proof}

\ncambios{
\begin{prop} \label{imprisoned}
$(\overline{M},g)$ is  strongly causal  if and only if each $p\in\M$ admits arbitrarily small causally convex
 neighbourhoods, that is,  for any neighbourhood
$U\ni p$ there exists a causally convex
neighbourhood $W\ni p$ contained in $U$.
\end{prop}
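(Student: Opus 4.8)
The plan is to prove the two implications separately, the nontrivial direction resting on a single observation: chronological diamonds $I^{+}(q_{-})\cap I^{-}(q_{+})$ are automatically causally convex, by nothing more than the ``push-up'' rules $p\ll q\leq r\Rightarrow p\ll r$ and $p\leq q\ll r\Rightarrow p\ll r$ of Prop.~\ref{p_opentrans}(b), which already hold in the presence of the timelike boundary.

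The implication from ``arbitrarily small causally convex neighbourhoods'' to strong causality is immediate: given $p$ and a neighbourhood $U$, a causally convex neighbourhood $W\ni p$ with $W\subset U$ already witnesses strong causality at $p$, since by the very definition of causal convexity every causal curve with endpoints in $W$ must remain in $W\subset U$.

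For the converse, I would fix $p\in\overline{M}$ and a neighbourhood $U$, and first invoke strong causality to get a neighbourhood $V\subset U$ of $p$ such that every causal curve with both endpoints in $V$ stays in $U$. Since $p$ lies in the closures of $I^{+}(p)$ and $I^{-}(p)$ (because $J^{\pm}(p)\subset cl(I^{\pm}(p))$ by Prop.~\ref{p_opentrans}(c) and $p\in J^{\pm}(p)$), I can choose $q_{-}\in I^{-}(p)\cap V$ and $q_{+}\in I^{+}(p)\cap V$ and set $W:=I^{+}(q_{-})\cap I^{-}(q_{+})$. Then three routine checks finish the proof: $W$ is an open neighbourhood of $p$ (openness of $\ll$, Prop.~\ref{p_opentrans}(a), together with $q_{-}\ll p\ll q_{+}$); $W\subset U$, because any $r\in W$ sits on a future-directed causal curve from $q_{-}$ to $q_{+}$ obtained by concatenating a timelike curve from $q_{-}$ to $r$ with one from $r$ to $q_{+}$, a causal curve with endpoints in $V$, hence contained in $U$; and $W$ is causally convex, because if $x,y\in W$ and $r\in J^{+}(x)\cap J^{-}(y)$ then $q_{-}\ll x\leq r$ and $r\leq y\ll q_{+}$ force $q_{-}\ll r\ll q_{+}$ by Prop.~\ref{p_opentrans}(b), so $r\in W$. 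This produces the required arbitrarily small causally convex neighbourhood.

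I do not foresee a genuine obstacle here; the only thing to watch is that every causal-theoretic fact used --- openness of $\ll$, the two push-up rules, and $J^{\pm}(p)\subset cl(I^{\pm}(p))$ --- be taken in the boundary version supplied by Prop.~\ref{p_opentrans}, and that $q_{\pm}$ may freely be taken in $M$ or on $\partial M$. (An alternative route would feed the nested neighbourhoods of Lemma~\ref{cc}, which are only causally convex \emph{relative to} a prescribed $V$, into the strong causality condition; there the delicate point would be upgrading ``causally convex in $V$'' to ``causally convex in $\overline{M}$'', which forces one to iterate the strong causality condition a couple of times and to use that those neighbourhoods actually shrink down to $p$ --- a complication that the chronological-diamond argument sidesteps.)
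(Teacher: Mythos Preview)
Your argument is correct and takes a genuinely different route from the paper. The paper proceeds via Lemma~\ref{cc}: it first invokes strong causality to find $V\subset U$ with the ``no-escape'' property, then pulls a neighbourhood $W\subset V$ from Lemma~\ref{cc} that is causally convex \emph{in $V$}, and finally upgrades this to causal convexity in $\overline{M}$. You instead build $W$ directly as a chronological diamond $I^{+}(q_{-})\cap I^{-}(q_{+})$ with $q_{\pm}\in V$, so that causal convexity is automatic from the push-up rules and no auxiliary construction is needed. Your approach is cleaner and entirely self-contained from Prop.~\ref{p_opentrans}; the paper's route, by contrast, reuses the machinery of Lemma~\ref{cc} (which is needed elsewhere anyway), but---as you correctly anticipate in your final parenthetical remark---the passage from ``causally convex in $V$'' to ``causally convex in $\overline{M}$'' really does require a second application of strong causality that the paper leaves implicit.
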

}

\begin{proof} We will focus on the the implication to the right (to the left is trivial). Since $(\overline{M},g)$ is strongly causal, given an open neighborhood $U$ of $p$, there exists a smaller neighborhood $V\subset U$, $p\in V$, such that any closed causal curve with extreme points in $V$ is totally contained in $U$. From Lemma \ref{cc}, there exists some neighborhood $W\subset V$ of $p$ which is causally convex in $V$. The property above satisfied by $V$ ensures that $W$ must be causally convex (in $\overline{M}$) as well.
%
%
%
%
\end{proof}
\ncambios{Then, following the case without boundary
(see \cite[Prop. 3.13]{Beem} or \cite[Lemma 14.13]{O}; see also \cite[Sect. 3.6.2]{MSCH}), full details are written in \cite[Prop. 3.7]{LuisTesis}, we deduce:	
	
\begin{cor}\label{imprisonedb} If $(\overline{M},g)$ is strongly causal then
causal curves are not partially imprisoned on compact sets, that is,  for any  future-directed causal curve $\gamma:[a,b) \rightarrow \overline{M}, a<b\leq \infty$ which cannot be extended continuously to $b$, and any compact set $K\subset \overline{M}$, there exists some $s_0\in [a,b)$ such that $\gamma(s)\not\in K$ for all $s\geq s_0$.
\end{cor}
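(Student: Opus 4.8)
The plan is to mimic the standard no-boundary argument: a strongly causal spacetime cannot have a causal curve that accumulates on a compact set $K$ without escaping it. First I would argue by contradiction, assuming there is an inextensible future-directed causal curve $\gamma:[a,b)\to\overline{M}$ and a compact $K$ with a sequence $s_m\uparrow b$ such that $\gamma(s_m)\in K$ for all $m$. Passing to a subsequence, $\gamma(s_m)\to q\in K$. Around $q$ pick, by Prop.~\ref{imprisoned}, an arbitrarily small causally convex neighbourhood $W\ni q$ with $cl(W)$ compact and contained in any prescribed neighbourhood of $q$.

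Next I would cut $K$ out of the picture by exploiting the causal convexity of $W$: since infinitely many $\gamma(s_m)$ lie in $W$, and $W$ is causally convex, the whole segment of $\gamma$ between any two such parameter values lies in $W$. Hence for $s$ beyond the first such $s_m$ the curve never leaves $W$, so in particular $\gamma|_{[s_m,b)}$ stays in the compact set $cl(W)$. Now I would invoke the usual limit-curve/extension machinery: a future-directed causal curve confined to a compact set for all parameters approaching $b$ can be continuously extended to $b$ (here one must use the intrinsic limit-curve result Prop.~\ref{p2.11} for $H^1$/locally Lipschitz causal curves, which is exactly why the paper bothered to set up that class), contradicting the inextensibility of $\gamma$. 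This contradiction shows no such $K$ exists, i.e.\ $\gamma$ is eventually outside every compact set, which is the statement.

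The one real subtlety — the step I expect to be the main obstacle — is the passage ``confined to a compact set $\Rightarrow$ extensible to the endpoint.'' In the boundary case one has to make sure the limit curve argument and the extension work uniformly near $\partial M$ and within the declared regularity class of causal curves (Convention~\ref{convention}); this is precisely where Gaussian coordinates adapted to the boundary (Cor.~\ref{c_localsplitting}) and the intrinsic limit-curve theorem come in, letting one reparametrize by, say, $g^+$-arclength for an auxiliary flat metric with wider cones and extract a converging subsequence of endpoints. Once that lemma is granted, the combinatorial part — using causal convexity of $W$ to trap the tail of $\gamma$ in $cl(W)$ — is routine and identical to \cite[Prop.~3.13]{Beem}. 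So my proof would be: (1) contradiction hypothesis and accumulation point $q$; (2) causally convex neighbourhood $W$ of $q$ with compact closure; (3) causal convexity forces the tail of $\gamma$ into $cl(W)$; (4) compactness plus the limit-curve extension result yields a continuous extension of $\gamma$ to $b$, contradicting inextensibility.
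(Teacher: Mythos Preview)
Your argument is essentially correct and follows the same route as the standard Beem/O'Neill proof the paper cites. One minor simplification: step (4) can be avoided entirely. Once you have the accumulation point $q$, inextensibility of $\gamma$ means $\gamma(s)$ does \emph{not} converge to $q$ as $s\to b$; hence there is a neighbourhood $U\ni q$ and a sequence $t_j\to b$ with $\gamma(t_j)\notin U$. Now choose the causally convex $W\ni q$ inside $U$. You then have $\gamma(s_{m_i})\in W$ and $\gamma(t_{j_i})\notin U\supset W$ interlaced, which directly contradicts causal convexity of $W$---no extension lemma needed. This is exactly the organization in \cite[Prop.~3.13]{Beem} and \cite[Lemma~14.13]{O}.

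If you do keep your step (4), note that Prop.~\ref{p2.11} (limit curves of a \emph{sequence} of causal curves) is not the right tool: you have a single curve and need it to converge, not just along a subsequence. The clean way is to use the local temporal function $\tau$ from Cor.~\ref{c_localsplitting}: since the tail of $\gamma$ lies in $W$ (chosen inside a Gaussian chart), $\tau\circ\gamma$ is strictly increasing and bounded, hence converges; the causal condition makes the spatial part $y_\gamma$ Lipschitz in $\tau$, so it converges too. That yields the continuous extension of $\gamma$ to $b$.
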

}


\begin{prop}\label{p_ord_lowerlevels} In any spacetime-with-timelike-boundary: \ncambios{strongly causal implies distinguishing; moreover,   future or past distinguishing implies causal, which in turn implies chronological.}
\end{prop}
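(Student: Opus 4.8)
All three implications reduce to their classical boundaryless counterparts, so the plan is to transplant the standard arguments and verify that $\pM$ introduces no new pathology; the only tools beyond the definitions are Proposition \ref{p_opentrans} (openness of $\ll$ and the ``push'' rules $p\ll q\le r\Rightarrow p\ll r$, $p\le q\ll r\Rightarrow p\ll r$), Proposition \ref{imprisoned} (strong causality $\Leftrightarrow$ existence of arbitrarily small causally convex neighbourhoods), and the $H^1$-curve convention (Conv.~\ref{convention}).

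I would dispatch \emph{causal $\Rightarrow$ chronological} in one line: a closed timelike curve is a fortiori a closed causal curve, since timelike vectors are causal. For \emph{future (or past) distinguishing $\Rightarrow$ causal} I argue by contraposition: if $(\M,g)$ is not causal it admits a closed causal curve $\gamma$, which cannot be constant (a causal curve has nonzero causal velocity almost everywhere), so $\gamma$ meets at least two distinct points $p\neq q$. Traversing $\gamma$ from $p$ to $q$ and then from $q$ back to $p$ gives $p\le q$ and $q\le p$, and Proposition \ref{p_opentrans}(b) then forces $I^+(p)\subseteq I^+(q)\subseteq I^+(p)$ (any $r$ with $q\ll r$ satisfies $p\le q\ll r$, hence $p\ll r$, and symmetrically); thus $I^+(p)=I^+(q)$ with $p\neq q$ contradicts future distinguishability, and the identity $I^-(p)=I^-(q)$, obtained dually, contradicts past distinguishability.

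For \emph{strongly causal $\Rightarrow$ distinguishing} I again use contraposition, proving the future case (the past case is the time-dual). Suppose $I^+(p)=I^+(q)$ with $p\neq q$. Note first that $p\in\overline{I^+(p)}$ and $q\in\overline{I^+(q)}$, which is immediate from a timelike curve through the point in question (it can be taken to remain in $\M$, e.g.\ tangent to $\pM$ at boundary points by Proposition \ref{extendfield}); hence $p,q\in\overline{I^+(p)}=\overline{I^+(q)}$. Using Proposition \ref{imprisoned}, fix a causally convex neighbourhood $W\ni q$ with $p\notin\overline W$. Since $W$ is an open neighbourhood of $q\in\overline{I^+(p)}$, there is $x\in I^+(p)\cap W$; fix it. As $x\in I^+(p)$, the set $I^-(x)$ is open (Prop.~\ref{p_opentrans}(a)) and contains $p\in\overline{I^+(q)}$, so there is $y\in I^+(q)\cap I^-(x)$, and since such $y$ may be chosen arbitrarily close to $p\notin\overline W$ we may also take $y\notin W$. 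Concatenating a future timelike curve from $q$ to $y$ with one from $y$ to $x$ produces a future causal curve with both endpoints $q,x$ in $W$ but passing through $y\notin W$, contradicting causal convexity of $W$. Hence $(\M,g)$ is future and, dually, past distinguishing, and the three implications chain together as claimed.

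These arguments are routine; the only point requiring a little care — and the place to confirm the boundary is harmless — is the choice of auxiliary points in the last step, namely selecting $x\in W$ \emph{before} $y$ so that $I^-(x)$ is available as an open set around $p$, and then extracting $y$ outside $\overline W$. That the ``small causally convex neighbourhood'' machinery remains valid at boundary points is exactly Lemma \ref{cc} together with Proposition \ref{imprisoned}, and openness of $I^\pm$ at boundary points is Proposition \ref{p_opentrans}(a); with these in hand no genuinely new difficulty arises from $\pM$.
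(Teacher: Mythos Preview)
Your proof is correct and follows essentially the same approach as the paper's own proof, which merely cites the boundaryless case (\cite[Lemma 3.10]{MSCH}) for the first implication, the transitivity relations of Prop.~\ref{p_opentrans} for the second, and calls the third trivial. You have supplied the details the paper leaves to references, and your verification that the boundary causes no trouble (via Lemma~\ref{cc}, Prop.~\ref{imprisoned}, and Prop.~\ref{p_opentrans}) is exactly the point.
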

\begin{proof}
The first implication  follows as  in the case without boundary \cite[Lemma 3.10]{MSCH} (see \cite[Prop. 3.7(a)]{LuisTesis} for full details); this also happens for  the second one (a contradiction follows easily by applying the transitivity relations in Prop. \ref{p_opentrans}), and the last one is trivial.
\end{proof}


Following the ladder,
$(\M,g)$ is {\em stably causal}, when it admits a  {\em time function}, i.e., a continuous function $\tau$ which increases strictly on all future-directed causal curves. This step admits some classical characterizations which are stated next only for the sake of completeness. \ncambios{The equivalence among these characterizations can be done transplanting the techniques in the case without boundary, and the proof will be sketched in Subsection \ref{s4.1}, just after Remarks \ref{r_final}, \ref{r_A}.}
\begin{prop}\label{p_stable}
For a spacetime-with-timelike-boundary $(\overline{M},g)$ they are equivalent:
\ben
\item It admits a time function (i.e., the spacetime is stably causal).
\item It admits a temporal function $\tau$ (i.e., $\tau$ is smooth with timelike past-directed $\nabla \tau$).
\item There exists a strongly causal metric $g'$ with $g<g'$.
\een
In this case, the  spacetime-with-timelike-boundary is also  strongly causal.
\end{prop}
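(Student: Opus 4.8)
The plan is to transplant the classical chain of equivalences (Geroch, Hawking, Bernal--S\'anchez) from the case without boundary, handling $\pM$ through the double manifold $\M^d$ (which supplies partitions of unity on $\M$, cf.\ the discussion around Prop.~\ref{pp}) and through the Gaussian charts adapted to $\pM$ of Cor.~\ref{c_localsplitting}. I would organize the argument as $(2)\Rightarrow(1)$, then the substantive $(1)\Rightarrow(2)$, then $(2)\Rightarrow(3)$ and $(3)\Rightarrow(1)$, and finally deduce strong causality.

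$(2)\Rightarrow(1)$, together with the auxiliary fact ``temporal $\Rightarrow$ strongly causal''. A temporal function is in particular a continuous time function: if $\nabla\tau$ is timelike past-directed and $\dot\gamma$ is future-directed causal, then $(\tau\circ\gamma)'=g(\nabla\tau,\dot\gamma)>0$ since the two vectors lie in opposite cones, so $\tau$ increases strictly on future-directed causal curves. To see that a temporal $\tau$ forces strong causality I would take, around any $p$, a Gaussian chart as in \eqref{e_Gaussian} (a normal chart if $p\in M$), in which a future-directed causal curve has its spatial-coordinate variation bounded by a constant times its $\tau$-variation; hence the causally-convex-in-$V$ neighbourhoods produced by Lemma~\ref{cc}, intersected with thin slabs $\{|\tau-\tau(p)|<\epsilon\}$, become genuinely causally convex in $\M$, and Prop.~\ref{imprisoned} yields strong causality.

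$(1)\Rightarrow(2)$ is the main step, and the step I expect to be the main obstacle. Starting from a (merely continuous) time function $\tau_0$ I would reproduce the Bernal--S\'anchez construction: around each $p\in\M$ build a smooth local temporal function on a small causally convex neighbourhood --- using a boundary-adapted Gaussian chart when $p\in\pM$ (so that the slices $\{x^0=\text{const}\}$ of \eqref{e_Gaussian} are spacelike and $x^0$ is itself a local temporal function) and an ordinary normal chart when $p\in M$ --- ``controlled'' by $\tau_0$ in the usual step-function sense, and then sum these local functions against a partition of unity obtained from $\M^d$, checking that the sum has past-directed timelike gradient everywhere. No Cauchy or tangency property is needed here, so this is the light version of the construction carried out for Thm.~\ref{t0}. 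The delicate point is that the future-set/limit-curve arguments underlying the control estimates must be run within the $H^1$ class of causal curves adopted in this paper; this is precisely what the intrinsic limit-curve statement Prop.~\ref{p2.11} is designed to supply, so the transplant goes through.

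$(2)\Rightarrow(3)$ and $(3)\Rightarrow(1)$. Given a $g$-temporal $\tau$, the property ``$\tau$ is $h$-temporal'' (i.e.\ $h^{-1}(d\tau,d\tau)<0$ with the coherent time orientation) is open in the $C^0$-fine topology on metrics; taking a standard one-parameter family $\{g_\lambda\}$ with $g<g_\lambda$ and $g_\lambda\to g$ as $\lambda\to0^+$ (built locally from a timelike vector field, cf.\ Prop.~\ref{extendfield}, as in \cite{geroch, BS03}), some $g':=g_\lambda$ keeps $\tau$ temporal, whence $g'$ is strongly causal by the auxiliary fact above; this gives $(3)$. For $(3)\Rightarrow(1)$, a strongly causal $g'>g$ is in particular a causal metric with strictly wider cones, and I would transplant Hawking's argument: with a finite admissible measure $m$ on $\M$ (whose existence with boundary is established in Section~\ref{s3}), set $t^{\pm}:=m(I^{\pm}(\cdot))$ and use that the cones of $g'$ are strictly wider than those of $g$ to show that $t:=\log(t^-/t^+)$ is a continuous time function for $g$. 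Finally, $g$-causal curves are $g'$-causal and $g'$ is strongly causal, so $g$ is strongly causal too; as $(1)$, $(2)$, $(3)$ are now equivalent, this proves the last assertion.
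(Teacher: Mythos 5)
Your proposal is correct and follows essentially the same route as the paper: $(1)\Rightarrow(2)$ by transplanting the Bernal--S\'anchez smoothing (noting no tangency condition on $\nabla\tau$ is needed at $\pM$), $(2)\Rightarrow(3)$ by widening the cones while keeping $\tau$ temporal (the paper does this explicitly via $g_\alpha=-(\Lambda+\alpha)d\tau^2+g_\tau$, which is exactly your one-parameter family), $(3)\Rightarrow(1)$ via Hawking's volume-function argument with an admissible measure, and strong causality deduced from the wider strongly causal metric. The only cosmetic difference is your extra direct argument that a temporal function already forces strong causality, which the paper sidesteps by deriving strong causality from item 3 (citing the thesis for a direct proof from 1).
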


The higher levels of the ladder (related to Geroch's proof of the splitting) will be revisited in  Section \ref{s3}.  Its definitions (as optimized in \cite{BS07, MSCH} for the case without boundary) \ncambios{are given hereunder. First, recall that the natural topology in $P(\overline{M})$ is the one admitting as a basis the sets $\{U_K: K\subset \overline{M}$ is compact$\}$, where $U_K=\{A\subset \overline{M}: A\cap K=\emptyset\}$, see \cite[Def. 3.37 to Prop. 3.38]{MSCH}.}

\begin{defi} A spacetime-with-timelike-boundary $(\M,g)$ is:
\bit
\item  \ncambios{{\em causally continuous}, when the set valued functions $I^{\pm}:\overline{M} \rightarrow P(\overline{M})$ are both one to one (that is, the spacetime is distinguishing) and continuous (for the natural topology in
$P(\overline{M})$);}
\item {\em causally simple}, when it is causal and all $J^+(p), J^-(p), p\in \M$ are closed;
\item {\em globally hyperbolic}, when it is  causal and all $J^+(p)\cap J^-(q)$, $p,q\in \M$ are compact.
\eit
\end{defi}

\subsection{Continuous vs Lipschitz/$H^1$ causal curves}\label{s2.3}

Even though the basic definitions in Lorentzian Geometry are carried out with
smooth elements (in particular, causal curves are
regarded as piecewise smooth),
{\em continuous causal} curves are required for relevant purposes.
Indeed, a key result
is the {\em limit curve theorem} \cite[Prop. 3.31]{Beem} which, under some hypotheses, ensures the existence of a limit curve to a sequence of causal ones, being the limit only  continuous causal  (even if the  causal curves in the sequence are smooth).
In the case of distinguishing spacetimes (without boundary),
a continuous future-directed causal curve $\gamma: I\subset \R \rightarrow M$ is any continuous curve that preserves the causal relation, that is, satisfying: $t,t'\in I$ and $t<t'$ implies
$\gamma(t)<\gamma(t')$, see \cite[Prop. 3.19]{MS}; for non-distinguishing spacetimes, this property is required to be satisfied locally in arbitrarily small neighbourhoods, see \cite[Sect. 3.5]{MS}.
Continuous causal curves in a spacetime (without boundary) are known to satisfy a locally Lipschitz condition. This condition allows us to identify these curves (when conveniently reparametrized) with $H^1$ curves.
\bcambios{We prefer to  speak about  $H^1$ curves because, on the one hand,  parametrized curves are considered along this paper and, on the other,  $H^1$  is the natural regularity for the convergence of  curves, not necessarily causal,  in other problems,  as 
when spacelike geodesics are studied \cite{CFS, Masiello}.}
More precisely, recall that  a 
curve $\gamma:J \rightarrow \R^n$ defined on a compact interval $J$  is  $H^1$ when it is absolutely continuous (equally,  it satisfies both  differentiability almost everywhere and the Fundamental Theorem of Calculus, $  \gamma(t)=\gamma(0)+  \int_{0}^{t} \gamma'(s)ds$, $t\in J$) and $\gamma'$ is $L^2$ integrable; the set of all such curves is the {\em Sobolev space}
$H^{1}(J,\mathbb{R}^{n})$. \bcambios{Clearly, Lipschitz curves are $H^1$ and, conversely,}
 any $H^1$ curve $\gamma$ with  $|\gamma'|$ bounded a.e. by a constant is Lipschitz; \bcambios{so}, both conditions will be interchangeable for continuous causal curves. The following is
  a natural extension to  any interval $I$ and \ncambios{manifold-with-boundary} (see \cite[\S 3.1.3]{LuisTesis} for further details).

\begin{defi}\label{d_contcausalcurve}
Let $\M$ be an $n$  \ncambios{manifold-with-boundary} and $I\subset \R$ any interval. A continuous curve $\gamma:I \rightarrow \M$ is a
$H^1$-{\em curve} if, for any local chart $(U,\varphi)$,
$\varphi \circ \gamma\mid_{J}$ belongs to 
$H^{1}(J,\mathbb{R}^{n})$ for all compact  intervals  $J\subset \gamma^{-1}(U)$. The space of $H^1$-curves from $I$ to $\M$ will be denoted  $H^{1}(I,\M)$.

In the case that $(\M,g)$ is a spacetime-with-timelike-boundary, $\gamma$ is called {\em future (resp. past) -directed $H^1$-causal} if it is $H^1$ and its a.e. derivative is future (resp. past) -directed $H^1$-causal; $\gamma$ is {\em $H^1$-causal} if it is either future or past-directed $H^1$-causal.
\end{defi}
For manifolds without boundary, it is proven in \cite[Appendix A]{CFS} that a curve is $H^1$-causal if and only if it is continuous causal (in the sense described above) up to a reparametrization; moreover,
from the proof it is also clear that the reparametrization can be always carried out locally by using any temporal function (in this case, $\gamma$ can be
regarded as a Lipschitz function, according to the discussion above Defn. \ref{d_contcausalcurve}).
\ncambios{Recall that a curve $\gamma$ is a {\em limit curve} of a sequence of curves $\{\gamma_n\}$ on $M$ if there is a subsequence $\{\gamma_m\}$ such that for all $p\in$ Im$( \gamma)$, each neighborhood of $p$
intersects all but finitely many of the $\gamma_m$’s.}
Then, the classical theorem of limit curves is also valid in the framework of spacetimes-with-timelike-boundary and $H^{1}$-causal curves, see \cite[Prop. 3.31]{Beem} and \cite[Lemma 3.23]{didier}, namely, just extending  $(\M,g)$ to a manifold without boundary (Prop. \ref{pp}) and applying the results in this case (see \cite[Prop. 3.16]{LuisTesis} for details), i.e.:

\begin{prop}\label{p2.11}
Let $(\M,g)$ be a spacetime-with-timelike-boundary and $\{\gamma_m\}_m$ a sequence of future-directed inextensible $H^1$-causal curves. If $p \in \M$ is an accumulation point \ncambios{of $\{\gamma_m\}_m$}, then there exists a limit curve $\gamma$ of
$\{\gamma_{m}\}_{m}$ which is a future-directed inextensible $H^1$-causal curve  which crosses $p$.
\end{prop}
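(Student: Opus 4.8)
The plan is to deduce the limit curve theorem with boundary from the already-established version for spacetimes without boundary, using the double manifold as the bridge. First I would invoke Proposition~\ref{pp} to extend the metric $g$ on $\M$ (regarded as a closed subset of the double $\M^d$) to a Lorentzian metric $\widetilde g$ on an open set $\widetilde M\subset\M^d$ with $\M\subset\widetilde M$. Then, by Proposition~\ref{extendfield}(iii), I would arrange a timelike vector field on $\M$ tangent to $\pM$; extending it (again off $\pM$ into $\widetilde M$) gives a time-orientation on $(\widetilde M,\widetilde g)$, so that $(\widetilde M,\widetilde g)$ is a bona fide spacetime without boundary and the inclusion $\M\hookrightarrow\widetilde M$ sends future-directed $H^1$-causal curves of $(\M,g)$ to future-directed $H^1$-causal curves of $(\widetilde M,\widetilde g)$.

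Next I would apply the classical limit curve theorem in $(\widetilde M,\widetilde g)$ (in its $H^1$ formulation, as in \cite[Appendix A]{CFS} together with \cite[Prop.~3.31]{Beem}) to the sequence $\{\gamma_m\}$. Here there is a subtlety: a curve that is \emph{inextensible} in $\M$ need not be inextensible in the larger $\widetilde M$, since it could run off the boundary into $\widetilde M\setminus\M$. So before applying the theorem I would either pass to a subsequence whose images stay in a suitable closed subset, or simply apply the theorem to obtain \emph{some} limit curve $\gamma$ through $p$ defined on a maximal interval in $\widetilde M$, and then argue that the portion of $\gamma$ lying in $\M$ is what we need. The theorem yields an inextensible (in $\widetilde M$) future-directed $H^1$-causal limit curve $\gamma$ with $p\in\mathrm{Im}(\gamma)$.

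It then remains to show $\mathrm{Im}(\gamma)\subset\M$ and that $\gamma$ is inextensible \emph{in $\M$}. For the containment: each $\gamma_m$ has image in the closed set $\M$, and a limit curve is a $C^0$ (indeed uniform on compacta, after reparametrization) limit of a subsequence, so its image lies in $cl(\M)=\M$. For inextensibility in $\M$: suppose $\gamma$ defined on $[a,b)$ had a continuous extension to $b$ inside $\M$; since $\M\subset\widetilde M$, this would also extend $\gamma$ continuously in $\widetilde M$, contradicting its inextensibility there (one must be slightly careful that "inextensible in $\widetilde M$" is not weaker, but a continuous endpoint in $\M$ is a continuous endpoint in $\widetilde M$, so the implication holds). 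The same argument applies at the left endpoint. Hence $\gamma:I\to\M$ is a future-directed inextensible $H^1$-causal curve crossing $p$, as desired. I would cite \cite[Prop.~3.16]{LuisTesis} for the detailed execution of these routine verifications.

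The main obstacle I anticipate is precisely the interaction between inextensibility and the passage to $\widetilde M$: one must ensure that the limit curve produced in $\widetilde M$ does not "escape" through $\pM$ into the auxiliary region $\widetilde M\setminus\M$, and that an $H^1$-causal curve of $\widetilde g$ whose image happens to lie in $\M$ is genuinely $H^1$-causal for $g=\widetilde g|_{\M}$ — which is immediate since $g$ is just the restriction, and $H^1$-causality is checked in charts, which for interior points are unchanged and for boundary points are half-space charts where the restricted condition is literally the ambient one. The rest is bookkeeping: reparametrizing by a temporal function so that the curves become Lipschitz with a uniform constant, extracting the subsequence, and identifying the resulting continuous causal curve with an $H^1$-causal one via the equivalence recalled after Definition~\ref{d_contcausalcurve}.
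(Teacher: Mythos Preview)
Your proposal is correct and follows essentially the same route as the paper, which simply says to extend $(\M,g)$ to a spacetime without boundary via Proposition~\ref{pp} and apply the classical limit curve theorem there (citing \cite[Prop.~3.31]{Beem} and \cite[Prop.~3.16]{LuisTesis} for details). Your worry about inextensibility is in fact a non-issue: since $\M$ is closed in $\widetilde M$, any curve inextensible in $\M$ is automatically inextensible in $\widetilde M$ (a continuous endpoint in $\widetilde M$ would lie in $cl(\M)=\M$), so the classical theorem applies directly to the $\gamma_m$ without any preliminary extension or subsequence argument.
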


\begin{rem}\label{r_Didier} {\em For any strongly causal spacetime-with-timelike-boundary the following alternative notion of continuous causal curve was introduced in Solis' Ph.D. Thesis, see \cite[Def. 3.19 and below]{didier}:
a continuous curve $\gamma:I \rightarrow \M$ is {\em future-directed causal} if for any $t_{0} \in I$ there exists a $\widetilde{M}$-convex neighbourhood $U_{0}$ around $\gamma(t_{0})$ (where $(\widetilde{M},\widetilde{g})$ is a spacetime without boundary that extends $(\M,g)$, see Prop. \ref{pp}) and an interval $[a,b] \subset I$ such that for all $s,t \in [a,b]$ with $s \leq t_{0} \leq t$, one has $\gamma(t) \in J^{+}(\gamma(s),U_{0} \cap \M)$.
 This definition is shown to be independent of the chosen $(\widetilde{M},\widetilde{g})$, and causality relations in $(\M,g)$ defined by using such continuous curves become  equivalent to the classical ones with piecewise smooth ones, see \cite[Remarks 3.20 and 3.21]{didier}. However,
it is not obvious from the proof in  \cite[Lemma 3.23]{didier}, whether the limit curve theorem for such curves
will yield a limit curve which is also continuous causal according to previous definition. \bcambios{Indeed,  Appendix B shows a
spacetime with ($C^\infty$) timelike boundary containing two points which can be connected by means of a $H^1$-causal curve but not by a piecewise smooth one.}  
}\end{rem}


\bcambios{Such problems will be circumvented here by assuming  that the regularity of the  causal curves is $H^1$, and computing the causal futures and past consistently with these curves (Convention \ref{convention} below).
Indeed, reasoning  as in \cite{didier} with $H^1$ curves, we deduce the following key proposition, which is an extension to the case with boundary of the corresponding classical limit curves results in \cite[Prop. 3.34, Cor. 3.32]{Beem}.}

\begin{prop}\label{otro} (1) Let $(\overline{M},g)$ be a strongly causal spacetime-with-timelike-boundary. Suppose that $\{\gamma_n\}$ is a sequence of causal curves defined on $[a,b]$ such that $\gamma_n(a)\rightarrow p$, $\gamma_n(b)\rightarrow q$. A causal curve $\gamma:[a,b]\rightarrow \overline{M}$ with $\gamma(a)=p$ and $\gamma(b)=q$ is a limit curve of $\{\gamma_n\}$ \bcambios{(according to the definition recalled above Prop. \ref{p2.11})} iff there is a subsequence $\{\gamma_m\}$ of $\{\gamma_n\}$ which converges to $\gamma$ in the $C^0$ topology on curves.

(2) Let $(\overline{M},g)$ be a globally hyperbolic spacetime-with-timelike-boundary. Suppose that $\{p_n\}$ and $\{q_n\}$ are sequences in $\overline{M}$ converging to $p$ and $q$ in $\overline{M}$, resp., with $p\neq q$, and $p_n\leq q_n$ for each $n$. Let $\gamma_n$ be a future-directed causal curve from $p_n$ to $q_n$ for each $n$. Then there exists a future-directed causal limit curve $\gamma$ which joins $p$ to $q$.
\end{prop}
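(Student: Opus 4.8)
The plan is to deduce both parts from the corresponding results for spacetimes without boundary by embedding $(\overline M,g)$ into an extension $(\widetilde M,\widetilde g)$ via Proposition~\ref{pp}, reasoning there with $H^1$-causal curves, and then checking that the limit curve stays in $\overline M$. First I would fix, by Proposition~\ref{pp}, an open set $\widetilde M\subset \overline M^d$ containing $\overline M$ as a closed subset, together with a $C^1$ Lorentzian metric $\widetilde g$ on $\widetilde M$ extending $g$; shrinking $\widetilde M$ if necessary and using Proposition~\ref{extendfield}(iv) to get a global timelike $T$ on $\overline M$ and extending it, I may assume $(\widetilde M,\widetilde g)$ is time-oriented consistently with $(\overline M,g)$. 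The key observation is that a curve contained in $\overline M$ is future-directed $H^1$-causal for $g$ precisely when it is so for $\widetilde g$ (the cones agree along $\overline M$), so the causal relations of $\overline M$ are the restrictions of those of $\widetilde M$ to pairs of points of $\overline M$ joined by curves that do not leave $\overline M$.

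For part~(1), I would first transfer strong causality: since $(\overline M,g)$ is strongly causal, by Proposition~\ref{imprisoned} each $p\in\overline M$ has arbitrarily small causally convex neighbourhoods in $\overline M$; intersecting with suitable causally convex neighbourhoods in $\widetilde M$ (these exist as $\widetilde M$ may be assumed strongly causal after a further shrinking, or at least each relevant point does by Lemma~\ref{cc}) one can arrange that the relevant compact region of $\overline M$ sits inside a strongly causal open subset of $\widetilde M$. Then I would apply the classical equivalence ``$\gamma$ is a limit curve of $\{\gamma_n\}$ with fixed endpoints iff a subsequence converges to $\gamma$ in $C^0$'' (the analogue of \cite[Prop.~3.34]{Beem}, valid for $H^1$-causal curves in the strongly causal spacetime $\widetilde M$ by \cite[Appendix A]{CFS}): the curves $\gamma_n$ and the candidate $\gamma$ all lie in $\overline M$, so the statement in $\widetilde M$ gives the statement in $\overline M$ verbatim, once one notes that $C^0$-convergence of curves with image in the closed set $\overline M$ yields a limit with image in $\overline M$, and that an $H^1$-causal curve of $\widetilde g$ contained in $\overline M$ is $H^1$-causal for $g$.

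For part~(2), given the sequences $p_n\to p$, $q_n\to q$ with $p\ne q$ and future-directed causal curves $\gamma_n$ from $p_n$ to $q_n$, I would reparametrize each $\gamma_n$ on a common interval $[a,b]$ (e.g. by a fixed temporal function near the compact set involved, as in the discussion below Definition~\ref{d_contcausalcurve}) so that they are uniformly Lipschitz; global hyperbolicity of $(\overline M,g)$ forces all $\gamma_n$ to remain in a fixed compact set $K$ (here I would use that $J^+(p')\cap J^-(q')$ for slightly enlarged $p',q'$, or a finite union of such, is compact, together with the non-imprisonment Corollary~\ref{imprisonedb}). Then by Proposition~\ref{p2.11} (or directly by Arzelà--Ascoli after the uniform Lipschitz reparametrization) a subsequence converges in $C^0$ to a future-directed $H^1$-causal limit curve $\gamma$ with image in $K\subset\overline M$; evaluating at the endpoints gives $\gamma(a)=p$, $\gamma(b)=q$, and $p\ne q$ ensures $\gamma$ is non-constant, hence a genuine future-directed causal curve from $p$ to $q$.

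The main obstacle I expect is bookkeeping at the boundary in the limit-curve argument: one must be sure that (i) the extension $(\widetilde M,\widetilde g)$ can be chosen strongly causal (or that the relevant neighbourhoods in $\widetilde M$ are causally convex) so that the classical limit curve theorem applies there, and (ii) no ``escape through $\partial M$'' phenomenon occurs — which is precisely why $H^1$-causal curves and the compactness provided by global hyperbolicity of $\overline M$ (not just of $\widetilde M$) are used, since the cone agreement along $\partial M$ guarantees a $\widetilde g$-causal limit curve lying in $\overline M$ is automatically $g$-causal. Handling (i) cleanly — possibly by only shrinking $\widetilde M$ to a strongly causal neighbourhood of the compact region at hand, rather than globally — is the delicate point; everything else is a transcription of the boundaryless results.
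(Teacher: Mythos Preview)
Your proposal is correct and follows essentially the same approach as the paper, which does not give a detailed proof but simply indicates that one reasons ``as in \cite{didier} with $H^1$ curves'' to extend \cite[Prop.~3.34, Cor.~3.32]{Beem} to the case with boundary---i.e., precisely by passing to an extension $(\widetilde M,\widetilde g)$ as in Prop.~\ref{pp} (this is also how Prop.~\ref{p2.11} is proved) and invoking the classical results there. Your concern~(i) about strong causality of $\widetilde M$ is the only point where care is needed, and your suggested fix (shrink $\widetilde M$ to a strongly causal neighbourhood of the relevant compact subset of $\overline M$) is exactly right; alternatively, for part~(2) the standard proof of \cite[Cor.~3.32]{Beem} parametrizes by arc length with respect to an auxiliary complete Riemannian metric rather than a temporal function, which sidesteps the issue entirely and gives the uniform Lipschitz bound directly from the compact containment $\gamma_n\subset J^+(p')\cap J^-(q')$ you describe.
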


\subsection{Piecewise smooth vs $H^1$ causal curves}\label{s2.4}

\bcambios{The example in the Appendix B shows that the causal future (and, then, the notion of global hyperbolicity) is, in general, different, when computed either with piecewise smooth or with  $H^1$-curves. However, in this subsection (which is independent of the remainder up to the final Convention \ref{convention}), we will check:}
\begin{enumerate} \item \label{i1} \bcambios{ The chronological futures and pasts are equal when computed with piecewise smooth timelike curves or with $H^1$-curves which are  timelike in any reasonable sense  (Prop. \ref{p_futuresmooth}, Remark \ref{r_futuresmooth}). }
\item \label{i2} Any spacetime which is globally hyperbolic when its causal futures and pasts  are computed by means of piecewise smooth curves,  it is also globally hyperbolic  when they are computed with $H^1$ ones (Prop. \ref{l}), showing Appendix B a counterexample to the converse.
\end{enumerate}
\bcambios{Let us start with item \ref{i1}.
\begin{prop}\label{p_futuresmooth}
Let  $(\overline{M},g)$ be a spacetime-with-timelike-boundary, $p,q\in \M$ and $\gamma: [\tau_0,\tau_1] \rightarrow \overline{M}$ a future-directed  $H^1$-causal curve from $p$ to $q$ which satisfies that, whenever it is parametrized by a local temporal function $\tau$, there exists some $\nu>0$ such that
 $g(\gamma',\gamma')<-\nu$ a.e. Then, there exists a  future-directed timelike piecewise smooth curve from $p$ to $q$.
\end{prop}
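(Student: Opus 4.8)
The plan is to argue locally and then patch. First I would cover the compact image $\gamma([\tau_0,\tau_1])$ by finitely many Gaussian charts (Cor.~\ref{c_localsplitting}), or plain normal neighbourhoods in the interior, in each of which the metric has the product form $g=-\Lambda\, d\tau^2 + g_\tau$ with $\tau$ a local temporal function whose gradient is tangent to $\pM$. Splitting $[\tau_0,\tau_1]$ at a Lebesgue number of this cover, it suffices to treat a single chart: I must show that if $\gamma|_{[a,b]}$ is $H^1$-causal, lies in one chart, and satisfies $g(\gamma',\gamma')<-\nu$ a.e.\ when parametrized by the local temporal function $\tau$, then its endpoints can be joined by a timelike piecewise smooth curve staying in (a slightly enlarged) chart. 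The key point is that, since $\gamma$ cannot touch $\pM$ tangentially in a causal direction (the boundary is timelike and $\nabla\tau$ is tangent to it, so $\partial_\tau$ is transverse and inward-or-outward only through the interior), $\gamma$ meets $\pM$ at most at isolated parameter values, so after a further subdivision we may assume each piece lies either entirely in $M$ or has the Gaussian product structure with $\gamma$ reaching $\pM$ only at an endpoint.

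Next, in the chart I would reparametrize $\gamma$ by $\tau$ (legitimate because $\tau\circ\gamma$ is strictly increasing and Lipschitz with Lipschitz inverse on the relevant range, the inverse bound coming precisely from the hypothesis $g(\gamma',\gamma')<-\nu$), writing $\gamma(\tau)=(\tau,x(\tau))$ with $x$ an $H^1$ curve in the spatial slice. The causal condition reads $g_\tau(x',x')\le \Lambda(\tau,x)$, and the strict-timelike hypothesis upgrades this to $g_\tau(x',x')\le \Lambda - \nu'$ for some $\nu'>0$ a.e. (after absorbing constants). Now mollify: replace $x$ by a smooth curve $x_\delta$ with the same endpoints, $C^0$-close to $x$ and with $x_\delta'$ close to $x'$ in $L^2$, hence with $g_\tau(x_\delta',x_\delta')$ close in $L^1$ to $g_\tau(x',x')$. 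Because the bound $g_\tau(x',x')\le \Lambda-\nu'$ holds with a uniform gap, a standard convexity/averaging estimate (Jensen applied to the convolution, using that $v\mapsto g_\tau(v,v)$ is a positive-definite quadratic form, plus uniform continuity of $\Lambda$ and $g_\tau$ on the compact chart) gives, for $\delta$ small, the pointwise inequality $g_\tau(x_\delta',x_\delta')<\Lambda(\tau,x_\delta(\tau))$ everywhere; equivalently the smooth curve $\tau\mapsto(\tau,x_\delta(\tau))$ is timelike. Finally I would smooth the corners at the finitely many subdivision points: at each junction the left and right tangent vectors are timelike future-directed, so they lie in the (convex) future cone, and a small smoothing of the broken curve keeps it timelike by openness of the timelike cone — this gives a genuinely smooth, or at worst piecewise smooth, future-directed timelike curve from $p$ to $q$.

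The main obstacle I expect is the boundary behaviour: one must be sure that the mollified curves $x_\delta$ do not cross $\pM$ when $\gamma$ grazes it. This is handled by the Gaussian structure — the transverse coordinate $s\ge 0$ of $\gamma$ is an $H^1$ function vanishing only at isolated $\tau$'s, and one can mollify within the half-space respecting $s\ge 0$ (e.g.\ reflect, mollify, take absolute value of the transverse component, or simply push $x_\delta$ slightly into $M$ by adding $\epsilon\,N$ before mollifying, at the cost of an $O(\epsilon)$ change to $g_\tau(x_\delta',x_\delta')$ that is swallowed by the uniform gap $\nu'$). A secondary technical point is the reparametrization step when $\tau$ changes from chart to chart; this is routine since consecutive local temporal functions differ by a diffeomorphism of intervals and the $H^1$-causal class is reparametrization-invariant (recalled after Defn.~\ref{d_contcausalcurve}). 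Everything else is soft: openness of $\ll$ (Prop.~\ref{p_opentrans}(a)) and the transitivity relations $p\ll q\le r\Rightarrow p\ll r$ take care of concatenation.
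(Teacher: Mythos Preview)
Your overall strategy --- localize in Gaussian charts, mollify the spatial part, use the uniform gap $\nu$ to retain timelikeness, then patch via a Lebesgue-number argument --- is the same as the paper's. But there is one genuine error and two places where your sketch is looser than it can afford to be.

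\smallskip

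\textbf{The error.} You assert that ``$\gamma$ meets $\pM$ at most at isolated parameter values'' because ``$\partial_\tau$ is transverse.'' This is backwards: in the Gaussian/product chart $\nabla\tau$ is tangent to $\pM$, hence $\partial_\tau=-\Lambda\nabla\tau$ is tangent as well, not transverse. An $H^1$-causal curve can run along the timelike boundary on a whole subinterval, or touch it on a complicated (even Cantor-like) set; your proposed subdivision ``each piece lies either entirely in $M$ or reaches $\pM$ only at an endpoint'' is therefore not available. The paper does not attempt any such subdivision. Instead it handles the constraint $s\ge 0$ uniformly: since $s_\gamma\ge 0$, convolving with a \emph{nonnegative} mollifier keeps the transverse component nonnegative, so the smoothed curve stays in $\M$ regardless of how $\gamma$ interacts with $\pM$. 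Your alternative suggestions (take absolute value --- kills smoothness; add $\epsilon N$ --- moves a boundary endpoint) do not quite work; the positive-kernel trick is the right one.

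\smallskip

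\textbf{Two technical gaps.} First, you say ``replace $x$ by a smooth $x_\delta$ with the same endpoints.'' Mollification does not fix endpoints, and an affine correction to restore them may violate $s\ge 0$ when an endpoint lies on $\pM$. The paper avoids this by fixing only the initial point (adding a constant so $x_\epsilon(0)=0$) and allowing the terminal point to move: one first replaces $q$ by a nearby $q_\delta$ slightly in its chronological past, so that it suffices to hit any point close to $q_\delta$, and this slack absorbs the endpoint drift. Second, your passage from ``$g_\tau(x'_\delta,x'_\delta)$ close to $g_\tau(x',x')$ in $L^1$'' to a pointwise bound is not immediate when $g_\tau$ varies. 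The paper first narrows the cones to a flat Minkowski metric $\langle\cdot,\cdot\rangle$ with $g<\langle\cdot,\cdot\rangle$ on the chart; then Jensen applied to the convolution gives the pointwise estimate directly, and dominated convergence closes the argument.
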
}
\begin{proof} Consider first the local case around $p$,
i.e.,  $\gamma$ is included in  some convenient coordinate neighborhood of $p$. We will
 focus on the case $p\in \partial M$, and our proof  will hold in   the case $p\in M$ with no modification (however,  the result is known in the case without boundary, where normal and convex neighborhoods can be used, see \gcambios{ \cite[Sect. 3.2]{Beem} \cite[Lemma 14.2]{O}, \cite[Appendix A]{CFS}}).  Take
$V=(-\epsilon,\epsilon) \times \bar V_{0}$
 as in Cor. \ref{cor}, being cl$(V)$  compact and included in a Gaussian chart with product coordinates
chart $(\tau,y^{1},..., y^{n-2}, y^{n-1}\equiv s) $  defined on a cube
(with
$\{s=0\}$  equal to $V\cap \partial M$).  As causality is conformally invariant, we can also assume   $g=-d\tau^{2}+g_{\tau}$.

 As a first simplification, the problem can been  reduced to the case when 
the metric is just the standard Lorentz-Minkowski one  $\langle\cdot, \cdot \rangle$ in the chosen coordinates.  To check this, narrow slightly the timelike cone at $p$ (that is,  consider the Lorentzian scalar product $(-1+\epsilon) d\tau^{2}+g_{\tau}$ at $p$ for small $\epsilon>0$) and extend it   as a flat Lorentzian metric $\langle\cdot, \cdot \rangle$  on $V$ with constant  coordinate components. Reducing $V$ if necessary, $\gamma$  satisfies $\langle \gamma', \gamma' \rangle < -\nu'$, for some  $\nu'>0$ smaller than $\nu$ ($\nu'$ will be relabelled as $\nu$).

   In order to compute $H^1$-norms and distances, we will consider  the natural Euclidean metrics  in these coordinates for both, $V$ and $\bar V_0$. In what follows, $|\cdot |_0$ will denote both the norm associated with the natural flat Riemannian metric  $\langle\cdot, \cdot \rangle_0$ on $\bar V_0$ (and the usual norm of $\R^{n-1}$).
In these coordinates, let $p=(0,y_0=0)$, $q=(\tau_1,y_1)$ and  put
\begin{equation}\label{e_cgamma}
 \gamma(\tau)=  (\tau,y^{1}_\gamma (\tau), \ldots ,y^{n-2}_\gamma(\tau),  y^{n-1}_\gamma(\tau)= s_\gamma(\tau))  \; \equiv (\tau,y_\gamma (\tau)), \; \tau \in [0,\tau_{1}];
\end{equation}
recall that $y_\gamma$ is  Lipschitz  
  with $|y'_\gamma(\tau)|_0^2 < 1-\nu$, a.e.

  As a second simplification,  it is sufficient  to ensure that $p$ and the points of a sequence $\{q_m\}\rightarrow q$ can be joined by means of smooth timelike curves. The reason is that, choosing some small $\delta\in (0,\tau_1)$ and picking the point $q_\delta$ with $\tau(q_\delta)=\tau_1-\delta$ in the same integral curve of $\partial_\tau$ as $q$, the following claim holds: $\gamma$ can be used to construct another future-directed $H^1$-causal curve $\gamma_\delta$ from $p$ to $q_\delta$ with $\langle \gamma'_\delta,\gamma'_\delta \rangle <-\nu/2$ a.e. Thus, all the points  in some neighborhood  of $q_\delta$ can be connected with
$q$ by means of a smooth timelike curve, and the problem is reduced to connect (by means of smooth timelike curves) $p$ and a sequence  $\{q_m\}\rightarrow q_\delta$ (then, relabel $q_\delta$ as  $q$ and $\nu/2$ as $\nu$). To check the claim, let $\delta:=  (\sqrt{1+\nu/2} -1) \tau_1/\sqrt{1+\nu/2}$ and $ \tau_\delta :=\tau_1-\delta$. Clearly, the  curve
$$\gamma_\delta: [0,\tau_\delta] \longrightarrow V, \qquad \gamma_\delta(\tau):=\left(\tau, y_\gamma \left( \sqrt{1+\nu/2}\, \cdot \tau  \right)\right),$$
connects $p$ and $q_\delta:=(\tau_\delta, y_1)$, and writing
$\gamma_\delta (\tau)=(\tau,
y_\delta (\tau))$ as in \eqref{e_cgamma}:
\begin{equation}\label{e_nu}
\langle \gamma'_\delta, \gamma'_\delta \rangle
= -1 + \langle y'_\delta, y'_\delta \rangle_0
= -1 + (1+\nu/2) \, |y'_\gamma |_0^2 < -1 + (1+\nu/2) (1-\nu)
<-\nu/2  \quad \hbox{a.e.}
\end{equation}

Taking into account the previous two simplifications, the problem can be reduced to the following analytic one. Let $y_\gamma$ as above \eqref{e_cgamma} and  $\epsilon >0$ small, then find some smooth  ($C^{\infty}$ in our coordinates) curve  $x_{\epsilon}$
satisfying:
\begin{equation}\label{e4bis1}
x_{\epsilon}: [0, \tau_1]\longrightarrow  \bar V_0
\quad (\hbox{thus,} \;   s\circ x_{\epsilon} \geq 0),
\quad x_{\epsilon}(0)=0, \quad |x_{\epsilon}(\tau_{1}  )-y_{1}|_0\leq  \epsilon,
\end{equation}
and
 \begin{equation}\label{e4bis11}
 l_\epsilon(\tau_1)
\leq \tau_1,
\qquad \hbox{where} \quad l_\epsilon(\tau)=\int_{0}^{\tau} \sqrt{|x_{\epsilon}'(\bar \tau)|^2_0   + \frac{\nu}{2}  } \; \, d\bar \tau , \quad \forall \tau\in [0,\tau_1].
\end{equation}
  Indeed,  in this case the curve $\gamma_{\epsilon}(\tau)=(l_\epsilon( \tau),x_{\epsilon}(\tau))$ will be
$C^{\infty}$ and timelike  from $p=(0,0)$ to
$q'_\epsilon:=(l_\epsilon(\tau_1), x_{\epsilon}(\tau_1) )$.
As
$q_\epsilon:=(\tau_1, x_{\epsilon}(\tau_1 ))$ lies in the same integral curve of $\partial_\tau$ as $q'_\epsilon$ with $\tau(q'_\epsilon)=l_\epsilon(\tau_1) \leq \tau_1= \tau(q_\epsilon)$, a piecewise smooth timelike curve from $p$ to $q_\epsilon$ also exists. Thus, the result  follows from $\lim_{\epsilon \rightarrow 0} q_{\epsilon} =q$.


To solve this analytic problem, each component $x_\epsilon^i$ of $x_\epsilon$ will be obtained by means of a  covolution
$x_\epsilon^i(\tau)=(y^i_\gamma \star \rho_\epsilon)(\tau)$, where $\rho_\epsilon$ is  some mollifier.
Indeed, put $\epsilon_m:=1/m$ and $x_m:=x_{\epsilon_m}$.
 For each  Lipschitz function
 $y^i_\gamma: [0,\tau_1]\rightarrow \R$, covolution allows us to find a sequence  of functions $\{x^i_m\}_m\subset C^{\infty}([0,\tau_1])$ such that: (a)  $|y^i_\gamma-x^i_m|<1/m$, (b)~for some $L>0$,  all  $x^i_m$ are
$L$-Lipschitz  and (c) the sequence of derivatives $\{(x^i_m)'\}_m$ converge pointwise to $(y^i_\gamma)'$ a.e.  in $[0,\tau_1]$ (for the last property, use for example
\cite[Th. 4.1, p. 146]{EG}).
Taking into account that   $x_m^i(0)=0$  can be assumed additionally (otherwise, recall (a) and add a suitable constant to $x_m^i$),
 a sequence of curves $\{x_m\}_m$  satisfying the properties in \eqref{e4bis1} is obtained (to ensure $s\circ x_m (=x_m^{n-1})\geq 0$, choose a positive mollifier $\rho_\epsilon$). What is more, the conditions (a) and (b) imply that all the  sequences $\{x^i_m\}_m$ and $\{(x^i_m)'\}_m$ are bounded by constants, while  (c) implies the pointwise convergence of the corresponding integrands in \eqref{e4bis11} to the limit (expressed in terms of $y'_\gamma$).
So, applying the theorem of dominated convergence,  the sequence $\{l_m(\tau_1):=l_{\epsilon_m}(\tau_1)\}_m$  converges to:
 $$
\int_{0}^{\tau_1} \sqrt{|y_\gamma'|_0^2(\tau) + \frac{\nu}{2}} \;  d\tau <  \; \tau_1,
 $$
(recall $|y_\gamma'|_0^2<1-\nu$ a.e., see below \eqref{e_cgamma}),
and \eqref{e4bis11} holds, as required.

 Finally, to go from the previous local result to the  global one, a standard procedure is followed.  Namely,
given $\gamma$, for each $\tau\in I:=[\tau_0,\tau_1]$ the local case ensures the existence of some
$\delta_\tau>0$ such that, if $\tau'\in (\tau-\delta_\tau,\tau) \cap I$ (resp.  $\tau'\in (\tau ,\tau+\delta_\tau ) \cap I$ then there exists a  a  piecewise smooth future-directed timelike curve starting at $\gamma(\tau')$ and ending at $\gamma(\tau)$ (resp. starting at $\gamma(\tau)$ and ending at $\gamma(\tau')$). Thus, the result follows by taking a Lebesgue number $\delta$ for the obtained open covering of $I$, a partition of $I$ with diameter smaller than $\delta$, and, then, joining each pair of consecutive points for the partition by means of such a smooth  timelike curve.
\end{proof}

\begin{rem}\label{r_futuresmooth}
{\em \bcambios{Notice that the conditions imposed in Prop \ref{p_futuresmooth} on $\gamma$ would be a natural definition of being a {\em (future-directed) $H^1$-timelike curve} defined on a compact interval (for non-compact intervals, the definition would be that the restriction of $\gamma$ to any compact interval were $H^1$-timelike in the previous sense). Indeed, clearly, {\em a piecewise smooth curve is timelike in the usual sense if and only if it is $H^1$-timelike as above}.}

One could consider  a more general definition of $H^1$-timelike curve, by relaxing the inequality $g(\gamma',\gamma')<-\nu$ a.e. into $g(\gamma',\gamma')<0$ a.e. Such a definition would have the drawback that, even if $\gamma$ is smooth,
it might be $H^1$-timelike in this generalized sense but not in the usual one. Indeed, this would happen
when $\gamma'$ is timelike everywhere but in some points, where it is either lightlike or zero; for example, when $\gamma'$ is lightlike either at a finite number of points, or at a sequence of points and its limit.  Even though the proof of Prop.~\ref{p_futuresmooth} might be extended to this general case,
we do not feel that this generalized notion of $H^1$-timelike (or any possible alternative, even if it yielded a different sense of the chronological relation) would be useful. Roughly, the reason is that all the $H^1$-causal curves (in particular those with $g(\gamma',\gamma')<0$ a.e., or which were
$H^1$-timelike in any reasonable sense) can be approached by sequences of
$H^1$-timelike curves $\{\gamma_k\}$ with
$g(\gamma'_k,\gamma'_k)<-1/k$ a.e. (namely, locally, if $\gamma=(\tau,y_\gamma(\tau))$ put $\gamma_k=(\tau,\sqrt{1-1/k} \, \cdot y_\gamma(\tau))$), eventually maintaining one fixed endpoint $p\in \M$. For these curves, Prop. \ref{p_futuresmooth} applies; thus, in particular,  the corresponding $H^1$-causal future (or past) of $p$ lie in the closure of its chronological future (or past) computed by means of piecewise smooth timelike curves.
}\end{rem}

Following the discussion of item \ref{i2}, let $J^\pm_{ps}(p)$
be  the causal future/past of $p\in\M$ when computed with piecewise smooth causal curves, and $J^\pm_{H^1}(p)$ when computed with $H^1$-causal ones.

\begin{prop}\label{l}
Let  $(\overline{M},g)$ be a spacetime-with-timelike-boundary. If it satisfies the definition of global hyperbolicity computing the causal futures and pasts by using piecewise smooth causal curves, then $J^{\pm}_{ps}(p)=J^{\pm}_{H^{1}}(p)$ for all $p\in \overline{M}$. So, in this case,
$(\overline{M},g)$ also satisfies global hyperbolicity by using $H^1$-causal curves.
\end{prop}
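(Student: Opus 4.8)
The plan is to deduce $J^\pm_{ps}(p)=J^\pm_{H^1}(p)$ from two ingredients: Prop.~\ref{p_futuresmooth} together with Remark~\ref{r_futuresmooth} identify the piecewise smooth and the $H^1$ chronological futures/pasts, while the global hyperbolicity hypothesis forces $J^\pm_{ps}(p)$ to be \emph{closed}; these combine through Prop.~\ref{p_opentrans}(c). I fix $p\in\overline{M}$ and argue for $J^+$, the case of $J^-$ being symmetric (reverse the time-orientation). Note that $J^+_{ps}(p)\subset J^+_{H^1}(p)$ trivially, since a piecewise smooth causal curve is $H^1$-causal.

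First I would record the equality of chronological futures, $I^+_{ps}(p)=I^+_{H^1}(p)$ (and likewise for pasts). The inclusion ``$\subset$'' holds because, by Remark~\ref{r_futuresmooth}, a piecewise smooth timelike curve is $H^1$-timelike. For ``$\supset$'', if $q\in I^+_{H^1}(p)$ there is a future-directed $H^1$-timelike curve from $p$ to $q$; it satisfies the hypothesis of Prop.~\ref{p_futuresmooth}, which then furnishes a future-directed piecewise smooth timelike curve from $p$ to $q$, so $q\in I^+_{ps}(p)$. In particular $I^\pm_{ps}(p)=I^\pm_{H^1}(p)$ is open, by Prop.~\ref{p_opentrans}(a).

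Next I would show $J^+_{ps}(p)$ is closed, following the classical argument. Let $q_n\in J^+_{ps}(p)$ with $q_n\to q$. By Prop.~\ref{extendfield} the spacetime carries a smooth timelike vector field (which may be taken tangent to $\partial M$), and its integral curve through $q$ remains in $\overline M$ and shows $I^+_{ps}(q)\neq\emptyset$; choose $r$ with $q\ll_{ps}r$. Since $I^-_{ps}(r)$ is open and contains $q$, one has $q_n\in I^-_{ps}(r)$ for all large $n$, hence $q_n\in K:=J^+_{ps}(p)\cap J^-_{ps}(r)$ eventually. By hypothesis $K$ is compact, hence closed, so $q=\lim_n q_n\in K\subset J^+_{ps}(p)$. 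Combining the two steps with Prop.~\ref{p_opentrans}(c) (applied with $H^1$-causal curves, in accordance with our conventions),
\[
J^+_{H^1}(p)\ \subset\ cl\bigl(I^+_{H^1}(p)\bigr)\ =\ cl\bigl(I^+_{ps}(p)\bigr)\ \subset\ cl\bigl(J^+_{ps}(p)\bigr)\ =\ J^+_{ps}(p),
\]
so $J^+_{ps}(p)=J^+_{H^1}(p)$, and symmetrically $J^-_{ps}(p)=J^-_{H^1}(p)$.

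It remains to check global hyperbolicity of $(\overline M,g)$ with $H^1$-causal curves. Compactness of $J^+_{H^1}(p)\cap J^-_{H^1}(q)=J^+_{ps}(p)\cap J^-_{ps}(q)$ is immediate from the hypothesis. For causality, a non-constant closed future-directed $H^1$-causal curve through a point $x$ would pass through some $y\neq x$, giving $y\in J^+_{H^1}(x)$ and $x\in J^+_{H^1}(y)$; by the equalities just proved, $y\in J^+_{ps}(x)$ and $x\in J^+_{ps}(y)$, and concatenating the corresponding piecewise smooth causal curves yields a non-constant closed piecewise smooth causal curve through $x$, contradicting causality in the $ps$ sense. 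Hence $(\overline M,g)$ is causal for $H^1$-causal curves as well, which finishes the proof. I do not expect a genuine obstacle: the argument is essentially bookkeeping between the two curve classes. The delicate points are only that the notion of ``$H^1$-timelike'' tacit in the definition of $I^\pm$ must be exactly the one in the hypothesis of Prop.~\ref{p_futuresmooth} (so that Prop.~\ref{p_futuresmooth} and Prop.~\ref{p_opentrans} apply verbatim), and that the hypothesis ``causal plus compact $ps$-diamonds'' is precisely what is needed to run the ``closed $J^+$'' step, for which one uses only that $I^\pm_{ps}$ is open and non-empty.
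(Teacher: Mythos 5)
Your proof follows the same overall strategy as the paper's: the trivial inclusion $J^{+}_{ps}(p)\subset J^{+}_{H^{1}}(p)$, the chain $J^{+}_{H^{1}}(p) \subset cl(I^{+}(p))\subset cl(J^{+}_{ps}(p))=J^{+}_{ps}(p)$, and closedness of $J^{\pm}_{ps}(p)$ deduced from compactness of the piecewise smooth diamonds (your argument for the latter is exactly the one the paper defers to Lemma \ref{ppp}(b)). The one weak link is your justification of the first inclusion by invoking Prop.~\ref{p_opentrans}(c) ``applied with $H^1$-causal curves, in accordance with our conventions''. Prop.~\ref{p_opentrans} is established in the paper by citation to Sol\'{\i}s' thesis, which works with a different class of continuous causal curves whose causal relation agrees with the piecewise smooth one; and the whole point of Subsect.~\ref{s2.4} and Appendix B is that the $H^1$ and piecewise smooth causal relations can genuinely differ. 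So the inclusion $J^{+}_{H^{1}}(p)\subset cl(I^{+}(p))$ is not available as a free citation here: it is precisely the nontrivial content that the paper's proof supplies, namely (i) locally, an $H^1$-causal $\gamma=(\tau,y_\gamma(\tau))$ is approximated by $\gamma_k=(\tau,\sqrt{1-1/k}\,\cdot y_\gamma(\tau))$ with $g(\gamma_k',\gamma_k')\leq -1/k$ a.e., to which Prop.~\ref{p_futuresmooth} applies, and (ii) this local statement is globalized along $\gamma$ by a Lebesgue-number covering and an induction over a partition.

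Note that your Step 1 (equality of chronological futures) does not substitute for this: an $H^1$-causal curve, e.g.\ a lightlike one, does not satisfy the hypothesis $g(\gamma',\gamma')<-\nu$ of Prop.~\ref{p_futuresmooth}, so you cannot pass from ``$H^1$-causal'' to ``close to timelike'' without the approximation-plus-globalization step (or an explicit appeal to the last paragraph of Remark \ref{r_futuresmooth}, which sketches exactly this and would be the correct reference). Once that link is repaired, the remainder of your argument --- closedness of $J^{\pm}_{ps}$, and the transfer of causality and of compact diamonds to the $H^1$ setting --- is correct and matches the paper.
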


\begin{proof}
\bcambios{Let us reason for the future. As $
J^{+}_{ps}(p) \subset J^{+}_{H^{1}}(p)$ is trivial, it is sufficient to check:}
\begin{equation}\label{eqq}
J^{+}_{H^{1}}(p) \subset cl(I^{+}(p))\subset  cl(J^{+}_{ps}(p))=J^{+}_{ps}(p) \qquad\forall p \in \overline{M}.
\end{equation}
The second inclusion is trivial (recall that $I^+(p)$ is computed with piecewise smooth curves, as discussed above) and the equality means that, in the piecewise smooth case, the causal simplicity of the spacetime follows from its global hyperbolicity (which is proven  easily as in the case without boundary or the $H^1$ one, see  Lemma  \ref{ppp} below).

 For the  first inclusion in \eqref{eqq}, recall that its local version (namely, $p$ admits an open  neighborhood $V\subset\overline{M}$, such that
$J^{+}_{H^{1}}(q,V) \subset cl(I^{+}(q,V)$ for all $q \in V$) follows from an  observation
in Remark \ref{r_futuresmooth}: whenever $\gamma=(\tau,y_\gamma(\tau))$ is $H^1$-causal, then $\gamma_k=(\tau,\sqrt{1-1/k} \, \cdot y_\gamma(\tau))$ approaches $\gamma$, it  satisfies $g(\gamma_k',\gamma_k')\leq -1/k$ a.e. and, then, Prop.~\ref{p_futuresmooth} applies.
 To go from this local version  to the  global one \eqref{eqq}, the following slight modification of the procedure in the proof of Prop. \ref{p_futuresmooth} is carried out.  Take  any $q \in J_{H^{1}}^{+}(p)$ and let $\gamma: I\rightarrow \M$ be some future-directed $H^{1}$ causal curve joining $p$ with $q$.
From the local result, for any $\gamma(r)$, $r\in I$, there exists some open neighbourhood $V_{\gamma(r)}$ such that $J^{+}(q',V_{\gamma(r)})_{H^{1}} \subset cl(I^{+}(q',V_{\gamma(r)}))$ for all $q'\in V_{\gamma(r)}$.
So, take a Lebesgue number $\delta>0$ for the open covering
$\{\gamma^{-1}(V_\gamma(r)), r\in I\}$ of $I=[a,b]$, and choose a
partition
 $\{r_{0}=a<r_{1},\ldots,<r_{l-1}<r_l=b\}$ with diameter smaller than $\delta$. The case $l=1$ is trivial, and assume by induction that \eqref{eqq} holds  for $l-1$. Let $r$ be so that
$\gamma([r_{l-1},r_{l}]) \subset V_{\gamma(r)}$ and, thus, $\gamma(b) \in cl(I^{+}(\gamma(r_{l-1}),V_{\gamma(r)}))$.
So, there is a sequence $\{q_m\}\rightarrow q$ such that $\gamma(r_{l-1})\in I^{-}(q_m,V_{\gamma(r)})$ for all $m$.
Therefore, for each $q_m$ all the points in some neighborhood $U_m\ni \gamma(r_{l-1})$ lie in $I^-(q_m,V_{\gamma(r)})$. By the hypothesis of induction, some  $u_m\in U_m$  belongs to $I^+(p)$ and, so, $p\ll u_m \ll q_m$ for all $m$, as required.
\end{proof}



\begin{conv}\label{convention} {\em \bcambios{In what follows,  all the causal curves will be $H^1$-causal, all the  futures and pasts are computed  with $H^1$-curves and all the corresponding causal definitions are carried out accordingly. Consistently,  from now on,  $J^\pm(p)$ means $J^\pm_{H^1}(p)$. For timelike curves, classical piecewise smooth regularity will be used.}

\bcambios{As a summary, the consistency of this convention comes from the following facts. About $H^1$-causal curves: (a) they are  equivalent to classical continuous causal curves in manifolds without boundary, (b) they are intrinsic (extensions $\widetilde{M}$ of $\M$ are not required),  (c) they are preserved by limit curves in the same way as in the case without boundary, and (d) they lead to a more general notion of globally hyperbolic \bcambios{spacetime-with-boundary}, where our results apply.
About piecewise smooth timelike curves: (a)~they can connect the same endpoints as  $H^1$-timelike ones, and (b)
 $H^1$-causal curves  can be approched by piecewise smooth timelike  ones, even fixing one of the endpoints.}
} \end{conv}


\section{ Topological splitting  and Geroch's equivalence  }\label{s3}

\subsection{Higher steps of the causal ladder.}\label{s3.2}

Geroch's proof of the topological splitting of any globally hyperbolic spacetime (without boundary) $(M,g)$ is based on the existence of a time function constructed by computing  certain volumes 
 using an appropriate measure $m$. The conditions to be satisfied by $m$ are very mild; indeed, they are satisfied by the measure associated to any semi-Riemannian metric $g^*$ such that the total volume of the manifold is finite (so, one can choose $g^*$ conformal to the original Lorentzian metric $g$). However, following Dieckmann   (\cite{Dieckmann}; see also \cite[section 3.7]{MSCH}) the abstract  properties required for a measure on $\M$ will be recalled first.

Given the \ncambios{spacetime-with-boundary} $(\overline{M},g)$, $\overline{M}=M \cup \partial M$, consider the $\sigma$-algebra $\mathfrak{A}(\tau_{\overline{M}}
\cup \overline{Z})$ \cambios{generated by} the topology $
\tau_{\overline{M}}$  of $
\overline{M}$ in addition to the set
$\overline{Z}$
containing the \cambios{zero-measure} sets of $\overline{M}$.
Since $M$ is an
open subset of $\overline{M}$, the $\sigma$-algebra $\mathfrak{A}(\tau_{M} \cup Z)$ of $M$ 
coincides
with the induced $\sigma$-algebra of $\mathfrak{A}(\tau_{\overline{M}} \cup \overline{Z})$ over $M$, that is, with the set
$\{E \cap M \mid E \in \mathfrak{A}(\tau_{\overline{M}} \cup \overline{Z})\}$.
In a natural way, the {\em measures} on the previous $\sigma$-algebras will be called just measures on $\M$ or $M$, consistently.

It is straightforward to check that if $m$ is a measure on $M$ then it induces naturally  a measure $\overline{m}$ on $\M$ just imposing $\overline{m}(\partial M)=0$, that is,
\[
\overline{m}(A):=m(A \cap M)\quad\hbox{for any $A \in \mathfrak{A}(\tau_{\overline{M}} \cup \overline{Z})$.}
\]

\begin{defi}\label{d_admissible}
A measure  $\overline{m}$ on $\M$ is {\em admissible} when it satisfies:
\begin{enumerate}
 \item $\overline{m}(\overline{M})<\infty$;
 \item $\overline{m}(U)>0$ for any open subset $U \subset \overline{M}$;
 \item $\overline{m}(\dot I^{\pm}(p))=0$ ($\dot I^{\pm}(p)$ denotes the topological boundary of $I^{\pm}(p)$ in $\M$ ),  $\forall p \in \overline{M}$;
 \item \ncambios{$\overline{m}$ is {\em inner regular}, i.e.} for any open subset $U \subset \overline{M}$ there exists a sequence $\{K_{n}\}_{n}\subset \overline{M}$ of compact subsets such that $K_{n} \subset K_{n+1}$, $K_{n} \subset U$ for all $n$ and
 $\overline{m}(U)=\lim_{n} \overline{m}(K_{n})$.
\end{enumerate}
\end{defi}

\begin{prop}\label{p_admiss}
If $m$ is an admissible measure on $M$ then the induced measure $\overline{m}$ on $\M$ is also admissible.
\end{prop}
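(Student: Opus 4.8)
The plan is to verify the four defining conditions of admissibility for $\overline{m}$ one by one, using the corresponding conditions for $m$ on $M$ together with the fact that $\partial M$ is a closed subset of $\overline{M}$ with $\overline{m}(\partial M)=0$ and that $M$ is an open dense subset. The first condition is immediate: $\overline{m}(\overline{M})=m(\overline{M}\cap M)=m(M)<\infty$ since $m$ is admissible. The main geometric input that makes everything work is density: $M$ is dense in $\overline{M}$ (every boundary point is a limit of interior points, a standard property of manifolds-with-boundary), so a nonempty open subset $U\subset\overline{M}$ always meets $M$ in a nonempty open set $U\cap M$, whence $\overline{m}(U)=m(U\cap M)>0$ by condition (2) for $m$. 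This gives condition (2).

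For condition (3), I would observe that $\dot I^{\pm}(p)$ (topological boundary in $\overline{M}$) decomposes as $(\dot I^{\pm}(p)\cap M)\,\cup\,(\dot I^{\pm}(p)\cap\partial M)$. The second piece has $\overline{m}$-measure zero trivially, being contained in $\partial M$. For the first piece, I would argue that $\dot I^{\pm}(p)\cap M$ is contained in the topological boundary of $I^{\pm}(p,M)$ taken inside $M$: indeed $I^{\pm}(p)\cap M=I^{\pm}(p,M)$ by Proposition \ref{p_opentrans}(d) when $p\in M$, and by the extension noted in Remark \ref{nuevo} when $p\in\partial M$ (a point $q\in M$ lies in $I^{\pm}(p)$ iff it is reachable by a timelike curve meeting $\partial M$ only possibly at $p$). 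Since $I^{\pm}(p,M)$ is open in $M$, its boundary in $M$ is $\overline{m}$-null by condition (3) for $m$ applied on $M$ — here one should be slightly careful to note that $m$ being admissible on $M$ gives $m(\dot I^{\pm}(q,M))=0$ for every $q\in M$, and for $p\in\partial M$ one writes $I^{\pm}(p,M)$ as an increasing union $\bigcup_n I^{\pm}(p_n,M)$ along a sequence $p_n\to p$ with $p_n\in M$, $p_n\ll p$ (resp. $p\ll p_n$), so that its boundary points are limits of points of the $\dot I^{\pm}(p_n,M)$ and a covering/countable-union argument kills the measure. So $\overline{m}(\dot I^{\pm}(p))=0$.

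For condition (4), inner regularity, given an open $U\subset\overline{M}$ I would apply inner regularity of $m$ to the open set $U\cap M\subset M$, obtaining compacts $K_n\subset U\cap M$, $K_n\subset K_{n+1}$, with $m(U\cap M)=\lim_n m(K_n)$. Since $\overline{m}(U)=m(U\cap M)$ and $\overline{m}(K_n)=m(K_n)$ (as $K_n\subset M$), and each $K_n$ is compact in $\overline{M}$ and contained in $U$, the same sequence witnesses inner regularity of $\overline{m}$ on $\overline{M}$. The only subtlety worth a sentence is that compactness of $K_n$ in $M$ coincides with compactness in $\overline{M}$, which is clear since $M\hookrightarrow\overline{M}$ is a topological embedding. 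I expect the main obstacle to be condition (3) for boundary points $p\in\partial M$: one must relate the boundary of $I^{\pm}(p)$ in $\overline{M}$ to chronological pasts/futures of interior points, and this is where Remark \ref{nuevo} and an exhaustion argument are needed; the other three conditions are essentially bookkeeping once density of $M$ in $\overline{M}$ is invoked.
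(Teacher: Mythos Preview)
Your handling of conditions (1), (2) and (4) is correct and matches what the paper dismisses as ``straightforward''; your reduction of condition (3) for interior points $p\in M$ to the admissibility of $m$ on $M$ is also fine, since $M$ is open in $\overline{M}$ and hence $\dot I^{\pm}(p)\cap M$ coincides with the topological boundary of $I^{\pm}(p,M)$ inside $M$.

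The gap is in condition (3) for $p\in\partial M$. Your exhaustion $I^{-}(p,M)=\bigcup_n I^{-}(p_n,M)$ is correct, and it is true that every point of $\dot I^{-}(p,M)$ is a \emph{limit} of points of the sets $\dot I^{-}(p_n,M)$. But this only places $\dot I^{-}(p,M)$ inside the \emph{closure} of $\bigcup_n \dot I^{-}(p_n,M)$, not inside the union itself, and the closure of a null set can have full measure --- so no ``covering/countable-union argument'' is available. Concretely, take $\overline{M}=\{y\ge 0\}\subset\mathbb{L}^3$, $p=(0,0,0)\in\partial M$, $p_n=(-1/n,0,1/n)\in M$ with $p_n\ll p$, $p_n\to p$. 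The point $x=(-\sqrt 2,1,1)$ lies on $\dot I^{-}(p)\cap M$, yet $x\notin J^{-}(p_n)$ for every $n$ (the separation between $x$ and $p_n$ is spacelike), so $x$ lies in none of the $\dot I^{-}(p_n,M)$.

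The paper's approach avoids this reduction entirely. It observes that $\dot I^{-}(p)\cap M$ is the boundary in $M$ of the past set $I^{-}(p,M)$, hence an \emph{achronal edgeless} subset of $M$; by the local orthogonal splitting (Cor.~\ref{c_localsplitting}) it is a locally Lipschitz graph over a spacelike slice. Admissible measures vanish on such hypersurfaces (this is the geometric content behind condition (3), and is immediate once one recalls that in practice $m$ arises from a volume form), and the argument treats interior and boundary points $p$ uniformly with no need for an exhaustion.
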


\begin{proof}
 Properties 1, 2, 4 in Def. \ref{d_admissible} are straightforward. To prove 3, it is enough to check, say, $m(\dot I^{-}(p) \cap M)=0$. This holds because it is an achronal edgeless subset of $M$, since $\dot I^{-}(p)$ can be written locally as a graph by using Cor. \ref{cor}. \ncambios{(Notice also
 the inclusions $I^+(p)\subset cl(I^+(p)) \subset cl(J^+(p))=cl(I^+(p))$, so the measure of $I^+(p)$ coincides with the measure of $cl(J^+(p))$.)}
\end{proof}
\begin{rem}
{\em
An alternative way to define an admissible measure $\overline{m}$ on $\M$ is to consider any admissible measure on an extension $\tilde{M}$ of $(\M, g)$ (as in  Prop. \ref{pp}) and taking the restriction to $\M$.}
\end{rem}
In what follows, an admissible measure $\overline{m}$ is fixed on $\M$.

\begin{defi}
The function $t^{-}(p):=\overline{m}(I^{-}(p))$ (resp. $t^{+}(p):=-\overline{m}(I^{+}(p))$) is  the {\em past} (resp. {\em future}) {\em volume function} associated to $\overline{m}$.
\end{defi}

Trivially, the volume functions are non-decreasing on any future-directed causal,
but they are constant on any closed causal curve. The next two propositions hold as in the case without boundary (we refer to \cite[Sect. 3.2.1]{LuisTesis} for detailed proofs).

\begin{prop}\label{p_dist} If $(\overline{M},g)$ is past (resp. future) distinguishing, the volume function $t^-$ (resp. $t^+$) is (strictly) increasing over any future-directed causal curve $\gamma$.
\end{prop}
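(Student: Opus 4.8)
The plan is to mimic the classical argument in the case without boundary (as in \cite[Lemma 3.17]{MSCH} or Geroch's original proof), checking carefully that the presence of the timelike boundary introduces no new obstruction. I will argue for $t^-$, the case of $t^+$ being symmetric under time reversal. Let $\gamma:[a,b]\to\overline M$ be a future-directed $H^1$-causal curve and let $a\le s<s'\le b$. Monotonicity $t^-(\gamma(s))\le t^-(\gamma(s'))$ is immediate from $I^-(\gamma(s))\subset I^-(\gamma(s'))$ (which follows from the transitivity relations in Prop.~\ref{p_opentrans}) together with positivity of $\overline m$. The content is the \emph{strict} inequality: I must produce a nonempty open set contained in $I^-(\gamma(s'))\setminus I^-(\gamma(s))$, and then invoke property 2 of Def.~\ref{d_admissible}.

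First I would fix an intermediate parameter, say $s''\in(s,s')$, set $\hat p:=\gamma(s)$ and $p':=\gamma(s'')$, and note $\hat p\ll p'$ (a future-directed causal curve followed by a short timelike push; or directly $p'\in I^+(\hat p)$ via the transitivity $\leq\ll\;\Rightarrow\;\ll$ after a preliminary timelike segment). Past-distinguishability gives $I^-(\hat p)\neq I^-(p')$, and since $\hat p\ll p'$ forces $I^-(\hat p)\subset I^-(p')$, the inclusion is strict: there is a point $r\in I^-(p')\setminus I^-(\hat p)$. The goal is to upgrade this single point to an open set, still avoiding $I^-(\hat p)$. Since $I^+(r)$ is open (Prop.~\ref{p_opentrans}(a)) and contains $p'$, and $I^-(\cdot)$ is "continuous from outside" in the relevant sense, I would instead work as follows: because $r\notin I^-(\hat p)=I^-(\hat p)$ and $I^-(\hat p)$ is open with $I^-(\hat p)\subset cl(I^-(\hat p))$, the point $r$ lies in the open set $\overline M\setminus cl(I^-(\hat p))$ \emph{provided} $r\notin \dot I^-(\hat p)$; here property 3 of admissibility is not quite enough pointwise, so I would instead choose $r$ more carefully: pick $r\in I^-(p')$ with $r\notin I^-(\hat p)$, and then take $r_1\ll r$ with $r_1$ still in $I^-(p')$ (possible since $I^-(p')$ is open and past-directed timelike curves from $r$ stay in it for a short time). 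Then $r\in I^+(r_1)$, an open set. Now $W:=I^+(r_1)\cap I^-(p')$ is open, nonempty (it contains $r$), and I claim $W\cap I^-(\hat p)=\emptyset$: if some $w\in W$ satisfied $w\ll\hat p$, then $r_1\ll w\ll\hat p$ would give $r_1\ll\hat p$, and combined with $r_1\ll r$ this does not immediately contradict $r\notin I^-(\hat p)$ — so this particular $W$ needs adjustment.

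The cleaner route, which I would actually carry out, avoids that subtlety: choose $r\in I^-(p')$ with $r\notin I^-(\hat p)$; then the open set $V:=I^+(r,\overline M)\cap I^-(p',\overline M)$ is nonempty since $r\ll p'$ forces $I^+(r)\cap I^-(p')\neq\emptyset$ (any point on an interior timelike curve from $r$ to $p'$ lies in it), wait — I need $V$ disjoint from $I^-(\hat p)$. If $w\in V$ and $w\ll\hat p$, then $r\ll w$ gives $r\ll\hat p$, i.e. $r\in I^-(\hat p)$, contradiction. Hence $V\cap I^-(\hat p)=\emptyset$. Since $V\subset I^-(p')\subset I^-(\gamma(s'))$ and $I^-(\hat p)=I^-(\gamma(s))$, property 2 gives $0<\overline m(V)\le t^-(\gamma(s'))-t^-(\gamma(s))$, establishing strict increase.

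I do not expect a serious obstacle: everything used — openness of $\ll$ and of chronological futures/pasts, the transitivity relations, and the positivity of $\overline m$ on open sets — has already been established in the excerpt for spacetimes-with-timelike-boundary (Prop.~\ref{p_opentrans}, Prop.~\ref{p_admiss}, Def.~\ref{d_admissible}). The only mild care concerns the very first step, producing $\hat p\ll p'$ from a merely $H^1$-causal curve: here I would use that along $\gamma$ one can append an arbitrarily short past-directed timelike segment at $\hat p$ (or invoke $I^{\pm}(p)$ being open and nonempty together with $\gamma(s)\leq\gamma(s'')$ and $p\leq q\ll r\Rightarrow p\ll r$ after first moving slightly to the future of $\hat p$ and slightly to the past of $\gamma(s'')$ along $\gamma$). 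This uses no boundary-specific input beyond Remark~\ref{nuevo}/Prop.~\ref{p_opentrans}. Thus the proof transplants verbatim, and for the detailed write-up I would simply refer to \cite[Sect. 3.2.1]{LuisTesis} as the authors announce.
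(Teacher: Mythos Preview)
Your ``cleaner route'' is essentially the classical argument the paper invokes (it gives no proof, just refers to \cite[Sect.~3.2.1]{LuisTesis}), and it works --- with one correction. The assertion $\hat p\ll p'$ is false in general: if $\gamma$ is a null geodesic, no two of its points are chronologically related. Your attempted justifications at the end do not repair this (a ``timelike push'' moves you off $\gamma$, so the pushed point is no longer $p'$). Fortunately the claim is also unnecessary: you only have, and only need, $\hat p\le p'$. This already gives $I^-(\hat p)\subset I^-(p')$ via Prop.~\ref{p_opentrans}(b) ($w\ll\hat p\le p'\Rightarrow w\ll p'$), and past-distinguishability (together with $\hat p\neq p'$, which holds since past-distinguishing implies causal by Prop.~\ref{p_ord_lowerlevels}) makes the inclusion strict. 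From there your construction of $V=I^+(r)\cap I^-(p')$ and the disjointness argument are correct as written. Incidentally, the intermediate parameter $s''$ is superfluous: the same reasoning goes through with $p'=\gamma(s')$ directly.
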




\begin{prop}\label{p_creer} For any spacetime with timelike  boundary $(\M,g)$ the following properties are equivalent:

\ben
\item The set valued function $I^+$ (resp. $I^-$) is continuous on $\M$.

\item The volume function $t^+$ (resp. $t^-$) is continuous on $\M$.

\item $(\M, g)$ is {\em past} (resp. {\em future}) {\em reflecting}, that is, for all $p, q\in \M$:

$q\in cl(I^+(p))
\Rightarrow p\in cl(I^-(q)) \qquad\hbox{(resp.
$p\in cl(I^-(q))
\Rightarrow q\in cl(I^+(p))$.)}$
\een
So, a distinguishing spacetime-with-timelike-boundary is causally continuous if and only if some/any of the previous equivalent properties (for the future and the past) hold, or equivalently, if and only if both volume functions $t^+, t^-$  are time functions.
\end{prop}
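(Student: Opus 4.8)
The plan is to transplant the classical chain of equivalences---Hawking--Sachs for reflectivity versus continuity of $I^{\pm}$, Dieckmann for the volume functions, cf.\ \cite[\S 3.7]{MSCH}---to the setting with timelike boundary. By the past--future duality it suffices to prove $(1)\Leftrightarrow(2)\Leftrightarrow(3)$ for the ``future'' triple $(I^{+},\,t^{+},\,\text{past reflecting})$. The only boundary-specific inputs required are already available: $\overline{m}$ is admissible (Prop.\ \ref{p_admiss}), hence $\overline{m}(\overline{M})<\infty$, $\overline{m}(\dot I^{\pm}(p))=0$ and $\overline{m}$ is inner regular; near $\partial M$ the sets $\dot I^{\pm}(p)$ are graphs in the Gaussian coordinates of Cor.\ \ref{c_localsplitting}; limit curves stay $H^{1}$-causal (Prop.\ \ref{p2.11}); and $\ll$ is open and transitive with $\leq$ (Prop.\ \ref{p_opentrans}). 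Since none of the boundaryless arguments probes the differentiable structure transverse to $\partial M$, they go through once these are in place; below I sketch the analytic core.

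First record that $I^{+}$ is inner continuous (if $p\ll q$ and $p_{n}\to p$ then $p_{n}\ll q$ for large $n$, by openness of $\ll$), so $\liminf_{n}\mathbf{1}_{I^{+}(p_{n})}\geq \mathbf{1}_{I^{+}(p)}$ off the $\overline{m}$-null set $\dot I^{+}(p)$, and Fatou gives $\liminf_{n}\overline{m}(I^{+}(p_{n}))\geq \overline{m}(I^{+}(p))$; equivalently $t^{+}$ is always upper semicontinuous, so $(2)$ amounts to lower semicontinuity of $t^{+}$. For $(3)\Rightarrow(2)$: if $p_{n}\to p$ and $q$ lies in $I^{+}(p_{n})$ for infinitely many $n$, then $p_{n}\in I^{-}(q)$ for those $n$, so $p\in cl(I^{-}(q))$ and past reflectivity yields $q\in cl(I^{+}(p))$; hence $\limsup_{n}I^{+}(p_{n})\subseteq cl(I^{+}(p))$, and the reverse Fatou lemma (valid because $\overline{m}(\overline{M})<\infty$) gives $\limsup_{n}\overline{m}(I^{+}(p_{n}))\leq\overline{m}(cl(I^{+}(p)))=\overline{m}(I^{+}(p))$, using $\overline{m}(\dot I^{+}(p))=0$. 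For $(1)\Rightarrow(2)$: given $p_{n}\to p$ and $\varepsilon>0$, inner regularity yields a compact $K\subseteq\overline{M}\setminus cl(I^{+}(p))$ with $\overline{m}(K)>\overline{m}(\overline{M})-\overline{m}(I^{+}(p))-\varepsilon$; as $K\cap I^{+}(p)=\emptyset$, continuity of $I^{+}$ forces $K\cap I^{+}(p_{n})=\emptyset$ for large $n$, whence $\overline{m}(I^{+}(p_{n}))<\overline{m}(I^{+}(p))+\varepsilon$. The remaining implications---$(2)\Rightarrow(3)$ (a volume-jump argument, again using $\overline{m}(\dot I^{+}(p))=0$) and $(3)\Rightarrow(1)$ (which is where the limit curve theorem, Prop.\ \ref{p2.11}, enters)---are carried out verbatim as in \cite[\S 3.7]{MSCH}.

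For the concluding assertion: ``causally continuous'' means ``distinguishing and $I^{+},I^{-}$ both continuous'', so under the standing hypothesis that $(\overline{M},g)$ is distinguishing, applying $(1)\Leftrightarrow(2)\Leftrightarrow(3)$ to the future and to the past identifies causal continuity with past- and future-reflectivity, and with continuity of both $t^{+}$ and $t^{-}$. Finally, each of $t^{+},t^{-}$ is always non-decreasing along future-directed causal curves, and, since the spacetime is distinguishing, strictly increasing along them by Prop.\ \ref{p_dist}; hence ``$t^{\pm}$ is a time function'' reduces here to ``$t^{\pm}$ is continuous'', i.e.\ to $(2)$ for future and past, and the circle closes. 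The only real point of the proof is thus to verify that the timelike boundary does not break the boundaryless arguments, which is precisely what Prop.\ \ref{p_admiss}, Cor.\ \ref{c_localsplitting} and Prop.\ \ref{p2.11} guarantee; the most delicate of these steps is the reverse-Fatou estimate, which hinges on $\overline{m}$ being finite and $\overline{m}(\dot I^{\pm}(p))=0$.
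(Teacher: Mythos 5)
Your proposal is correct in substance and follows the same route as the paper, which gives no proof of this proposition at all: it merely remarks that it ``holds as in the case without boundary'' and defers to \cite[Sect.~3.2.1]{LuisTesis}. Your contribution is to make explicit that the only boundary-sensitive inputs are the admissibility of $\overline{m}$ (Prop.~\ref{p_admiss}), the openness and transitivity of $\ll$ (Prop.~\ref{p_opentrans}) and the limit-curve machinery (Prop.~\ref{p2.11}), and to write out the Fatou/reverse-Fatou core; all of that is fine. One point needs attention, however: in your step $(3)\Rightarrow(2)$ you pass from $p\in cl(I^-(q))$ to $q\in cl(I^+(p))$ and call this ``past reflectivity'', whereas item~3 of the statement explicitly defines past reflecting as the opposite implication $q\in cl(I^+(p))\Rightarrow p\in cl(I^-(q))$; the implication you actually invoke is the one the statement labels \emph{future} reflecting. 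Your measure-theoretic derivation is the correct one --- continuity of $t^+=-\overline{m}(I^+(\cdot))$ is genuinely equivalent to ``$p\in cl(I^-(q))\Rightarrow q\in cl(I^+(p))$'', since a failure of that implication produces, via the witness $q$ and a neighbourhood $V\ni q$ disjoint from $I^+(p)$, a positive-measure open set $I^+(q,V)$ contained in every $I^+(p_n)$ but disjoint from $I^+(p)$ --- so what your argument actually reveals is that the ``resp.''s in item~3 of the statement are interchanged relative to items~1--2 (one can check this on $\LL^2$ minus a spacelike half-line). You should either realign your terminology with the statement or record the swap explicitly; the concluding assertion on causal continuity is unaffected, since it uses both reflectivity conditions simultaneously.
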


The following result will allow to complete the implications of the ladder,
taking into account the optimized definitions of global hyperbolicity and causal simplicity used here (consistent with \cite{BS07}).

\begin{lemma}\label{ppp}
Let $(\overline{M},g)$ be a spacetime-with-timelike-boundary.

(a) If it is causally simple then it is causally continuous.

(b) If it is globally hyperbolic then
it is causally simple. 
\end{lemma}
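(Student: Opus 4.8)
The statement is the standard "end" of the causal ladder: causal simplicity $\Rightarrow$ causal continuity, and global hyperbolicity $\Rightarrow$ causal simplicity. Both are known in the case without boundary, so the strategy is to adapt those classical arguments, checking at each step that the timelike boundary $\pM$ causes no trouble (all the tools needed—transitivity of $\ll,\leq$, openness of $I^\pm$, $J^\pm(p)\subset cl(I^\pm(p))$, limit curves for $H^1$-causal curves, and non-imprisonment under strong causality—have already been established in Props.~\ref{p_opentrans}, \ref{otro}, \ref{imprisoned}, \ref{p_creer} and Cor.~\ref{imprisonedb}).

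\textbf{Part (a): causally simple $\Rightarrow$ causally continuous.} First I would recall that causally simple implies distinguishing: indeed causally simple spacetimes are strongly causal (this is the step that actually uses closedness of $J^\pm$, via the usual argument ruling out almost-closed causal curves together with non-imprisonment, Cor.~\ref{imprisonedb}), hence distinguishing by Prop.~\ref{p_ord_lowerlevels}. It then remains, by Prop.~\ref{p_creer}, to show the spacetime is past and future reflecting. For reflectivity, suppose $q\in cl(I^+(p))$; since $J^+(p)$ is closed and $cl(I^+(p))\subset cl(J^+(p))=J^+(p)$, we get $q\in J^+(p)$, so $p\in J^-(q)=cl(J^-(q))\supset cl(I^-(q))$—wait, I want $p\in cl(I^-(q))$, which follows from $p\in J^-(q)\subset cl(I^-(q))$ by Prop.~\ref{p_opentrans}(c). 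The symmetric argument gives future reflectivity. Hence by Prop.~\ref{p_creer} the spacetime is causally continuous.

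\textbf{Part (b): globally hyperbolic $\Rightarrow$ causally simple.} Global hyperbolicity already includes causality, so only the closedness of $J^\pm(p)$ needs proof; I argue for $J^+(p)$, the other being symmetric. Take $q\in cl(J^+(p))$, $q\neq p$ (if $q=p$ it lies in $J^+(p)$). Pick a sequence $q_n\in J^+(p)$ with $q_n\to q$ and future-directed causal curves $\gamma_n$ from $p$ to $q_n$. Fix any $q'\gg q$ (exists since $I^+$ is open and nonempty near $q$; one may need a short argument that $I^+(q)\neq\emptyset$, e.g. by flowing along a timelike vector field tangent to $\pM$, Prop.~\ref{extendfield}(iii)); then for large $n$, $q_n\in I^-(q')$, so each $\gamma_n$ lies in the compact set $J^+(p)\cap J^-(q')$. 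First I would establish a uniform bound making the $\gamma_n$ (suitably reparametrized, e.g. by $H^1$-arclength in an auxiliary complete Riemannian metric) defined on a common compact interval, then invoke the limit curve theorem in the form of Prop.~\ref{otro}(2): since $p_n\equiv p\to p$, $q_n\to q$, $p\neq q$, there is a future-directed causal limit curve $\gamma$ from $p$ to $q$, whence $q\in J^+(p)$. Thus $J^+(p)$ is closed.

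\textbf{Main obstacle.} The routine part is the reflectivity computation in (a); the delicate part is (b), specifically justifying that the curves $\gamma_n$ can be confined to a single compact set and reparametrized onto a common interval so that Prop.~\ref{otro}(2) applies cleanly—this needs the existence of a point $q'\gg q$ and the non-imprisonment/compactness bookkeeping. Since Prop.~\ref{otro}(2) is stated exactly in the form "sequences $p_n\to p$, $q_n\to q$ with $p\neq q$ and $p_n\le q_n$ yield a causal limit curve from $p$ to $q$," most of this is already packaged, so the honest obstacle is just producing $q'$ and handling the trivial case $q=p$; I expect the proof to be short, essentially a pointer to \cite[Sect.~3.2]{LuisTesis} or the classical references for the details.
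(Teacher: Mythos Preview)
Your reflectivity argument in (a) and your proof of (b) are both correct, though in (b) you take a longer route than the paper: the paper simply observes that once you pick $q'\gg r$ (your $q'$), the sequence $r_m$ eventually lies in the compact set $J^+(p)\cap J^-(q')$, so its limit $r$ lies there too---no limit curves needed. Your appeal to Prop.~\ref{otro}(2) works, but the paper's argument is a two-line computation with compactness alone.

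The real issue is your derivation of distinguishing in (a). You claim ``causally simple $\Rightarrow$ strongly causal'' and point to Cor.~\ref{imprisonedb} for the mechanism. But Cor.~\ref{imprisonedb} has strong causality as a \emph{hypothesis}, so this is circular; and the ``usual argument ruling out almost-closed causal curves'' is not available at this point of the paper. The paper avoids this entirely: it proves distinguishing \emph{directly} from causality plus closedness of $J^\pm$. Namely, if $I^+(p)=I^+(q)$ with $p\neq q$, take $q_n\to q$ with $q\ll q_n$; then $p\ll q_n$, so $q\in cl(I^+(p))=cl(J^+(p))=J^+(p)$, and symmetrically $p\in J^+(q)$, yielding a closed causal curve and contradicting causality. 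This short direct argument (following \cite{BS07}) is what you should use instead.
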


\noindent {\bf Proof:} (a) We have to prove that is both distinguishing and (future and past) reflecting, \ncambios{recall Prop. \ref{p_creer}}.
For the former, following \cite{BS07}, if $p\neq q$ but, say,  $I^+(p)=I^+(q)$  then choose any sequence $\{q_n\}\rightarrow q$ with $q\ll q_n$ and, thus, $p\ll q_n$. Then  $q\in cl(I^+(p))=cl(J^+(p))=J^+(p)$ (the first equality by Prop. \ref{p_opentrans} and the second by hypothesis). Analogously,  $p\in J^+(q)$ and there is a closed causal curve with endpoints at $p$ crossing $q$.
For the latter property, causal simplicity implies
$J^{\pm}(p)=cl(I^{\pm}(p))$ and, thus, the reflectivity becomes equivalent
to the trivially true property $q\in J^+(p) \Leftrightarrow  p\in J^-(q)$.

(b) Following \cite[Props. 3.16, 3.17]{didier}), let us check that, say, $J^+(p)$ is closed. Let $r= \lim_m r_m$ with $r_m\in J^+(p)$. Since $\pM$ is timelike,
there exists some $q\in I^+(r)$ and
$r_m\in I^-(q)$ for large $m$. Then, $r\in cl(J^+(p)\cap J^-(q))= J^+(p)\cap J^-(q)$ (the latter by global hyperbolicity), and thus, $r\in J^+(p)$. \cvd

\begin{thm}\label{t_ladder} Let $(\M,g)$ be a spacetime-with-timelike-boundary.
\ben
\item If $(\M,g)$ is causally continuous then it is stably causal. Moreover, $(M,g|_M)$ is causally continuous and $(\pM,g|_{\pM})$ is stably causal.
\item If $(\M,g)$ is causally simple then it is causally continuous.
\item If $(\M,g)$ is globally hyperbolic then it is causally simple. Moreover, $(\pM,g|_{\pM})$ is globally hyperbolic  too.
\een

\end{thm}

{\em Proof.} 1. The first assertion follows because any of the volume functions $t^+,t^-$ provides the required time function (recall Prop. \ref{p_creer}). Moreover, $\partial M$ is also stably causal because, trivially,  the restrictions of these functions to $\pM$ are also  time functions.
The causal continuity of the interior  $M$ is again a consequence of Prop. \ref{p_creer} taking into account that $t^\pm$ are both continuous on all $\M$ and  their restrictions on $M$  agree with the volume functions for the measure $m$ on $M$. Indeed, as $I^\pm(p,M)=
I^\pm(p)\cap M$ (Prop.~\ref{p_opentrans}), $m(I^\pm(p,M))=\overline{m}(I^\pm(p))$ for all $p\in M$, and the result follows.

2. This is just Lemma \ref{ppp} (a).

3. The first assertion is just Lemma \ref{ppp} (b). The last assertion follows from \cite[Prop. 3.15]{didier}, we include the proof for completeness.
As $\pM$ is strongly causal, from previous items 1 and 2 it is enough to check that  $cl(J^+(p,\pM)\cap J^-(q,\pM))$ is compact (see  \cite[Lemma 4.29]{Beem}).  Now, any sequence in this subset admits a subsequence converging to some $r \in J^+(p) \cap J^-(q)$ and, as $\pM$ is closed in $\M$, $r \in cl(J^+(p,\pM)\cap J^-(q,\pM))$.

 \begin{rem}\label{r_counterexamples}  {\em Thm. \ref{t_ladder} and Prop. \ref{p_ord_lowerlevels} allow us to reobtain   the strict ordering of all the steps in the classical causal ladder in the  case with  boundary.
The following examples show, in particular, that the inherited properties for $M$ and $\pM$ are optimal.

(1) Start with the closed half space of Lorentz-Minkowski 3-space $\{(t,x,y)\in
\LL^3: y\geq 0\}$ and remove the line $L=\{(0,x,0) \mid x \geq 0\}$. Since the interior $M$ is causally continuous and the continuous extension of the $M$-volume functions to the whole $\M$ agree with the volume functions on $\M$, the spacetime $\M$ is causally continuous (recall Prop. \ref{p_creer}), but, clearly, $\pM$ is not (see Fig. \ref{fig1}).

	\begin{figure}[hbt!]
	\centering
	\ifpdf
	\setlength{\unitlength}{1bp}%
	\begin{picture}(240,150)(0,0)
	\put(0,0){\includegraphics[scale=.25]{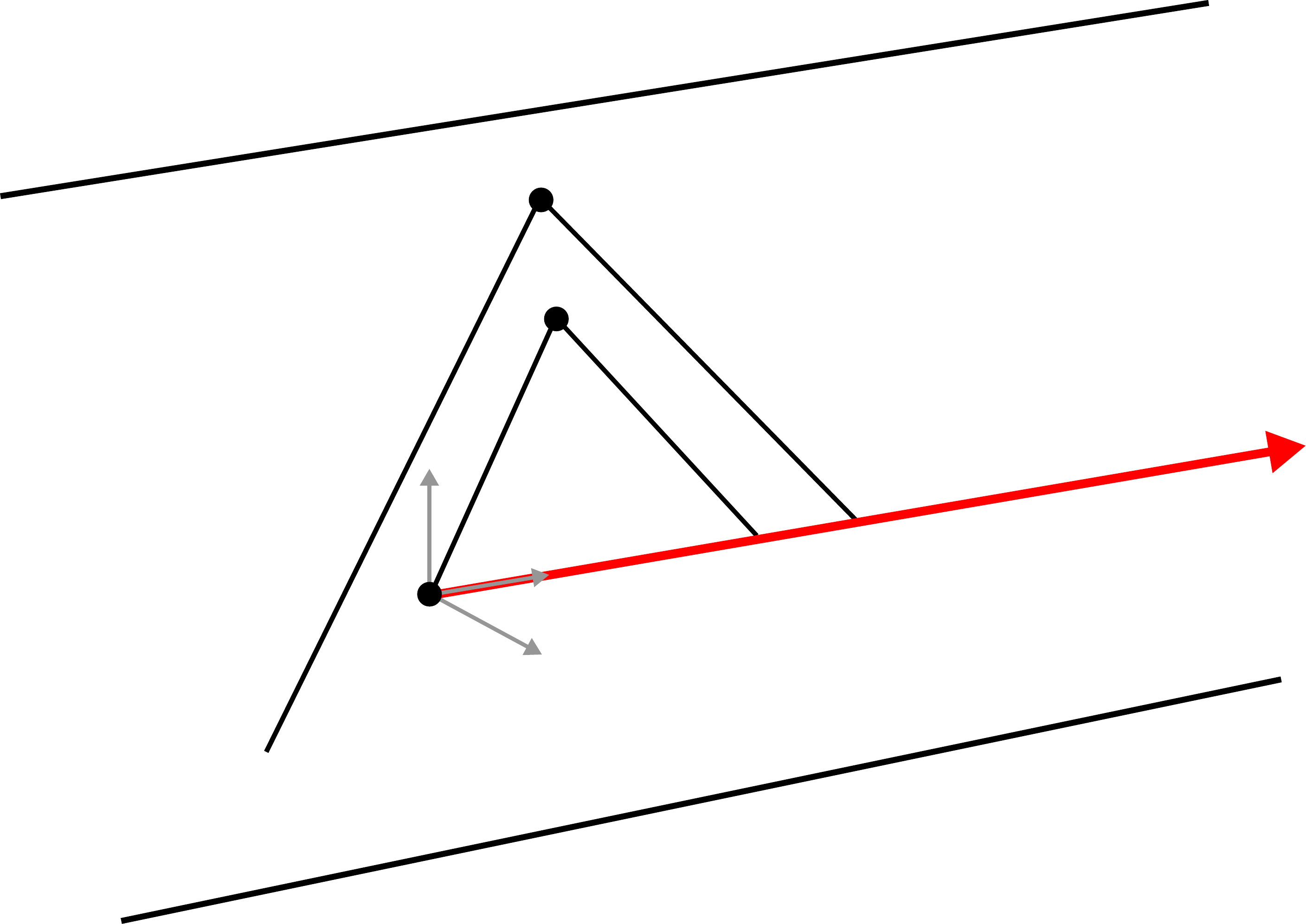}}
	\put(150.00,48.00){\fontsize{14.23}{17.07}\selectfont $L$}
	\put(08.00,98.90){\fontsize{14.23}{17.07}\selectfont $\partial M$}
	\put(83,100.00){\fontsize{09.50}{0.0}\selectfont $p$}
	\put(82.30,118.00){\fontsize{09.50}{0.0}\selectfont $p'$}
	\put(220.30,25.00){\fontsize{14.23}{17.07}\selectfont $M$}
	\put(82,35.00){\fontsize{09.00}{0.0}\selectfont $y$}
	\put(84,55.900){\fontsize{09.00}{0.0}\selectfont $x$}
	\put(68.5,67.50){\fontsize{09.00}{0.0}\selectfont $t$}
	\put(64,42){\fontsize{09.00}{0.0}\selectfont $0$}

	\end{picture}%
		\else
	\setlength{\unitlength}{1bp}%
	\begin{picture}(240,150)(0,0)
	\put(0,0){\includegraphics[scale=.25]{NonCausallyContinuousB}}
	\put(150.00,48.00){\fontsize{14.23}{17.07}\selectfont $L$}
	\put(08.00,98.90){\fontsize{14.23}{17.07}\selectfont $\partial M$}
	\put(83,100.00){\fontsize{09.50}{0.0}\selectfont $p$}
	\put(82.30,118.00){\fontsize{09.50}{0.0}\selectfont $p'$}
	\put(220.30,25.00){\fontsize{14.23}{17.07}\selectfont $M$}
	\put(82,35.00){\fontsize{09.00}{0.0}\selectfont $y$}
	\put(84,55.900){\fontsize{09.00}{0.0}\selectfont $x$}
	\put(68.5,67.50){\fontsize{09.00}{0.0}\selectfont $t$}
	\end{picture}%
	\fi
	\caption{\label{fig1} \bcambios{The spacetime-with-timelike-boundary $\M=\{(t,x,y) \in \LL^3 \mid y \geq 0\} \setminus L$, where $L=\{(0,x,0) \mid x \geq 0\}$, with the Lorentzian metric $g=-dt^2+dx^2+dy^2$ and timelike boundary $\partial M = \{(t,x,y) \in \LL^3  \mid y=0\}$ is depicted. If we take a point $p' \in \pM$ slightly to the future of $p \in \partial M$, then, the volume function restricted to the boundary is not continuous at $p$. So, $(\pM,g\mid_\pM)$ is not causally continuous.
	}}
	
\end{figure}

(2) Consider now $\M={\mathbb L}^3\setminus C$, where
$C$ is the timelike cylinder $\R\times D$, being $D$ the disk $\{x^2+y^2<1\}\subset \R^2$. Clearly, $J^+(p,M)$ is not closed whenever there exists a future-directed lightlike half-line $l$ starting at $p\in M$ and tangent to $C$ at some point $\hat q\in C$; indeed, the points in $l$ beyond $\hat q$ will lie in  $cl(J^+(p,M))\setminus J^+(p,M)$. That is, in general $M$ is not  causally simple, even if $\M$ is globally hyperbolic with timelike boundary.

\begin{figure}[hbt!]
\centering
\ifpdf
\setlength{\unitlength}{1bp}%
\begin{picture}(160,140)(0,0)
\put(0,0){\includegraphics[scale=.25]{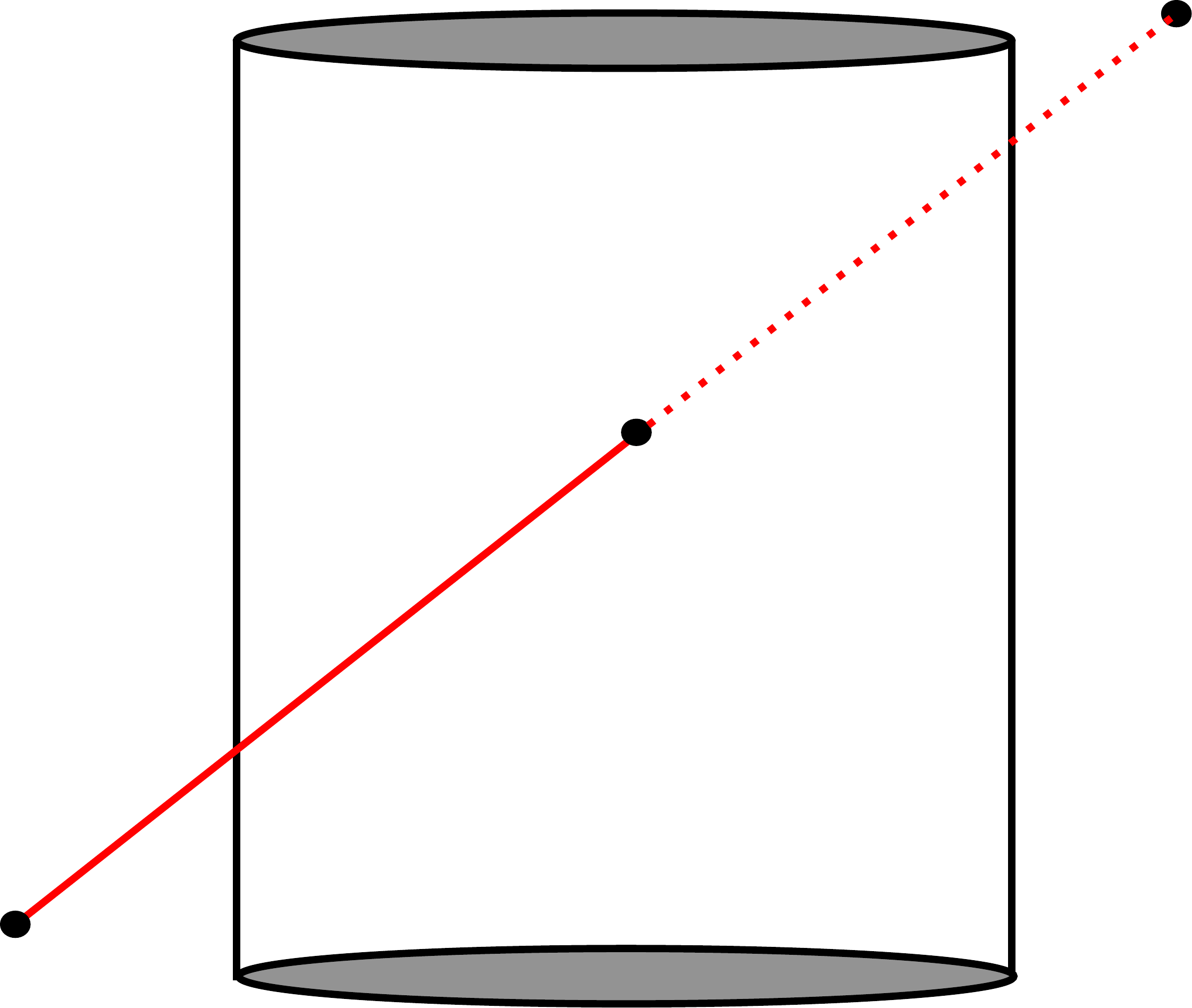}}
\put(-10.00,108.90){\fontsize{14.23}{17.07}\selectfont $M$}
\put(40.00,108.90){\fontsize{14.23}{17.07}\selectfont $\pM$}
\put(0,0){\fontsize{11.50}{0.0}\selectfont $p$}
\put(88,70.00){\fontsize{11.50}{0.0}\selectfont $\hat q$}
\put(158.90,122.90){\fontsize{14.23}{17.07}\selectfont $r$}
\put(45.00,33.90){\fontsize{15.50}{0.0}\selectfont $l$}
\end{picture}%
\else
\setlength{\unitlength}{1bp}%
\begin{picture}(160,140)(0,0)
\put(0,0){\includegraphics[scale=.25]{Causallysimpleint}}
\put(-10.00,108.90){\fontsize{14.23}{17.07}\selectfont $M$}
\put(40.00,108.90){\fontsize{14.23}{17.07}\selectfont $\pM$}
\put(0,0){\fontsize{11.50}{0.0}\selectfont $p$}
\put(88,70.00){\fontsize{11.50}{0.0}\selectfont $\hat q$}
\put(158.90,122.90){\fontsize{14.23}{17.07}\selectfont $r$}
\put(45.00,33.90){\fontsize{15.50}{0.0}\selectfont $l$}
\end{picture}%
\fi
\caption{\label{cylinder02}\bcambios{Observe that the lightlike line $l$ touches the timelike boundary $\pM$ at a unique point $\hat{q}$. So, if we consider the spacetime (without boundary) $(M,g)$ then the points $p$ and $r$ are not causally related in $M$. However, we have that $r \in cl(J^{+}(p,M)) \setminus J^{+}(p,M)$, so, $(M,g)$ is not causally simple.}}
\end{figure}

(3) Finally, consider the closure of the previous cylinder, $\overline{C}= \R\times \overline{D}\subset \LL^3$, take the arc $A=\{(0,\cos \theta,\sin \theta): 0\leq \theta \leq \pi/4\}$ and consider the spacetime $\M = \overline{C}\setminus A$ (see Fig. \ref{cylinder01}). Clearly, $\partial M$ is not causally continuous. However, $\M$ (and also $M$) is causally simple. Indeed, for each $p\in M$, it is obvious not only that $J^\pm(p,M)$ is closed but also that so is $J^{\pm}(p)$. This happens even if $p\in \pM$ because any $q (\neq p)$ in the boundary of $J^{\pm}(p)$ can be joined with $p$ by means of a lightlike segment included in $M$ up to the endpoints. \ncambios{(From a more general viewpoint, this property happens because $\pM$ is strongly light-convex; see \cite[Sect. 3.2 and Thm. 3.5]{CGS} for further background and results.)}

\begin{figure}[hbt!]
\centering
\ifpdf
\setlength{\unitlength}{1bp}%
\begin{picture}(95,140)(0,0)
\put(0,0){\includegraphics[scale=.25]{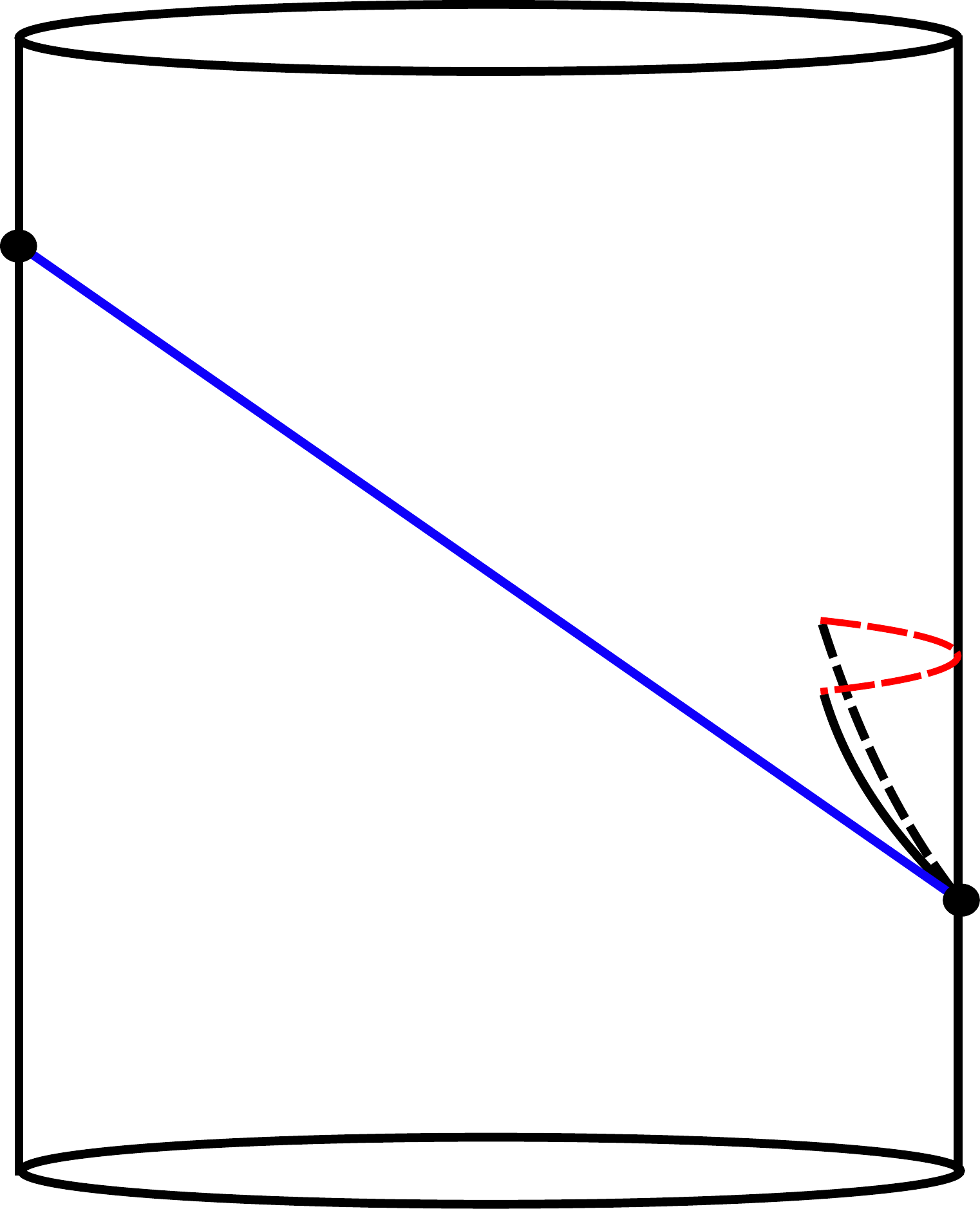}}
\put(-29.00,118.90){\fontsize{14.23}{17.07}\selectfont $\partial M$}
\put(20.00,108.90){\fontsize{14.23}{17.07}\selectfont $M$}
\put(110,30.00){\fontsize{11.50}{0.0}\selectfont $p$}
\put(95,70.00){\fontsize{11.50}{0.0}\selectfont $A$}
\put(-8.00,105.90){\fontsize{11.50}{0.0}\selectfont $q$}
\end{picture}%
	\else
\setlength{\unitlength}{1bp}%
\begin{picture}(95,140)(0,0)
\put(0,0){\includegraphics[scale=.25]{Cylinder01v02}}
\put(-29.00,118.90){\fontsize{14.23}{17.07}\selectfont $\partial M$}
\put(20.00,108.90){\fontsize{14.23}{17.07}\selectfont $M$}
\put(110,30.00){\fontsize{11.50}{0.0}\selectfont $p$}
\put(95,70.00){\fontsize{11.50}{0.0}\selectfont $A$}
\put(-8.00,105.90){\fontsize{11.50}{0.0}\selectfont $q$}
\end{picture}%
\fi
\caption{\label{cylinder01}\bcambios{The timelike boundary $\pM$ is not causally continuous. In fact,  the future volume function of $\pM$ \gcambios{ will not be continuous at $p$ (check this by taking $p' \in \pM$ slightly to the past of $p \in \pM$); so, neither is $\pM$   causally simple. } However, the spacetime-with-timelike-boundary $\M$ is causally simple since any point $q \in \partial J^{+}(p)$ can be connected by a lightlike \gcambios{segment which is included in $M$ except, at most, its endpoints.}
}}
\end{figure}

}
\end{rem}

\subsection{Cauchy hypersurfaces and extended Geroch's  proof.}
\label{s3.3}

%


Formally, our notion of Cauchy hypersurface  for a spacetime-with-timelike-boundary is equal to the minimal one developed in  \cite{O} for  the case without boundary.
\begin{defi}\label{d_CauchyHyp} Let $(\overline{M},g)$ be a spacetime-with-timelike-boundary.
An achronal set $\bar{\Sigma}\subset \overline{M}$ is a {\em Cauchy hypersurface} if it is intersected exactly once by every inextensible timelike  curve.
\end{defi}
Next, let us
 check  some properties of  Cauchy hypersurfaces (in particular, that they are truly    hypersurfaces) by adapting the approach in \cite[pp. 413-415]{O}.

\begin{defi}\label{d_achronal} Let $(\M,g)$ be a spacetime-with-timelike-boundary of dimension $n$.

(1) A subset $\bar S\subset \M$ is a {\em (embedded) topological hypersurface
transverse to $\pM$} if for any $p\in \bar S$ there exists an open neighborhood $V\subset \M$ of $p$ and a homeomorphism
$\phi: V \rightarrow  (-\epsilon,\epsilon)\times N$, where  $N$ is an open subset of $\R^{n-2}\times [0,\epsilon)$, 
such that $\phi(V\cap \bar S)=\{0\}\times N$. In this case, $\bar S$ is {\em locally Lipschitz} if the previous homeomorphism can be chosen Lipschitz when written in smooth coordinates.

(2) The {\em edge} of an achronal set $A \subset \overline{M}$ is the set of all the points $p \in cl(A)$
 such that, for every open neighborhood $U\subset \overline{M}$ of $p$, there exists
a timelike curve contained in $U$ from $I^{-}(p,U)$ to $I^{+}(p,U)$ that does not intersect $A$.
\end{defi}

\begin{prop}\label{pf}
Let $(\overline{M},g)$
be a spacetime-with-timelike-boundary. An achronal set $A$ with $A\cap {\rm edge}(A)=\emptyset$ is a locally Lipschitz  hypersurface transverse to $\pM$.
\end{prop}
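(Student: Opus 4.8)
The plan is to reduce the statement to the corresponding classical result for spacetimes without boundary (as in \cite[Lemma 14.29, Prop. 14.31]{O}) by passing to the double manifold, and then to handle the boundary behaviour by hand using the Gaussian coordinates of Cor.~\ref{c_localsplitting}. First I would fix $p\in A$ and distinguish two cases: $p\in M$ and $p\in\pM$. The case $p\in M$ is essentially the classical one: since $A\cap\mathrm{edge}(A)=\emptyset$, the standard argument (using a timelike vector field $T$ near $p$ and pushing along its flow) shows that, in a small neighbourhood $V\subset M$ of $p$, $A\cap V$ is the graph of a locally Lipschitz function over a spacelike hypersurface, hence a locally Lipschitz topological hypersurface; here $N$ can be taken to be an open subset of $\R^{n-2}\times(-\epsilon,\epsilon)$, which is a special case of $\R^{n-2}\times[0,\epsilon)$ with the base point in the interior.

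For the boundary case $p\in\pM$, I would work in a Gaussian chart adapted to the boundary (Cor.~\ref{c_localsplitting}), so that near $p$ the metric reads $g=-\Lambda d\tau^2+g_\tau$ on $V=(-\epsilon,\epsilon)\times\bar V_0$ with $\bar V_0$ a spacelike hypersurface-with-boundary, $\tau$ the projection onto the first factor, and $\nabla\tau$ tangent to $\pM$; in particular $\partial_\tau$ is a timelike vector field tangent to $\pM$. The key point is that the integral curves of $\partial_\tau$ are timelike and, being tangent to $\pM$, preserve the decomposition $\M=M\cup\pM$: a curve starting in $M$ stays in $M$, one starting in $\pM$ stays in $\pM$. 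Since $A$ is achronal, each such integral curve meets $A$ at most once; since $p\notin\mathrm{edge}(A)$, shrinking $V$ if necessary, every integral curve of $\partial_\tau$ through $V$ actually meets $A$ (otherwise one could build a timelike curve from $I^-(p,V)$ to $I^+(p,V)$ avoiding $A$, contradicting $p\notin\mathrm{edge}(A)$). Therefore $A\cap V$ is the graph $\{(h(x),x):x\in\bar V_0'\}$ of a well-defined function $h:\bar V_0'\to(-\epsilon,\epsilon)$ on a possibly smaller hypersurface-with-boundary $\bar V_0'$; composing the graph map with the product chart of $\bar V_0'$ (an open subset of $\R^{n-2}\times[0,\epsilon)$, with $p$ on the $\{s=0\}$ face) gives the required homeomorphism $\phi:V'\to(-\epsilon,\epsilon)\times N$ with $\phi(V'\cap A)=\{0\}\times N$, establishing that $A$ is a topological hypersurface transverse to $\pM$.

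It remains to upgrade continuity of $h$ to the Lipschitz bound. As in the classical proof, one narrows the cones slightly: choose a flat auxiliary metric $g^+>g$ on $V$ with constant coordinate components (possible after shrinking $V$, as in Lemma~\ref{cc}); then achronality of $A$ for $g$ forces $A\cap V$ to be achronal for $g$, hence any two of its points cannot be joined by a $g^+$-timelike segment once one controls the $g^+$-cone aperture. Concretely, if $x,x'\in\bar V_0'$ then the segment from $(h(x),x)$ to $(h(x'),x')$ must avoid the $g^+$-timelike cones, which gives $|h(x)-h(x')|\le C\,|x-x'|$ for a constant $C$ depending only on the cone aperture of $g^+$ in these coordinates; this is exactly the Lipschitz estimate, and it holds uniformly up to the boundary face $\{s=0\}$ because the Gaussian coordinates extend smoothly there. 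Hence $\phi$ is Lipschitz in smooth coordinates, so $A$ is a locally Lipschitz hypersurface transverse to $\pM$.

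The main obstacle I anticipate is the boundary case: one must be careful that the ``vertical'' flow used to realize $A$ as a graph is tangent to $\pM$ (so that it does not flow boundary points into the interior or vice versa) and that the edge-avoidance argument still produces a timelike curve \emph{inside $\M$} joining $I^-(p,V)$ to $I^+(p,V)$ — this is where the tangency of $\nabla\tau=\partial_\tau$ to $\pM$ supplied by the Gaussian chart is essential, and it is also what guarantees the target of the chart is modeled on $\R^{n-2}\times[0,\epsilon)$ rather than on all of $\R^{n-1}$. Everything else is a routine transcription of the classical arguments, applied either directly in $M$ or in the adapted Gaussian coordinates near $\pM$.
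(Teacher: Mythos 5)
Your overall strategy coincides with the paper's own proof: handle interior points by the classical argument, and near a boundary point work in a Gaussian chart, realize $A$ locally as the graph of a function $h$ over the slice $\{x^0=0\}$ by flowing along $\partial_0$ (which is tangent to $\pM$, so the flow respects the decomposition $M\cup\pM$), using achronality for uniqueness of the intersection and the edge condition for its existence, and finally prove that $h$ is Lipschitz by comparison with a flat metric with constant coefficients.

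However, the cone comparison in your Lipschitz step points the wrong way, and as written that step fails. You take a flat metric $g^+>g$ (wider cones) and claim that $g$-achronality of $A$ prevents two of its points from being joined by a $g^+$-timelike segment. With $g^+>g$ this implication is false: a $g^+$-timelike segment need not be $g$-timelike (nor even $g$-causal), so no contradiction with the $g$-achronality of $A$ arises and no Lipschitz bound follows. What is needed is a flat metric with \emph{narrower} cones, $g_c=-c^2(dx^0)^2+\sum_i(dx^i)^2$ with $c>0$ small enough that $g_c<g$ on the (relatively compact) chart. Then $|h(x)-h(x')|\ge c^{-1}|x-x'|$ would make the coordinate segment from $(h(x),x)$ to $(h(x'),x')$ $g_c$-causal, hence $g$-timelike, contradicting achronality; this yields $|h(x)-h(x')|<c^{-1}|x-x'|$, which is exactly the paper's estimate (the segment stays in the chart and in $\{s\ge 0\}$ by convexity, so the comparison is legitimate up to the boundary face). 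Your appeal to Lemma \ref{cc}, where a flat metric with $g<g^+$ is used, concerns the opposite situation: there one widens the cones to capture all $g$-causal curves, whereas here one must narrow them so that flat causal relatedness forces $g$-timelike relatedness. With this correction your proof matches the paper's.
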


\begin{proof} The  condition on ${\rm edge}(A)$ implies that the achronal set $A\cap M$ contains no edge  points, thus,  from \cite[Prop. 14.25]{O})  $A\cap M$ is a topological hypersurface without boundary in $(M,g)$.
So, let $\hat{p} \in A \cap \partial M\neq\emptyset$ and let us  construct a  Lipschitz  topological chart $(V,\phi)$ centered at  $\hat{p}\in V \subset \overline{M}$ as in Defn. \ref{d_achronal} (1). Since $A\cap {\rm edge}(A)=\emptyset$, there exists some open neighbourhood $U$ of $\hat{p}$ such that any timelike curve contained in $U$ going from $I^{-}(\hat{p},U)$ to $I^{+}(\hat{p},U)$ intersects $A$. Without loss of generality, we can assume   that $(U,\psi=(x^{0},\ldots,x^{n-1}))$ is a Gaussian chart centered at $\hat{p}$ such that $\psi(U)$ is a cube.
Consider the subset $(-\epsilon,\epsilon) \times N_0 \subset \psi(U)
$, where $N_0=\{ y\in   \R_+^{n-1}: (0,y)\in \psi(U)\}$.
 The required  neighbourhood $V\ni\hat p$ will be  $V=\psi^{-1}((-\epsilon, \epsilon) \times N)$,  where $N\subset N_0$ is any (half) open ball centered at $0$ for the natural Euclidean metric $|\cdot |$ in the coordinates $\hat x:=(x^1, \dots, x^{n-1})$,
with small radius  so that $N$ is relatively compact in $N_0$ and the slices $x^0=\pm  \epsilon/2$ satisfy:
\[
\{q\in \bv V \ev : x^0(q)=-\epsilon/2\} \subset I^-(\hat{p},U)\quad\hbox{and}\quad \{q\in \bv V \ev : x^0(q)=\epsilon/2\} \subset I^+(\hat{p},U)
\]
(this can be achieved trivially because $\partial_0$ is timelike).
Now,  the integral curve of $\partial_0$ starting at any $(0,y)\in \{0\}\times N$ must intersect both $I^-(\hat{p},U)$ and $I^+(\hat{p},U)$.
Thus, it must intersect $A$ (recall $U\supset V$ and $\hat p\not\in {\rm edge} (A)$) in a  point $y_q$, which is unique by the achronality of $A$. So,  $A \cap V$ can be regarded as the graph of the function $h:N \rightarrow (-\epsilon, \epsilon)$,  $h(q):=x^0(y_{q})$. To show that $h$ is  Lipschitzian will be enough  because, in this case, the desired  chart $\phi$ on $V$ is just:

$$ \phi(\psi^{-1}(x^0,\ldots,x^{n-1}))=  (x^0-h(x^{1},\ldots,x^{n-1}), x^{2},\ldots,x^{n-1})$$
The
  Lipschitz condition will be  checked with respect to 
the  distance
$|\cdot |$ induced in $V$ by $\psi$ from  the Euclidean one.
 Recall first that, as $cl(V)$ is compact and $\partial_0$ is timelike, there exists some small $c>0$ such that the flat Lorentzian metric $g_c=-c^2 (dx^0)^2 + \sum_{i=1}^{n-1}(dx^i)^2$ satisfies $g_c <g$. The Lipschitz condition $|h(x)-h(y)| <  c^{-1}| x-y|$ must hold for all  $x,  y \in N$ because, otherwise, the points $(h(x), x), (h(y),y)$ would be (future or past) causally related for $g_c$ and, thus, for $g$ (in contradiction with the achronality of $A$).
\end{proof}
Other properties of achronal sets (including a converse to Prop. \ref{pf}) can be found in \cite[Sect. 3.2.3]{LuisTesis}. Next, we focus on consequences for Cauchy hypersurfaces.

\begin{cor}\label{co} 
(1) An achronal set $A$ with ${\rm edge}(A)=\emptyset$ is a closed (as a subset of $\M$)   locally Lipschitz hypersurface  transverse to $\partial M$.

(2) Let $F \neq \emptyset, \M$ be a future set (i.e $I^+(F)=F$). Then,  its topological boundary $\dot F$  is an achronal closed locally Lipschitz  hypersurface transverse to $\partial M$.

 (3) Any Cauchy hypersurface $\bar{\Sigma}$ of $(\overline{M},g)$ is an achronal closed locally Lipschitz hypersurface with boundary transverse to $\pM$.

\end{cor}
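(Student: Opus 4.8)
The plan is to treat the three items in order, using Proposition~\ref{pf} as a black box for the local Lipschitz-hypersurface structure and reducing everything else to the openness of $\ll$ (Prop.~\ref{p_opentrans}) together with elementary point-set arguments about the edge. For item (1), the hypothesis ${\rm edge}(A)=\emptyset$ trivially implies $A\cap{\rm edge}(A)=\emptyset$, so Prop.~\ref{pf} already gives that $A$ is a locally Lipschitz hypersurface transverse to $\pM$; the only thing left is closedness. First I would observe that $cl(A)$ is again achronal: if $p,q\in cl(A)$ with $p\ll q$, openness of $\ll$ yields neighborhoods $U_p\ni p$ and $U_q\ni q$ with $p'\ll q'$ for all $p'\in U_p$, $q'\in U_q$, and choosing $p'\in A\cap U_p$, $q'\in A\cap U_q$ contradicts the achronality of $A$. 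Then I would show $cl(A)\setminus A\subseteq{\rm edge}(A)$: given $p\in cl(A)\setminus A$, take a Gaussian chart adapted to the boundary centered at $p$ (cube image, $\partial_0$ timelike future-directed); the $\partial_0$-integral curve $L$ through $p$ is timelike, so by achronality of $cl(A)$ it meets $cl(A)$ only at $p$, whence $L\cap A=\emptyset$; short subsegments of $L$ run from $I^-(p,U')$ to $I^+(p,U')$ for every neighborhood $U'$ of $p$ (replacing $U'$ by $U'\cap U$ if needed) while avoiding $A$, so $p\in{\rm edge}(A)$. Since ${\rm edge}(A)=\emptyset$, we conclude $cl(A)=A$.

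For item (2), the key preliminary is that a future set $F=I^+(F)=\bigcup_{x\in F}I^+(x)$ is open (union of the open sets $I^+(x)$), hence $\dot F=cl(F)\cap cl(\M\setminus F)$ is automatically closed. Achronality of $\dot F$ follows as above: if $p,q\in\dot F$ with $p\ll q$, openness of $\ll$ together with $p\in cl(F)$ and $q\in cl(\M\setminus F)$ produces $p'\in F$ and $q'\notin F$ with $p'\ll q'$, so $q'\in I^+(F)=F$, a contradiction. For ${\rm edge}(\dot F)=\emptyset$: if $p$ were an edge point and $U$ a neighborhood of it, a timelike curve $\alpha\colon[0,1]\to U$ from $a\in I^-(p,U)$ to $b\in I^+(p,U)$ must satisfy $a\notin F$ (otherwise $p\in I^+(a)\subseteq F$, impossible since $F$ is open and $p\in cl(\M\setminus F)$) and $b\in F$ (since $p\in cl(F)$ and $\ll$ is open); as $F$ is open, at $t^*=\inf\{t:\alpha(t)\in F\}$ the curve meets $cl(F)\setminus F\subseteq\dot F$, so $\alpha$ cannot avoid $\dot F$. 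Hence ${\rm edge}(\dot F)=\emptyset$ and item (1) applies.

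For item (3), achronality is the definition of Cauchy hypersurface (Defn.~\ref{d_CauchyHyp}). Closedness: if $p\in cl(\bar\Sigma)\setminus\bar\Sigma$ with $p_n\to p$, $p_n\in\bar\Sigma$, pick an inextensible timelike curve through $p$; it meets $\bar\Sigma$ at a unique point $q\ne p$, so $q\ll p$ or $p\ll q$, and openness of $\ll$ then gives $q\ll p_n$ (resp.\ $p_n\ll q$) for large $n$, contradicting achronality of $\bar\Sigma$. Edgelessness: if $p$ were an edge point, then by closedness $p\in\bar\Sigma$, and for a small neighborhood $U$ the witnessing timelike segment from $a\in I^-(p,U)$ to $b\in I^+(p,U)$ extends to an inextensible timelike curve $\gamma$, which must meet $\bar\Sigma$ exactly once, at some $c$; since $a\ll p$, $p\ll b$ and $\bar\Sigma$ is achronal, $c$ can lie neither at-or-before $a$ nor at-or-after $b$ along $\gamma$ (each case would chronologically relate $c$ and $p$), so $c$ lies on the witnessing segment—which, however, avoids $\bar\Sigma$. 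This contradiction gives ${\rm edge}(\bar\Sigma)=\emptyset$, and item (1) then yields that $\bar\Sigma$ is a closed locally Lipschitz hypersurface transverse to $\pM$, its boundary being $\bar\Sigma\cap\pM$.

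The main obstacle is the edge bookkeeping rather than any hard analysis. In (2) one must first notice the (easy but essential) fact that $F$ is open and then run the connectedness/crossing argument correctly; in (3) the delicate point is forcing the unique Cauchy-intersection point of the extended inextensible curve to land on the short witnessing segment, which is exactly where achronality of $\bar\Sigma$ must be used with care. The closedness of $A$ in (1) via the Gaussian-chart vertical line is the other place requiring mild care, namely that subsegments of $L$ genuinely run between $I^-$ and $I^+$ inside every neighborhood of $p$.
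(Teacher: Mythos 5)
Your proof is correct, and for items (1) and (2) it follows essentially the same route as the paper: Prop.~\ref{pf} for the hypersurface structure, openness of $\ll$ to get achronality of $cl(A)$ and the inclusion $cl(A)\setminus A\subset{\rm edge}(A)$ for closedness, and for $\dot F$ the observation that a witnessing timelike curve from $I^-(p,U)$ to $I^+(p,U)$ must cross $\dot F$ because it starts outside $cl(F)$ and ends inside ${\rm int}(F)$ (your explicit remark that $F=\bigcup_{x\in F}I^+(x)$ is open, so $\dot F=cl(F)\setminus F$ is closed and the first-entry time lands on $\dot F$, is a clean way to package the paper's connectedness step). Where you genuinely diverge is item (3): the paper disposes of it in one line by noting the disjoint decomposition $\M=I^{+}(\bSigma)\cup\bSigma\cup I^{-}(\bSigma)$, so that $\bSigma$ is exactly the topological boundary of the future set $F=I^{+}(\bSigma)$ and part (2) applies verbatim; you instead verify closedness and edgelessness of $\bSigma$ directly from the definition of Cauchy hypersurface, extending the witnessing segment to an inextensible timelike curve and using achronality to force its unique intersection point onto the segment. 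Both arguments are valid; the paper's reduction is shorter and reuses (2), while yours is self-contained and makes explicit exactly which consequences of the Cauchy property are used (it does rely tacitly on the standard fact that a timelike segment extends to an inextensible timelike curve in $\M$, which is unproblematic here since one can flow along a global timelike field tangent to $\pM$ as in Prop.~\ref{extendfield}).
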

\begin{proof}
 (1) 
Prop. \ref{pf} implies that $A$ is a locally Lipschitz hypersurface transverse to $\partial M$. Closedness follows directly from the general inclusion
$cl(A) \setminus A \subset {\rm edge}(A)$,
which happens because, if $q \in cl(A) \setminus A$, no timelike curve through $q$ can intersect $A$ ($A$ achronal implies  $cl(A)$ achronal, as the chronological relation is also open in the case with boundary; recall Prop. \ref{p_opentrans} (a))  and, so, $q \in {\rm edge}(A)$.



(2) Taking into account elementary properties of transitivity (Prop. \ref{p_opentrans}),
$$I^{+}(\dot F) \subset \hbox{interior}(F),\quad I^{-}(\dot F) \subset \overline{M} \setminus cl(F), \quad\hbox{in particular,}\quad I^{+}(\dot F) \cap I^{-}(\dot F)=\emptyset .
$$ 
From this  last equality  
$\dot F$ is achronal. From the part (1), to prove  ${\rm edge}(\dot F)=\emptyset$ suffices and, since $\dot F$ is closed, ${\rm edge}(\dot F)
\subset 
\dot F$, that is,  to prove ${\rm edge}(\dot F)\cap \dot F=\emptyset$  suffices too. Assuming   $p\in {\rm edge}(\dot F)\cap \dot F$, there exists a timelike curve starting at $I^-(p)$ (thus, in $\overline{M} \setminus cl(F)$) and ending at  $I^+(p)$ (in ${\rm int}(F)$)  without crossing $\dot F$, a contradiction.

(3) Clearly, $\M$ is the disjoint union
$\overline{M}=I^{+}(\bar{\Sigma}) \cup \bar{\Sigma} \cup I^{-}(\bar{\Sigma})$.
So, $\bar{\Sigma}$ is the boundary of the future set $F=I^{+}(\bar{\Sigma})$ and the part (2) applies.
\end{proof}
Next, let us re-take the existence of a Cauchy time function.
Let
\begin{equation}\label{geroch}
t: \overline{M} \rightarrow \mathbb{R},\quad t(p):=\ln\left(-\frac{t^{-}(p)}{t^{+}(p)}\right)
\end{equation}
be a {\em Geroch function}.
From Thm. \ref{t_ladder} and Prop. \ref{p_creer}, this function is continuous for any globally hyperbolic spacetime-with-timelike-boundary. Thus, we have the elements to extend Geroch's equivalence to the case with boundary.

\begin{thm}\label{t_geroch}
For any  globally hyperbolic spacetime-with-timelike-boundary $(\overline{M},g)$, the
Geroch function $t$  in \eqref{geroch} is a (acausal)  Cauchy time function, that is, $t$ is a time function and all its levels are  acausal  Cauchy hypersurfaces.  Thus,
 if  $\bar{\Sigma}_0=t^{-1}(0)$ then $\M$ is homeomorphic to $\R\times \bar{\Sigma}_0$, and
 any other Cauchy hypersurface $\bar{\Sigma}$ is homeomorphic to $\bar{\Sigma}_0$.

 Conversely, any spacetime-with-timelike-boundary admitting a Cauchy hypersurface is globally hyperbolic.

\end{thm}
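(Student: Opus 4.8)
The plan is to follow Geroch's strategy, letting the causal ladder of Section~\ref{s3} supply everything except the behaviour of $t$ along inextensible curves. First I would record that $t$ is a well-defined continuous time function. By Thm.~\ref{t_ladder} and Prop.~\ref{p_ord_lowerlevels}, a globally hyperbolic $(\M,g)$ is causally simple, causally continuous and distinguishing (indeed strongly causal). Since $\pM$ is timelike, $I^{\pm}(p)$ is nonempty and open for every $p\in\M$, so property~2 of admissible measures gives $t^{-}(p)>0>t^{+}(p)$ and $t=\ln(-t^{-}/t^{+})$ is well defined; Prop.~\ref{p_creer} gives continuity of $t^{\pm}$, hence of $t$, and Prop.~\ref{p_dist} gives that $t^{-}$ and $-t^{+}$ are respectively strictly increasing and strictly decreasing along any future-directed causal curve, so $t$ increases strictly along such curves. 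Thus $t$ is a time function, and in particular every level $t^{-1}(c)$ is acausal, since two distinct causally related points cannot share a $t$-value.

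The core step is to show that $t\circ\gamma$ maps onto $\R$ for every inextensible future-directed causal curve $\gamma:(a,b)\to\M$; then each level is crossed exactly once by every such curve and, being acausal, is an acausal Cauchy hypersurface, which by Cor.~\ref{co}(3) is a closed locally Lipschitz hypersurface-with-boundary transverse to $\pM$. Fix $s_{1}\in(a,b)$. If some $r$ lay in $\bigcap_{s\in(a,b)}I^{-}(\gamma(s))$, then $\gamma((a,s_{1}])\subset J^{+}(r)\cap J^{-}(\gamma(s_{1}))$, which is compact by global hyperbolicity, so the past-inextensible causal curve $\gamma|_{(a,s_{1}]}$ would be imprisoned in a compact set, contradicting Cor.~\ref{imprisonedb}. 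Hence $\bigcap_{s}I^{-}(\gamma(s))=\emptyset$; since $\overline{m}$ is finite, continuity of measure from above along a sequence $s_{n}\downarrow a$ gives $t^{-}(\gamma(s))=\overline{m}(I^{-}(\gamma(s)))\to0$, while $\overline{m}(I^{+}(\gamma(s)))$ remains in the fixed interval $[\overline{m}(I^{+}(\gamma(s_{1}))),\overline{m}(\M)]\subset(0,\infty)$, so $t(\gamma(s))\to-\infty$ as $s\to a^{+}$. The symmetric argument gives $t(\gamma(s))\to+\infty$ as $s\to b^{-}$, so $t\circ\gamma$ is a continuous strictly increasing surjection onto $\R$. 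This is the one place where global hyperbolicity is used in an essential way, through the compactness of causal diamonds together with non-imprisonment, and the delicate point is precisely that the presence of $\pM$ spoils neither the compactness of these diamonds nor the non-imprisonment of causal curves --- exactly what Sections~\ref{s2}--\ref{s3} were set up to guarantee.

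For the splitting I would take a timelike vector field $X$ on $\M$ tangent to $\pM$ along $\pM$ (Prop.~\ref{extendfield}(iii)), future-directed, and multiply it by a positive function so that it becomes complete; its integral curves are then inextensible timelike curves, each lying entirely in $M$ or entirely in $\pM$ (as $\pM$ is $X$-invariant). Along each $X$-curve through $q$, the map $s\mapsto t(\phi_{s}(q))$ is, by the previous paragraph, a continuous strictly increasing surjection onto $\R$; hence $\Psi(c,q):=\phi_{s(c,q)}(q)$, with $t(\phi_{s(c,q)}(q))=c$, is a well-defined bijection $\R\times\bSigma_{0}\to\M$, and it is a homeomorphism by smooth dependence of the flow on initial data together with continuity of $t$ (its inverse being $p\mapsto(t(p),\text{the }\bSigma_{0}\text{-crossing of the }X\text{-curve through }p)$). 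For the uniqueness statement, given any Cauchy hypersurface $\bSigma$, each $X$-curve meets $\bSigma$ exactly once, yielding a hitting-time function $\tau:\M\to\R$; $\tau$ is continuous, for if $p_{n}\to p$ and $\tau(p_{n})$ exceeded some $\sigma>\tau(p)$ infinitely often, then $\phi_{\sigma}(p_{n})\in I^{-}(\bSigma)$ for those $n$ while $\phi_{\sigma}(p_{n})\to\phi_{\sigma}(p)\in I^{+}(\bSigma)$, contradicting $I^{+}(\bSigma)\cap I^{-}(\bSigma)=\emptyset$ (both open, disjoint by achronality); thus $\limsup\tau(p_{n})\le\tau(p)$ and symmetrically $\liminf\tau(p_{n})\ge\tau(p)$. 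Then $q\mapsto\phi_{\tau(q)}(q)$ and its analogue for $\bSigma_{0}$ are mutually inverse continuous maps between $\bSigma_{0}$ and $\bSigma$.

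Finally, for the converse, if $\bSigma$ is a Cauchy hypersurface then $\M=I^{-}(\bSigma)\sqcup\bSigma\sqcup I^{+}(\bSigma)$ (disjoint by achronality, exhaustive by applying the defining property to an inextensible timelike curve through an arbitrary point). From here one reproduces the classical argument (e.g.\ \cite[Ch.~14]{O}, \cite[Ch.~3]{Beem}): $(\M,g)$ is strongly causal --- otherwise the limit-curve theorem (Prop.~\ref{p2.11}) produces an inextensible causal curve accumulating at a point and imprisoned near it, incompatible with the partition above --- hence causal; and $J^{+}(p)\cap J^{-}(q)$ is compact for all $p,q$, since any sequence in it admits a causal limit curve through a point of $J^{+}(p)\cap J^{-}(q)$ by Prop.~\ref{p2.11}/\ref{otro}, and non-imprisonment (Cor.~\ref{imprisonedb}) forces the diamond to be bounded, hence compact. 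The timelike boundary intervenes only locally, and is absorbed by the Gaussian coordinates of Cor.~\ref{c_localsplitting} and the intrinsic $H^{1}$ limit-curve theory of Subsect.~\ref{s2.3}, so the arguments carry over unchanged near $\pM$.
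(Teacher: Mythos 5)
Your argument is correct and follows the same overall Geroch strategy as the paper (volume functions $t^{\pm}$, divergence along inextensible causal curves via non-imprisonment in compact diamonds, then the flow of a complete timelike field tangent to $\pM$), but two local steps are handled by genuinely different means. For the divergence of $t$, you show $\bigcap_{s}I^{-}(\gamma(s))=\emptyset$ and invoke continuity from above of the finite measure, whereas the paper exhausts $\overline{m}$ by compact sets and shows $I^{-}(\gamma(s_{0}))\cap K=\emptyset$; these are equivalent in substance, though yours is slightly cleaner since it avoids the auxiliary $K$. The more significant divergence is in the continuity of $\Psi^{-1}$: the paper passes to the double manifold $\M^{d}$ and applies invariance of domain to the extended map $\Psi^{d}$, while you prove directly that the $\bSigma$-hitting time of the flow is continuous, using only that $I^{+}(\bSigma)$ and $I^{-}(\bSigma)$ are open, disjoint, and partition $\M\setminus\bSigma$. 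Your route is more elementary and self-contained (no doubling construction, no invariance of domain), and it simultaneously yields the homeomorphism between $\bSigma_{0}$ and an arbitrary Cauchy hypersurface $\bSigma$ without the paper's intermediate description of $\Psi(\bSigma)$ as a locally Lipschitz graph; what the paper's double-manifold argument buys instead is a template that is reused elsewhere (e.g.\ in Section \ref{s4}) for transferring boundary questions to the boundaryless case. One presentational remark: the continuity of the $\bSigma_{0}$-crossing map, which you need already for $\Psi^{-1}$, is only justified later in your text (in the uniqueness paragraph, for general $\bSigma$); it would be cleaner to prove the hitting-time continuity first and then cite it for both $\bSigma_{0}$ and $\bSigma$.
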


\begin{proof}
As the sum $t$ of the time functions $\ln t^-$ and $\ln (-t^+)$ is also a time function, its levels are acausal, and $t$  will be Cauchy if, for any
 inextensible past-directed causal curve  $\gamma:(a,b) \rightarrow \overline{M}$, 
\[
\lim_{s \rightarrow a} t^{-}(\gamma(s))=0, \qquad ( \hbox{and analogously} \; \lim_{s \rightarrow b} t^{+}(\gamma(s))=0),
\]
so that $\lim_{s \rightarrow a} t(\gamma(s))=-\infty$ and $\lim_{s \rightarrow b} t(\gamma(s))=\infty$ (as in the case without boundary).
To check  $\lim_{s \rightarrow a} t^{-}(\gamma(s))=0$, recall that the measure of $\M$ can be approximated by compact subsets (Prop. \ref{p_admiss}). So, for any $\epsilon>0$ there exists some compact  $K\subset \overline{M}$ with $\overline{m}(\M\setminus K)<\epsilon$ and  one has just to show that there exists $s_{0} \in (a,b)$ such that $I^{-}(\gamma(s_{0})) \cap K=\emptyset$ (as this implies
$
t^-(\gamma(s))=\overline{m}(I^{-}(\gamma(s))) \leq \overline{m}(\M)-m(K)<\epsilon,
$  for all $s \leq s_{0}$).
Assuming by contradiction the existence of a sequence $\{s_{m}\} \rightarrow a$ with $I^{-}(\gamma(s_{m})) \cap K \neq \emptyset$ for all $m$, there exists a sequence $\{r_{m}\}_{m} \subset K$ with $r_{m} \in I^{-}(\gamma(s_{m}))$; by the compactness of $K$,  $\{r_{m}\} \rightarrow r \in K$ up to a subsequence. Taking $p \in I^{-}(r)$ and some fixed $q=\gamma(c)$, one has $p \ll r_{m} \ll \gamma(s_{m}) \ll q$ for large $m$. This implies  $\gamma(a,c] \subset J^{+}(p) \cap J^{-}(q)$, that is, a past inextensible causal curve is imprisoned in the compact subset $J^{+}(p) \cap J^{-}(q)$, which contradicts \ncambios{Cor. \ref{imprisonedb}}.

To obtain the homeomorphism,  Prop. \ref{extendfield} ensures the existence of a future-directed timelike vector field $T \in \mathfrak{X}(\overline{M})$ whose restriction to $\partial M$ is tangent to $\partial M$. With no loss of generality, $T$ can be chosen unit for some auxiliary complete Riemannian metric $g_{R}$ on $\overline{M}$. Then,  $T$ is complete and, so,  its integral curves are inextensible timelike curves in $\overline{M}$.
 From the part (a), $t$ diverges along the integral curves of $T$, and each integral curve of $T$ intersects  $\bar{\Sigma}_{0}=t^{-1}(0)$ at a unique point $x\in \bar{\Sigma}_0$, which will be regarded as the initial point of each integral  curve $\gamma_x$ of $T$. 
 So, every  $p\in \overline{M}$ can be written univocally as $\gamma_x(t)$ for some  $x\in \bar{\Sigma}_0$ and $t\in \R$. Therefore, the map  $\Psi: \M\rightarrow \R\times \bar{\Sigma}_0$,
 $p\mapsto (t,x)$ is continuous, bijective and maps boundaries into boundaries. For the continuity of  $\Psi^{-1}$, recall that,  in the case without boundary, it is straightforward  by  the theorem of invariance of the domain. In the case with boundary just apply it to the double manifold $\M^d$. That is,  notice that the identification of homologous points of the two copies of  $\M$ yields also an extended hypersurface without boundary $\bar{\Sigma}_0^d$ (recall that Cor. \ref{co} (3) ensures that the boundary of  $\bar{\Sigma}_0$  is included in $ \partial M$); so, extend naturally $\Psi$ to a continuous bijective map $\Psi^d: \M^d \rightarrow \R\times \bar{\Sigma}_0^d$, which will be then a homeomorphism.

 Moreover, for any Cauchy hypersurface $\bar{\Sigma}$  clearly $\Psi(\bar{\Sigma})$ can be regarded  as the graph of a locally Lipschitz function $h:\bar{\Sigma}_0\rightarrow \R$. So, the continuos map $\bar{\Sigma}_0\ni x \mapsto (h(x),x) \in \Psi(\bar{\Sigma})$ is the required homeomorphism.

 Finally, the converse follows as in the case without boundary  (see \cite[Th.8.3.10]{Wald}, \cite[Cor. 14.39]{O} for the proof)  just taking into account that the limit curve theorem (which is the essential tool for that case) still holds here (Prop. \ref{p2.11}).
\end{proof}

\section{Stability}\label{s_Stab}

\subsection{Results for
global hyperbolicity and Cauchy temporal functions}\label{s_Stab1}
Next, our aim is to prove the following two theorems.

\begin{thm}[Stability of global hyperbolicity]
\label{T0}
Let $(\M, g)$ be a globally hyperbolic spacetime-with-timelike-boundary. Then, there exists a metric $g'>g$  (thus, necessarily with timelike boundary) which is also globally hyperbolic. \end{thm}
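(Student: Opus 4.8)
The plan is to produce a wider metric by an explicit rank‑one modification of $g$ and to recognise it as globally hyperbolic through the converse part of Theorem~\ref{t_geroch} (a spacetime‑with‑timelike‑boundary admitting a Cauchy hypersurface is globally hyperbolic), or, equivalently, by checking directly that it is causal with compact causal diamonds. As initial data I would fix: a future‑directed $g$‑timelike vector field $T$ tangent to $\pM$, given by Proposition~\ref{extendfield}; a complete auxiliary Riemannian metric $g_R$ on $\M$ (built via $\M^{d}$ as in the comment after Proposition~\ref{pp}), with respect to which $T$ is normalised to be unit, hence complete; and the Geroch function $t$ of \eqref{geroch}, which by Theorem~\ref{t_geroch} is a (merely continuous) Cauchy time function, with $g$‑diamonds $J^{+}_{g}(p)\cap J^{-}_{g}(q)$ compact and contained in the slab $t^{-1}([t(p),t(q)])$.

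For the candidate metric I would set $\omega:=g(T,\cdot)$ and, for a smooth function $\mu>0$ to be chosen,
\[
g':=g-\mu\,\omega\otimes\omega .
\]
For every $\mu>0$ this is Lorentzian: $g'$ restricted to the $g$‑spacelike hyperplane field $T^{\perp_{g}}$ coincides with $g$ (hence is positive definite), this hyperplane field is also the $g'$‑orthogonal complement of $T$, and $g'(T,T)=g(T,T)\bigl(1-\mu g(T,T)\bigr)<0$, so $g'$ has signature $(-,+,\dots,+)$. Moreover $g'>g$: if $0\neq v$ is future‑directed $g$‑causal then $g(T,v)<0$ (a $g$‑timelike vector is never $g$‑orthogonal to a causal one), hence $g'(v,v)=g(v,v)-\mu\,g(T,v)^{2}<0$ and $g'(T,v)=(1+\mu|g(T,T)|)\,g(T,v)<0$, so $v$ is future‑directed $g'$‑timelike for the time‑orientation induced by $T$. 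Since $T|_{\pM}$ is tangent to $\pM$, the same computation applied to $i^{*}g$ shows that $i^{*}g'$ is Lorentzian, i.e.\ $g'$ still has timelike boundary; this is the ``thus'' in the statement.

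It remains to fix $\mu$ small enough \textemdash{} and suitably decaying \textemdash{} so that $(\M,g')$ is globally hyperbolic. First, causality: $(\M,g)$ is globally hyperbolic, hence stably causal (Theorem~\ref{t_ladder}), so Proposition~\ref{p_stable} provides a strongly causal $g''>g$; choosing $\mu$ pointwise so small that $g'\leq g''$ makes $(\M,g')$ strongly causal (a $g''$‑causally convex set is $g'$‑causally convex, recall Proposition~\ref{imprisoned}), in particular causal. Second, compactness of diamonds: it suffices to show that each $g'$‑diamond $J^{+}_{g'}(p)\cap J^{-}_{g'}(q)$ lies in some compact set, because then strong causality together with the limit‑curve results (Propositions~\ref{p2.11} and \ref{otro}) closes the diamond. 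To get this I would localise: cover $\M$ by relatively compact Gaussian cubes, on each of which the construction in the proof of Lemma~\ref{cc} furnishes a flat metric $g^{+}$ with $g<g^{+}$ and cones as close to those of $g$ as desired; via a partition of unity subordinate to a locally finite refinement, pick $\mu$ so small that $g'\leq g^{+}$ on every cube, and in addition decaying so fast in terms of $t$ and of the $g_{R}$‑distance to a fixed point that a curve which is $g^{+}$‑causal in each cube cannot move $t$ below $t(p)$ or above $t(q)$ by more than a controlled total amount, nor run off to $g_{R}$‑infinity. Then every $g'$‑causal curve from $p$ to $q$ stays in a compact subset of $t^{-1}([t(p)-1,t(q)+1])$, the diamonds are compact, and $(\M,g')$ is globally hyperbolic.

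\textbf{Main obstacle.} The delicate step is precisely this last one: widening the cones \emph{everywhere} and strictly, while keeping the causal structure confined, when the only Cauchy time function at hand, $t$, is merely continuous rather than a smooth Cauchy temporal function. Monotonicity of $t$ along causal curves is not robust under cone‑widening in a pointwise way, so the decay rate of $\mu$ must be tuned against the local ``room'' between $g$ and the flat comparison metrics and against the compactness of the original $g$‑diamonds \textemdash{} this is exactly the point where, in the boundaryless case, having a Cauchy temporal function available makes stability immediate (cf.\ \cite{BM,S}), and here one must instead argue directly. A secondary, routine, point is the partition‑of‑unity gluing of the local flat comparison metrics into a single global $g'$ that is dominated by all of them and still satisfies $g'>g$.
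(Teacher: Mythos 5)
Your ansatz $g'=g-\mu\,\omega\otimes\omega$ is exactly the paper's \eqref{ealpha}, and the algebraic verifications (Lorentzian signature, $g'>g$, timelike boundary preserved, strong causality by dominating $g'$ with a strongly causal $g''>g$ from Prop.~\ref{p_stable}) are correct. The gap is in the step you yourself flag as the main obstacle: you assert that $\mu$ can be chosen ``decaying so fast in terms of $t$ and of the $g_R$-distance'' that every $g'$-causal curve from $p$ to $q$ stays in a compact subset of a slab, but you supply no mechanism that delivers this, and the one you sketch does not work. Since $t$ is merely continuous, there is no quantitative relation between the size of the cone-widening on a cube and the amount by which a widened-causal curve can decrease $t$ there; the only available tool is a non-constructive limit-curve contradiction (if no admissible $\delta$ existed, one extracts a $g$-causal limit of $g_{1/m}$-causal curves violating the monotonicity of $t$ or the acausality of its levels). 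Worse, even granting a per-cube bound $\epsilon_i$ on the $t$-decrease, a causal curve may re-enter each cube arbitrarily many times, so the total decrease is not controlled by $\sum_i\epsilon_i$ and the partition-of-unity tuning you describe collapses.

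The paper fills precisely this hole with the notion of a $t$-perturbation (Defn.~\ref{dper}), whose condition (iv) bounds Hausdorff distances of whole diamonds $J_\alpha(p,\bSigma_{t'})$ rather than per-visit $t$-decrements. Lemma~\ref{L1} establishes these bounds on a compact set by contradiction via limit curves (for $\alpha/m$ with $m$ large), and Prop.~\ref{L2} globalizes them to a strip $J(\bSigma_{t_1},\bSigma_{t_2})$ by an inductive exhaustion in which each new perturbation is made small relative to the \emph{already perturbed} metric --- with summable errors $\epsilon_m\le\epsilon/2^{m+1}$ and the extra distance conditions \eqref{e_arr1}--\eqref{e_arr2} guaranteeing that far-away perturbations never re-enter earlier diamonds, cf.\ \eqref{Ej2}. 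Only after this does the slab-confinement Claim in the proof of Thm.~\ref{T0} follow and yield compact diamonds. So your outline identifies the right candidate metric and the right reduction, but omits essentially all of Subsection~\ref{s_Stab2}, which is where the theorem is actually proved.
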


This result will be used to prove that such a spacetime $(\M, g)$ admits a Cauchy temporal function $\tau$ {\em with gradient} $\nabla\tau$ {\em tangent to} $\pM$ by reducing the problem to the simple case when $\pM$ is totally geodesic. However, once the existence of such functions is established, the following stronger stability result for the restrictive properties of the obtained $\tau$ will become interesting in its own right.

\begin{thm}[Stability of Cauchy temporal functions] \label{T1}
If $\tau$ is a Cauchy
temporal  function for a
globally hyperbolic metric $g$ with $\nabla\tau$ tangent to $\partial M$, then there
exists $g'$ with wider timecones, $g< g'$, such that $\tau$ is also a Cauchy temporal function for
$g'$ with $g'$-gradient tangent to $\pM$.

Moreover, $\tau$ is Cauchy temporal for any $g''\leq g'$ (with $g''$-gradient not necessarily tangent to $\pM$). In particular, its level sets  $\bSigma_{\tau_0}$, $\tau_0\in \R$,  are spacelike Cauchy hypersurfaces for any such $g''$.

 \end{thm}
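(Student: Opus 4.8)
The plan is to reduce the statement to the case without boundary by exploiting the tangency of $\nabla\tau$ to $\pM$, and then invoke the known stability result for Cauchy temporal functions on globally hyperbolic spacetimes without boundary (the result of \cite{S}, alluded to in Subsection \ref{s_Stab1}). First, since $\nabla\tau$ is tangent to $\pM$, the flow of a suitable normalization of $\nabla\tau$ maps $\pM$ into itself, so we may use Gaussian-type coordinates adapted to the boundary in which $g$ has the product form $g = -\Lambda\, d\tau^2 + g_\tau$ with the slices $\{\tau = \tau_0\}$ spacelike Cauchy hypersurfaces-with-boundary whose boundary lies in $\pM$ (this is essentially the content of Corollary \ref{c_localsplitting} globalized via Theorem \ref{t0}'s splitting, which we are allowed to assume). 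Next, extend $g$ symmetrically to the double manifold $\M^d$ as in Subsection \ref{s2.1} (Proposition \ref{pp}), obtaining a globally hyperbolic metric $g^d$ on $\M^d$ without boundary for which $\tau$ (extended by the reflection-invariant formula) is a Cauchy temporal function; the tangency of $\nabla\tau$ to $\pM$ is exactly what makes this extension of $\tau$ smooth and temporal across the gluing locus $\pM$.

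Then I would apply the stability result for Cauchy temporal functions in the boundaryless case to $(\M^d, g^d, \tau)$: there exists $g'^d > g^d$, globally hyperbolic, for which $\tau$ remains Cauchy temporal. The subtlety is that $g'^d$ need not be reflection-invariant, so its restriction to $\M$ would not have $\nabla\tau$ tangent to $\pM$. To fix this, I would \emph{symmetrize} $g'^d$: replace it by the average (or rather, a convex-combination-type construction respecting the causal-cone ordering) of $g'^d$ with its pullback under the reflection involution $\iota$ of $\M^d$. Since $g < g'^d$ and $g < \iota^*g'^d$ (the latter because $g = \iota^* g$), and since the set of metrics with cones containing a fixed cone and contained in a fixed wider cone is convex, the symmetrized metric $\tilde g'^d := \frac12(g'^d + \iota^* g'^d)$ — or, more carefully, a metric whose cones lie between those of $g$ and of $\min(g'^d,\iota^*g'^d)$ — still satisfies $g < \tilde g'^d$, is still globally hyperbolic (monotonicity of global hyperbolicity under narrowing of cones, Lemma \ref{ppp} and Theorem \ref{t_geroch}), is $\iota$-invariant, and still admits $\tau$ as a Cauchy temporal function (again by the cone-ordering characterization: $\tau$ is Cauchy temporal for every metric with cones between those of $g$ and $g'^d$, hence for $\tilde g'^d$). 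Restricting $\tilde g'^d$ to $\M$ gives the desired $g' > g$ on $\M$; the $\iota$-invariance forces $\nabla^{g'}\tau$ to be tangent to $\pM$.

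For the ``moreover'' part, note that once $g'$ is fixed with $\tau$ Cauchy temporal for $g'$, any $g'' \leq g'$ has $\tau$ as a Cauchy temporal function as well: $\nabla^{g''}\tau$ is timelike and past-directed because $d\tau$ evaluates negatively on future $g'$-causal vectors hence on future $g''$-causal vectors, and the level sets of $\tau$ are $g''$-spacelike for the same reason; that these levels are Cauchy for $g''$ follows since every inextensible $g''$-timelike curve is $g'$-timelike, hence meets each level exactly once (by Theorem \ref{t_geroch} applied to $g'$, or directly), and global hyperbolicity of $(\M, g'')$ follows from Theorem \ref{t_geroch} since it admits a Cauchy hypersurface. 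The main obstacle I anticipate is the symmetrization step: one must verify carefully that averaging Lorentzian metrics (which is not always Lorentzian) does preserve the Lorentzian signature here — this is guaranteed precisely because both $g'^d$ and $\iota^* g'^d$ have cones containing the fixed timelike vector field realizing the time-orientation, so their convex combinations remain Lorentzian with cones squeezed between those of $g$ and of the common refinement; getting the cone inclusions and the resulting Cauchy-temporal persistence exactly right (rather than merely "generically" right) is where the real work lies, and it is cleanest to phrase everything via the partial order $<$ on cone structures rather than via explicit metric averages.
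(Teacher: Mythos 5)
Your route (double the manifold, apply the boundaryless stability result to $(\M^d,g^d)$, then symmetrize) is genuinely different from the paper's, but it has a real gap at its very first step: the reflected metric $g^d$ on the double manifold $\M^d$ is in general \emph{not smooth} across the gluing locus $\pM$ --- the paper warns about exactly this in the remark following Prop.~\ref{pp}. Tangency of $\nabla\tau$ to $\pM$ gives you the global splitting $g=-\Lambda\,d\tau^2+g_\tau$ with $\partial_\tau$ tangent to the boundary, but it says nothing about the normal derivatives of the metric coefficients at $\pM$ (the second fundamental form of $\pM$ need not vanish), and these are what obstruct smoothness of the doubled metric; likewise the reflection-invariant extension of $\tau$ is only $C^1$ unless all odd-order normal derivatives of $\tau$ vanish on $\pM$. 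This is precisely why, in the proof of Thm.~\ref{t0}, the paper first replaces $g$ by an auxiliary $g^*$ with a genuine local product structure near $\pM$ (Prop.~\ref{R1}, Lemma~\ref{L_reduction}) before doubling --- and that replacement already consumes Thm.~\ref{T0}. Since Thm.~\ref{T1} is stated for the original $g$, you cannot pass to $\M^d$ without either this preparatory step or a stability theory for merely Lipschitz cone structures.

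The symmetrization step is also not sound as written: the arithmetic mean of two Lorentzian metrics sharing a timelike cone need not be Lorentzian. Take $g_1=\mathrm{diag}(-10,1,1)$ and $g_2=\mathrm{diag}(1,-10,1)$ on $\R^3$: both are Lorentzian, both are negative on $(1,1,0)$ and on a common round cone around it, yet their average $\mathrm{diag}(-9/2,-9/2,1)$ has signature $(-,-,+)$. Your hedge --- build a metric whose cones sit between those of $g$ and the intersection of the cones of $g'^d$ and $\iota^* g'^d$ --- can be made to work, but at that point you are essentially redoing the paper's construction. The paper's proof is far more economical and avoids the double altogether: it perturbs within the family \eqref{E4}, $g_\alpha=g-\alpha\,d\tau^2=-(\Lambda+\alpha)d\tau^2+g_\tau$, so that for \emph{every} $\alpha\ge 0$ the slices $\bSigma_{\tau_0}$ remain spacelike and $\nabla_{g_\alpha}\tau=-\partial_\tau/(\Lambda+\alpha)$ remains tangent to $\pM$ automatically; it then reruns the perturbation machinery of Lemma~\ref{L1} and Prop.~\ref{L2} (already set up for Thm.~\ref{T0}) to choose $\alpha>0$ making $g_\alpha$ globally hyperbolic with all the $\tau$-levels Cauchy. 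Your argument for the final ``moreover'' clause (any $g''\le g'$ inherits $\tau$ as a Cauchy temporal function) is correct and coincides with what the paper intends.
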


\begin{rem} {\em (1) As in the case without boundary, the last assertion of Thm. \ref{T1} does not hold for an arbitrary Geroch time function $t$ (even if $t$ is smooth) because its levels may be degenerate.
 That is, independently of the presence of the boundary, we can say that a smooth time function $\tau$ for $g$ is also a smooth time function for some $g'>g$ if and only if $\tau$ is a temporal function for $g$.


(2) We will give a direct complete proof of Thm. \ref{T0} for the sake of completeness,  which may have interest to compare with previous techniques in \cite{ geroch, BM, fathi}. However, it is worth emphasizing that most of this proof
 can be skipped  (in fact, this was done in the unpublished reference \cite{S}, whose techniques are incorporated here).  Indeed, from the hypotheses of Thm. \ref{T0} one can construct a Cauchy temporal function $\tau_0$ on $\overline{M}$ {\em with no restriction on} $\pM$ just by working exactly as in the case without boundary (say, as in \cite{BernalSanchez} or \cite{MS}, see also Remark \ref{r_final}). From here, one can reproduce the proof of Thm. \ref{T0} developed in the following subsections, but skipping the subtle details associated to working there with a Geroch time function instead of a temporal one (see Remark \ref{r_A} for further details).

Summing up,  Thms. \ref{T0} and \ref{T1} (with Remark \ref{r_A}) yield  two proofs of stability in both the case with and without boundary, one of them direct (and more technical) and the other one by using an auxiliary Cauchy temporal function $\tau_0$
\ncambios{(compare with \cite[Sect. 6]{geroch} and the Section 3 in \cite{BM}, arxiv version).}
}\end{rem}

\subsection{Techniques to perturb $g$ maintaining Cauchy hypersurfaces}\label{s_Stab2}

To prove Thm. \ref{T0}, let $t$ be a  prescribed Geroch's time function on $(\M,g)$, so that, topologically $\M \equiv \R\times \bSigma$ with $\bSigma$ Cauchy  and acausal. We
will also consider a complete Riemannian metric $g_R$ on $\overline{M}$ with associated distance $d_R$. For any two non-empty
subsets $A,B\subset \M$ we denote $J(A,B):=J^+(A)\cap J^-(B)$. Moreover,  $d_R(A,B)$ will denote the $d_R$-distance between $A$, $B$ (i.e., the infimum  of the set $\{d_R(x,y): x\in A,\, y\in B\}$) and
 $d_H(A,B)$  the Hausdorff distance associated with $d_R$ between the two sets
 (i.e., the infimum of the $r\geq 0$ such that $d_R(x,B)\leq r$ for all $x\in A$ and $d_R(A,y)\leq r$ for all $y\in B$). The latter  will be applied to relatively compact subsets $A,B\subset \M$ (recall that $d_H$ becomes a true distance in the set of compact non-empty subsets). When $A=\{p\}, p\in \M$, the notation will be simplified to $A=p$.

Let $\omega$ be the 1-form metrically associated with any 
timelike vector  field  on $\overline{M}$.
For any  function $\alpha \geq 0$ on $\overline{M}$, consider the metric:
\begin{equation} \label{ealpha}
g_\alpha=g-\alpha \cdot  \omega ^2.
\end{equation}
Whenever $\alpha>0$, one has  $g< g_\alpha$.
We will add the subscript $\alpha$ to denote elements computed
with $g_\alpha$ (for example, $J_\alpha^+(p)$ is the
$g_\alpha$-causal future of $p$).  Our final aim is  to prove that some $\alpha>0$ can be chosen
small enough so that $g_\alpha$ remains causal with compact $J_\alpha (p,q)$ for all $p,q$, and thus, $(\overline{M},g_{\alpha})$ is globally hyperbolic with timelike boundary. Let us introduce a working definition.

\begin{defi}\label{dper}
 Fix an open neighborhood $U_0\subset  \M  $
bounded by two $t$-levels, i.e.  $U_0\subset J(\bSigma_{t_-},\bSigma_{t_+})$,
for some ${t_-},{t_+}$, which  will be assumed to be optimal\footnote{That is, with no loss of generality we will always assume that no $t'_-\geq
t_-$, $t'_+\leq t_+$ satisfy $U_0\subset
J(\bSigma_{t'_-},\bSigma_{t'_+})$ if some of the two inequalities is strict.}. Let
$C_0\subset  \M$ be a closed subset included in $U_0$, and choose
$\epsilon_0>0$.

A $t$-{\em perturbation of $g$ with wider timecones on $C_0$, support
in $U_0$ and $d_H$-distance smaller than $\epsilon_0$}, is any
smooth function $\alpha\geq 0$ satisfying:

(i) $\alpha>0$ on $C_0$ (thus, $g< g_\alpha$ on $C_0$),

(ii) $\alpha \equiv 0$ outside $U_0$ (i.e., $g \equiv g_\alpha$ on
$\overline{M}\backslash U_0$),

(iii) $g_\alpha$ is globally hyperbolic and it admits
 $\bSigma_{t_-}$, $\bSigma_{t_+}$ (and, thus,  all $\bSigma_t$, $t\in \R\setminus [t_-,t_+]$) as Cauchy hypersurfaces.

(iv) the following bounds hold for all $p\in J(\bSigma_{t_-},\bSigma_{t_+})$:

\begin{center}
\bcambios{ $d_H(J(p,\bSigma_{t'_+}), J_\alpha(p,\bSigma_{t'_+}))<\epsilon_0  \quad\hbox{if $t'_+\in [t(p),t_+]$}$, }

\bcambios{ $d_H(J(\bSigma_{t'_-},p), J_\alpha(\bSigma_{t'_-},p))<\epsilon_0  \quad\hbox{if $t'_-\in [t_-, t(p)]$}$, }

\bcambios{ $d_H(p, J_\alpha(p,\bSigma_{t'_+}))<\epsilon_0   \quad  \hbox{if $t'_+\in [t_-,t(p)]$ and $J_\alpha(p,\bSigma_{t'_+})\neq \emptyset$}$, }

\bcambios{ $d_H(p, J_\alpha(\bSigma_{t'_-},p))<\epsilon_0  \quad
\hbox{if $t'_-\in [t(p),t_+]$ and $J_\alpha(\bSigma_{t'_-},p)\neq \emptyset$}$. }
\end{center}


\em
\end{defi}

\begin{rem}\label{r_zcompact} {\em
The conditions in this definition imply strong restrictions on the sets \bcambios{of} type
$ J_\alpha(p,\bSigma_{t'_+}))$ such as {\em compactness}. Indeed,
$d_H(J(p,\bSigma_{t'_+}), J_\alpha(p,\bSigma_{t'_+}))$ $<\epsilon_0$ implies that
$J_\alpha^+(p) \cap \bSigma_{t'_+}$ lies in a compact set $D$ of $\bSigma_{t'_+}$. Hence
\begin{equation}\label{b}
J_{\alpha}(p,\bSigma_{t'_+})=J_{\alpha}(p,J_{\alpha}^+(p)\cap\bSigma_{t'_+})=J_{\alpha}(p,D).
\end{equation}
As the global hyperbolicity of $g_\alpha$ implies that
$J_\alpha(A,B)$ is compact  for any compact $A,B\subset \M$, the set
$J_\alpha(p,\bSigma_{t'_+})=J_\alpha(p,D)$ in (\ref{b}) is compact too. In particular, $J_\alpha^+(p) \cap \bSigma_{t'_+}=J_\alpha(p,\bSigma_{t'_+})\cap \bSigma_{t'_+}$ is also compact.

Notice also that, if $d_H(J(p,\bSigma_{t'_+}), J_\alpha(p,\bSigma_{t'_+}))<\epsilon_0$, then for any  $\alpha'$ with $0\leq\alpha'\leq \alpha$ one has $J(p,\bSigma_{t'_+}) \subset J_{\alpha'}(p,\bSigma_{t'_+})\subset J_{\alpha}(p,\bSigma_{t'_+})$, and thus, $d_H(J(p,\bSigma_{t'_+}), J_{\alpha'}(p,\bSigma_{t'_+}))<\epsilon_0$..

Finally, recall also  that if $\alpha$ is such a perturbation then so is
$\alpha/N$ for any $N>1$.
}\end{rem}


\begin{lemma}\label{L1} {\em (Existence of a
$t$-perturbation for a
compact set)}. For any compact $K_0\subset \overline{M}$, any open neighborhood
$U_0\subset J(\bSigma_{t_-},\bSigma_{t_+})$ of $K_0$ bounded by two $t$-levels $t_\pm$ and  any  $\epsilon_0>0$,
there exists a $t$-perturbation of $g$ as in Defn. \ref{dper} \bcambios{with $C_0=K_0$}.
\end{lemma}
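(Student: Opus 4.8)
The plan is to look for $\alpha$ of the form $\alpha=\lambda\,\alpha_0$, where $\alpha_0\geq 0$ is a \emph{fixed} smooth bump with $\alpha_0>0$ on $K_0$ and compact support contained in $U_0$, and then to let $\lambda\downarrow 0$. Such an $\alpha_0$ exists: choose a relatively compact open $U_1$ with $K_0\subset U_1\subset\overline{U_1}\subset U_0$ and a bump subordinate to it; note $\overline{U_1}\subset J(\bSigma_{t_-},\bSigma_{t_+})$ is compact, and in fact $\overline{U_1}\subset I^+(\bSigma_{t_-})\cap I^-(\bSigma_{t_+})$, since an open set contained in $J^+(\bSigma_{t_-})$ cannot meet the Cauchy hypersurface $\bSigma_{t_-}$. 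Conditions (i) and (ii) of Defn.~\ref{dper} then hold for every $\lambda\in(0,1]$, and the whole issue is that (iii) and (iv) hold once $\lambda$ is small enough.

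The key structural point is a \emph{confinement} property. Set $C:=\operatorname{supp}\alpha_0$. Since $g_{\lambda\alpha_0}\le g_{\alpha_0}$ for $0\le\lambda\le 1$, the region inside the slab $\{t_-\le t\le t_+\}$ where $g_{\lambda\alpha_0}$-causality may differ from $g$-causality is contained, uniformly in $\lambda$, in $\big(J^-_{\alpha_0}(C)\cap J^+(\bSigma_{t_-})\big)\cup\big(J^+_{\alpha_0}(C)\cap J^-(\bSigma_{t_+})\big)$; and a limit-curve argument (letting $\lambda\to 0$, using Prop.~\ref{p2.11} and Cor.~\ref{imprisonedb}) shows that for $\lambda$ small this is contained in a \emph{fixed} $g$-compact set $\mathcal{K}^{\ast}$, independent of $\lambda$. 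Indeed for $g$ itself the sets $J^\pm(C)\cap\bSigma_{t_\mp}$ are compact ($\bSigma_{t_\mp}$ is Cauchy for $g$), hence so are $J^-(C)\cap J^+(\bSigma_{t_-})$ and $J^+(C)\cap J^-(\bSigma_{t_+})$ (each equals $J(A,B)$ for compact $A,B$ in the globally hyperbolic $(\M,g)$); one then takes $\mathcal{K}^{\ast}$ a fixed compact $d_R$-neighbourhood of their union and checks, again by a limit-curve contradiction, that the $g_{\lambda\alpha_0}$-influence region of $C$ within the slab cannot escape $\mathcal{K}^{\ast}$ for small $\lambda$. Off $\mathcal{K}^{\ast}$ one simply has $g_\alpha=g$.

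Granting this, the three remaining points follow by comparison with $g$ on $\mathcal{K}^{\ast}$ as $\lambda\to0$. (a) \emph{Causality:} a closed $g_{\lambda_n\alpha_0}$-causal curve for $\lambda_n\to0$ must pass through $C$, and — since along its portions outside $C$ it is $g$-causal, so $t$ is non-decreasing there, while $t$ stays in $[t_-,t_+]$ on $C$ — it is confined to a fixed compact set; extracting a limit curve produces a closed $g$-causal curve, contradicting the causality of $g$. So $g_\alpha$ is causal for small $\lambda$. (b) \emph{$\bSigma_{t_\pm}$ remain Cauchy:} because $C\subset I^+(\bSigma_{t_-})\cap I^-(\bSigma_{t_+})$ one has $g_\alpha=g$ near $\bSigma_{t_\pm}$ and off $\mathcal{K}^{\ast}$, so the future-set argument of Cor.~\ref{co} still shows every inextensible $g_\alpha$-timelike curve meets $\bSigma_{t_\pm}$, while achronality for $g_\alpha$ follows from (a) together with confinement (a $g_\alpha$-timelike curve with endpoints on $\bSigma_{t_-}$ would, for small $\lambda$, stay $C^0$-close to a $g$-causal one, violating $g$-achronality); then Thm.~\ref{t_geroch} gives that $g_\alpha$ is globally hyperbolic, i.e. (iii). (c) \emph{The Hausdorff bounds (iv):} for $p\notin\mathcal{K}^{\ast}$ all four left-hand sides vanish, since the perturbation affects neither the diamonds starting nor those ending at $p$; for $p$ in the fixed compact $\mathcal{K}^{\ast}$ the perturbed diamonds converge (from outside, $J\subset J_\alpha$, as in Remark~\ref{r_zcompact}) to the unperturbed ones as $\lambda\to0$, uniformly on $\mathcal{K}^{\ast}$, so all four quantities are $<\epsilon_0$ once $\lambda$ is small. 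Taking $\lambda$ below the finitely many thresholds of (a)–(c) yields the required $\alpha=\lambda\alpha_0$.

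The main obstacle is precisely the non-compactness of the slab $J(\bSigma_{t_-},\bSigma_{t_+})$: a naive "small perturbation" argument only gives the estimates pointwise in $p$, whereas (iv) and the Cauchy character of $\bSigma_{t_\pm}$ demand uniformity over all $p$. Making the confinement step of the second paragraph precise — that the $g_\alpha$-influence region of $C$ inside the slab stays in a $\lambda$-independent compact set, and that $\bSigma_{t_\pm}$ genuinely retain the Cauchy property rather than merely avoiding new causal pathologies — is where the real work lies, and it rests on the compactness facts for Cauchy hypersurfaces of $(\M,g)$ established in Section~\ref{s3}.
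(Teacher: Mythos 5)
Your overall strategy is the same as the paper's: fix a bump $\alpha_0$ supported in $U_0$ and positive on $K_0$, trap the causally affected region in a compact set independent of the scaling, then send the scaling to $0$ and argue by limit curves. (Incidentally, the confinement needs no limit argument at all: since a $g_{\alpha_0}$-causal curve is $g$-causal once it permanently leaves $cl(U_0)$, the sets $J_{\alpha_0}(cl(U_0),\bSigma_{t_+})\cup J_{\alpha_0}(\bSigma_{t_-},cl(U_0))$ already coincide with the corresponding $g$-sets, hence are compact for the unscaled perturbation.) The genuine gap is that the step you dispose of in one clause --- ``the perturbed diamonds converge \ldots uniformly on $\mathcal{K}^{\ast}$, so all four quantities are $<\epsilon_0$'' --- is precisely the content of the lemma and is left unproved. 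For the first two bounds in (iv) a compactness/contradiction argument of the kind you gesture at does work (take $p_m$, $t^m_\pm$ violating the bound, extract convergent subsequences and a $g$-causal limit curve to contradict $r'\notin J(q,\bSigma_{t'_+})$). But for the third and fourth bounds the unperturbed set is $\emptyset$ or $\{p\}$ while the perturbed one may be nonempty, because the levels of a Geroch time function $t$ need not be acausal for $g_\alpha$; so ``convergence to the unperturbed diamond'' is not the right statement, and one must show that $J_\alpha(p,\bSigma_{t'_+})$, when nonempty, collapses onto $p$. The paper's proof of this uses in an essential way that $t$ is a $g$-time function whose level through the limit point is $g$-acausal, which forces the limit curve to be constant; nothing in your sketch supplies this ingredient.

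Two further points. First, your causality argument can fail as stated: a sequence of closed $g_{\lambda_n\alpha_0}$-causal loops confined to a compact set may degenerate to a point in the limit, yielding no closed $g$-causal curve and hence no contradiction with mere causality. This is why the paper proves \emph{strong} causality instead, using nested causally convex neighbourhoods $W\subset\subset\tilde W$ so that the offending curves are forced to leave $\tilde W$ and return, giving a uniform lower bound on their size. Second, when you extract a limit of $g_{\lambda_n\alpha_0}$-causal curves and declare it $g$-causal there is a real subtlety (the ambient metrics vary with $n$ and are not yet known to be strongly causal); the paper devotes a footnote to arranging a single limit curve that is causal for all the metrics simultaneously. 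Your closing paragraph identifies confinement as ``where the real work lies'', but that step is in fact immediate from the support of $\alpha_0$; the real work is in the uniform Hausdorff estimates, and especially in their backward half.
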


\begin{proof}
With no loss of generality, we can assume
that the closure $cl(U_0)$ is compact  (otherwise, take any compact $K_0'$ with $K_0\subset {\rm int}(K_0')$, redefine $U_0$ as $U_0\cap {\rm int} (K_0)$ and, eventually, take a bigger $t_-$ and smaller $t_+$ to ensure that they remain optimal).
Let  $\alpha\geq 0$ be any function with support in $U_0$ and $\alpha>0$ on $K_0$. We will check that the required properties hold for
some $\alpha_m:=\alpha/m, m\in \N$ with large $m$. Recall  that, necessarily,
$K:=J_\alpha(cl(U_0), \bSigma_{t_+})\cup
J_\alpha(\bSigma_{t_-},cl(U_0))$ is equal  to $J(cl(U_0), \bSigma_{t_+})\cup
J(\bSigma_{t_-},cl(U_0))$ and, so, it  is compact.

Let us see first that $g_{\alpha_m}$ is  strongly causal for large $m$.
Assume by contradiction that the strong causality of each $g_{\alpha_m}$ is violated at some $p_m\in \M$. Necessarily,\footnote{\label{foota}Otherwise, trivially, $p_m\in J(\Sigma_-,\Sigma_+)$ and $p_m$ would admit  neighborhood $V$ in $ J(\Sigma_-,\Sigma_+)\setminus K$ where strong causality is violated. However, no causal curve starting at some $q\in V$ would cross $U_0$  (as $q\in K$ otherwise). So, strong causality would be violated in $(\bar M, g)$, a contradiction.} $p_m\in K$ and, up to a subsequence, it will converge to some $p\in K$.
Around $p$,  choose a small neighborhood $\tilde W$  with $(\tilde W,g_\alpha)$ intrinsically strongly causal,
and a smaller one  $W\subset\subset \tilde W$  which is $g_{\alpha}$-causally convex in  $\tilde W$.  Up to a finite number, $p_m\in W$ and, thus, there exists some $g_{\alpha_m}$-causal curve $\rho_m$ with endpoints in $W$ not included in $\tilde W$; indeed, all $\rho_m$ must escape $\tilde W$ and come back.
Necessarily, $\rho_m$ is entirely contained in $K$  (recall footnote \ref{foota}).  As  all  $\rho_m$  are $g_\alpha$-causal, the  sequence $\{\rho_m\}$
will
have a limit curve $\rho$ (also included in $K$) starting at $p$  \gcambios{which is ($H^1$-) causal } for all\footnote{Here, a small subtlety appears. In a strongly causal spacetime, any limit
curve of causal curves is causal. \bcambios{Instead, when the spacetime is
not strongly causal, limit curves may be non-causal. Even if one can
assure in this last case that at least one causal limit curve exists
\cite[Prop. 3.31]{Beem}, this result is not enough for our purposes, namely:
given any $h$, the subsequence $\{\rho_{h+m}\}_m$ contains
$g_{\alpha_h}$-causal curves, and the result provides a limit curve $\rho^h_\infty$
which is $g_{\alpha_h}$-causal, {\em nevertheless it depends on} $h$. Note,
however, that the {\em procedure} to find such a causal limit curve
implies directly that
the $g_\alpha$-causal limit curve (say, obtained for $h=1$) is also
$g_{\alpha_h}$-causal for all $h$. Indeed, notice first that, for each
point $p$, a ball of small radius  in any  chart is convex  for both $g$
and  all $g_{\alpha_m}$ with large $m$ (see the proof of \cite[Prop.
1.2]{P}); then, reason as in the proof of \cite[Prop. 3.31]{Beem}, that
is, apply Arzela's theorem to the sequence of curves parametrized with
respect to an auxiliary complete Riemannian metric, and work locally
around each limit point on a convex neighborhood $U$ as above.}} $g_{\alpha_m}$, which implies that $\rho$ is  $H^1$-causal for $g$ too. Indeed, the velocity of $\rho$ (whenever it is differentiable)  must be causal for all $g_m$
 and, as
 $\{g_{\alpha_m}\}\rightarrow g$ (uniformly on all $\M$
  by construction), it is causal for $g$.
 Moreover, since $\rho$ remains in the compact set $K$, a contradiction with the strong causality of $g$ appears: either  $\rho$ is closed or it is inextensible but imprisoned in a compact subset.

 Now, let us check that, for any $m$ such that  $g_{\alpha_m}$ is strongly causal, the property {\em (iii)} in Defn. \ref{dper} holds.  Indeed, no inextensible future-directed  $g_{\alpha_m}$ -causal curve  $\gamma: \R\rightarrow \overline{M}$ can be partially  imprisoned to the future (resp. past) in the compact set $K$. So, either $\gamma$ does not
 intersect $K$ (and, trivially, will cross $\bSigma_{t_+}, \bSigma_{t_-}$ once) or there is a last point $\gamma(s_0^+)$ (resp. $\gamma(s_0^-)$) such that $\gamma|_{(s_0^+,\infty)}$ (resp. $\gamma|_{(-\infty,s_0^-)}$) does not intersect $K$. In this case, as $\gamma(s_0^+)\in J^-(\bSigma_{t_+})$ (resp. $\gamma(s_0^-)\in J^+(\bSigma_{t_-})$), then $\gamma|_{[s_0^+,\infty)}$ (resp. $\gamma|_{(-\infty, s_0^-]}$) touches $\bSigma_{t_+}$ (resp. $\bSigma_{t_-}$) exactly once.
 Clearly, $\gamma|_{(s_0^-, s_0^+)}$ cannot touch $\bSigma_{t_+}$ (resp. $\bSigma_{t_-}$)
 at some $\gamma(s_0)$ with $s_0\in (s_0^-, s_0^+)$ because in this case $\gamma$ would be contained in $I^+(\bSigma_{t_+})$ (resp. $I^-(\bSigma_{t_-})$) beyond $s_0$ towards $+\infty$ (resp. $-\infty$).

\bcambios{To prove that  {\em (iv)} in Defn. \ref{dper}  also holds for  $\alpha_m$ with large $m$, in the remainder the value of  $m$ will be bigger than the value $m_0$ obtained above, which  ensures the strong causality and, then,  the global hyperbolicity, of $g_{\alpha_m}$.}

\bcambios{Assume by contradiction that the first inequality in {\em (iv)} does not hold for  $\alpha_m$ with big $m$,
that is,  there exists $q_m\in J(\bSigma_{t_-},\bSigma_{t_+})$ and
  $t^m_+ \in [t(q_m), t_+]:$
\begin{equation}
\label{e_ef2}
 d_H(J(q_m,\bSigma_{t^m_+}),J_{\alpha_m}(q_m,\bSigma_{t^m_+}))\geq\epsilon_0, 
 \end{equation}
for  diverging values of  $m\geq m_0$. Necessarily, $q_m\in K$ and there exists
$r_m \in J(q_m,\bSigma_{t^m_+})$,
$r_m' \in J_{\alpha_m}(q_m,\bSigma_{t^m_+})$ such that $ d_R(r_m,r'_m)  =
d_H(J(q_m,\bSigma_{t^m_+}),J_{\alpha_m}(q_m,\bSigma_{t^m_+}))\geq\epsilon_0$
for each $m$. Up to  subsequences, we can assume that $\{q_m\}, \{r_m\}, \{r_m'\}$ converge   to
some $q,r,r' \in K$, resp., and $\{t^m_+\}\rightarrow t_+'  \in [t(q), t_+]$.
Necessarily, $ d_R(r,r')  \geq \epsilon_0$ and
$r'\not\in J(q,\bSigma_{t'_+})$. However, the
$g_{\alpha_m}$-causal curves $\gamma_m$ from $q_m$ to $r'_m$ will
have a $g$-causal limit curve $\gamma$ starting at $q$ and arriving  at $r'$, that is,
 $r'\in J^+(q)$.  As $t(r')\leq \limsup t(r'_m)\leq \limsup t^m_+=t'_+$, the contradiction
$r'\in J^+(q,\bSigma_{t'_+})$) follows.}


\bcambios{Now, assume that the third  inequality in {\em (iv)} does not hold (the other two cases would be analogous), that is,  there exists
$
t^m_+ \in [t_-,t(q_m))$ (so, $J(q_m,\bSigma_{t^m_+})\subset \{q_m\}$) with
$$ d_H(q_m,J_{\alpha_m}(q_m,\bSigma_{t^m_+}))\geq\epsilon_0,
$$ 
for diverging values of $m\geq m_0$. Up to subsequences, $\{q_m\}$ converges to
some
$q \in K$ and $\{t^m_+\}\rightarrow t_+'$ ($\in [t_-,t(q)]$).
Take a point $r'_m\in  J_{\alpha_m}(q_m,\bSigma_{t^m_+})))$
which realizes the Hausdorff distance to
 $\{q_m\}$, thus
$d_R(q_m,r'_m)\geq\epsilon_0$.
Choose any second point $r''_m\in J_{\alpha_m}^+(r'_m) \cap \bSigma_{t^m_+}$ (which must exist from the definition of $J_{\alpha_m}(q_m,\bSigma_{t^m_+})$, possibly $r_m''=r_m'$). Up to  subsequences, $\{r_m'\}\rightarrow r'\in K$ and $\{r_m''\}\rightarrow r''\in \bSigma_{t'_+} \cap K$; moreover,
\begin{equation}
\label{e_ef}
d_R(q,r')\geq \epsilon_0>0.
\end{equation}
Now, the sequence  $\{\gamma_m\}$ of future-directed $g_{\alpha_m}$-causal curves from
$q_m$ to $r_m''$ through $r_m'$ must have a  limit curve $\gamma$  which is
$H^1$-causal for $g$. As $g$ is
globally hyperbolic, $\gamma$ goes from $q$ to $r''$ through $r'$. Moreover, as $t$ is a $g$-time function,  $t(q)\geq t(r'')=t'_+$ and the equality must hold (recall
$t(r''_m)=t^ m_+ \leq t(q_m)$ by hypothesis). As $\bSigma_{t(q)}$ is $g$-acausal; $\gamma$ must be a constant and $q=r'=r''$, in contradiction with \eqref{e_ef}.}
\end{proof}

\begin{prop}\label{L2} {\em (Existence of a
$t$-perturbation for a
strip)}. Let $C$ be the closed strip  $J(\bSigma_{t_1},\bSigma_{t_2})$, $t_1<t_2$, let   $U=(t_-,t_+)\times \bSigma$ be a neighborhood of $C$ (i.e. $t_-<t_1, t_2<t_+$), and choose
$\epsilon>0$. Then, there exists a $t$-perturbation of $g$ for $C, U$ and $\epsilon$ as in
Defn.~\ref{dper}. 
\end{prop}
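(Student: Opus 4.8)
The plan is to reduce Proposition \ref{L2} to the compact case, Lemma \ref{L1}, by exhausting the strip $C$ and \emph{superposing} a locally finite family of perturbations of $g$, each produced by Lemma \ref{L1}, and then checking the four requirements of Definition \ref{dper} for the superposition. One uses throughout that $g$ is globally hyperbolic, so that the causal diamonds $J^+(p)\cap J^-(\bSigma_{t_+})\cap J^+(\bSigma_{t_-})$ (for $p$ between the $t$-levels $t_-$ and $t_+$) are compact, and that closed $\epsilon$-neighbourhoods of compact sets (for the complete metric $d_R$) are compact. Since $\overline{M}\equiv\R\times\bSigma$ with $\bSigma$ a second countable manifold-with-boundary, first pick compact sets $K_j\subset\M$ and relatively compact open sets $O_j$ with $K_j\subset O_j$, $cl(O_j)\subset U$, each $O_j$ bounded by two $t$-levels lying in $(t_-,t_+)$, with $\{O_j\}$ locally finite (hence of finite multiplicity, as $\dim\M<\infty$) and $C\subset\bigcup_j\mathrm{int}(K_j)$; fix $\epsilon_j>0$ with $\sum_j\epsilon_j<\epsilon$.

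\textbf{The superposition.} Build $\alpha_1,\alpha_2,\dots$ by induction. Assume $\alpha_1,\dots,\alpha_{N-1}$ chosen so that $g_{N-1}:=g-(\alpha_1+\cdots+\alpha_{N-1})\,\omega^2$ coincides with $g$ off $U$, is globally hyperbolic with $\bSigma_{t_-},\bSigma_{t_+}$ as Cauchy hypersurfaces, and satisfies the Hausdorff bounds of Definition \ref{dper}(iv) relative to $g$ with constant $\sum_{k<N}\epsilon_k$. Apply Lemma \ref{L1} to $g_{N-1}$ (legitimate, since $g_{N-1}$ is globally hyperbolic, coincides with $g$ off $U$, and thus admits a Geroch time function whose levels outside $(t_-,t_+)$ are the $\bSigma_t$'s) with $C_0=K_N$ and a small slab around $K_N$ inside $O_N$, using a sufficiently small error parameter — and shrinking the resulting $\alpha_N$ via Remark \ref{r_zcompact} — so that the new partial sum $g_N$ still satisfies (iv) relative to $g$ with constant $\sum_{k\le N}\epsilon_k<\epsilon$; here the monotonicity in Remark \ref{r_zcompact} (passing to $0\le\alpha'\le\alpha$ preserves the bounds) is what makes the error genuinely telescope. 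Set $\alpha:=\sum_j\alpha_j$; by local finiteness this is a locally finite sum, so $\alpha$ is smooth and on every compact set equals a finite partial sum.

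\textbf{Verification of Definition \ref{dper}.} Conditions (i) and (ii) are immediate: $\alpha\ge 0$ with $\alpha>0$ on $C$, since the $\mathrm{int}(K_j)$ cover $C$ and $\alpha_j>0$ on $K_j$; and $\mathrm{supp}\,\alpha\subset\bigcup_j O_j\subset U$, so $g\le g_\alpha$ with equality off $U$ and strict inequality on $C$. For (iv), the Hausdorff bounds secured for the partial sums (which stabilise on compacta) imply that $J_\alpha^\pm(p)\cap\{t_-\le t\le t_+\}$ lies in a fixed compact set for each relevant $p$; hence $\alpha$ agrees there with a finite partial sum $\alpha_1+\cdots+\alpha_N$, and the bound for $g_\alpha$ with constant $\sum_{k\le N}\epsilon_k<\epsilon$ follows from that for $g_N$ (the degenerate estimates are identical). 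For (iii), this same compactness rules out imprisonment of $g_\alpha$-causal curves within the slab $\{t_-\le t\le t_+\}$; combined with a limit-curve argument (Prop. \ref{p2.11}) giving strong causality of $g_\alpha$ — localised to neighbourhoods on which $g_\alpha$ coincides with a globally hyperbolic finite partial sum $g_N$ — one concludes, exactly as in the proof of (iii) in Lemma \ref{L1} with the slab playing the role of the compact core $K$, that every inextensible $g_\alpha$-causal curve meets $\bSigma_{t_-}$ and $\bSigma_{t_+}$ exactly once. So $\bSigma_{t_\pm}$ are $g_\alpha$-Cauchy hypersurfaces and $g_\alpha$ is globally hyperbolic by Theorem \ref{t_geroch}.

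\textbf{Main obstacle.} The real difficulty is the bookkeeping of infinitely many, possibly overlapping, perturbations: one must fix the $\epsilon_j$'s (equivalently, shrink the $\alpha_j$'s, Remark \ref{r_zcompact}) so that the cumulative spreading of perturbed causal futures and pasts stays below $\epsilon$, \emph{and} so that no $g_\alpha$-causal curve can loiter within the non-compact slab $\{t_-\le t\le t_+\}$ — these are, respectively, Definition \ref{dper}(iv) and the heart of (iii). The former telescopes thanks to the monotonicity in Remark \ref{r_zcompact}; the latter is precisely what the quantitative bounds (iv) of Lemma \ref{L1} were designed to furnish. A secondary point requiring care is the legitimacy of re-applying Lemma \ref{L1} to each already-perturbed metric $g_{N-1}$, i.e. checking that $g_{N-1}$ meets the hypotheses of that lemma with a time function compatible with the fixed slab $(t_-,t_+)\times\bSigma$.
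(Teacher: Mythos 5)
Your strategy is the same as the paper's: exhaust the strip by compacta, apply Lemma \ref{L1} inductively to the successively perturbed metrics, keep the errors summable below $\epsilon$, use local finiteness for smoothness of $\alpha=\sum_j\alpha_j$, and then argue that on the relevant compact sets $g_\alpha$ coincides with a finite partial sum. The outline is right, and you correctly identify where the difficulty lies. But the step you dispose of in one clause --- ``the Hausdorff bounds secured for the partial sums (which stabilise on compacta) imply that $J_\alpha^\pm(p)\cap\{t_-\le t\le t_+\}$ lies in a fixed compact set; hence $\alpha$ agrees there with a finite partial sum'' --- is, as written, circular. The bounds of Definition \ref{dper}(iv) hold for each \emph{finite} partial sum $g_N$, and a $g_N$-causal curve is $g_\alpha$-causal but not conversely; so to conclude that a $g_\alpha$-causal curve issuing from $p$ is $g_N$-causal you must first confine it to a region meeting only the first $N$ supports, and to confine it you are invoking precisely the bound you are trying to establish for $g_\alpha$. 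Nothing in the uniform partial-sum estimates alone prevents a $g_\alpha$-causal curve from drifting outward through infinitely many perturbation supports, each widening the cones a little, and escaping every compact set while remaining in the slab.

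The paper closes this gap by building the confinement into the induction rather than deducing it afterwards: it passes to a subsequence $\sigma(m)$ of the exhaustion chosen so that the $g_{\alpha_m}$-causal hull of $cl(U_{\sigma(m)})$ between the levels $t_\pm$ fits inside $U_{\sigma(m+1)-1}$, and then takes $\epsilon_{m+1},\epsilon_{m+2}$ smaller than the explicit $d_R$-distances (\ref{e_arr1})--(\ref{e_arr2}) from those hulls to $[t_-,t_+]\times\dot B_{\sigma(m+1)-1}$, so that after two further perturbation steps the hull still lies in $U_{\sigma(m+1)-1}$, a set disjoint from the supports of all later $\alpha_j$. A first-exit-point argument (using that $\bSigma_{t_\pm}$ are Cauchy for each partial sum) then shows the hulls literally stabilise, which is the identity \eqref{Ej2}; only then do the telescoping estimate for (iv) and the verification of (iii) go through. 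So you need to add this quantitative separation of the supports of later perturbations from the already-perturbed causal hulls of earlier compacta (or an equivalent trapping argument); choosing the $\epsilon_j$ merely summable is not enough. Your secondary worry --- re-applying Lemma \ref{L1} to $g_{N-1}$ with the original $t$ --- is shared by the paper and is genuinely harmless once one notes that only the properties recorded in Definition \ref{dper}(iii)--(iv) for $g_{N-1}$ are used (see also Remark \ref{r_A}).
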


\begin{proof}
 We can assume that  $\bSigma$ is not compact (otherwise, the result would follow directly from Lemma \ref{L1}). Let $\{B_m\}$ be  an exhaustion of $\bSigma$ by compact subsets, i.e. $B_m \subset {\rm int}(B_{m+1})$ (where the interior of $B_{m+1}$ is regarded as a subset of the topological space $\bSigma$)
and $\bSigma  =\cup_m B_m$. By convenience, put $B_0  :=B_{-1}  :=\emptyset$, and let $K_m:=[t_1,t_2]\times B_m$ 
and
$U_m:=(t_-,t_+)\times {\rm int}(B_{m+1})$ for all $m\in \N \cup \{0,-1\}$; recall $K_m\subset U_m$.

Let us construct $\alpha$ inductively as follows.
Consider first a perturbation
$\alpha_1$ obtained by applying Lemma \ref{L1} to the compact set
$K_1$,  its neighborhood $U_1$ and putting $\epsilon_0  =:\epsilon_1$  equal to $\epsilon/2$. Set $\sigma(1)=1$.  Assuming
inductively that $\alpha_m$ has been defined (and, thus, the corresponding globally hyperbolic metric $g_{\alpha_m}$), let $\sigma(m+1)$ be the first integer greater
than
$\sigma(m)$ such that
\begin{equation}
\label{Ej}
J_{\alpha_m}(cl(U_{\sigma(m)}),\bSigma_{t_+})\cup
J_{\alpha_m}(\bSigma_{t_-},cl(U_{\sigma(m)}))\subset U_{\sigma(m+1)-1}\cup \bSigma_{t_-}\cup \bSigma_{t_+}
\end{equation}
($\sigma(m+1)$ exists as the subset of the left-hand side is compact). Now, obtain $\alpha_{m+1}$ by
applying Lemma \ref{L1} to the metric $g_{\alpha_m}$, the compact
set $K_{\sigma(m+1)}\setminus {\rm int}(K_{\sigma(m)})$,  its
neighborhood $U_{\sigma(m+1)}\setminus cl(U_{\sigma(m)- 2  })$
and choosing some positive
$\epsilon_0  =: \epsilon_{m+1}   \leq \epsilon/2^{m+1}$ small so that, the relations  \eqref{Ej} still hold  when the causal futures and pasts are computed with $g_{\alpha_{m+1}}$ instead of $g_{\alpha_{m}}$. Namely, $\epsilon_{m+1} $
   is taken also  smaller than
 the minimum of:
\begin{eqnarray}\label{e_arr1} d_R\left([t_-,t_+]\times \dot B_{\sigma(m+1)-1}
\, , \; J_{\alpha_m}(cl(U_{\sigma(m)}),\bSigma_{t_+})\cup
J_{\alpha_m}(\bSigma_{t_-},cl(U_{\sigma(m)}))\right) , \\ \label{e_arr2}
d_R\left(
[t_-,t_+]\times \dot B_{\sigma(m)-1}
\, , \; J_{\alpha_{m}}(cl(U_{\sigma(m-1)}),\bSigma_{t_+})\cup
J_{\alpha_{m}}(\bSigma_{t_-},cl(U_{\sigma(m-1)}))\right) \end{eqnarray}
 (where $\dot U$ denotes the
  topological boundary of the
corresponding subset, and the distance in the second line is taken into account only for $m>1$). Notice that both (\ref{e_arr1}) and (\ref{e_arr2}) are positive, as both are distances between   compact  disjoint sets; in particular, the second distance is lower bounded
by
$$
d_R\left(
[t_-,t_+]\times \dot B_{\sigma(m)-1}
\, , \; J_{\alpha_{m-1}}(cl(U_{\sigma(m-1)}),\bSigma_{t_+})\cup
J_{\alpha_{m-1}}(\bSigma_{t_-},cl(U_{\sigma(m-1)}))\right)
- \epsilon_m>0.
$$
\ncambios{In fact,  if $A, B, C$ are three compact subsets of $\M$ then
$
d_R(A,C)\leq d_R(A,B)+d_H(B,C)$,
and apply this to
$A=[t_-,t_+]\times \dot B_{\sigma(m)-1}$,
$B=J_{\alpha_{m}}(cl(U_{\sigma(m-1)}),\bSigma_{t_+})\cup
J_{\alpha_{m}}(\bSigma_{t_-},cl(U_{\sigma(m-1)})) $ and
$C=J_{\alpha_{m-1}}(cl(U_{\sigma(m-1)}),\bSigma_{t_+})\cup
J_{\alpha_{m-1}}(\bSigma_{t_-},cl(U_{\sigma(m-1)}))$.}

These  requirements for $\epsilon_{m+1}$ ensure, for all
$k\geq 1$:

\begin{equation}\label{Ej2}
\begin{array}{c}
J_{\alpha_{m+ 2 }}(cl(U_{\sigma(m)}),\bSigma_{t_+})=
J_{\alpha_{m+ 2 +  k}}(cl(U_{\sigma(m)}),\bSigma_{t_+}), \\
J_{\alpha_{m+ 2 }}(\bSigma_{t_-},cl(U_{\sigma(m)}))=
J_{\alpha_{m+  2+ k}}(\bSigma_{t_-},cl(U_{\sigma(m)})).
\end{array}
\end{equation}
\ncambios{Recall, for the first eqn. \eqref{Ej2}: $J_{\alpha_{m+1}}(cl(U_{\sigma(m)}),\bSigma_{t_+}) \subset U_{\sigma(m+1)-1}$ (once determined the value of $\sigma(m+1)$, by using the bound (\ref{e_arr1}) for $\epsilon_{m+1}$), $J_{\alpha_{m+2}}(cl(U_{\sigma(m)}),\bSigma_{t_+}) \subset U_{\sigma(m+1)-1}$ (use (\ref{e_arr2})) and $J_{\alpha_{m+2+k}}(cl(U_{\sigma(m)}),\bSigma_{t_+}) = J_{\alpha_{m+2}}(cl(U_{\sigma(m)}),\bSigma_{t_+})$ (because
$U_{\sigma(m+1)-1}\cap (U_{\sigma(m+3)}\setminus cl(U_{\sigma(m+2)-2}) =\emptyset$).}

Each metric $g_{\alpha_m}$ satisfies the properties (iii) and (iv)  in Def. \ref{dper} and it can be regarded as a $t$-perturbation of the original metric $g$ with perturbing function $\alpha_1+\dots +\alpha_m$. Moreover, clearly, for each compact subset $Z\subset \M$ there exists some $m_0\in \N$ such that (a) $\alpha_m(Z)\equiv 0$ for all $m\geq m_0$ and (b) the property \eqref{Ej2} holds if $cl(U_{\sigma(m)})$ is replaced by $Z$.
Assertion (a) allows to define $\alpha$   as the locally finite sum $\alpha:= \sum_m \alpha_m$, and  let us check that, then, $g_\alpha$ satisfies the properties stated in Defn. \ref{dper}.
By construction,  $g_\alpha$ satisfies (i) and (ii). For (iii), apply the assertion (b) to $Z=\{p\}$, $p\in U_0$ and recall that all $g_{\alpha_m}$ satisfied (iii).
Finally, for
(iv), putting  $g_{\alpha_0}:=g$,
$$d_H(J(p,\bSigma_{t'_+}), J_{\alpha}(p,\bSigma_{t'_+}))\leq \sum_{k=0}^{m-1} d_H(J_{\alpha_{k}}(p,\bSigma_{t'_+}), J_{\alpha_{k+1}}(p,\bSigma_{t'_+})) \leq  \sum_{k=0}^{m-1} \epsilon_k  <\epsilon
$$
and the result follows taking into account the assertion (b) again.
\end{proof}



\subsection{Proofs of the main results} \label{s_Stab3}

\noindent {\em Proof of Thm. \ref{T0}.}
Consider the closed strips
$C_{k}=[k-1, k+2]\times \bSigma$ and open ones $U_{k}=$  $(k-2, k+3)\times \bSigma$ for each integer $k\in \Z$, and choose the complete Riemannian metric $g_R$ so that $d_R(\bSigma_{k+k'/3}, \bSigma_{k+(k'+1)/3})>1$ for all $k\in\Z, k'\in\{0,1,2\}$.
 Construct a
 $t$-perturbation $\alpha_k$ for each  $C_k$, $U_k$ as in Prop. \ref{L2} with $\epsilon_0=1/2$ for all $k$.
Now, take any smooth function $\alpha >0$ on $\M$ such that $\alpha<\alpha_k$ on each
strip $C_k$. \ncambios{
To construct such an $\alpha$, consider  the (continuous, positive) function $\hat \alpha_k$ obtained as the minimum of $\{\alpha_{k-1}, \alpha_{k}, \alpha_{k+1}\}$
on each strip
$[k,k+1]\times \bar\Sigma$, $k\in\Z$. Merge all the functions $\hat\alpha_k$'s on the strips into a single function $\hat \alpha$ on all $\M$ just by choosing the minimum between $\hat\alpha_k$ and $\hat\alpha_{k+1}$ on each slice
$\bSigma_{k}$. As $\hat \alpha$ is positive and lower semi-continuous, one can choose a smooth $\alpha$ satisfying $0<\alpha<\hat\alpha$ by means of a standard partition of unity argument.}


Let us check that $g_\alpha$ can be chosen as the required metric $g'$. Trivially, $g_\alpha>g$. In order to prove that $(\overline{M},g_{\alpha})$ is globally hyperbolic, consider first the following.

\bigskip
\noindent
{\em Claim.} Let $p\in \M$ and $q\in J_\alpha^+(p)$.
Any future-directed $g_{\alpha}$-causal curve $\rho$ from $p$ to $q$ is included in the strip $(t(p)-1,t(q)+1))\times \bSigma$. In particular, $
J_\alpha^+(p)\cap \bSigma_{t_0}=\emptyset$ if $t_0\leq t(p)-1$.

Moreover, for each  $t_0\in \R$,
$
J_\alpha(p,\bSigma_{t_0})$ and $
J_\alpha(\bSigma_{t_0}, p)$ are compact.

 \smallskip
Assuming the claim, the result follows easily. Indeed, (strong) causality holds at any $p\in \M$ because, otherwise, any closed causal loop $\rho$ would be contained in the strip
$(t(p)-1,t(p)+1))\times \bSigma$. However, this strip is included in the closed strip
$C_{k_p}$, where $k_p$ is the integer part of $t(p)$. As $g_\alpha \leq g_{\alpha_{k_p}}$ on $C_{k_p}$, the closed $g_{\alpha}$-causal curve $\rho$ is also causal for $g_{\alpha_{k_p}}$, in contradiction with the global hyperbolic character of $g_{\alpha_{k_p}}$. Finally, notice that $J_\alpha^+(p)\cap J_\alpha^-(q)$ is the intersection of the compact sets $J_\alpha(p,\bSigma_{t_q+1})$ and $
J_\alpha(\bSigma_{t(p)-1}, q)$.

\bigskip
\noindent
{\em Proof of the Claim.} For the first assertion, assume by contradiction that there exists a first point $q'$ with $t(q')=t(p)-1$
crossed by $\rho$ (a similar contradiction would appear at the last point where $\rho$ left the region $t\leq t(q)+1$).
Let $p'$ be the last point before $q'$ crossed by $\rho$ with $t(p')=t(p)$; trivially,
$q'\in J_{\alpha}^+(p')$. The portion of $\rho$ from $p'$ and $q'$ is entirely contained in the strip $C_{k_p}$, thus $q'\in J_{\alpha_{k_p}}^+(p')$.
So, putting $t_0=t(q')$ ($=t(p)-1$) and taking into account Defn. \ref{dper} (iv):
\begin{equation}
\label{e_lat0}
d_H(p', J_{\alpha_{k_p}}(p',\bSigma_{t_0}))<\epsilon_0=1/2.
\end{equation}
This is a contradiction because $q'\in  J_{\alpha_{k_p}}(p',\bSigma_{t_0})$ and, by our choice of $g_R$,
$d_R(p',q')>1$.

 For the last assertion, let us reason for $J_\alpha(p,\bSigma_{t_0})$. If $t_0 \leq t(p)$, then $J_\alpha(p,\bSigma_{t_0})$ is entirely included in the strip $[t(p)-1,t(p)+1]\times \bSigma$, which is included in the strip $C_{k_p}$. By applying Remark \ref{r_zcompact} to $g_{\alpha_{k_p}}$ we deduce that $J_{\alpha_{k_p}}(p,\bSigma_{t_0})$ is compact. Then, taking into account that $\alpha\leq \alpha_{k_p}$, necessarily $J_{\alpha}(p,\bSigma_{t_0})\subset J_{\alpha_{k_p}}(p,\bSigma_{t_0})$, and so, $J_{\alpha}(p,\bSigma_{t_0})$ is also compact. Assume inductively that compactness hold when $t_0\leq t(p)+k$, and consider
$t(p)+k<t_0\leq t(p)+k+1$. Then, both
$ J_\alpha(p,\bSigma_{t(p)+k})$ and
$D_0:= J^+_\alpha(p) \cap \bSigma_{t(p)+k}$
are compact by hypothesis, and
$J_\alpha (D_0, \bSigma_{t_0})$ must lie in the
strip $C_{k_p+k}$. So (using analogously the
global hyperbolicity of $g_{\alpha_{k_p+k}}$ and
$\alpha \leq \alpha_{k_p+k}$),
$J_\alpha (D_0, \bSigma_{t_0})$ is compact and
the result follows by noticing
$J_\alpha(p,\bSigma_{t_0})=
J_\alpha(p,\bSigma_{t(p)+k})\cup
J_\alpha (D_0, \bSigma_{t_0})
$.
 \cvd

\begin{rem}\label{r_bound_alpha}{\rm
Recall that all the perturbed metrics constructed so far are of the form~\eqref{ealpha}. Indeed, the globally hyperbolic metric has been obtained for some $\alpha>0$. Obviously, any other metric as in \eqref{ealpha} constructed with some function $\alpha'>0$ such that $\alpha'\leq \alpha$ would be globally hyperbolic too (the Cauchy hypersurfaces for $g_\alpha$ would be also Cauchy for $g_{\alpha'}$ as $g_{\alpha'}\leq g_{\alpha}$).
}\end{rem}


Next, let us focus on the stability of Cauchy temporal functions.

\smallskip

\noindent {\em Proof of Thm. \ref{T1}}.
The proof of Thm. \ref{T0} will be mimicked by  choosing $\omega = d\tau$,  which  is metrically equivalent to the timelike vector field $-\partial_t/\Lambda$ in the  orthogonal splitting $(\R\times \bSigma, g\equiv -\Lambda d\tau^2 + g_\tau)$ associated with the Cauchy temporal function $\tau$. 
So,
\begin{equation}\label{E4}
g_\alpha= g-\alpha d\tau^2 =-(\Lambda+\alpha)d\tau^2 + g_\tau.
\end{equation}
 Recall that, now, {\em for all the metrics $g_\alpha$, $\alpha \geq 0$, the slices $\bSigma_\tau$ remain spacelike}. This is an important  difference with Thm. \ref{T0}, where all the slices for Geroch function $t$ might be non $g_\alpha$-acausal for any $\alpha$. So, in both Lemma \ref{L1} and Lemma \ref{L2},  not only the slices $\bSigma_t$, $t\in \R\setminus (t^-,t^+)$ are Cauchy but also all the slices
 with
 $t\in (t^-,t^+)$ \ncambios{(this follows from the proof of those lemmas, but can be also noticed because any closed spacelike hypersurface contained in the region between two disjoint Cauchy hypersurfaces  is Cauchy too, see for example \cite[Corollary 11]{BS03}).}
 Therefore, we can follow  the proof of Thm. \ref{T0} but we do not need to replace $\tau$ as was done for the function $t$ (inductively replaced by some $t_m$ to construct a Cauchy time $t_\infty$): simply, $\tau$ remains equal at all the steps of the proof and becomes a Cauchy temporal function for $g_\alpha$. Recall also that, as the expression \eqref{E4} remains valid for any perturbed metric $g_\alpha$, all the $g_\alpha$-gradients of $\tau$ remain tangent to $\pM$.
$\Box$

\begin{rem}\label{R_que_pena} {\em
 Notice that the metrics $g_\alpha$ in \eqref{E4} \bcambios{(or more generally (\ref{ealpha}))} with $\alpha>0$ not only satisfy $g<g_\alpha$ but also $-g_\alpha(v,v)>-g(v,v)$ whenever $v$ is $g$-causal.
\gcambios{Morevoer, for} any $\alpha\geq 0$, one has:
$$
\nabla \tau=-\frac{\partial_{\tau}}{\Lambda}, \quad
\nabla_{g_\alpha}\tau=-\frac{\partial_{\tau}}{\Lambda+\alpha}, \quad \hbox{thus,} \quad
 -g(\nabla_{g_\alpha}\tau, \nabla_{g_\alpha}\tau)= -d\tau(\nabla_{g_\alpha}\tau)=\frac{1}{\Lambda+\alpha}.$$
Therefore, if the Cauchy temporal function $\tau$ is steep for $g_\alpha$ (i.e.,  $\Lambda + \alpha\leq 1$) then, it is also  steep for $g$
($\Lambda\leq 1$). Conversely, if $\tau$ is steep for $g$ and $\alpha$ is chosen bounded $\alpha\leq C$ then the Cauchy temporal function $\sqrt{1+C} \, \tau$
is steep for $g_\alpha$.  }

\end{rem}

\begin{rem}\label{r_stably} {\em
  In connection with the proof of Prop. \ref{p_stable} (implication $2 \Rightarrow 3$), it is clear now that, if one considers any stably causal spacetime and $\tau$ is a temporal function for $g$, then $\tau$ is also temporal for all the metrics with wider cones $g_\alpha$ constructed in \eqref{E4}, for any $\alpha\geq 0$.
}\end{rem}

\section{The Cauchy orthogonal decomposition.} \label{s4}
 Throughout this section,
$(\overline{M},g)$ will be
 globally hyperbolic 
 with timelike boundary.

\subsection{Simplification of the problem by using stability  }
Next, our aim is to show that the problem of existence of $\tau$ in Thm. \ref{t0} can be reduced to the case of a new metric $g^*$ with a simple product behaviour
around $\pM$.  First,  let us globalize Gaussian coordinates. 

\begin{lemma}[Existence of a global tubular neighborhood] There exists a smooth function $\rho: \partial M \rightarrow \R$, $\rho>0$,
such that the orthogonal exponential map
\begin{equation}
\label{Exp}
\exp^\perp: \{(\hat p, s)\in \partial M\times [0,\infty): 0\leq s < \rho(\hat p)\}\rightarrow\overline{M} , \qquad (\hat p,s)\mapsto \exp_{\hat p}(s\,N_{\hat{p}})
\end{equation}
is a diffeomorphism onto its image $E$, where $N_{\hat{p}}$ is the pointing-inward $\partial M$-orthogonal unit vector at $\hat{p}$. This will be called {\em tubular neighborhood} of $\pM$.
\end{lemma}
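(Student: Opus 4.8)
The plan is to recognize $\exp^\perp$ as the restriction to $\{s\ge 0\}$ of the normal exponential map of $\pM$ computed inside a spacetime without boundary, and then to run the classical tubular neighbourhood argument, taking care that it works with a variable radius over a possibly non-compact $\pM$. First I would extend $g$ to a Lorentzian metric on an open set $\widetilde{M}\subset\M^d$ with $\M\subset\widetilde{M}$ (Prop.~\ref{pp}); there $\pM$ becomes a closed embedded two-sided hypersurface, $N$ a globally defined unit vector field along it, transverse to $\pM$ and pointing into $M$. The geodesic flow of the extended metric then yields a smooth map
\[
\widetilde\Phi(\hp,s)=\exp^{\widetilde{M}}_{\hp}(s\,N_{\hp})
\]
defined on some open neighbourhood $\Omega$ of $\pM\times\{0\}$ in $\pM\times\R$, with $\exp^\perp$ the restriction of $\widetilde\Phi$ to $\{s\ge 0\}$; since $N$ points into $M$, for each $\hp$ one has $\widetilde\Phi(\hp,s)\in M$ for all small $s>0$. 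Because $\widetilde\Phi(\hp,0)=\hp$, the differential $d\widetilde\Phi_{(\hp,0)}$ is the identity on $T_{\hp}\pM$ and sends $\partial_s$ to $N_{\hp}\notin T_{\hp}\pM$, hence is an isomorphism; by the inverse function theorem I would get, around each $(\hp,0)$, a relatively compact product neighbourhood $V_{\hp}\times(-\eps_{\hp},\eps_{\hp})\subset\Omega$ on which $\widetilde\Phi$ is a diffeomorphism onto an open subset of $\widetilde{M}$.

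The core step is to upgrade this to injectivity on a tube of variable radius. First I would prove that for every compact $K\subset\pM$ there is $\delta_K>0$ with $\widetilde\Phi$ injective on $K\times(-\delta_K,\delta_K)$: otherwise pick $(\hp_k,s_k)\ne(\hat q_k,t_k)$ in $K\times\R$ with $s_k,t_k\to 0$ and $\widetilde\Phi(\hp_k,s_k)=\widetilde\Phi(\hat q_k,t_k)$, pass to convergent subsequences $\hp_k\to\hp$, $\hat q_k\to\hat q$, use continuity to get $\hp=\hat q$, and note that for large $k$ both pairs lie in the neighbourhood of $(\hp,0)$ where $\widetilde\Phi$ is injective, a contradiction. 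To globalize I would use that $\widetilde\Phi(\hp,s)\to\hp$ as $s\to 0$, uniformly on compacta, so that $\widetilde\Phi(\hp,s)=\widetilde\Phi(\hat q,t)$ with $s,t$ small forces $\hp,\hat q$ close; then, starting from a locally finite cover of $\pM$ by relatively compact open sets and shrinking the radii enough (applying the compact case on their closures), one produces a continuous $\eta>0$ on $\pM$ with $\widetilde\Phi$ injective on $T_\eta:=\{(\hp,s):|s|<\eta(\hp)\}\subset\Omega$ --- this is exactly the shrinking lemma behind the tubular neighbourhood theorem. Replacing $\eta$ by a smaller smooth positive $\rho$ (standard partition of unity) keeps injectivity, since $\{|s|<\rho(\hp)\}\subset T_\eta$.

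Finally I would assemble the conclusion. On the manifold-with-boundary $T:=\{(\hp,s):0\le s<\rho(\hp)\}$, whose boundary is $\pM\times\{0\}$, the map $\exp^\perp=\widetilde\Phi|_{T}$ is injective and a local diffeomorphism (at boundary points too, by the differential computation), hence a diffeomorphism onto its image $E$. To see $E\subset\M$: if $\widetilde\Phi(\hp,s)\notin\M$ for some $0<s<\rho(\hp)$, the geodesic $s'\mapsto\widetilde\Phi(\hp,s')$, which lies in $M$ for small $s'>0$, would meet $\pM$ at some first time $s_1\in(0,s]$, giving $\widetilde\Phi(\hp,s_1)=\hat q\in\pM=\widetilde\Phi(\hat q,0)$ with $(\hp,s_1)\ne(\hat q,0)$, both in $T$ --- contradicting injectivity on $T$; hence $\widetilde\Phi(\hp,\cdot)$ avoids $\pM$ on $(0,\rho(\hp))$ and so stays in $M$ there, giving $E\subset\M$, a tubular neighbourhood of $\pM$. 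I expect the main obstacle to be precisely the globalization in the core step: turning the injectivity radius into a \emph{positive function} over a possibly non-compact $\pM$, which is the classical but somewhat technical part of the tubular neighbourhood theorem.
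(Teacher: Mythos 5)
Your proposal is correct and follows essentially the same route as the paper, which simply sketches the standard tubular-neighbourhood argument: local diffeomorphism from Gaussian coordinates, a compact exhaustion of $\pM$ to get radii $\epsilon_m$, and interpolation into a smooth positive $\rho$. Your version is more detailed (in particular you make explicit the cross-compact-set injectivity issue and the final check that the image lies in $\M$), but the underlying argument is the same.
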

\begin{proof}
The technique is standard. Start with a Gaussian neighborhood around each $\hat p\in \M$ with normal coordinate $s\in [0,\epsilon_{\hat p})$, take an exhaustion by compact sets $\{\hat K_m\}_m$, $\hat K_m\subset $ int$(\hat K_{m+1})$
(even if $\pM$ has infinitely many connected components, this can be done by  intersecting  $\pM$ with  closed balls of radius $m$ for a  complete Riemannian metric on $\M^d$ centered in a freely chosen point), determine  $\epsilon_{\hat p_m}$ such that  $\exp^\perp$ satisfies the required properties on $\hat{K}_m\times [0,\epsilon_{\hat p_m})$ and choose  $\rho>0$ such that $\rho<\epsilon_{\hat p_m}$ on $\hat{K}_m\setminus \hat K_{m-1}$.
\end{proof}

Elements on $E$ and their preimages by $\exp^\perp$ will be identified with no further mention. In particular, $E$ is endowed with the ($\pM$-orthogonal, geodesic) vector field $\partial_s$.



\begin{prop}\label{R1} Let  $g'>g$ be globally  hyperbolic (as obtained in Thm. \ref{T0}). Then, there exist another Lorentzian metric $g^*$ on $\M$ satisfying:

(a) $g\leq g^* <g'$ on $\M$, and so, $g^*$ is also globally hyperbolic.

(b) On some $g^*$-tubular neighborhood $E$ of $\pM$, each $p\in E$ admits a neighborhood of type $\hat U \times [0,\epsilon)\subset E$, where $g^*$ is a product metric $g^*= \hat g_0 +ds^2$, being $\hat g_0$  a Lorentzian metric on $\hat U\subset \pM$.
\end{prop}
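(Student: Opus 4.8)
\medskip
\noindent The plan is to leave $g$ untouched off a collar of $\pM$ and, on a thin collar, to replace it by an explicit product metric built from the boundary metric $i^{*}g$, slightly widened by a positive-definite correction; that widening is precisely the room needed to arrange \emph{both} $g\le g^{*}$ and $g^{*}<g'$. \emph{Setup.} First I would fix $g'>g$ as in Thm.~\ref{T0} and use the Existence of a global tubular neighborhood lemma to get a $g$-tubular neighborhood $E_{0}$ of $\pM$, with geodesic coordinate $s$ and projection $\pi\colon E_{0}\to\pM$; by \eqref{e_Gaussian} globalized along $E_{0}$ one has $g|_{E_{0}}=\hat g_{s}+ds^{2}$, where $s\mapsto\hat g_{s}$ is a smooth family of Lorentzian metrics on $\pM$ with $\hat g_{0}=i^{*}g$, and $\pM$ is $g'$-timelike. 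Fix a Riemannian metric $\hat G_{R}$ on $\pM$; Taylor-expanding $\hat g_{s}$ in $s$ on compact pieces of $\pM$ and patching gives continuous $c_{1},\delta_{0}\colon\pM\to(0,\infty)$ with $|\hat g_{s}-\hat g_{0}|_{\hat G_{R}}\le c_{1}(\hat p)\,s$ on $\{0\le s\le\delta_{0}(\hat p)\}$, hence $(\hat g_{0}-\hat g_{s})(w,w)\le c_{1}s\,\hat G_{R}(w,w)$ there.

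\emph{The product model and the gluing.} I would choose a positive (continuous, then smoothed) function $\delta\colon\pM\to(0,\infty)$, as small as needed, and set $\epsilon:=2c_{1}\delta$; the smallness of $\delta$ is used to secure $\delta\le\min\{\delta_{0},\rho\}$, that $\epsilon<\eta_{2}$ (with $\eta_{2}>0$ any function below which $i^{*}g-\epsilon'\hat G_{R}$ stays Lorentzian), and that $3c_{1}\delta$ lies below the $C^{0}$-margin that keeps a perturbation of $g$ a Lorentzian metric with timelike boundary whose cones remain inside those of $g'$, over each compact set --- each of these being a constraint of the form ``$\delta<$ positive continuous function''. Put $\hat g^{*}_{0}:=i^{*}g-\epsilon\,\hat G_{R}$, a Lorentzian metric on $\pM$ with cones strictly wider than those of $i^{*}g$, and $G:=\pi^{*}\hat g^{*}_{0}+ds^{2}$ on $E_{0}$, a genuine product metric with $i^{*}G=\hat g^{*}_{0}$. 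For $v=w+c\,\partial_{s}$ over $\{s\le\delta\}$,
\be
(G-g)(v,v)=\bigl(\hat g_{0}-\hat g_{s}-\epsilon\hat G_{R}\bigr)(w,w)\le(c_{1}\delta-\epsilon)\,\hat G_{R}(w,w)=-c_{1}\delta\,\hat G_{R}(w,w)\le 0,
\ee
so $G\le g$ as quadratic forms (the $g$-cones lie inside the $G$-cones), while $|G-g|_{\hat G_{R}}\le c_{1}\delta+\epsilon=3c_{1}\delta$ on $\{s\le\delta\}$, so $G<g'$ there. Then I would pick $\chi\colon E_{0}\to[0,1]$ smooth, a function of $s/\delta(\hat p)$ alone, with $\chi\equiv1$ on $\{s\le\delta/2\}$ and $\chi\equiv0$ on $\{s\ge\delta\}$, and define $g^{*}:=(1-\chi)\,g+\chi\,G$ on $E_{0}$, $g^{*}:=g$ on $\M\setminus E_{0}$.

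\emph{Conclusion.} On $\{s\le\delta\}$, since $G\le g$ as quadratic forms and $\chi\in[0,1]$, one gets $G\le g^{*}\le g$ (and $g^{*}=g$ off the collar); the right inequality gives $g\le g^{*}$, and the left one gives $g^{*}<g'$ (the $g^{*}$-cones sit inside the $G$-cones, strictly inside the $g'$-cones, and $g^{*}=g<g'$ off the collar), whence $g^{*}$ is globally hyperbolic, because any metric whose cones lie inside those of the globally hyperbolic $g'$ is globally hyperbolic; this is (a). For (b): on $E:=\{s<\delta(\hat p)/2\}$ one has $g^{*}=G=\pi^{*}\hat g^{*}_{0}+ds^{2}$, so in any product chart $\hat U\times[0,\epsilon')\subset E$ with $\hat U\subset\pM$, $g^{*}=\hat g^{*}_{0}|_{\hat U}+ds^{2}$, i.e. the product form of~(b) (with the ``$\hat g_{0}$'' of the statement being $\hat g^{*}_{0}|_{\hat U}$); and along $E$, $\partial_{s}$ is $g^{*}$-unit, $g^{*}$-orthogonal to $\pM$, with $s$-lines $g^{*}$-geodesic, so $E$ is a $g^{*}$-tubular neighborhood.

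\emph{Main obstacle.} The delicate point is the cone bookkeeping: one must notice that to \emph{widen} Lorentzian cones while keeping a product form one should \emph{subtract} a positive-definite tensor from the slice metric --- merely freezing the slice metric at $\pM$, or rescaling the normal direction, can actually \emph{narrow} the cones along the mixed directions $w+c\partial_{s}$ (a two-dimensional example with $g=-dt^{2}+dx^{2}+ds^{2}$ makes this explicit) --- and then check that the two opposing demands, $\epsilon\gtrsim c_{1}\delta$ for $g\le g^{*}$ versus $\epsilon$ and $\delta$ both small for $g^{*}<g'$, are simultaneously met, which is why all the smallness must be absorbed into the collar thickness $\delta$. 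Everything else (the Taylor estimate, promoting constants to functions over a possibly noncompact $\pM$, smoothing $\delta$, and the elementary monotonicity of global hyperbolicity under $g^{*}<g'$) is routine.
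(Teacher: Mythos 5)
Your proof is correct and follows essentially the same route as the paper: build a product metric $G=\hat g_0^*+ds^2$ on a collar whose cones are slightly wider than those of $g$ but still inside those of $g'$, verify a quadratic-form inequality ($G\le g$) so that the interpolation $(1-\chi)g+\chi G$ still has cones containing those of $g$, and glue with a cutoff supported in the collar. The only (inessential) difference is the widening device: the paper takes $\hat g_0$ on $\pM$ of the form $g|_{\pM}-\alpha\, d\tau^2$ as in Remark \ref{R_que_pena}, needing the comparison only on $g$-causal vectors, whereas you subtract $\epsilon\hat G_R$ from the tangential metric and absorb the $s$-dependence of $\hat g_s$ via an explicit Lipschitz bound, getting the comparison on all vectors.
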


\begin{proof}
As $g'>g$, we can choose a metric $\hat g_0$ on $\pM$ such that
\begin{equation}\label{e_z1}
g|_{\pM} < \hat g_0 < g'|_{\pM}
\quad \hbox{and}
\quad  -\hat g_0(v,v)>-g(v,v)
\; \hbox{for all} \,
g\hbox{-causal} \; v\in T(\pM ),
\end{equation}
and such that the (locally) product metric
\begin{equation}\label{e_z2}
g_0:= \hat g_0 + ds^2
\end{equation}
defined in some tubular neighborhood $E'$ of $\pM$,  satisfies
\begin{equation}\label{e_z3}
g<g_0< g' \quad  \hbox{on} \, E' \;
\quad \hbox{and}
\quad  -\hat g_0(v,v)>-g(v,v)
\; \hbox{for all} \,
g\hbox{-causal} \; v\in TE'.
\end{equation}
Indeed, notice that $g|_{\pM} < g'|_{\pM}$ are globally hyperbolic metrics on the manifold  $\pM$ (Thm. \ref{t_ladder} (3)) and $\hat g_0$ can be constructed as a metric $g_\alpha$, $\alpha>0$, in Remark \ref{R_que_pena} (1), ensuring both assertions in \eqref{e_z1}. Moreover, as $\partial_s$ is  unit and $g$-orthogonal with both $g_0$ and $g$, this also ensures $g<g_0$ on all the points of $\pM$ and (choosing a smaller $\alpha$ if necessary) $g_0<g'$ on $\pM$. Thus, reducing $E'$ if necessary, both assertions in \eqref{e_z3} also hold.

Now, take any smaller tubular neighborhood $E$ with $cl(E)\subset E'$ (say, the associated with the function $\rho/2$ in equation \eqref{Exp}). Consider the covering $\{E',\M\setminus E\}$ of $\M$, choose a subordinate partition of unity $\{\mu, 1-\mu\}$ with supp$(\mu)\subset E'$ and construct
 \begin{equation}
 \label{EG} g^*=\mu g_0 + (1-\mu)g .
 \end{equation}
By the second assertion in \eqref{e_z3}, $g\leq g* \leq g_0$; thus (by the first assertion),  $g^*$ fulfills (a). The \bcambios{requirement (b)}
\bcambios{holds} because $g^*=g_0$ on the whole $E$.
 \end{proof}

\begin{rem}{\em
On $E$, both $g$ and $g^*$ can be written in Gaussian coordinates as in \eqref{e_Gaussian}; however, the dependence on $s$ of $g_{ij}(\hat x,s)$ is dropped for $g^*$.}
\end{rem}


 \begin{lemma}[Reduction to a local product around $\pM$]\label{L_reduction} For any  $g^*$ as in Prop.~\ref{R1}:

 (a) if $\tau$ is  Cauchy temporal for $g^*$, then so is it for $g$.

(b) 
if the $g^*$-gradient $\nabla^* \tau$ of $\tau$ is tangent to $\pM$, then so is the $g$-gradient $\nabla \tau$.
\end{lemma}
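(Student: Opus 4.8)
The plan is to exploit the fact that $g \le g^*$ (part (a) of Prop.~\ref{R1}) together with the very specific way $g$ and $g^*$ differ near $\partial M$ (part (b): $g^* = g_0 = \hat g_0 + ds^2$ is a product on the tubular neighborhood $E$, while $g$ need not be). Part (a) is the easy half. If $\tau$ is Cauchy temporal for $g^*$, then since $g \le g^*$, every future-directed $g$-causal vector is future-directed $g^*$-causal, hence $\langle d\tau, v\rangle > 0$ for all such $v$; in particular $\nabla_{g} \tau$ is $g$-timelike (its sign against the cone is fixed the right way) and $\tau$ is a temporal function for $g$. For the Cauchy property: the level sets $\bar\Sigma_{\tau_0} = \tau^{-1}(\tau_0)$ are the same sets of points regardless of the metric; they are $g^*$-spacelike Cauchy hypersurfaces, hence $g^*$-acausal, and since $g \le g^*$ forces every $g$-causal curve to be $g^*$-causal, any inextensible $g$-timelike curve is $g^*$-timelike (up to reparametrization, using Prop.~\ref{p2.11}/Conv.~\ref{convention} for inextensibility transfer) and therefore meets each level exactly once. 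One should note inextensibility is preserved: a $g$-causal curve that is inextensible in $\overline M$ is still inextensible when viewed as a $g^*$-causal curve, because inextensibility is a statement about the curve's domain and its (lack of) continuous endpoint in $\overline M$, not about the metric. So $\tau$ is Cauchy temporal for $g$. This uses only Thm.~\ref{t_geroch} and the monotone-cone relation, plus the observation that $\tau$ being $C^1$ with past-directed $\nabla\tau$ is checked pointwise.

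Part (b) is where the product structure is essential, and it is the one genuine point to be careful about. The claim is local: tangency of a gradient to $\partial M$ is a pointwise condition at each $\hat p \in \partial M$, so it suffices to work in a Gaussian chart $(\hat x, s)$ adapted to the boundary on a neighborhood $\hat U \times [0,\epsilon) \subset E$. There, by the remark following Prop.~\ref{R1}, $g^* = g_0 = \sum_{i,j} (\hat g_0)_{ij}(\hat x)\,dx^i dx^j + ds^2$ has metric components independent of $s$ and with no cross terms $g^*_{i\,n-1} = 0$; whereas $g$ has the Gaussian form \eqref{e_Gaussian}, $g = \sum_{i,j} g_{ij}(\hat x, s)\,dx^i dx^j + ds^2$, i.e. it also has $g_{i\,n-1} = 0$ and $g_{n-1\,n-1} = 1$, but its $\hat x$-block may depend on $s$. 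The key point: since both $g$ and $g^*$ are Gaussian with respect to the same normal coordinate $s$ (indeed $g^*$ was built to be $g_0$ on $E$, and $g_0$ uses the $g$-normal geodesics — one should double-check here that the tubular neighborhood and normal field used for $g_0$ in \eqref{e_z2} are the $g$-ones, which is how Prop.~\ref{R1} is set up), the last row and column of both inverse metrics are the same: $g^{n-1\,n-1} = (g^*)^{n-1\,n-1} = 1$ and $g^{i\,n-1} = (g^*)^{i\,n-1} = 0$ for $i \le n-2$.

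Given that, the computation is immediate: for any $C^1$ function $\tau$,
\[
g(\nabla \tau, \partial_s) = d\tau(\partial_s) = \frac{\partial \tau}{\partial s}, \qquad g^*(\nabla^* \tau, \partial_s) = d\tau(\partial_s) = \frac{\partial \tau}{\partial s},
\]
since $\partial_s$ is the unit normal for both metrics and the pairing $g(\nabla\tau,\cdot) = d\tau(\cdot) = g^*(\nabla^*\tau,\cdot)$ is metric-independent. On $\partial M$ (i.e. at $s=0$), $\nabla^*\tau$ being tangent to $\partial M$ means exactly $g^*(\nabla^*\tau, \partial_s)|_{s=0} = 0$, i.e. $\partial\tau/\partial s|_{s=0} = 0$; and then $g(\nabla\tau,\partial_s)|_{s=0} = \partial\tau/\partial s|_{s=0} = 0$, which says precisely that $\nabla\tau$ is tangent to $\partial M$. (One can phrase it without coordinates: $N := \partial_s$ restricted to $\partial M$ is simultaneously the $g$- and $g^*$-unit normal to $\partial M$ pointing inward; tangency of a vector $w$ to $\partial M$ along $\partial M$ is the vanishing of $g(w,N) = d\tau(N) = g^*(w',N)$, one and the same scalar.) The main obstacle, such as it is, is purely bookkeeping: making sure that the tubular/normal structure $E$, $N$, $\partial_s$ appearing in the definition of $g_0$ in \eqref{e_z2}–\eqref{e_z3} is literally the $g$-orthogonal one (so that $\partial_s$ is $g$-unit-normal too, not merely $g^*$-unit-normal), and that the partition-of-unity interpolation \eqref{EG} does not disturb this on $E$ — which it does not, because $g^* \equiv g_0$ on all of $E$ as noted in the proof of Prop.~\ref{R1}. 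With that checked, both parts follow with essentially no further work.
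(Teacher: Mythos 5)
Your proposal is correct and follows essentially the same route as the paper: part (a) via the cone inclusion $g\leq g^*$ (so $d\tau$ is positive on $g$-causal vectors and the $g^*$-spacelike, acausal Cauchy levels remain Cauchy for $g$), and part (b) via the identity $g(\nabla\tau,\partial_s)=d\tau(\partial_s)=g^*(\nabla^*\tau,\partial_s)$ on $\pM$. The only difference is that you make explicit the bookkeeping the paper leaves implicit, namely that $\partial_s$ is simultaneously the $g$- and $g^*$-unit normal to $\pM$ because $g^*\equiv g_0=\hat g_0+ds^2$ on $E$ is built from the $g$-normal geodesics; this is a worthwhile clarification but not a different argument.
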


\begin{proof}
(a) $g\leq g^*$ implies that $\tau$ is $g$-Cauchy temporal
 (the $g$-orthogonal to $\nabla \tau$ is tangent to a $\tau$-slice, which is $g^*$-spacelike, and thus, $g$-spacelike).

(b) On $\pM$,
$g(\nabla \tau,\partial_s)=d\tau(\partial_s)= g^*(\nabla^*\tau , \partial_s)=0$, thus $\nabla \tau$ is tangent to $\pM$.
\end{proof}

\begin{rem}\label{R_pena} {\em As a summary, Thm. \ref{T0}, Prop. \ref{R1} and Lemma \ref{L_reduction} yield:
{\em in order to obtain a Cauchy temporal function $\tau$ for $g$ with $\nabla \tau$ tangent to $\pM$ (as required for Thm. \ref{t0}), one can assume, with no loss of generality, that $g$ satisfies the local product property stated for $g^*$ in Prop. \ref{R1} \bcambios{(b)} } (otherwise, work with $g^*$ itself).
}\end{rem}

\subsection{Proof of Theorem \ref{t0}}\label{s4.1}

%
%
%
%
%
%
%

\noindent {\it Proof}. \bcambios{First, let us prove the first assertion of Thm. \ref{t0}, that is, any globally hyperbolic spacetime-with-timelike-boundary admits a temporal function $\tau: \M \rightarrow \R$ whose gradient $\nabla \tau$ is tangent to $\pM$. Take} two copies of $(\M,g)$ and identify their homologous points along the boundary $\pM$ in order to obtain the double manifold $\M^d$.
Now, $\M^d$ inherits not only a structure of smooth manifold (without boundary) but also a smooth metric $g^d$. Indeed,  $g$ is assumed to satisfy the  properties stated for $g^*$ in Prop. \ref{R1} (by  Remark \ref{R_pena}). So,  the local product structure of any tubular neighborhood $E$ of $\pM$ yields directly the smoothness of both the exponential and the metric around   $\pM \subset M^d$ (just work in  Gaussian coordinates). Moreover,  the natural reflection $i: \M^d \rightarrow \M^d$ (which maps each point in its homologous one) becomes an isometry for $g^d$. Notice that $g^d$ is globally hyperbolic because any pair of homologous Cauchy hypersurfaces with boundary in each copy merges into a continuous Cauchy hypersurface without boundary of $(\M^d,g^d)$.  So, the main result in \cite{muller2016} (applied with 0 invariant Cauchy hypersurfaces) ensures the existence of a Cauchy temporal function $\tau^d$ on $\M^d$ which is invariant by $i$. Therefore, the restriction $\tau$  of $\tau^d$ to $\M$ becomes a Cauchy temporal function, and its gradient must be tangent to the boundary as $i_*(\nabla \tau)=\nabla\tau$ on the set  $\pM$ (of fixed points for $i$), \bcambios{as required}.

\bcambios{
As stated in Theorem \ref{t0},  some consequences of the existence of a Cauchy temporal function $\tau$ with gradient $\nabla \tau$ tangent to $\pM$ are: (a) $\M$ splits as a product manifold $\R \times \bSigma$, where $\bSigma$ is a spacelike Cauchy hypersurface-with-boundary, and (b)
	the metric $g$ splits globally $g=-\Lambda d \tau^2+g_\tau$ as in Eq. \ref{e_split}.
In fact, choosing a slice $\bSigma=\tau^{-1}(0)$ and moving it with the flow of $-\nabla \tau/|\nabla\tau|^2$ yields $(a)$ and $(b)$ easily (as in Cor. \ref{cor} or  \cite[Prop.~2.4]{BernalSanchez}).
}


 \bcambios{Next, let us check the isometric embeddability of $(\M,g)$ into $\LL^N$}. Notice that the technique in \cite{MS} works whenever a steep Cauchy function $\tilde \tau$ is obtained (that is,  the requirement of tangency of $\nabla \tilde \tau$ to $\pM$ necessary for the orthogonal splitting can be dropped now).
In order to find such a $\tilde \tau$, the same procedure as in \cite{MS} can be used. \ncambios{This is straightforward because the technique in  \cite{MS} is based in local functions type $\j_{p}(\cdot):=\exp(-1/d(p,\cdot)^2)$, where $d$ is the Lorentzian distance. Moreover, the required local functions can be chosen in a very flexible way (for example, taking the distance associated with a $C^0$-close flat metric $h_0$ with $-h_0(v,v)>-g(v,v)$ on any $g$-causal $v$). So, the constructive technique works in the same way even if, now,  $p$ may belong to $\pM$.}

\bcambios{The remainder of the assertions in Thm. \ref{t0} concerns $M$ and $\pM$.
By  Thm. \ref{t_ladder}, $M$ is causally continuous and  $\pM$ is globally hyperbolic. Morevoer, the restriction of $\tau:\M \rightarrow \R$ to $M$ also determines an orthogonal splitting on $(M,g)$; so,  the spacetimes in the class of the causally continuous ones that can be regarded as the interiors of globally hyperbolic spacetimes-with-timelike-boundary, do admit a global orthogonal splitting too. Finally, the restriction of $\tau$  to the boundary becomes a Cauchy temporal function on  $\pM$ trivially and its levels $\tau^{-1}(\tau_0)\cap \pM$ are acausal in $\M$, as so are the whole levels $\tau^{-1}(\tau_0)$. } $\Box$

\begin{rem}\label{r_final} {\em   As emphasized above, the proof of isometric embeddability in $\LL^n$ does not require the steep Cauchy temporal  function $\tau$ for $g$ having gradient tangent to $\pM$ and, thus, it can be proven by using the same techniques as in the case without boundary.
 Analogously, other problems on smoothability can be proved directly  as in the case without boundary. For example:

 (a) Geroch's topological splitting (Thm. \ref{t_geroch}) can be improved into a smooth one just noticing that the same procedure as in \cite{BS03} gives a spacelike  Cauchy hypersurface $\bSigma$, and moving it by using the flow of any timelike vector field $T$ tangent to $\pM$ (as obtained in Prop. \ref{extendfield} (iii), and choosing $T$ complete).

 (b) The technique in \cite{BernalSanchez} directly gives also a Cauchy temporal function $\tau_0$ and, this, a smooth foliation of $\M$ by Cauchy hypersurfaces (however, they are not necessarily orthogonal to $\pM$).

 (c) The extensions of the classical equivalences of the notion of
  stably causal spacetime to the case with boundary rely also on the case without boundary (as described in Prop. \ref{p_stable}).

  (d) As in the case without boundary (see \cite{MS}), stably causal spacetimes-with-boundary can be  conformally embedded in some Lorentz-Minkowski spacetime $\LL^N$ for big $N$.

}
\end{rem}

\begin{rem}\label{r_A} {\em
Finally, let us revisit how, if one assumed the existence of the Cauchy temporal function $\tau_0$ (as in Remark \ref{r_final} (b) above), the proof of Thm. \ref{T0} is widely simplified.
First, one would take $\omega=d\tau_0$ in \eqref{ealpha}, so that $g_\alpha$ can be written  as in \eqref{E4}. Thus, $\tau_0$ is directly a temporal function for $g_\alpha$. This implies that, in Def. \ref{dper} (iv), \bcambios{neither  the case $ t'_+\in [t_-,t(p))$ and $J_\alpha(p,\bSigma_{t'_+})\neq \emptyset$ (nor  $J_\alpha(p,\bSigma_{t(p)})\neq \{p\}$),
nor the case $t'_-\in (t(p),t_+]$ and $J_\alpha(\bSigma_{t'_-},p)\neq \emptyset$ (nor  $J_\alpha(\bSigma_{t(p)},p)\neq \{p\}$) can hold.} So, they should not be taken into account in the proof of
 Lemma \ref{L1}; what is more, in this lemma it is not necessary to prove (strong) causality, and all the slices $\tau_0=$ constant  become directly   spacelike and, then, acausal and Cauchy  for $g_\alpha$ (this was not true even for smooth Geroch's functions). These properties also hold in Prop. \ref{L2}; so, the claim in the proof of Thm. \ref{T0} is not needed.
}\end{rem}

\ncambios{We are now in conditions to sketch the proof of Proposition \ref{p_stable}.}

\medskip

\ncambios{
\noindent {\em Proof of Prop. \ref{p_stable}.}
	1 $\Rightarrow$ 2. The same smoothing procedure as in the case without boundary \cite{BernalSanchez, SanchezSaoPaulo} holds, as no additional condition is required for $\nabla\tau$ on the boundary (see Remark~\ref{r_final}).
	
	2 $\Rightarrow$ 3. Even if $\overline{M}$ does not split globally as a product, the temporal function $\tau$ still allows to write $g$ as
	$-\Lambda d\tau^2+g_\tau$ where $\Lambda>0$ is a function on $\M$ and $g_\tau$ a Riemannian metric on the bundle Ker$(d\tau)$. So, for any positive function $\alpha>0$ the metric $g_\alpha=-(\Lambda+\alpha)d\tau^2+g_\tau $ satisfies $g_\alpha>g$ and it is also stably causal. Indeed, $\tau$ is a temporal function also for $g_\alpha$ because the gradients of $\tau$ for $g$ and $g_\alpha$ are pointwise proportional and their $g$ and $g_\alpha$-orthogonal bundles are equal to Ker$(d\tau)$, that is, positive definite (see
	also Remark \ref{r_stably}).
	
	3 $\Rightarrow$ 1. Hawking's proof for the case without boundary \cite[Prop. 6.4.9]{Hawking} also works here, because it is based on the integration of chronological futures and pasts type $I^\pm(p)$ by using and admissible measure $m$ independent of the metric  such that both $\pM$ and the boundaries of $I^\pm(p)$ have zero measure (these are the same reasons why Geroch's construction of Cauchy time functions also work in the case with boundary, as detailed in Section \ref{s3}).
	
	The last assertion is trivial from the assertion 3; however, a direct proof from 1 is easy to obtain (see \cite[Remark 3.9(b)]{LuisTesis}). $\Box$}

\medskip

 As a simple consequence of our techniques, let us check that any globally hyperbolic spacetime-with-timelike-boundary can be regarded as the closure of an open subset in a globally hyperbolic spacetime without boundary (thus, strengthening Prop. \ref{pp}).

\begin{cor}\label{c_extension_globhip}
For any globally hyperbolic spacetime-with-timelike-boundary $(\M,g)$ there exists a globally hyperbolic spacetime without boundary $(\mathring{M},\mathring{g})$ with the same dimension 
and an isometric embedding $i:\overline{M} \hookrightarrow \mathring{M}$.
\end{cor}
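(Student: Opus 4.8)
The plan is to start from the splitting produced by Theorem~\ref{t0}, namely $\overline M=\mathbb R\times\bSigma$ with $g=-\Lambda\,d\tau^2+g_\tau$, $\partial M=\mathbb R\times\partial\Sigma$ and $\nabla\tau$ tangent to $\partial M$, and then to enlarge the ``spatial factor'' $\bSigma$ across its boundary $\partial\Sigma$ to a boundaryless manifold $\mathring\Sigma$, defining on $\mathring M:=\mathbb R\times\mathring\Sigma$ a globally hyperbolic metric $\mathring g$ with $\mathring g|_{\overline M}=g$. Since $g$ cannot in general be extended to all of $\overline M^d$, the point is to extend it only slightly beyond $\partial M$ and to do so in a way that keeps global hyperbolicity; one then invokes Theorem~\ref{t_geroch} (in the boundaryless situation) to conclude.

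For the construction I would use the global tubular neighbourhood of $\partial M$ and the Gaussian coordinates \eqref{e_Gaussian}, in which near $\partial M$ one has $g=\sum_{ij}g_{ij}(\hat x,s)\,dx^idx^j+ds^2$ with $s\in[0,s_+)$ the $\partial M$-distance; correspondingly $\partial\Sigma$ has a collar in $\bSigma$. I attach to $\bSigma$ a collar $\partial\Sigma\times(-1,0]$ along $\partial\Sigma$, obtaining a boundaryless $\mathring\Sigma$ in which $\bSigma$ is a closed submanifold-with-boundary, and I extend the coefficients $g_{ij}(\hat x,\cdot)$ and $\Lambda$ smoothly to the new region. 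The essential freedom here is that the extension is completely free for $s<0$: I choose it so that, after reparametrizing the inner part of the collar so that its ``new end'' $s\to-1$ lies at infinity, the metric there equals a \emph{static, spatially complete} model $-d\tau^2+h$ (one such $h$ on each connected component of the end), the two regimes being interpolated smoothly while keeping the metric of the form $-(\text{pos.})\,d\tau^2+(\text{pos. definite})$, hence causal. Finally $\tau$ is extended to a temporal function on all of $\mathring M$ (temporality is a $C^1$-open condition and on the new ends the extension is literally static). This gives $(\mathring M,\mathring g)$ boundaryless with $\mathring g|_{\overline M}=g$, so $i:\overline M\hookrightarrow\mathring M$ is isometric and $\overline M$ is closed in $\mathring M$.

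It remains to check that $(\mathring M,\mathring g)$ is globally hyperbolic, and by Theorem~\ref{t_geroch} it suffices to exhibit a Cauchy hypersurface; the candidate is $\mathring\Sigma_0:=\tau^{-1}(0)$. It is spacelike and, being a small spacelike enlargement across the timelike $\partial M$ of the acausal slice $\bSigma_0\subset\overline M$, it is still acausal. To see it is Cauchy, let $\gamma$ be an inextensible $\mathring g$-causal curve; $\tau\circ\gamma$ is strictly monotone (temporal function), so one only needs its range to be all of $\mathbb R$. If, say, $\sup(\tau\circ\gamma)<\infty$, then the future end of $\gamma$ must eventually lie in one of three regions: the interior part $\subset\overline M$, the transitional slab near $\partial M$, or a static end; and in each case a finite-$\tau$ inextensible causal curve is impossible — in $\overline M$ because the slices $\bSigma_\tau$ are Cauchy there (and a curve inextensible in $\mathring M$ that stays in $\overline M$ is inextensible in $\overline M$, since $\partial M\subset\mathring M$); on a static end because completeness of $h$ prevents a causal curve from reaching infinity with bounded $\tau$; in the slab because global hyperbolicity of $(\overline M,g)$ near the timelike $\partial M$ prevents escape to spatial infinity with bounded $\tau$. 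If $\gamma$ oscillates among the regions it must either accumulate in $\mathring M$ (contradicting inextensibility) or escape to infinity within one region, which is the previous case. Symmetrically $\inf(\tau\circ\gamma)=-\infty$, so $\gamma$ meets $\mathring\Sigma_0$, and exactly once by acausality. Hence $\mathring\Sigma_0$ is a Cauchy hypersurface and $(\mathring M,\mathring g)$ is globally hyperbolic, proving the corollary (with $\dim\mathring M=\dim\overline M$).

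The step I expect to be the main obstacle is exactly this last verification: one must engineer the extension of $g$ across $\partial M$ so that every new end created by ``opening up'' the boundary is complete and static (so that causal curves can reach it only after an infinite $\tau$-lapse), and one must rule out causal curves escaping to spatial infinity through the thin transitional slab with bounded $\tau$; the right input for the latter is the global hyperbolicity of $(\overline M,g)$ together with the tangency of $\nabla\tau$ to $\partial M$, which pins $\partial M$ down as $\mathbb R\times\partial\Sigma$. Everything else — the collar, the smooth extension of the metric coefficients, the conformal reparametrization sending the new end to infinity — is routine. One could also proceed via Theorem~\ref{T0} and Proposition~\ref{R1} to pass to the nearby globally hyperbolic $g^*$ which is a product near $\partial M$, double it to the boundaryless globally hyperbolic $(\overline M^d,g^{*d})$ as in the proof of Theorem~\ref{t0}, extend $g$ (Prop.~\ref{pp}) to a smooth $\widetilde g\le g^{*d}$ on a neighbourhood $\widetilde M\supset\overline M$ in $\overline M^d$, and cut $\widetilde M$ down to the $\widetilde g$-Cauchy development of $\tau^{-1}(0)$; but there the delicate point is again to ensure that this Cauchy development still contains $\overline M$, i.e.\ that no causal curve escapes through the collar past $\partial M$.
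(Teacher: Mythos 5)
Your route is genuinely different from the paper's. The paper does not attach an abstract collar and push its end to infinity; instead it extends $g$ to a neighbourhood $\widetilde M$ of $\M$ inside the double manifold $\M^d$ (Prop.~\ref{pp}), arranges $\tilde g<g^{d}$ for the globally hyperbolic doubled metric $g^{d}$ from the proof of Thm.~\ref{t0}, writes $\tilde g=-\tilde\Lambda(d\tau^{d})^{2}+\tilde g_{\tau^{d}}$ with respect to the Cauchy temporal function $\tau^{d}$ of $g^{d}$, and then sets $\mathring{g}=-\mu\tilde\Lambda(d\tau^{d})^{2}+\tilde g_{\tau^{d}}$ on $\mathring{M}=\mu^{-1}((0,1])$, where $\mu$ is a bump function equal to $1$ on $\M$ and vanishing outside a slightly larger open set. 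The factor $\mu$ collapses the light cones at the edge of the extension, and a Taylor expansion of $\mu$ along a causal curve that would reach a zero of $\mu$ at finite $\tau^{d}$ gives a contradiction; so $\tau^{d}$ remains Cauchy temporal for $\mathring{g}$ and no case analysis is needed. Your mechanism (stretch the added region to a complete static end, so causal curves need infinite $\tau$-lapse to reach it) and the paper's (pinch the cones so the edge is causally unreachable) are two ways of making the new region ``infinitely far'' in the transverse direction; the second alternative you sketch in your last paragraph is essentially the paper's proof, and the ``delicate point'' you identify there is exactly what the factor $\mu$ resolves.

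The one step I would not accept as written is your exclusion of escape through the transitional slab ``because global hyperbolicity of $(\M,g)$ near the timelike $\pM$ prevents escape to spatial infinity with bounded $\tau$''. The slab lies \emph{outside} $\M$, where the metric is your own extension, so the global hyperbolicity of $(\M,g)$ gives no control there: if $\partial\Sigma$ is noncompact, nothing you have said prevents the extended cones from opening up in the $\partial\Sigma$-directions for $s<0$ and letting a causal curve run off to infinity in $\partial\Sigma$ at bounded $\tau$. To close this you must build the control into the choice of extension --- e.g.\ dominate its cones on the whole collar by those of a product $\hat g_0+ds^2$ with $\hat g_0>g|_{\pM}$ globally hyperbolic on $\pM$ (as in Prop.~\ref{R1}) and with the $s$-factor complete --- and then prove the auxiliary fact that such a product of a globally hyperbolic spacetime with a complete Riemannian factor is globally hyperbolic. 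Likewise, the acausality of $\tau^{-1}(0)$ should be deduced from $\tau$ being temporal on all of $\mathring{M}$ (which your normal form for $\mathring{g}$ does give), not from its being a ``small enlargement'' of an acausal set. With these repairs your construction works and yields the corollary.
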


\begin{proof}
Consider first the globally hyperbolic metric $g^{d}$ on all the double manifold satisfying $g<g^{d}$ on $\M$ (see the beginning of this subsection) as well as the extension $\tilde{g}$ of $g$ to some $\widetilde{M} \subset \M^{d}$  as in Prop. \ref{pp}. Taking a smaller $\widetilde{M}$ if necessary,
we can assume that
\begin{equation}
\tilde{g} < g^{d} \;\;  \hbox{on} \;\; \widetilde{M}\quad\hbox{and}\quad \M\subset\widetilde{N}\subset {\rm cl}(\widetilde{N})\subset\widetilde{M}\subset \M^{d}\quad\hbox{for some open subset $\widetilde{N}$ of $\widetilde{M}$.}
%
%
\end{equation}
Let $\tau^{d}$ be the Cauchy temporal function on $(\M^{d},g^{d})$ (see the beginning of this subsection), and consider the corresponding orthogonal splitting $\M=\R \times \bSigma \subset \widetilde{M} \subset \R \times \Sigma^{d}\equiv \M^{d}$. Then, on $\widetilde{M}$ we can write:
\begin{equation}
\tilde{g}=-\tilde{\Lambda}(d\tau^{d})^{2} + \tilde{g}_{\tau^{d}}.
\end{equation}
Let $\mu$ be a bump function satisfying $\mu(\M)\equiv 1$ and $\mu(\M^{d} \setminus \widetilde{N})\equiv 0$. Let us check that the required extension is $\mathring{M}:=\mu^{-1}((0,1])$ (note that $\M \subset \mathring{M} \subset \widetilde{N}\subset {\rm cl}(\widetilde{N})\subset \widetilde{M} \subset \M^{d}$; in particular, ${\rm cl}(\mathring{M})\subset \widetilde{M}$), and $\mathring{g}=-\mathring{\Lambda}(d\tau^{d})^{2} + \tilde{g}_{\tau^{d}}$ with $\mathring{\Lambda}:=\mu  \tilde{\Lambda}$.

Clearly, $(\overline{M},g)$ is isometrically embedded into the spacetime of the same dimension $(\mathring{M},\mathring{g})$. Moreover, $\tau^{d}$ is a temporal function for $(\mathring{M},\mathring{g})$ as $\mathring{g} (\leq \tilde{g}) < g^{d}$. So, it is enough to check that the slices $\{\tau^{d}=c \mid c\in \R \} \cap \mathring{M}$ are Cauchy, that is, we will check that $\tau^{d}\mid_{\mathring{M}}$ is Cauchy temporal for $(\mathring{M},\mathring{g})$.

Let $\gamma$ be an inextensible future-directed causal curve in $(\mathring{M},\mathring{g})$. Then, $\gamma$ is also causal in $(\M^{d},g^{d})$ and we can reparametrize it with $\tau^{d}$, that is, $\gamma:(a,b) \rightarrow \mathring{M}\subset \R \times \Sigma^{d}= \overline{M}^d$, $\gamma(\tau)=(\tau,x(\tau))$. If, say $b<\infty$ (the case $-\infty < a$ is analogous) then necessarily $\exists \, x_{b} \in \Sigma^{d}$ and, taking into account that
$\{b\} \times \Sigma^{d}$ is Cauchy for $(\overline{M}^d,g^d)$,
\begin{equation}
\lim_{\tau \rightarrow b} \gamma(\tau)=(b,x_{b}) \in cl(\mathring{M}) \setminus \mathring{M} \subset \R \times \Sigma^{d}\equiv \M^{d}\quad\hbox{($cl(\mathring{M})\equiv$closure of $\mathring{M}$ in $\M^{d}$).}
\end{equation}
As $\gamma$ is $\mathring{g}$-causal one has:
\[
\mu(\gamma(\tau)) \tilde{\Lambda}(\gamma(\tau)) \geq \tilde{g}_{\tau^{d}}(x'(\tau),x'(\tau)).
\]
Thus, writing $\tilde{g}_{\tau^{d}}$ in coordinates $\tilde{g}_{ij}$ around $(b,x_b)$:
\[
\tilde{\Lambda}(\gamma(\tau)) \geq \tilde{g}_{ij}(\gamma(\tau)) \frac{\dot{x}^{i}(\tau) \dot{x}^{j}(\tau)}{\mu(\gamma(\tau))}.
\]
If $\epsilon>0$ is the minimum eigenvalue of $\tilde{g}_{ij}(\gamma(\tau))$ in a neighbourhood of $(b,x(b))$ and $\Lambda_{0}(>0)$ is the maximum of $\tilde{\Lambda}$ in this neighbourhood:
\begin{equation}
\label{eqcor}
\Lambda_{0} > \epsilon \, \frac{||\dot{x}(\tau)||_{0}^{2}}{\mu(\gamma(\tau))},
\end{equation}
where $|| \cdot ||_{0}$ denotes the standard Euclidean metric in the coordinates $(x^{1},\ldots,x^{n})$.
However, as $\mu(b,x_b)=0$ one has a contradiction with (\ref{eqcor}) by expanding in a series:
\[
\mu(\gamma(\tau))= \mu((b,x_b))+
{\frac{\partial \mu}{\partial x_{i}}(\mu(b,x_b))} \dot{x}^{i}+
{\frac{\partial^{2} \mu}{\partial x_{i} x_{j}}(\mu(b,x_b))} \dot{x}^{i} \dot{x}^{j}+o(||\dot{x}||_0^{3})= o(||\dot{x}||_0^{3}).
\]
The last equality \bcambios{holds} because $(b,x_b)$ belongs either to the boundary or to the support of~$\mu$.
\end{proof}

\section{Appendix A: naked singularities and the causal boundary}

Clearly, all the points of the timelike boundary $\pM$ of $\M$ correspond to (conformally invariant) naked singularites of $M$. In this Appendix, globally hyperbolic spacetimes-with-timelike-boundary are characterized as those containing all its naked singularities.
Before proceeding, we need to recall some basic notions and properties associated to the causal boundary (c-boundary) of spacetimes without boundary.

\subsection{Brief review on the c-completion of spacetimes}

We refer to  \cite{causalb, Flores} for further details and proofs (see also the original article \cite{GerochIdealPointsSpaceTime1972}).

A past set $P\subset M$ (i.e., $P\neq\emptyset$, $I^-(P)=P$) that cannot be written as the
union of two proper subsets, both of which are also past sets, is
said to be an {\em indecomposable past} set (IP). It can be shown that an IP either coincides with the past of some point of the spacetime, i.e., $P=I^{-}(p)$ for $p\in M$, or else $P=I^{-}(\gamma)$ for some inextendible future-directed
timelike curve $\gamma$. In the former case, $P$ is said to be a {\em proper indecomposable past
set} (PIP), and in the latter case $P$ is said to be a {\em
terminal indecomposable past set} (TIP). These two classes of IPs are disjoint.

The {\em common past} of a given set $S\subset
M$ is defined by \[\downarrow S:=I^{-}(\{p\in M:\;\; p\ll
q\;\;\forall q\in S\}).\]
The corresponding
definitions for {\em future sets}, IFs, TIFs,
PIFs, {\em common future}, etc., are obtained just by interchanging the roles of past and
future, and will always be understood.

The set of all IPs constitutes the so-called {\it future $c$-completion} of $(M,g)$, denoted by $\hat{M}$. If $(M,g)$ is strongly causal, then $M$ can naturally be viewed as a subset of $\hat{M}$ by identifying every point $p\in M$ with its respective PIP, namely $I^-(p)$.
The {\it future $c$-boundary} $\hat{\partial} M$ of $(M,g)$ is defined as the set of all its TIPs. Therefore, upon identifying $M$ with its image in $\hat{M}$ by the natural inclusion as outlined above,
\[
\hat{\partial} M \equiv \hat{M} \setminus M.
\]
The definitions of \textit{past $c$-completion} $\check{M}$ and {\it past $c$-boundary} $\check{\partial}M$ of $(M,g)$ are readily defined in a time-dual fashion using IFs.

Next, we introduce the so-called {\em Szabados relation} (or \textit{$S$-relation}) between IPs and IFs: an IP $P$ and an IF $F$ are {\em S-related}, denoted $P\sim_{S}F$, if $P$ is a maximal IP inside
$\downarrow F$ and $F$ is a maximal IF inside $\uparrow P$. In particular, for any $p \in M$, it can be shown that $I^-(p) \sim_{S} I^+(p)$.
\begin{defi}\label{d1} The {\em (total) c-completion} $\overline{M}^c$ is
composed by all
the pairs $(P,F)$ formed by $P\in \hat{M}\cup\{\emptyset\}$ and
$F\in \check{M}\cup\{\emptyset\}$ such that either
\begin{itemize}
\item[i)] both $P$ and $F$ are non-empty and $P\sim_{S}F$; or
\item[ii)] $P=\emptyset$, $F \neq \emptyset$ and there is no $P'\neq \emptyset$ such that $P'\sim_{S}F$; or
\item[iii)] $F=\emptyset$, $P \neq \emptyset$ and there is no $F'\neq\emptyset$ such that $P\sim_{S}F'$.
\end{itemize}
The original manifold $M$ is then identified with the set $\{(I^{-}(p),I^{+}(p)): p\in M\}$, and the {\em c-boundary} is defined as $\partial^c M\equiv\overline{M}^c\setminus M$.
\end{defi}

\begin{rem}
\label{nakedsingu}
{\em Any pair $(P,F)\in \partial^c M$, with $P \neq \emptyset \neq F$, will be called a {\em naked singularity}. Notice that, necessarily $P=I^-(\gamma)$ for some inextendible future-directed timelike curve, and $\gamma$ must lie in the past of any $z\in F$ (a dual assertion follows by interchanging the role of $P$ and $F$). According to a classical physical interpretation, $\gamma$ may represent a particle dissapearing of the spacetime, and all this process can be seen at $z$. Conversely, whenever such a $z, \gamma$ ($\gamma\subset I^-(z)$) exist, a pair $(P,F)\in\partial^c M, P \neq \emptyset \neq F$ must appear.

Clearly, such a notion of naked singularity is  conformally invariant. However, one expects that the physically relevant representatives of the conformal  class will present  curvature-related divergences along such $\gamma$'s which make the spacetime inextensible. } \end{rem}

Having defined the set structure of the $c$-completion, the next step is to extend the chronological relation in $(M,g)$ to the $c$-completion as follows:
\begin{equation}
(P,F)\ll
(P',F')\;\;\iff\;\; F\cap P'\neq\emptyset. \label{eq:7}
\end{equation}

Next, let us define the {\em future chronological limit operator} $\hat{L}$ on $\hat{M}$ as follows.
%
Given a sequence $\sigma=\{P_{n}\}_{n}\subset \hat{M}$ of IPs and $P\in \hat{M}$, we set
\begin{equation}
\label{eq:3}
P\in \hat{L}(\sigma)\iff \left\{\begin{array}{l}
                                 P\subset \mathrm{LI}(\sigma)\\
                                 P \hbox{ is a maximal IP in }\mathrm{LS}(\sigma),
\end{array}\right.
\end{equation}
\bcambios{
\[
\hbox{where}\quad
\begin{array}{lll}
LI(\sigma) &:=& \{x \in M \, : \, \mbox{  $x$ belongs to all but finitely many $P_n$'s}\}, \\
LS(\sigma) &:=& \{x \in M \, : \, \mbox{ $x$ belongs to infinitely many $P_n$'s}\}.
\end{array}
\]
}
Again, by simply interchanging past and future sets we may analogously define the {\it past chronological limit operator} $\check{L}$ on $\check{M}$. Then, the {\em future (resp. past) chronological topology on $\hat{M}$ (resp. $\check{M}$)} is the derived topology associated to the limit operator $\hat{L}$ (resp. $\check{L}$), that is, the topology whose {\em closed sets} are those subsets $C \subset \hat{M}$ (resp. $C \subset \check{M}$) such that $\hat{L}(\sigma) \subset C$ (resp. $\check{L}(\sigma) \subset C$) for any sequence $\sigma$ of elements of $C$.

\smallskip

 In order to define the chronological topology on the full $c$-boundary, first define a limit operator $L$ on $\overline{M}^c$ as follows: given a sequence
$\sigma=\{(P_{n},F_{n})\}\subset\overline{M}^c$, put

\begin{equation}
\label{eq:4}
(P,F)\in L(\sigma)\iff \left\{
  \begin{array}{l}
    P\in \hat{L}(P_{n}) \hbox{ if $P\neq \emptyset$}\\
F\in\check{L}(F_n) \hbox{ if $F\neq \emptyset$}.
  \end{array}
\right.
\end{equation}
By definition, the {\em chronological topology on $\overline{M}^c$} is the derived topology $\tau_L$ associated to the limit operator $L$ defined in (\ref{eq:4}), that is, the topology whose {\em closed sets} are those subsets $C \subset \overline{M}^c$ such that $L(\sigma) \subset C$ for any sequence $\sigma$ of elements of $C$.

The following result (see \cite[Thm. 3.27]{causalb}) summarizes the key properties of the chronological topology.

\begin{thm}
\label{thm:mainc-completion}
Let $(M,g)$ be a strongly causal spacetime and consider its associated $c$-completion $\overline{M}^c$ endowed with the chronological relations and chronological topology defined in (\ref{eq:7}) and (\ref{eq:4}), respectively. Then, the following statements hold.

\begin{itemize}
\item[1.] \label{item:mismatopologia}The inclusion $M\hookrightarrow \overline{M}^c$ is continuous, with an open dense image. In particular, $\partial^c M$ is closed in $\M^c$ and the topology induced on $M$ by the chronological topology on $\overline{M}$ coincides with the original manifold topology.
\item[2.] The chronological topology is second-countable and $T_1$
(but not necessarily $T_2$).
\item[3.] \label{item:chains} Let $\{x_n\}\subset M$ be a {\em future (resp. past) chain}, i.e., a sequence satisfying that $x_n\ll x_{n+1}$ (resp. $x_{n+1}\ll x_{n}$) for all $n$. Then,
\[
\begin{array}{c}
L(\left\{ x_{n} \right\})=\left\{ (P,F)\in \overline{M}^c: P=I^-(\left\{ x_n \right\}) \right\}
\\
\hbox{(resp. $L(\left\{ x_{n} \right\})=\left\{ (P,F)\in \overline{M}^c: F=I^+(\left\{ x_n \right\}) \right\})$.}
\end{array}
\]

\item[4.] \label{item:c-completioncompleta} The c-completion is {\em complete} in the following sense: given any (future or past) chain $\{x_n\}\subset M$, necessarily $L(\left\{ x_{n} \right\}) \neq \emptyset$, i.e. any (future or past) chain converges in $\overline{M}^c$. 

\item[5.] The sets $I^{\pm}((P,F))\subset\overline{M}^c$ are open for all $(P,F)\in \overline{M}^c$.
\end{itemize}

 \end{thm}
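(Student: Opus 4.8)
This statement is precisely \cite[Thm. 3.27]{causalb}, so the plan is to recall its proof, whose backbone is the analysis of chains; most of the remaining items follow from that. For item~3, given a future chain $\{x_n\}\subset M$ the pasts $I^-(x_n)$ form an increasing sequence, so $\mathrm{LI}(\{I^-(x_n)\})=\mathrm{LS}(\{I^-(x_n)\})=\bigcup_n I^-(x_n)=:P$; this $P$ is a nonempty past set, and it is indecomposable because it is the past of a future-directed timelike curve obtained by concatenating timelike segments $x_n\to x_{n+1}$, so by (\ref{eq:3}) it is the unique element of $\hat L(\{I^-(x_n)\})$. By the definition (\ref{eq:4}) of the total limit operator, $(P',F')\in L(\{(I^-(x_n),I^+(x_n))\})$ exactly when its past component, whenever nonempty, equals $P$; this is item~3, and since by Definition~\ref{d1} at least one pair with past component $P$ belongs to $\overline{M}^{c}$ (either $P\sim_S F$ for some $F\neq\emptyset$, or $(P,\emptyset)\in\overline{M}^{c}$), item~4 (completeness) is immediate. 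The statements for past chains are dual.

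For item~5 I would show that the complement of $I^+((P,F))=\{(P',F'):F\cap P'\neq\emptyset\}$ is closed under $L$: if $F\cap P'_n=\emptyset$ for all $n$ and $(P',F')\in L(\{(P'_n,F'_n)\})$, then either $P'=\emptyset$ and the claim is trivial, or $P'\subset\mathrm{LI}(\{P'_n\})$ and any $x\in F\cap P'$ would lie in all but finitely many $P'_n$ while also lying in $F$, contradicting $F\cap P'_n=\emptyset$; the past case is dual, and openness of $I^\pm$ uses that IPs and IFs are open subsets of $M$. For item~1, density of $M$ in $\overline{M}^{c}$ follows from item~3, since a TIP $P=I^-(\gamma)$ (resp. a TIF) is the limit of the chain $\{\gamma(s_n)\}$ with $s_n$ tending to the endpoint of $\gamma$; continuity of the inclusion $M\hookrightarrow\overline{M}^{c}$ and the coincidence of the induced topology with the manifold topology are local statements resting on strong causality, because the causal diamonds $I^+(p)\cap I^-(q)$ form a neighbourhood basis of the manifold topology and this transfers to $\overline{M}^{c}$ via item~5; and the closedness of $\partial^c M$ (equivalently, openness of $M$) comes from the observation that a PIP $I^-(p)$ can be a limit, through $\hat L$, only of sequences whose underlying points accumulate at $p$, so such limits remain in $M$.

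Finally, for item~2 the $T_1$ property is a short computation: for a constant sequence $(P,F)$ one gets $\hat L(\{P,P,\dots\})=\{P\}$ and $\check L(\{F,F,\dots\})=\{F\}$ directly from (\ref{eq:3}), whence $L$ of the constant sequence $(P,F)$ equals $\{(P,F)\}$ by (\ref{eq:4}), so singletons are closed. Second-countability I would obtain from a countable dense subset $D\subset M$ and the open sets $I^\pm((I^-(d),I^+(d)))$, $d\in D$, used to build a countable basis of the chronological topology; this, together with the matching of the induced topology on $M$ in item~1, is the step that genuinely requires combining strong causality with the IP/IF machinery in a careful local-to-global argument, and is therefore the main obstacle. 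The failure of the Hausdorff property, by contrast, is not an obstacle but a genuine feature, to be illustrated by an explicit example (as in \cite{causalb}) of a spacetime with two distinct boundary points admitting no disjoint neighbourhoods.
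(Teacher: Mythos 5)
First, a point of reference: the paper does not prove this theorem at all. It is quoted verbatim as background (``see \cite[Thm. 3.27]{causalb}'') in the Appendix A review of the c-completion, so there is no in-paper proof to compare against; your proposal has to be judged as a reconstruction of the argument in \cite{causalb}. As such it is broadly on the right track (the treatment of item~5 via closedness of the complement under $L$, the $T_1$ computation with constant sequences, and the derivation of density from item~3 are all essentially the standard arguments), but it contains one concrete logical slip and one genuinely unsubstantiated step.

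The slip is in item~3. You write that, by (\ref{eq:4}), $(P',F')\in L(\{x_n\})$ ``exactly when its past component, whenever nonempty, equals $P$.'' That is not what (\ref{eq:4}) says: it imposes \emph{two} conditions, and when $F'\neq\emptyset$ you must also verify $F'\in\check L(\{I^{+}(x_n)\})$. So item~3 as stated has real content beyond your reading: one must show that every $F'$ with $P\sim_S F'$ automatically satisfies the future condition. This does hold, but it requires an argument: since $\{I^{+}(x_n)\}$ is decreasing, $\mathrm{LI}=\mathrm{LS}=\bigcap_n I^{+}(x_n)$, one checks that this set is exactly the common future of $P$ (using $x_n\in P$ for all $n$), and that an IF contained in $\bigcap_n I^{+}(x_n)$ necessarily lies in $\uparrow P=I^{+}(\bigcap_n I^{+}(x_n))$, so maximal IFs in $\mathrm{LS}$ coincide with maximal IFs in $\uparrow P$. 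One must also rule out limits of the form $(\emptyset,F')$, using clause (ii) of Definition~\ref{d1}. Without these steps your item~3 (and hence your item~4 and the density claim in item~1, which you derive from it) is not complete. Separately, your route to second countability --- a basis built from $I^{\pm}$ of a countable dense set --- is only a conjecture: the chronological topology is a derived topology from a limit operator and is not obviously generated by the sets $I^{\pm}$, and you yourself flag this as ``the main obstacle'' without resolving it. Since the paper itself takes the whole theorem as a citation, the honest options are either to cite \cite[Thm.\ 3.27]{causalb} as the paper does, or to supply the two missing arguments above in full.
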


\subsection{Main result}

It is worth emphasizing the following result in the case without boundary \cite[Cor. 4.34]{causalb}:
\begin{quote}
{\em Let $(M,g)$ be a spacetime which admits a conformal boundary $\pM$ such
that $\M=M\cup \pM$ is $C^1$ and $\pM$ {\em chronologically complete} (i.e., each inextensible future/past-directed timelike curve in $M$ has an endpoint in $\partial M$).

$M$ is
globally hyperbolic if and only if $\pM$ {\em does not admit timelike points}
(i.e., $T_{\hat p}(\pM)$ is everywhere either a spacelike or a lightlike
hyperplane).}
\end{quote}
This result
suggests that, for a globally hyperbolic \ncambios{spacetime-with-timelike-boundary}, the boundary should contain all the naked singularities.
The formal statement and proof of this assertion will be given in Th. \ref{naked} below. Previously, Lemma \ref{diamondclosure} will extend the basic results \cite[Lemma 3.4]{BE},  \cite[Thm. 2.3]{Harris} to the case of a strongly causal spacetime-with-timelike-boundary.

\begin{rem} 
{\em The hypotheses which ensure that the (conformal) boundary $\pM$ of a \ncambios{spacetime-with-boundary} $\M$ can be identified with the c-boundary $\partial^cM$ of its interior $M$ and, moreover, that the c-completion $\M^c$ agrees with the \ncambios{spacetime-with-boundary} $\M$, were analyzed in \cite[Section 4]{causalb}, see specially Thm. 4.26, Cor. 4.28 and Thm. 4.32 therein.
In particular, it is straightforward to check that, the boundary $\pM$ of any spacetime-with-timelike-boundary $\pM$  is  {\em regularly accessible}.  Essentially, this means that $\pM$
cannot present a variety of pathologies which occur for arbitrary conformal embeddings, and $\pM$ can be regarded as a part of the causal boundary. Indeed,  $\pM$ can be identified with the whole $\partial^cM$ (and $\M$ with $\M^c$) when $\pM$ is both regularly accessible and chronologically complete, see \cite[Th. 4.16]{causalb}. Anyway, here we will make a self-contained development adapted to our purposes.
}
\end{rem}

\ncambios{We will abuse of notation in the remainder by writing  $I^{\pm}(p,M)$ instead of $I^{\pm}(p)\cap M$ for any $p\in \overline{M}$ (recall Remark \ref{nuevo}).}

\begin{lemma}
\label{diamondclosure}
Let $(\overline{M},g)$ be a strongly causal spacetime-with-timelike-boundary.

(1) If $cl(J^{+}(p) \cap J^{-}(q))$ is
compact for all $p,q \in \overline{M}$ then $(\overline{M},g)$ is globally hyperbolic with timelike boundary.

(2) Let $\hat x\in \overline{M}$ and $\{x_n\}$ be a sequence in $\overline{M}$ such that $I^-(x,M)\in \hat L (\{I^-(x_n,M)\})$. Then $\{x_n\}$ converges to $\hat x$ with the manifold topology of $\M$.
\end{lemma}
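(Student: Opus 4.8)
The two parts are somewhat independent, so I would treat them in turn. For part (1), the content is that once strong causality is assumed, the compactness of the \emph{closures} $cl(J^+(p)\cap J^-(q))$ already forces the spacetime to be globally hyperbolic in the sense of Def.~\ref{d_CauchyHyp}/the optimized definition (causal $+$ $J^+(p)\cap J^-(q)$ compact). The plan is: first note that a strongly causal spacetime is in particular causal, so it only remains to upgrade ``$cl(J^+(p)\cap J^-(q))$ compact'' to ``$J^+(p)\cap J^-(q)$ compact'', i.e. to show each $J^+(p)\cap J^-(q)$ is closed. I would do this by the standard argument transplanted to the boundary case: take $r=\lim_m r_m$ with $r_m\in J^+(p)\cap J^-(q)$; since $r$ lies in the compact set $cl(J^+(p)\cap J^-(q))$ and $\pM$ is timelike, pick $p'\in I^-(p)$ and $q'\in I^+(q)$, so that $r_m\in I^+(p')\cap I^-(q')$ for large $m$, hence $r\in cl(J^+(p')\cap J^-(q'))$; now apply the limit curve theorem for $H^1$-causal curves (Prop.~\ref{p2.11}, or rather Prop.~\ref{otro}(2) in the strongly causal/compact-diamond setting obtained by iterating) to the future-directed causal curves from $p$ through $r_m$ to $q$ inside this compact set, getting a causal limit curve from $p$ through $r$ to $q$, whence $r\in J^+(p)\cap J^-(q)$. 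The one subtlety to handle carefully is that to run the limit-curve argument I want the curves to live in a fixed compact set, which is exactly what the hypothesis on closures gives me, combined with non-imprisonment (Cor.~\ref{imprisonedb}) from strong causality. So part (1) is essentially Lemma~\ref{ppp}(b)'s proof run in reverse, and I expect it to be short.

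For part (2), the statement is a continuity/Hausdorff-ness assertion: if $I^-(\hat x,M)$ is a future chronological limit of the sequence of PIPs $I^-(x_n,M)$ in $\hat M$ (for the interior $M$, which is strongly causal), then $x_n\to\hat x$ in the manifold topology of $\M$. The plan is to argue by contradiction: suppose $x_n\not\to\hat x$. If the $x_n$ eventually leave every compact neighborhood of $\hat x$ this is handled by choosing a relatively compact neighborhood and using that past sets of points escaping to infinity cannot have $I^-(\hat x,M)$ in their chronological limit (since then $\mathrm{LI}$ or $\mathrm{LS}$ would fail to contain enough of $I^-(\hat x,M)$, or maximality would fail). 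Otherwise, up to a subsequence $x_n\to y\neq\hat x$ with $y\in\M$; then by openness of the chronological relation and continuity of $I^-(\cdot,M)$-type constructions, $I^-(x_n,M)\to I^-(y,M)$ as sets in the relevant sense, so $I^-(y,M)$ would also be in $\hat L(\{I^-(x_n,M)\})$, and then one compares $I^-(\hat x,M)$ with $I^-(y,M)$: since both are maximal IPs inside $\mathrm{LS}$ and contained in $\mathrm{LI}$, and since they are determined by $\mathrm{LI}(\{I^-(x_n,M)\})\supseteq I^-(y,M)$, achronality/maximality forces $I^-(\hat x,M)=I^-(y,M)$, which by distinguishability of $M$ (strongly causal $\Rightarrow$ distinguishing, Prop.~\ref{p_ord_lowerlevels}) gives $\hat x=y$, a contradiction — except that $\hat x$ may lie on $\pM$ while $I^-(\cdot,M)$ only sees $M$, so I need Remark~\ref{nuevo}/Prop.~\ref{p_opentrans}(d) to ensure that $I^-(\hat x,M)$ still distinguishes boundary points from interior ones and from each other. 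Concretely, if $\hat x,y\in\M$ and $I^-(\hat x,M)=I^-(y,M)$, I would take timelike curves into $M$ from both points and use the achronal-graph structure near the boundary (Cor.~\ref{c_localsplitting}) to conclude $\hat x=y$.

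The main obstacle I anticipate is part (2), specifically the bookkeeping at the boundary: the chronological limit operator $\hat L$ is defined on IPs of the \emph{interior} $M$ (a spacetime without boundary), but the points $x_n$ and $\hat x$ live in $\M$, so every step needs the translation ``$I^\pm(p,M)=I^\pm(p)\cap M$'' (Prop.~\ref{p_opentrans}(d), Remark~\ref{nuevo}) and the local graph description of boundaries of chronological pasts near $\pM$. The analytic core — that a set-limit of PIPs which is itself a PIP forces convergence of the base points — is the same as in the case without boundary (it is essentially part~1 of Theorem~\ref{thm:mainc-completion} combined with the $T_1$-but-maybe-not-$T_2$ caveat), so the real work is checking that no new pathology is introduced by points sitting on the timelike boundary, and that the escape-to-infinity case is excluded using that $\M$ is (by part (1), or by hypothesis when this lemma is applied) globally hyperbolic and hence that diamonds are compact. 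I would organize the write-up so that part (1) is proved first and then freely used in part (2).
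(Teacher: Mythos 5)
Your part (1) follows essentially the paper's route (the Beem--Ehrlich argument: a limit curve of inextensible causal curves through $p$, $r_m$, $q$; non-imprisonment from strong causality to force that limit curve to leave the compact set $cl(J^+(p)\cap J^-(q))$ at some point $x$; then $C^0$-convergence of the segments $\gamma_n|_{[p,x_n]}$ to recover the ordering $p\le r\le q$). It is fine in outline, though the last step --- showing the limit curve actually reaches $q$ \emph{after} $r$, which is where the escape point $x$ and Prop.~\ref{otro}(1) are really needed --- is the part you gloss over.

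The genuine gaps are in part (2). First, you propose to exclude the ``escape to infinity'' case using global hyperbolicity (``by part (1), or by hypothesis when this lemma is applied''); but part (2) is stated under strong causality alone, and in its only application (the implication (b)$\Rightarrow$(a) of Thm.~\ref{naked}) it is invoked \emph{in the course of} proving global hyperbolicity, so assuming it there would be circular. Second, in your convergent-subsequence case the key deduction fails: from $x_{n_k}\to y$ you only obtain $I^-(y,M)\subset \mathrm{LI}(\{I^-(x_{n_k},M)\})$; you cannot conclude that $I^-(y,M)\in\hat L$ nor that $I^-(\hat x,M)=I^-(y,M)$, since the limit operator is not single-valued (the topology is $T_1$ but not $T_2$) and neither IP need contain the other, so ``maximality forces equality'' does not apply. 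The missing idea is the one the paper takes from Harris: choose a causally convex, relatively compact $U\ni\hat x$ with $x_{n_k}\notin U$, a future chain $z_n\to\hat x$ in $I^-(\hat x,M)$, and timelike curves from $z_n$ to $x_{n_k}$ (these exist precisely because $I^-(\hat x,M)\subset\mathrm{LI}$); each such curve must cross the compact set $\dot U$, and a double limit-curve argument produces $y\in\dot U$ joined to $\hat x$ by a future-directed causal curve, with $I^-(y,M)\subset\mathrm{LI}\subset\mathrm{LS}$. The strict inclusion $I^-(\hat x,M)\subsetneq I^-(y,M)$ (distinction) then contradicts the maximality of $I^-(\hat x,M)$ in $\mathrm{LS}$ demanded by membership in $\hat L$. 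This single argument handles both of your cases and uses only strong causality.
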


 \begin{proof} (1) Essentially, we will  follow the proof of \cite[Lemma 4.29]{Beem}.
  It suffices to show that $J^{+}(p) \cap J^{-}(q)\subset\overline{M}$ is closed for every $p,q \in \overline{M}$. By contradiction, suppose
 that $r \in cl(J^{+}(p) \cap J^{-}(q)) \setminus J^{+}(p) \cap J^{-}(q)$. Then, there exists
 $\{r_{n}\} \subset J^{+}(p) \cap J^{-}(q)$ such that $r_{n} \rightarrow r$. Let $\gamma_{n}:[0,1) \rightarrow \overline{M}$ be
 inextensible future-directed causal curves such that $\gamma_{n}(0)=p$, $r_{n} \in \gamma_n$ and $q \in \gamma_{n}$ for all $n$. By Prop. \ref{p2.11}
 there exists a future-directed causal limit curve $\gamma:[0,1) \rightarrow \overline{M}$ of $\{\gamma_{n}\}$
 such that $\gamma(0)=p$. Since $(\overline{M},g)$ is strongly causal, the inextensible causal curve $\gamma:[0,1) \rightarrow \overline{M}$ is not imprisoned on the compact subset
 $cl(J^{+}(p) \cap J^{-}(q))$. Hence, there exists $x \in {\rm Im}(\gamma)$ such that
 $x \not \in cl(J^{+}(p) \cap J^{-}(q))$. From the notion of limit curve, any neighborhood of $x\in Im(\gamma)$
 intersects all but finitely many of the $\gamma_n$'s. So, we can assume without restriction the existence of a sequence $x_n \in Im(\gamma_n)$
 such that $\{x_{n}\}$ converges to $x$. Since $x \not \in cl(J^{+}(p) \cap J^{-}(q))$
 we also have that $x_{n} \not \in cl(J^{+}(p) \cap J^{-}(q))$ for all $n$ large enough. So taking into account that $\gamma_{n} \subset J^{+}(p)$, it follows
 that $x_{n} \not \in J^{-}(q)$ for large $n$. Hence $q$ lies between the points $p$ and $x_{n}$ on $\gamma_{n}$, i.e, $p \leq r_{n} \leq q \leq x_{n}$
 for large $n$. Denote by $\gamma \mid_{[p,x]}$ the portion of $\gamma$ between the point $p$ and $x$, and by $\gamma_{n} \mid_{[p,x_{n}]}$
 the portion of $\gamma_{n}$ between the points $p$ and $x_{n}$. From Prop. \ref{otro} (1),
 we may assume, by taking a subsequence of
 $\{\gamma_{n}\mid_{[p,x_{n}]}\}$  if necessary, that $\{\gamma_{n}\mid_{[p,x_{n}]}\}$ converges to $\{\gamma\mid_{[p,x]}\}$ in the $C^0$ topology of curves.
 Since $q \in \gamma_{n} \mid_{[p,x_{n}]}$, necessarily $q \in \gamma \mid_{[p,x]}$. On the other hand, since $r_{n} \leq q$
 and $r_{n} \rightarrow r$, necessarily $r \in \gamma\mid_{[p,x]}$ and $r\leq q$. Therefore, $r \in J^{+}(p) \cap J^{-}(q)$, a contradiction.

 (2) We will follow the reasoning in the implication to the left of \cite[Thm. 2.3]{Harris}. Assume by contradiction that $x_{n} \not \rightarrow \hat x$. Then, there exists
a relative compact open neighbourhood $U \ni \hat x$ ($U\subset \M$), and a subsequence $\{x_{n_{k}}\}$ with $x_{n_{k}} \not \in U$ for all $k$; by strong causality, $U$ can be assumed causally convex (Prop.  \ref{imprisoned}). Consider a future chain $\{z_{n}\} \subset I^{-}(\hat{x},M)$ such that $z_{n} \rightarrow \hat{x}$.
For $n$ sufficiently large, $z_{n} \in U$. Since $z_{n} \ll \hat{x}$, there exists $K_{n}\in \mathbb{N}$ such that $z_{n} \ll x_{n_{k}}$ for $k \geq K_{n}$. So, there is a timelike curve $\gamma_{n}^{k}$ in $\overline{M}$ from $x_{n}$ to $z_{n_{k}}$. Since $x_{n_{k}} \not \in U$, $\gamma^{k}_{n}$ exits $U$ at some point $y_{n_{k}} \in \dot{U}$. For each $n$, the curves $\{\gamma_{n}^{k} \mid k \geq K_{n}\}$ have a future-directed causal limit curve
$\gamma_{n}$ in $\overline{M}$ from $z_{n}$ to some point $y_{n} \in \dot{U}$. Moreover, the sequence of curves $\{\gamma_{n}\}$ has a future-directed causal limit curve
$\gamma$ in $\overline{M}$ from $\hat{z}$ to some $y \in \dot{U}$ as $\gamma$ cannot remain imprisoned in $cl(U)$.

Let $I^{-}(y,M)=I^{-}(\gamma,M)$ and note that $I^{-}(y,M) \subset LI(\{I^{-}(x_{n_{k}},M)\})$. In fact, take $w \in I^{-}(y,M)$, then $y \in I^{+}(w,M)$, so, for large $n$, we have that
$y_{n} \in I^{+}(w,M)$. Therefore, for $k$ large enough, $y_{n_k} \in I^{+}(w,M)$. Since $y_{n_k} \ll x_{n_{k}}$, necessarily $x_{n_{k}} \in I^{+}(w,M)$ for large $k$. Hence, $(I^-(\hat{x},M)\subsetneq)I^{-}(y,M)\subset LI(I^-(x_{n_k},M))$, in contradiction with $I^-(\hat{x},M)\in \hat{L}(I^-(x_n,M))\subset\hat{L}(I^-(x_{n_k},M))$.
\end{proof}


\begin{thm}
\label{naked}
Let $(\overline{M},g)$ be a strongly causal spacetime-with-timelike-boundary. The following sentences are equivalent:
\begin{itemize}
 \item [(a)] $(\overline{M},g)$ is a globally hyperbolic spacetime-with-timelike-boundary.

 \item [(b)] $\partial M$ contains all the naked singularities, i.e., for any pair
 $(P,F)$ in the c-boundary $\partial^{c}M$, with $P\neq \emptyset \neq F$, there exists $\hat{z} \in \partial M$ such that $(P,F)=(I^{-}(\hat{z},M),I^{+}(\hat{z},M))$.
\end{itemize}
\end{thm}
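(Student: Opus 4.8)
The plan is to prove the two implications separately, using the machinery of Appendix A together with the structural results of the paper. For $(a)\Rightarrow(b)$, I would start from a naked singularity $(P,F)\in\partial^cM$ with $P\neq\emptyset\neq F$, so $P=I^-(\gamma,M)$ for some inextensible future-directed timelike curve $\gamma$ in $M$, and by Remark \ref{nakedsingu} there is $z\in M$ with $\gamma\subset I^-(z,M)$; dually $F=I^+(\gamma',M)$ with $\gamma\subset I^-(w,M)$ for suitable $w$ (using $F\neq\emptyset$). The key point is that $\gamma$, being a future-directed inextensible timelike curve in $M$ contained in $J^-(z)\cap\overline{M}$, must actually have a future endpoint $\hat z$ in $\overline{M}$: otherwise $\gamma$ would be partially imprisoned in a compact set — here one uses that $J^+(\gamma(s_0))\cap J^-(z)$ is compact by global hyperbolicity (Theorem \ref{t_ladder}, Lemma \ref{ppp}), contradicting Corollary \ref{imprisonedb}. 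Since $\gamma\subset M$ is inextensible \emph{in} $M$ but has an endpoint in $\overline M$, that endpoint $\hat z$ must lie on $\partial M$. It then remains to check $P=I^-(\hat z,M)$ and $F=I^+(\hat z,M)$: the inclusion $I^-(\gamma,M)\subset I^-(\hat z,M)$ is clear, and the reverse follows because $\partial M$ is timelike, so any $q\ll\hat z$ in $M$ has $q\ll\gamma(s)$ for $s$ close to the endpoint parameter (using openness of $\ll$ and that $\gamma(s)\to\hat z$). The $F$-side needs the $S$-relation: $F$ is the (unique, by the reflectivity/causal continuity from Theorem \ref{t_ladder}) maximal IF $S$-related to $P$, and $I^+(\hat z,M)$ is such an IF since $I^-(\hat z,M)\sim_S I^+(\hat z,M)$ in the extended sense — this is where I would lean on the ``regularly accessible'' discussion and the fact that $\widetilde M$-extensions plus strong causality force the $S$-relation to behave as in the boundaryless case.

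For $(b)\Rightarrow(a)$, by Lemma \ref{diamondclosure}(1) it suffices to show $cl(J^+(p)\cap J^-(q))$ is compact for all $p,q\in\overline M$. I would argue by contradiction: take a sequence $\{r_n\}\subset J^+(p)\cap J^-(q)$ with no convergent subsequence in $\overline M$, pick future-directed causal curves $\gamma_n$ from $p$ through $r_n$ to $q$, extend them to inextensible causal curves, and apply the limit curve theorem (Prop. \ref{p2.11}) to get an inextensible future-directed $H^1$-causal limit curve $\gamma$ through $p$. The escape of the $r_n$ forces $\gamma$ to be inextensible and, because strong causality prevents imprisonment (Corollary \ref{imprisonedb}), $\gamma$ leaves every compact set; restricting to the part of $\gamma$ between $p$ and $q$ (which lies in $J^+(p)\cap J^-(q)$) and using that $r_n\to\infty$, one produces a future-directed inextensible timelike curve $\sigma$ in $M$ (a slight timelike perturbation of $\gamma$, available since $M$ is open and causality is open) whose past $I^-(\sigma,M)$ is a TIP $P$ contained in $I^-(q,M)$, hence with $I^+(q,M)\subset\uparrow P$ nonempty, so $(P,F)$ is a genuine naked singularity with $F\neq\emptyset$. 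By hypothesis $(b)$ there is $\hat z\in\partial M$ with $P=I^-(\hat z,M)$, $F=I^+(\hat z,M)$; then $\sigma$ converges to $\hat z$ in the manifold topology — this is exactly Lemma \ref{diamondclosure}(2) applied to $x_n=\sigma(s_n)$ and $\hat x=\hat z$, since $I^-(\hat z,M)\in\hat L(\{I^-(\sigma(s_n),M)\})$. But $\sigma$ was inextensible in $M$ and $\hat z\in\partial M\subset\overline M$, giving the contradiction that $\sigma$ does converge.

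The main obstacle I anticipate is the careful handling of the $S$-relation and the identification $F=I^+(\hat z,M)$ in $(a)\Rightarrow(b)$: one must know that a TIP $P$ coming from a curve $\gamma$ that limits onto a boundary point $\hat z$ is $S$-related to exactly the IF $I^+(\hat z,M)$ and to nothing larger, which requires either invoking the ``regularly accessible'' framework from \cite{causalb} or reproving the relevant maximality using causal continuity and the product structure near $\partial M$ (Cor. \ref{c_localsplitting}). The rest — extracting endpoints via non-imprisonment, perturbing causal limit curves to timelike ones, applying Lemma \ref{diamondclosure} — is routine given the tools already assembled, though one must be attentive that all causal curves are understood in the $H^1$ sense per Convention \ref{convention} so that the limit curve theorem applies uniformly.
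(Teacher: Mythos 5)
Your skeleton (Lemma \ref{diamondclosure} for both directions, endpoint extraction via compactness plus non-imprisonment, hypothesis (b) plus Lemma \ref{diamondclosure}(2) to force convergence) matches the paper's, but two steps you flag or gloss over are genuine gaps. In (a)$\Rightarrow$(b), identifying $F$ with $I^{+}(\hat z,M)$ via ``$F$ is the unique maximal IF $S$-related to $P$'' does not work as stated: a TIP may have several $S$-partners (this is precisely why $\partial^c M$ consists of pairs), causal continuity of $M$ does not give uniqueness, and $I^{-}(\hat z,M)\sim_{S} I^{+}(\hat z,M)$ for $\hat z\in\pM$ is itself something to prove --- it is essentially the ``regularly accessible'' machinery that the paper explicitly declines to use as a black box. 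The paper instead runs the endpoint extraction on \emph{both} generating chains, obtaining $\hat z,\hat z'\in\pM$ with $P=I^{-}(\hat z,M)$ and $F=I^{+}(\hat z',M)$, and then derives $\hat z=\hat z'$ by contradiction: a causal limit curve from $\hat z$ to $\hat z'$ would yield an intermediate point $\gamma(s_0)$ with $P\subsetneq I^{-}(\gamma(s_0),M)\subset\;\downarrow F$, violating the maximality of $P$ in $\downarrow F$. The $S$-relation is indeed the key, but through this maximality argument, not through uniqueness of the partner.

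In (b)$\Rightarrow$(a) there are two problems. First, to invoke hypothesis (b) you need a pair $(P,F)$ with $P\sim_{S}F$; a TIP $P$ with $\uparrow P\neq\emptyset$ does not yet give this --- one must show that $P$ is maximal in $\downarrow F'$ for some maximal IF $F'\subset\;\uparrow P$. This is the paper's Claim 2, whose proof itself uses hypothesis (b), and it cannot be skipped. Second, your closing contradiction fails: an inextensible-in-$M$ timelike curve converging to a point of $\pM$ is not contradictory (it is the generic behaviour near a timelike boundary), and applying Lemma \ref{diamondclosure}(2) to points $\sigma(s_n)$ on your auxiliary curve says nothing about the original sequence $\{r_n\}$. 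What is actually needed is $I^{-}(\hat z,M)\in\hat L(\{I^{-}(r_n,M)\})$ for the sequence $\{r_n\}$ itself, so that Lemma \ref{diamondclosure}(2) yields $r_n\to\hat z$ and hence compactness; the paper obtains this limit IP (and the dual limit IF) directly from \cite[Prop. 5.3]{causalb2} applied to $\{I^{-}(r_n,M)\}$ and $\{I^{+}(r_n,M)\}$, bypassing the limit-curve-plus-perturbation construction, which does not produce that membership.
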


\begin{proof}
(a) $\Rightarrow$ (b). Let $(P,F) \in \partial^{c}M$ with $P \neq \emptyset \neq F$. Take chains $\{p_{n}\}$, $\{q_{n}\} \subset M$ generating $P$ and $F$, respectively. Then, for some $p_0, q_0\in M$ (which could be included as the first element of each chain) we have
$$p_{n}, q_{n} \in I^{+}(p_0,M) \cap I^{-}(q_0,M)
\subset I^{+}(p_0) \cap I^{-}(q_0) \subset J^{+}(p_0) \cap J^{-}(q_0).$$
The compactness of $J^{+}(p_0) \cap J^{-}(q_0)$
implies the existence of $\hat{z}, \hat{z}'\in \partial{M}$ such that $p_{n} \rightarrow \hat{z}$ and $q_{n} \rightarrow \hat{z}'$ respectively (since $(\overline{M},g)$ is strongly causal we can ensure that previous limits hold for the entire sequences, not only for some subsequences of them). Assume by contradiction that $\hat{z} \neq \hat{z}'$. Since $p_{n} \ll q_{n}$, there exists future-directed timelike curves $\gamma_{n}:[0,1] \rightarrow M$ joining each $p_{n}$ and
$q_{n}$ with $\gamma_{n}(0)=p_{n} \rightarrow \hat{z}$ and $\gamma_{n}(1)=q_{n} \rightarrow \hat{z}'$.
Since $(\overline{M},g)$ is globally hyperbolic, Prop. \ref{otro} (2)
implies the existence of a causal limit curve $\gamma:[0,1] \rightarrow \overline{M}$ so that $\gamma(0)=\hat{z}$ and $\gamma(1)=\hat{z}'$. Take some $0<s_0<1$ such that $\hat{z} < \gamma(s_{0}) < \hat{z}'$.
Note that $I^{-}(\gamma(s_{0}),M) \subset \downarrow F$. In fact, if $p \in I^{-}(\gamma(s_{0}),M)$ then $\gamma(s_{0}) \in I^{+}(p)$. So, taking into account that
$\gamma$ is a limit curve and $I^{+}(p)$ is an open set, there exists a subsequence $\{\gamma_{n_{k}}\}$ and
a subsequence $\{s_{n_{k}}\} \subset [0,1]$ such that $\gamma_{n_{k}}(s_{n_{k}}) \in I^{+}(p)$. This implies
$p \ll \gamma_{n_{k}}(s_{n_{k}}) \ll q_{n_{k}}$, and so, $p\ll q_{n_{k}}$ for all $k$, which implies $I^{-}(\gamma(s_{0}),M) \subset \downarrow F$, as required. In conclusion, to prove
$P \subsetneq I^{-}(\gamma(s_{0}),M)(\subset \downarrow F)$ suffices, as this would contradict $P\sim_s F$.

First, let us justify the identity $P=I^{-}(\hat{z},M)$ by proving the following

\smallskip

\noindent {\em Claim 1:} If $\beta:[a,b) \rightarrow \overline{M}$ is a future-directed timelike curve such that $\beta(t) \rightarrow p$ ($\beta$ is continuously extendible to $p$) then
$I^{-}(p)=I^{-}(\beta)$.
\smallskip

\noindent {\em Proof of the Claim 1.} For the inclusion to the right, take any $w \in I^{-}(p)$. Then, $p \in I^{+}(w)$ and, taking into account that it is an open, necessarily $\beta(t) \in I^{+}(w)$ for large
$t$. So, $w \in I^{-}(\beta)$. For the inclusion to the left, take now $w \in I^{-}(\beta)$. Then,
$w \ll \beta(t_{0})$ for some $t_{0} \in [a,b)$ and, taking into account that $\beta$ is continuously extendible to $p$, necessarily
$w \ll \beta(t_{0}) \ll p$. So, $w \in I^{-}(p)$. $\Box$

\smallskip

Taking into account the previous claim, to show $I^{-}(\hat{z},M)(=P)\subsetneq I^{-}(\gamma(s_0),M)$ suffices. We already know that $I^{-}(\hat{z}) \subsetneq I^{-}(\gamma(s_0))$ (from Prop. \ref{p_ord_lowerlevels}, $(\overline{M},g)$ is distinguishing).
Assume by contradiction that
$I^{-}(\hat{z},M)=I^{-}(\gamma(s_{0}),M)$. Since
$\partial M$ is smooth, we can take future-directed timelike curves $\sigma_{i}:[0,1) \rightarrow M \subset \overline{M}$, $i=1,2$, such that
$\sigma_{1}(1)=\hat{z} \in \partial M$ and
$\sigma_{2}(1)=\gamma(s_{0}) \in \partial M$. Moreover, these curves satisfy $I^{-}(\hat{z},M)=I^{-}(\sigma_1,M)$ and $I^{-}(\gamma(s_0),M)=I^{-}(\sigma_2,M)$. Since each curve $\sigma_{i}$, $i=1,2$, is future-directed timelike, necessarily $\sigma_{i} \subset I^{-}(\sigma_{i},M)$ for $i=1,2$. Moreover, from the initial assumption, $\sigma_{1} \subset I^{-}(\sigma_{2},M)$ and $\sigma_{2} \subset I^{-}(\sigma_{1},M)$. Hence
$I^{-}(\sigma_{1}) \subset I^{-}(\sigma_{2})$ and $I^{-}(\sigma_{2}) \subset I^{-}(\sigma_{1})$, and thus,
$I^{-}(\sigma_{1})=I^{-}(\sigma_{2})$. Indeed, let us prove that, say, $I^{-}(\sigma_{1}) \subset I^{-}(\sigma_{2})$. If
$w_{0} \in I^{-}(\sigma_{1})$ then $w_{0} \ll \sigma_{1}(t)$ for some $t \in [0,1)$. Since $\sigma_{1} \subset I^{-}(\sigma_{2},M)$, necessarily
$w_{0} \ll \sigma_{1}(t) \ll \sigma_{2}(s)$, which implies that $w_{0} \ll \sigma_{2}(s)$, and thus, $w_{0} \in I^{-}(\sigma_{2})$. The proof of $I^{-}(\sigma_{2}) \subset I^{-}(\sigma_{1})$ is analogous. In conclusion, we have proved that $I^{-}(\sigma_{2})=I^{-}(\sigma_{2})$ whenever
$\sigma_{1}(t) \rightarrow z$ and $\sigma_{2}(t) \rightarrow \gamma(s_{0})$, which implies $I^{-}(z)=I^{-}(\gamma(s_{0}))$, a contradiction.

\medskip

(a) $\Leftarrow$ (b). From Lemma \ref{diamondclosure} (1), it suffices to show that
$cl(J^{+}(p) \cap J^{-}(q))\subset \overline{M}$ is (sequentially) compact for any $p,q \in \overline{M}$. To this aim, consider any sequence $\{z_{n}\} \subset cl(J^{+}(p) \cap J^{-}(q))$. It is not a restriction to assume that $\{z_{n}\} \subset J^{+}(p) \cap J^{-}(q)$ (otherwise, replace $\{z_{n}\}$ by some other sequence $\{z'_n\}$ in $J^{+}(p) \cap J^{-}(q)$ such that $d_{R}(z'_{n},z_{n})<1/n$ for some auxiliary Riemannian metric $g_{R}$ on $\overline{M}$).
In order to prove that $\{z_{n}\}$ converges to some $z \in cl(J^{+}(p) \cap J^{-}(q))$, first note that every \cambios{$z_{n} \in J^{+}(p) \cap J^{-}(q)$}
satisfies $I^{-}(p,M) \subset I^{-}(z_{n},M)$ and $I^{+}(q,M) \subset I^{+}(z_{n},M)$. Hence, \cite[Prop. 5.3]{causalb2} ensures the existence of
 some IP $P$ and some IF $F$ containing $I^{-}(p,M)$ and $I^{+}(q,M)$, respectively, such that, up to a subsequence,
\begin{equation}\label{eq}
P\in \hat{L}(\{I^{-}(z_{n},M)\})\quad\hbox{and}\quad F\in \check{L}(\{I^{+}(z_{n},M)\}).
\end{equation}
In particular, $\emptyset\neq P \subset \downarrow F$ and $\emptyset\neq F \subset \uparrow P$.

In the case that $P=I^{-}(z,M)$ for some $z \in M$ (and analogously for $F$), the closedness of $\pM$ (see Thm. \ref{thm:mainc-completion}) and (\ref{eq:3})
imply that $z_{n} \in M$ for large $n$ and
$z_{n} \rightarrow z$ with the manifold topology.  So, only the case when both $P$ and $F$ are terminal (in $M$) must be taken into account. In this case, it suffices to show that $P\sim_S F$, since then, by hypothesis, $P=I^-(z,M)$ for some $z\in\pM$, and Lemma \ref{diamondclosure} (2) gives the result. So, assume by contradiction that $P\not\sim_S F$. Choose any $F'$ which is a maximal IF in $\uparrow P$ containing $F$. If $P\sim_S F'$ holds, again by hypothesis we have $F=I^+(z,M)$ for some $z\in\pM$, and Lemma \ref{diamondclosure} (2) implies that $z_n\rightarrow z$ (anyway, this case could not hold because, then, one would also have $F'\in \check{L}(\{I^+(z_n,M)\})$, in contradiction with the second expression in (\ref{eq})). So, the problem is reduced to the following claim.


\smallskip
\noindent {\em Claim 2:} If $F'$ is a maximal IF in $\uparrow P$ then $P$ must be a
 maximal IP in $\downarrow F'$. In particular, $P\sim_S F'$.
\smallskip

\smallskip

\noindent {\em Proof of the Claim 2.}
Assume by contradiction that $P \subsetneq P' \subset \downarrow F'$. Then, $(P',F') \in \overline{M}^{c}$.
 Let $z' \in \overline{M}$
 be the point such that $(P',F')=(I^-(z'),I^+(z'))$ (which exists by hypothesis) and consider future-directed chains $\{p_{i}\}$ and $\{p_{i}'\}$ generating
 $P$ and $P'$, respectively, such that $p_{i} \ll p_{i}'$ for all $i$. So, there exists some inextensible past-directed timelike curves $\gamma_i$ joining $p_{i}'$ with $p_{i}$ for all $i$. Since $\{p_{i}'\}$ converges to $z'$, from Prop. \ref{p2.11}
 there exists a past-directed
 causal limit curve $\gamma:[0,1) \rightarrow \overline{M}$ with $\gamma(0)=z'$.
 Next, we consider two excluding cases:
 \begin{itemize}
  \item  There exists $q_0\in {\rm Im}(\gamma)$ and some subsequence $\{p_{i_k}\}$ such that $p_{i_k}\rightarrow q_0$. Note that we can directly exclude the case $q_0=z'$, since, otherwise, $p_{i_k}\rightarrow z'$, and thus, $P=I^{-}(z',M)=P'$ (see Claim 1), a contradiction. In this case, take $s_{0}\in (0,1)$ and $\{s_{i_k}\}$ such that $\gamma(s_0)=q_0\neq z'$ and $p_{i_k}=\gamma_{i_k}(s_{i_k})$ for all $k$.

 \item  There is no subsequence of $\{p_i\}$ converging to some point of $\gamma$; in this case, take any $s_0\in (0,1)$, and define $q_0:=\gamma(s_0)$. Since $\gamma$ is a limit curve, there exists $\{s_{i_{k}}\}$ such that $\gamma_{i_k}(s_{i_k})\rightarrow q_0$. Moreover, we can assume additionally $p_{i_k} \ll \gamma_{i_k}(s_{i_k})$ for all $k$.
 Indeed, otherwise we can assume, up to a subsequence, $\gamma(s_{i_k}) \ll p_{i_k}$. By Prop. \ref{otro} (1) $\{\gamma_{i_{k}}\mid_{[0,s_{i_{k}}]}\}$ converges in the $C^0$-topology to $\gamma \mid_{[z^{'},q_{0}]}$. Thus, any relatively compact neighborhood of $Im(\gamma_{[z',q_{0}]})$ contains all the points $p_{i_k}$ up to a subsequence, in contradiction with the hypothesis of non-convergence.


 \end{itemize}

\noindent Trivially, $I^{+}(z',M) \subset I^{+}(q_{0},M)$ and we will prove $I^{+}(q_{0},M) \subset \uparrow P$ with a reasoning valid in both cases. If $w \in I^{+}(q_{0},M)$ then $q_{0} \in I^{-}(w)$, and taking into account that $\gamma_{i_k}(s_{i_k})\rightarrow q_0$, we have that $\gamma_{i_{k}}(s_{i_{k}}) \in I^{-}(w)$. So, $p_{i_{k}} \ll w$. Therefore, $p \ll p_{i_{k}} \ll w$ for any $p\in P$ and any $k$ large enough, and thus, $I^{+}(q_{0},M) \subset \uparrow P$. Summarizing, we have proved $I^{+}(z',M) \subset I^{+}(q_{0},M)\subset \uparrow P$. On the other hand, since $z' \neq q_{0}$, \cite[Lemma 4.6]{causalb} ensures that $F'(=I^{+}(z',M)) \subsetneq I^{+}(q_{0},M)$, in contradiction with the maximality of $F'$ into $\uparrow P$. $\Box$

 \end{proof}

\section{Appendix B: Piecewise smooth and $H^1$ causal relations are different}

Next, our aim will be to show that there exists a
spacetime-with-timelike-boundary  which can be written as a product $(\overline{M}= \R\times \bSigma, g=-d\tau^2 +g_0)$
such that the causal future of some point  computed by using piecewise $C^{\infty}$ smooth causal curves is {\em strictly contained} in its causal future computed with $H^1$-causal curves. Moreover,  $\M$  will be globally hyperbolic or not, depending on whether the causal relation is computed by using $H^1$-curves or piecewise smooth ones.

The problem is reduced to construct a complete $C^\infty$-Riemannian manifold with boundary $(\bSigma,g_0)$ containing two points $P_0$, $P_\infty$, which cannot be connected by means of any minimizing path with  piecewise smooth regularity.
This problem was studied by R. Alexander and S. Alexander \cite{AA}, who proved that the completeness of $g_0$ implies the existence of a minimizing path $\alpha$, but only $C^1$ regularity is assured for $\alpha$. So, assuming that such a $\alpha:[0,T]\rightarrow \bSigma$ cannot be piecewise smooth and it is  parametrized by arc-length, the curve $\gamma(t)=(t,\alpha(t)), t\in [0,T]$, becomes $H^1$-causal (indeed, $C^1$ and lightlike) and it connects $(0,P_0)$ with $(T,P_\infty)$. However, these points cannot be connected by means of a piecewise smooth  lightlike curve $\tilde \gamma$, as the projection of $\tilde \gamma$ on $\bSigma$ would be a piecewise smooth minimizer.

It is worth pointing out that such $P_0$ and $P_\infty$ can be connected in $\bSigma$ by means of arbitrarily close piecewise smooth curves $\gamma_k$    with lengths $T_k \searrow T$. However, this does not permit to connect $(0,P_0)$ and $(T,P_0)$  by means of piecewise smooth lightlike curves, but only  $(0,P_0)$ and each $(T_k,P_0)$ (these points can also be  joined by smooth timelike ones). 
Being $\M$ a product, its  global hyperbolicity  in the $H^1$-causal case  can be proven by checking that the slices $\tau=$ constant are Cauchy (exactly as in the case without boundary \cite[Theorem 3.66]{Beem}), while its   non-global hyperbolicity in the piecewise smooth case (indeed, its non-causal simplicity) follows just  because $J^+_{ps}((0,P_0))$ ($\not\ni (T,P_\infty)$) is not closed.

In order to find $(\bSigma,g_0)$ as claimed above,  the proof in \cite{AA} makes apparent that there is no any reason to expect piecewise smooth differentiability for the minimizers. Anyway,  the explicit counterexample is depicted in Fig. \ref{boundary} is sketched next.


\begin{figure}[hbt!]
	\centering
	\ifpdf
	\setlength{\unitlength}{1bp}%
	\begin{picture}(150,150)(0,0)
	\put(0,0){\includegraphics[scale=.25]{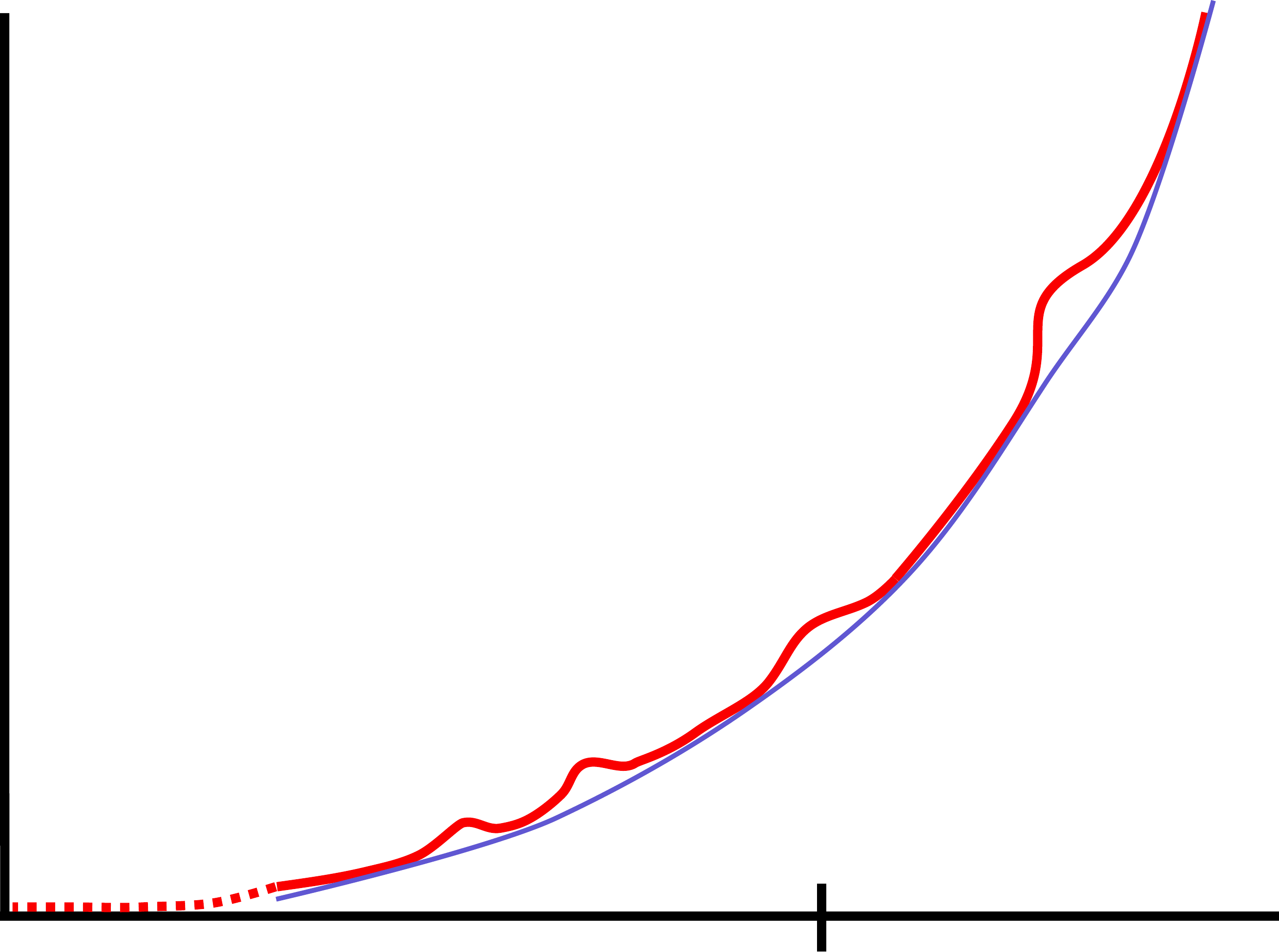}}
	\put(105.00,55.00){\fontsize{10.25}{0.0}\selectfont $\color{red} f$}
	\put(170.00,95.00){\fontsize{10.25}{0.0}\selectfont $\color{blue} h$}
	\put(115.00,-07.00){\fontsize{10.25}{0.0}\selectfont $x_m$}

	\end{picture}%
		\else
	\setlength{\unitlength}{1bp}%
	\begin{picture}(150,150)(0,0)
	\put(0,0){\includegraphics[scale=.25]{Boundary.pdf}}
	\put(105.00,55.00){\fontsize{10.25}{0.0}\selectfont $\color{red} f$}
	\put(170.00,95.00){\fontsize{10.25}{0.0}\selectfont $\color{blue} h$}
	\put(115.00,-07.00){\fontsize{10.25}{0.0}\selectfont $x_m$}
	\end{picture}%
	\fi
	\caption{\label{boundary} \gcambios{Function $f$ that determines the boundary of $(M_0,g_0)$. Note that $f$ is obtained by adding to the graph of $h$ infinitely many humps of decreasing size towards 0.}}
	
\end{figure}

Concretely, $(\bSigma,g_0)$ will be a closed subset of $\R^2$ (in natural coordinates $(x,y)$) type
\[
\bSigma:=\{(x,f(x))\in\R^2:\; x\leq f(x)\},\quad
\hbox{for some $C^\infty$-smooth function $f:\R\rightarrow \R$. }
\]
First, consider the smooth function $h:{\mathbb R}\rightarrow {\mathbb R}$,
\[
h(x):=\left\{\begin{array}{cc} 0 & \hbox{if $x\leq 0$} \\ e^{-1/x^2} & \hbox{if $x>0$.}\end{array}\right.
\]
Take the sequence of points $(P_m)$ with $P_m:=(x_m,h(x_m))\in {\rm graph}(h)$ \gcambios{for all $m> 5$} with $x_m=1/m$. For each $m$, consider the function $h_m:\mathbb{R}\rightarrow \mathbb{R}$ such that $h_m(x_m)=0$ and whose first derivative coincides with the following bump-cavity function (see Fig. \ref{hmprima})
\[
h'_m(x)=\left\{\begin{array}{ll} K_m e^{\lambda_{m}^+(x)} & \hbox{if $x\in (x_m-\epsilon_m,x_m)$} \\ -K_m e^{\lambda_{m}^-(x)} & \hbox{if $x\in (x_m,x_m+\epsilon_m)$} \\ 0 & \hbox{otherwise,}\end{array}  \right.
\]
where
\[
\lambda_m^{\pm}(x)=-\frac{1}{1-\left(\frac{2}{\epsilon_m}x\pm 1-\frac{2x_m}{\epsilon_m}\right)^2},\quad K_m=m^3e^{-m^2}\quad \hbox{and}\quad \epsilon_m=1/m^2.
\]

\begin{figure}[hbt!]
	\centering
	\ifpdf
	\setlength{\unitlength}{1bp}%
	\begin{picture}(150,150)(0,0)
	\put(0,0){\includegraphics[scale=.25]{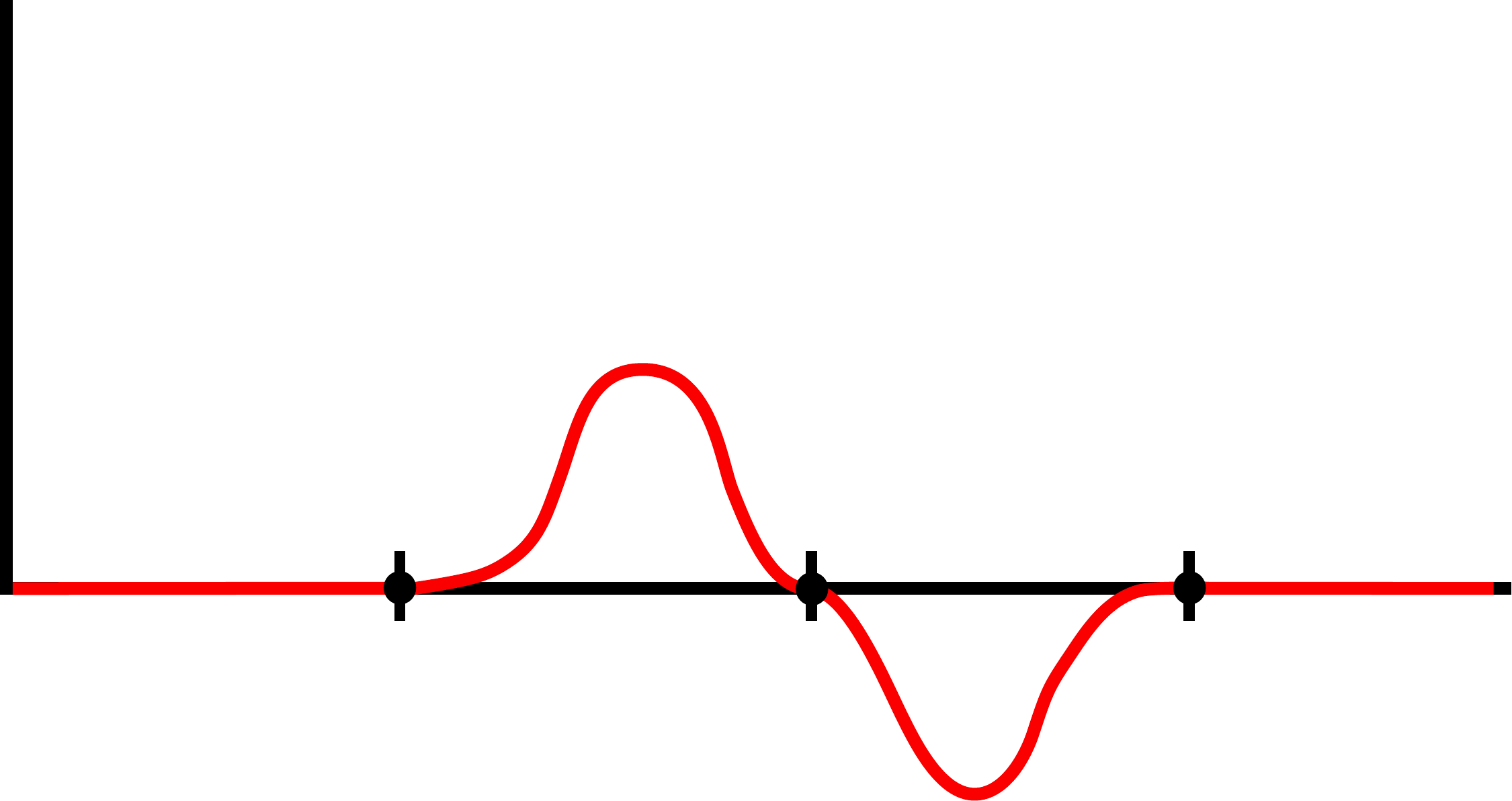}}
		\put(60.00,60.00){\fontsize{10.25}{0.0}\selectfont $h_{m}'$}
		\put(79.99,4.00){\fontsize{10.25}{0.0}\selectfont $x_m$}
		\put(115.00,4.00){\fontsize{10.25}{0.0}\selectfont $x_m+\epsilon_m$}
		\put(25.00,4.00){\fontsize{10.25}{0.0}\selectfont $x_m-\epsilon_m$}
	\end{picture}%
		\else
	\setlength{\unitlength}{1bp}%
	\begin{picture}(150,150)(0,0)
	\put(0,0){\includegraphics[scale=.25]{Funcionhmprima.pdf}}
	\put(60.00,60.00){\fontsize{10.25}{0.0}\selectfont $h_{m}'(x)$}
	\put(79.99,-07.00){\fontsize{10.25}{0.0}\selectfont $x_m$}
	\put(115.00,-07.00){\fontsize{10.25}{0.0}\selectfont $x_m+\epsilon_m$}
	\put(25.00,-07.00){\fontsize{10.25}{0.0}\selectfont $x_m-\epsilon_m$}
	\end{picture}%
	\fi
	\caption{\label{hmprima}\gcambios{Function $h_m'$. The derivatives of any order to $h'_m$ vanish at $x_{m}$ and $x_{m}\pm\epsilon_m$. }}
	
\end{figure}

Then, the required function $f$ is just (see Fig. \ref{boundary}):
\[
f:\mathbb{R}\rightarrow\mathbb{R},\qquad f(x):=h(x)+\sum_{m=1}^{\infty}h_m(x).
\]
Notice that $f$ is $C^\infty$ because so are $h$ and all $h_m$, and: (a) at each $x\in \R$ there is at most one non-vanishing $h_m$ because of our choice of $\epsilon_m$, and (b) for all $k\in \N$, the $k$-th derivative of $f$ satisfies  $\lim_{x\searrow 0}f^{(k)}(x)=0$ \gcambios{because $K_m/\epsilon_m^{2k}$ bounds $h_m^{(k)}$} and our choice of $K_m$ ensures $\lim_{m\rightarrow\infty} K_m/\epsilon_m^{2k}=0$.

For any point $P_0=(x_0, h(x_0))$  with $x_0>0$ and $P_\infty=(0,0)$ the minimizer $\alpha$ will not be piecewise smooth at $P_0$, because it will have $C^2$-breaks which accumulate at $P_\infty$. The idea is the following. Outside the regions $I_m\times \R$, with  $I_m:=(x_m-\epsilon_m,x_m+\epsilon_m)$, $\alpha$ must remain in the boundary $\partial \Sigma$ (i.e., $\alpha$ parametrize graph$(f)=$ graph$(h)$), because of the \gcambios{convexity of $\partial\Sigma$ respect to $\Sigma$}. However, $\alpha$ must abandon $\partial \Sigma$ at  each interval $I_m$ (whenever  $x_m+\epsilon_m< x_0$). Indeed, $\alpha$ must remain in $\Sigma$ and, thus, be a  segment of $\R^2$ in some maximal region $(x_m^-,x_m^+)\times \R$ with $x_m-\epsilon <x_m^-<x_m^+<x_m+\epsilon$. Even more, at  $x_m^-$, $\alpha$ must be tangent to $\partial\Sigma$ and it must remain in the boundary for close $x<x_m^-$. Nevertheless, as \gcambios{one will have} $f''(x_m^-)\neq 0$  , $\alpha$ is not $C^2$-smooth at $x_m^-$.

In order to check these properties, notice:

\begin{itemize}
\item[(A)] The line \gcambios{$l_1$} tangent to the graph of $f$ at $x_m-\epsilon_m$ remains strictly below of the graph of $f$ \gcambios{ for all $x\geq x_m-\epsilon_m$} (see Fig. \ref{lineastf}); \gcambios{in particular $(x_m+\epsilon_m,f(x_m +\epsilon_m))$ lies above $l_1$. }

Indeed, this follows because $h$ is convex, $f(x_{m}-\epsilon_m)=h(x_{m}-\epsilon_m)$ and $f\geq h$.

\item[(B)] The line \gcambios{$l_2$} tangent to the graph of $f$ at $x_{m}-3\epsilon_m/4$ passes strictly over the graph of $h$, and thus, of $f$, at $x_m+\epsilon_m$ (see Fig. \ref{lineastf});
\gcambios{in particular $(x_m+\epsilon_m,f(x_m +\epsilon_m))$ lies below $l_2$.}
This follows from:

$
f(x_m-3\epsilon_m/4)=h(x_m-3\epsilon_m/4)+h_m(x_m-3\epsilon_m/4)>h(x_m-3\epsilon_m/4)$,

$f'(x_m-3\epsilon_m/4)>h_m'(x_m-3\epsilon_m/4) \gcambios{> K_m/e^2=m^3e^{-m^2-2}>\frac{h(x_m+\epsilon_m)-h(x_m-3\epsilon_m/4)}{7\epsilon_m/4}.}
$
\gcambios{(for the last inequality,
$4m^2h(x_m+\epsilon_m)/7< 6m^2e^{-m^2-2}\leq K_m/e^2$ when  $m>5$).}
\end{itemize}
\begin{figure}[hbt!]
	\centering
	\ifpdf
	\setlength{\unitlength}{1bp}%
	\begin{picture}(150,200)(0,0)
	\put(0,0){\includegraphics[scale=.25]{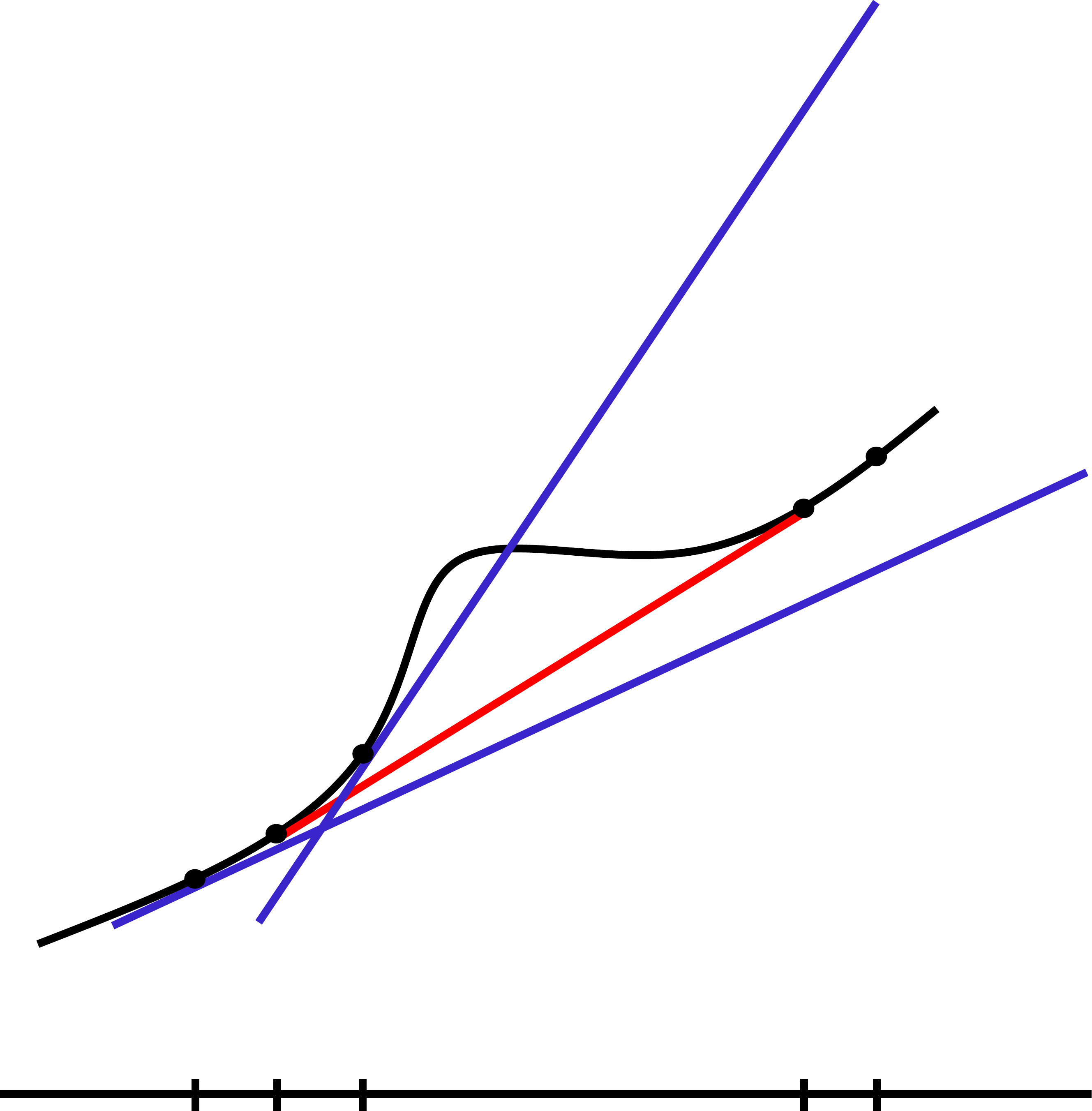}}
  	\put(45,8.00){\fontsize{09.25}{0.0}\selectfont $\color{red} x_{m}^-$}
	\put(15.00,-6.00){\fontsize{09.25}{0.0}\selectfont $x_m-\epsilon_m$}
	\put(55.00,-7.00){\fontsize{09.25}{0.0}\selectfont $x_m-\frac{3 \epsilon_m}{4}$}
	\put(135,8.00){\fontsize{09.25}{0.0}\selectfont $\color{red} x_{m}^+$}
	\put(145,-7.00){\fontsize{09.25}{0.0}\selectfont $x_{m}+\epsilon_m$}
	\put(145,80.00){\fontsize{09.25}{0.0}\selectfont $l_1$}
	\put(145,165.00){\fontsize{09.25}{0.0}\selectfont $l_2$}
	\end{picture}%
		\else
	\setlength{\unitlength}{1bp}%
	\begin{picture}(150,200)(0,0)
	\put(0,0){\includegraphics[scale=.25]{FuncionFlineas.pdf}}
	\put(45,8.00){\fontsize{09.25}{0.0}\selectfont $\color{red} x_{m}^-$}
	\put(15.00,-6.00){\fontsize{09.25}{0.0}\selectfont $x_m-\epsilon_m$}
	\put(55.00,-7.00){\fontsize{09.25}{0.0}\selectfont $x_m-\frac{3 \epsilon_m}{4}$}
	\put(135,8.00){\fontsize{09.25}{0.0}\selectfont $\color{red} x_{m}^+$}
	\put(145,-7.00){\fontsize{09.25}{0.0}\selectfont $x_{m}+\epsilon_m$}
	\put(145,80.00){\fontsize{09.25}{0.0}\selectfont $l_1$}
		\put(145,165.00){\fontsize{09.25}{0.0}\selectfont $l_2$}
	\end{picture}%
	\fi
	\caption{\label{lineastf}\gcambios{The line $l_1$ tangent to $f$ at $x_m-\epsilon_m$ lies strictly under the graph of $f$ (in black) for $x \geq x_m+\epsilon_m$. On the other hand, the line $l_2$ tangent to $f$ at $x_m-\frac{3\epsilon_m}{4}$ lies strictly over the graph of $f$. So, some line (segment in red) exists tangent to $f$ at two different points $x_m^-\in (x_m-\epsilon_m,x_m-3\epsilon_m/4)$, $x_m^+\in (x_m-3\epsilon_m/4,x_m+\epsilon_m )$.}}	
\end{figure}
 From (A) and (B), there exists some (minimum) $x_{m}^-\in (x_{m}-\epsilon_m,x_{m}-3\epsilon_m/4)$ such that the line tangent to $f$ at $(x_m^-,f(x_m^-))$ is also tangent to $f$ at $(x_m^+,f(x_m^+))$, for some  \gcambios{ first $x_m^+> x_m-3\epsilon/4$ } (in particular, $x_m^+>x_m^-$). This implies that the minimizing curve $\alpha$ 
 leaves the graph of $f$ between $(x_m^+,f(x_m^+))$ and $(x_m^-,f(x_m^-))$, and there $\alpha$ turns into a segment tangent to $f$ at those points (see Fig. \ref{lineastf}). Moreover, since
\[
f''(x)>0\quad\hbox{for all $x\in (x_m-\epsilon_m,x_m-3\epsilon_m/4)$},
\]
in particular,
$f''(x_m^-)\neq 0$,
and $\alpha$ is not $C^2$ at $(x_{m}^-,f(x_m^-))$, as required.

 \section*{Acknowledgments} Part of the results of this article are included in the PhD thesis  \cite{LuisTesis}, and the comments
  by I. Costa e Silva (U. F. Santa Catarina, Florianopolis), O.  M\"uller (Humboldt U. Berlin), D. Sol\'{\i}s
 (U. F. Yucat\'an) and  S. Suhr (U. Bochum), on the whole PhD thesis  are warmly acknowledged.  We also acknowledge the assessment  by D. Azagra (U. Complutense de Madrid) on section \ref{s2.4} and the comments by F. Finster (U. Regensburg) on
 reference~\cite{friedrichs}.
  \gcambios{Finally, we are very grateful to the referees for their careful reading of the paper and valuable
suggestions and comments.}
All the authors are partially supported by the coordinated research projects MTM2016-78807-C2-1-P (MS) and MTM2016-78807-C2-2-P (LA \& JLF) funded by  MINECO (Spanish Ministerio the Econom\'{\i}a y Competitividad) and ERDF (European Regional Development Fund). LA has also enjoyed a grant funded by the Consejo Nacional de Ciencia y Tecnolog\'{\i}a (CONACyT), M\'exico.

\end{document}